\theoremstyle{plain}
\newtheorem{thm}{Theorem}
\DeclareRobustCommand\hbar{\mathchar'26\mkern-9mu h}
\def\ben{\begin{equation}}
\def\een{\end{equation}}
\def\bena{\begin{eqnarray}}
\def\eena{\end{eqnarray}}
\newtheorem{lemma}{Lemma}
\newtheorem{cor}{Corollary}
\newtheorem{definition}{Definition}
\newcommand{\openbox}{\leavevmode
  \hbox to.77778em{%
  \hfil\vrule
  \vbox to.675em{\hrule width.6em\vfil\hrule}%
  \vrule\hfil}}
\gdef\proofSymbol{\openbox}
\newcommand{\proofname}{Proof.}
\newcounter{proof}%
\newenvironment{proof}[1][\proofname]{
  \th@nonumberplain
  \def\theorem@headerfont{\itshape}%
  \normalfont
  \@thm{proof}{proof}{#1}}%
  {\@endtheorem}
\newcommand{\al}{\alpha}
\newcommand{\la}{\lambda}
\newcommand{\vp}{\varphi}
\newcommand{\La}{\Lambda}
\newcommand{\eps}{\varepsilon}
\newcommand{\Lao}{\Lambda_0}
\newcommand{\T}{\mathbb{T}}
\newcommand{\pa}{\partial}
\newcommand{\eq}{\begin{equation}}
\newcommand{\eqe}{\end{equation}}
\newcounter{saveeqn}
\newcommand{\mr}{{\mathbb R}}
\newcommand{\mn}{{\mathbb N}}
\newcommand{\C}{{\mathcal C}}
\newcommand{\e}{\operatorname{e}}
\renewcommand{\L}{{\mathcal L}}
\newcommand{\D}{{\mathcal D}}
\renewcommand{\S}{\mathscr{S}}
\renewcommand{\T}{{\mathbb T}}
\renewcommand{\O}{{\mathcal O}}
\renewcommand{\d}{{\rm d}}
\renewcommand{\D}{{\mathcal D}}
\newcommand{\tkn}{\mathfrak{T}}
\newcommand{\LIR}{M}
\newcommand{\test}{F}
\newcommand*{\rom}[1]{\expandafter\@slowromancap\romannumeral #1@}
\title{The operator product expansion converges in massless $\varphi_{4}^{4}$-theory}
\author{Jan Holland$^{1,2}$\thanks{\tt jan.holland@uni-leipzig.de}\: , 
Stefan Hollands$^{2}$\thanks{\tt stefan.hollands@uni-leipzig.de}\: and
Christoph Kopper$^{1}$\thanks{\tt kopper@cpht.polytechnique.fr}\:
\\ \\
{\it ${}^{1}$Centre de Physique Th\'eorique, CNRS, UMR 7644} \\
{\it  \'Ecole Polytechnique, F-91128 Palaiseau, France} \medskip \\
{\it ${}^{2}$Institut f\"ur Theoretische Physik, Universit\"at Leipzig
} \\
{\it Br\"uderstr. 16, Leipzig, D-04103, Germany
}  \\
}
\begin{document}
\maketitle
\begin{abstract}
It has been shown recently \cite{Hollands:2011gf} that the mathematical
status of the operator product expansion (OPE) is
better than was expected before: namely considering
{\it massive} Euclidean $\varphi_4^4$-theory in the
perturbative loop expansion, the OPE {\it converges} at
any loop order when considering (as is usually done)
composite operator insertions into correlation functions.

In the present paper we prove the same result for the
{\it massless}  theory. While the short-distance properties
of massive and massless theories may be expected to be similar on physical
grounds, the proof in the massless case requires entirely new techniques.
In our inductive construction we have to control with sufficient precision 
the exceptional momentum singularities of the massless
correlation functions. In fact the bounds we state are expressed
in terms of weight factors associated to certain tree graphs.
Our proof is again based on the flow equations of the
renormalisation group.
\end{abstract}
\section{Introduction}
The operator product expansion (OPE) \cite{Wilson:1969ub, Zimmermann:1973wp} plays an important role in quantum field theory, both from a practical as well as a 
conceptual viewpoint. It states that
\ben\label{OPE}
\O_{A}(x)\O_{B}(y) \sim \sum_C \C_{A B}^{C}(x-y) \O_{C}(y)
\een
where $\{ \O_A \}$ denotes the collection of all local operators of a theory. The numerical coefficients $\C_{AB}^C$ depend on the theory 
under consideration and are called ``Wilson coefficients''. More precisely, the OPE is normally understood as an asymptotic short-distance expansion 
for the operator product $\O_A(x) \O_B(y)$, inserted into some correlation function with other arbitrary ``spectator'' fields. The common opinion is that, in order 
to obtain an approximation as good as $O(|x-y|^\Delta)$ as $|x-y| \to 0$, one should include all terms in the sum up to a sufficiently high operator dimension (depending on $\Delta$). 

Both for practical and conceptual purposes, one would like to understand in quantitative detail how well the OPE actually approximates correlation 
functions in concrete models. For example, one would like to know how the approximation depends on the specific choice of the spectator fields, or whether 
the OPE might even be a convergent -- rather than only asymptotic -- expansion. For this, one must have detailed quantitative bounds on the remainder 
in the OPE. In the case of {\em massive}, perturbative $\varphi^4$-theory in 4-dimensional Euclidean space, such bounds were  established in~\cite{Hollands:2011gf}. 
These bounds showed in particular that the status and range of validity of the OPE is in fact much better than originally anticipated:
The OPE (i.e. sum over $C$) actually converges, in the sense of insertions into a correlation function, for arbitrary but fixed loop order $L$, and for 
arbitrary (!) distances $|x-y|>0$.  

In practice, the OPE is mostly used in the context of asymptotically free gauge theories of Yang-Mills type, and these theories contain {\em massless} fields. At 
first sight, it might appear trivial to extend the convergence results~\cite{Hollands:2011gf} to massless theories -- such as massless $\varphi^4$-theory in the simplest case: After all, the 
OPE is supposed to capture the short distance behaviour of correlation functions, and, as is well-known, massless theories do not differ substantially in this respect from massive ones. 
The difficulty, however, arises from the fact that the OPE must be understood in the sense of an insertion into a correlation function with additional spectator fields. 
Correlation functions are sensitive also to the infrared behaviour of the theory. In the case of massless theories, this is {\em considerably} more involved and must 
be reflected by a more complicated structure of any bounds on these functions. In the present paper, we tackle this problem and are able to show that the 
OPE is still convergent in massless, Euclidean $\varphi^4$-theory. To state our first result, we introduce test functions $\test_i\in\mathcal{S}(\mathbb{R}^4)$ for $i=1, ..., N$ and Schwartz norms\footnote{These are finite for any $\test$ in the space $\mathcal{S}(\mathbb{R}^4)$ by definition.} 
\ben\label{Schwnorms}
\|F\|_{n}:=\sup_{x\in\mathbb{R}^4}|(M^2+x^2)^n F(x)|\, .
\een 
Here $M$ is some fixed renormalisation scale. 
The test functions are used to define ``averaged fields'' $\varphi(\test_i) = \int d^4 x \, \varphi(x) \test_i(x)$. Then our first main result is:

\newpage
\begin{thm}\label{thmope}
For any $N,L\in\mathbb{N}$ there exists $K>0$ such that the remainder of the operator product expansion, carried out up to operators of dimension $D=[A]+[B]+\Delta$, at $L$ loops, is bounded by
\ben
\begin{split}
&\Big| \Big\langle \O_{A}(x)\O_{B}(0)\, \varphi(\test_1)\cdots\varphi(\test_N) \Big\rangle - \sum_{C:[C] \leq D}\C_{A B}^{C}(x) \Big\langle \O_{C}(0)\, \varphi(\test_1)\cdots\varphi(\test_N) \Big\rangle  \Big|_{L-\text{loops}} \\
 &\qquad\leq  \sqrt{{[A]! [B]!}} \ (KM)^{D}\   \frac{|x|^{\Delta}}{\sqrt{\Delta!}}  \
  \LIR^{N}  \sum_{\mu=0}^{(D+2)(N+2L+3)} \sum_{\mu_1+\ldots+\mu_N=\mu   }    \frac{\prod_{i=1}^{N}\|\hat{\test}_{i}\|_{\frac{\mu_{i}}{2}}} { M^{\mu}}   
  \, ,
\end{split} 
\label{maineq}
\een
where 
$[A]$ is the dimension of a composite operator $\O_A$.
\end{thm}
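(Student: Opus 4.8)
The plan is to build on the flow-equation machinery for correlation functions carrying composite-operator insertions, reducing the OPE remainder to a Taylor remainder in the short-distance variable $x$. First I would recall that the connected amputated functions with the two composite insertions $\O_A(x)$, $\O_B(0)$ together with the elementary fields smeared against $\test_1,\dots,\test_N$ obey a system of Polchinski flow equations in a scale $\Lambda$ that interpolates between the infrared end $\Lambda=0$ and the ultraviolet cutoff $\Lambda_0$. The Wilson coefficients $\C_{AB}^C(x)$ are built so that subtracting $\sum_{[C]\le D}\C_{AB}^C(x)\,\O_C(0)$ realises the Taylor remainder of order set by $\Delta$ of the product about $x=0$; the factor $|x|^\Delta/\sqrt{\Delta!}$ in \eqref{maineq} is precisely the gain from such a remainder, while $\sqrt{[A]!\,[B]!}$ and $(KM)^D$ record the combinatorial and dimensional content of the retained operators.

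The core is then an inductive bound on these insertion functions in momentum space---ascending induction on the loop order and, at fixed order, a nested induction on the number of external legs and of momentum derivatives. Here the massless case departs sharply from the massive one of \cite{Hollands:2011gf}: with the massive propagator $1/(p^2+m^2)$ the mass regulates the infrared and simple polynomial-times-exponential bounds in $\Lambda$ and the external momenta suffice, whereas with $1/p^2$ the functions are genuinely singular at exceptional momentum configurations, i.e.\ where some proper subset of the external momenta sums to zero. The new device is a family of bounds whose right-hand side is a sum over tree graphs on the external legs, each tree contributing a weight that measures proximity to an exceptional configuration; the spanning-tree combinatorics is what captures the admissible infrared degree of singularity without overcounting.

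Third, I would propagate these tree-weighted bounds through the flow equation, which expresses $\pa_\Lambda$ of an $n$-leg insertion function through lower functions joined by a differentiated propagator, integrating in $\Lambda$ from $\Lambda_0$ down to $0$ with renormalisation conditions imposed at the scale $M$ (rather than at zero momentum, which is forbidden by the infrared behaviour). One must check stability of the bounds under the loop-momentum integration carried by the differentiated propagator and under the subsequent $\Lambda$-integration. Showing that convolving two tree-weighted kernels again yields a tree-weighted kernel with the correct combinatorics, near exceptional momenta and without manufacturing spurious infrared divergences, is the main obstacle; this is exactly where the reorganisation of the standard estimates in terms of trees does the essential work.

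Finally, with the momentum-space bound on the remainder established, I would integrate against the $\test_i$. After Fourier transform the exceptional-momentum weights turn, upon smearing, into decay factors paired with Schwartz norms of the $\hat\test_i$, reproducing the products $\prod_i\|\hat\test_i\|_{\mu_i/2}/M^\mu$; summing over the derivative distribution $\mu_1+\cdots+\mu_N=\mu$ and over $\mu$ up to $(D+2)(N+2L+3)$ yields precisely the double sum in \eqref{maineq}, with the overall $\LIR^{N}=M^N$ recording the smearing of the $N$ spectator fields. Convergence of the OPE sum over $C$ then follows from the $1/\sqrt{\Delta!}$ gain, which defeats the growth with $D=[A]+[B]+\Delta$ of the number of operators of dimension at most $D$.
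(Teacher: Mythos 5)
Your plan coincides with the paper's own proof in all essentials: the remainder is reduced via the Taylor-remainder identity (the paper's Lemma~\ref{remainder}) to Schwinger functions with one and two insertions, these are bounded inductively through the flow equations with renormalisation conditions at the scale $M$ using tree-weighted bounds that control the exceptional-momentum singularities (Theorems~\ref{thm1}, \ref{thmCAGins1}, \ref{thm3} and their gluing/reduction procedures), and smearing against the $\test_i$ converts these into the Schwartz-norm sums of \eqref{maineq}, with the $|x|^{\Delta}/\sqrt{\Delta!}$ gain supplying convergence. This is precisely the strategy carried out in Sections~\ref{sec:bounds} and \ref{sec:convergence}, so no further comparison is needed.
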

One can draw the following conclusions from this theorem:
\begin{enumerate}
\item The remainder of the OPE is a \emph{tempered distribution}, which is of order $O(|x|^\Delta)$ for $|x| \to 0$.  
\item If one restricts to test functions whose Fourier transforms have compact support, the theorem even implies \emph{convergence} of the OPE. This can be seen as follows: Assume that $\hat{\test}_i(p)=0$ for all $|p|\geq P$ and for all $i$. One then has the bound $\|\hat F \|_n< ({\rm cst.} \, P)^{2n}$ on the Schwartz norms. It follows that the right side of \eqref{maineq} behaves as $P^{\Delta (N+2L+3)}/\sqrt{\Delta !}$ for large $\Delta$. Therefore, 
the remainder vanishes in the limit $\Delta\to\infty\,$. 
\item One can also deduce convergence of the OPE under slightly less restrictive conditions on the test functions. For instance, 
if $|\hat F_i(p)|$ decay more rapidly than $e^{-|p|^a}$ for some $a> 2(N+2L+3)$, convergence follows again. 
\end{enumerate}
\noindent To understand better how the remainder behaves as a function of the momenta of the spectator fields, we also derive a bound for a more restricted class of test functions. Namely, consider now test functions such that the support of $\hat \test_1(p_1) \cdots \hat \test_N(p_N)$ contains only configurations of 4-momenta $p_{1},\ldots,p_{N}$ whose magnitude is less than some $P$, and such that their distance to ``exceptional''\footnote{By an exceptional configuration one means a set $(p_1, ..., p_N)$ such that a strict subsum of momenta vanishes. Singularities at such configurations are common in massless theories.} momentum configurations is at
least $\varepsilon>0$ [see \eqref{condition}].
Under these conditions, one has:
\begin{thm}\label{OPEbound1}
The remainder of the operator product expansion, carried out up to operators of dimension $D=[A]+[B]+\Delta$, at $L$ loops, is bounded by
\ben
\begin{split}
&\Big| \Big\langle \O_{A}(x)\O_{B}(0)\, \varphi(\test_1)\cdots\varphi(\test_N) \Big\rangle- \sum_{C:[C] \leq D}\C_{A B}^{C}(x) \Big\langle \O_{C}(0)\, \varphi(\test_1)\cdots\varphi(\test_N) \Big\rangle  \Big|_{L-\text{loops}} \\
 &\leq P^{N} \sqrt{{[A]! [B]!}} \left(K\, M\,  \sup(1,\frac{P}{M})^{(N+2L+1)}\right)^{[A]+[B]}\, \left(\frac{P}{\inf(M, \epsilon)}\right)^{3N} \prod_{i}\sup|\hat{\test}_i| \\
&\times \frac{1}{\sqrt{\Delta!}} \left(K\, M\,|x|\,   \sup(1,\frac{P}{M})^{(N+2L+1)}\right)^{\Delta}\mathcal{P}_{2L+\frac{N}{2}}\left( \log_{+}\frac{P}{\inf(M,\epsilon)}\right) \, ,
\end{split} 
\label{ope31}
\een
where ${K}$ is a constant depending on $N,L$, and $\mathcal{P}_n$ is a polynomial of degree $n$
with nonnegative coefficients which depend on $N,L$.
\end{thm}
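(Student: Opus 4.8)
The plan is to derive Theorem~\ref{OPEbound1} from the same momentum-space estimate on the Fourier transform of the OPE remainder that underlies Theorem~\ref{thmope}, specialising the momentum integration to the restricted test-function class of \eqref{condition}. Writing the remainder in \eqref{ope31} as
\[
\int \prod_{i=1}^{N}\frac{d^4p_i}{(2\pi)^4}\,\hat{\test}_i(p_i)\,\hat R_D(x;p_1,\dots,p_N),
\]
where $\hat R_D$ is the momentum-space remainder carrying the two composite insertions $\O_A(x)$, $\O_B(0)$ and the $N$ spectator legs, the task reduces to a pointwise bound on $\hat R_D$. The characteristic feature of the massless theory is that $\hat R_D$ is singular at \emph{exceptional} configurations, and the flow-equation induction yields a bound in which these singularities are organised by weight factors attached to certain tree graphs spanning the external momenta. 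Theorem~\ref{thmope} arises by integrating this bound against Schwartz test functions; Theorem~\ref{OPEbound1} arises by integrating it over the compact, $\varepsilon$-separated support permitted by \eqref{condition}.

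First I would use the support condition \eqref{condition} to confine the integration to $|p_i|\le P$ at distance at least $\varepsilon$ from the exceptional set, and replace $\prod_i\hat{\test}_i(p_i)$ by $\prod_i\sup|\hat{\test}_i|$ times the indicator of this region; this is why the test functions enter \eqref{ope31} only through $\prod_i\sup|\hat{\test}_i|$. The tree-graph weights, which diverge as a configuration approaches the exceptional set, are bounded on the $\varepsilon$-separated region by powers of $1/\inf(M,\varepsilon)$. Combined with the volume $\sim P^{4N}$ of the integration region, this yields the geometric prefactor $P^N(P/\inf(M,\varepsilon))^{3N}$, while the logarithms produced by the massless propagators at each loop order accumulate, under the integration, into the polynomial $\mathcal P_{2L+N/2}(\log_{+} P/\inf(M,\varepsilon))$.

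The remaining factors are controlled exactly as for Theorem~\ref{thmope}. The dimensionful prefactor $\sqrt{[A]!\,[B]!}\,(KM\,\sup(1,P/M)^{N+2L+1})^{[A]+[B]}$ comes from the combinatorial and dimensional bounds on the two composite insertions, while the convergence-producing factor $\tfrac{1}{\sqrt{\Delta!}}(KM|x|\,\sup(1,P/M)^{N+2L+1})^{\Delta}$ arises from the Taylor-remainder structure of the OPE together with Cauchy-type estimates on the Wilson coefficients $\C_{AB}^{C}$; it is the factor $1/\sqrt{\Delta!}$ that renders the expansion convergent as $\Delta\to\infty$.

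The main obstacle is the flow-equation derivation of the master bound on $\hat R_D$ with the correct tree-weight structure: one must show, by induction on the loop order and the number of external legs, that the exceptional singularities of the massless correlation functions are captured by the tree weights and grow no faster than the stated powers near the exceptional set. Granting this, the precise exponent $3N$ of $P/\inf(M,\varepsilon)$ and the degree $2L+N/2$ of the logarithmic polynomial follow by tracking how many tree weights can become singular simultaneously on the $\varepsilon$-separated region and how the logarithms compound under the momentum integration—a bookkeeping task that is routine once the master bound is established.
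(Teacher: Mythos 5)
Your proposal is correct and follows essentially the same route as the paper: theorem \ref{OPEbound1} is proved there by inserting lemma \ref{remainder} into the partition formula \eqref{OPER}, bounding the integrand \emph{pointwise} on the support permitted by \eqref{condition} via the unsmeared corollaries \ref{cor1}, \ref{cor2} and \ref{cor3}, and estimating the momentum integrals by the supremum of the integrand times the volume $P^{4N}$ --- exactly your sup-times-volume strategy, with the hard tree-bound input deferred to the flow-equation sections, as you do.

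One correction to your third paragraph: no estimates on the Wilson coefficients, Cauchy-type or otherwise, enter anywhere. The coefficients $\C_{AB}^{C}$ are eliminated \emph{algebraically} by the identity \eqref{RGD} of lemma \ref{remainder}, which rewrites the entire remainder as a Taylor remainder of $L(\O_{A})L(\O_{B})-L_{D}(\O_{A}\otimes\O_{B})$; the factor $|x|^{\Delta}/\sqrt{\Delta!}$ then arises from the integral form of the Taylor remainder combined with the factorial-type growth of the bounds in corollary \ref{cor3} applied to insertions of $\partial^{v}\O_{A}$ with $|v|=\Delta$. Bounding $\sum_{C}\C_{AB}^{C}\,\langle \O_{C}\cdots\rangle$ term by term could never produce the $|x|^{\Delta}$ smallness, which resides entirely in the cancellation between the two terms, so this clause of your argument should be replaced by the lemma. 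Your bookkeeping for the prefactor is otherwise consistent with the paper's: $P^{4N}$ from the volume, $\inf(M,\epsilon)^{-2N}$ from the external propagators $1/p_{i}^{2}$ (since $\bar\eta(\vec{p})\geq\epsilon$ forces $|p_{i}|\geq\epsilon$), and $\inf(M,\epsilon)^{-N}$ from the factors $\inf(\bar\eta(\vec{p}),M)^{-N}$ in corollaries \ref{cor2} and \ref{cor3}, which together give $P^{N}\left(P/\inf(M,\epsilon)\right)^{3N}$, while the degree $2L+\tfrac{N}{2}$ of the logarithmic polynomial is read off directly from corollary \ref{cor3}.
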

One can draw the following conclusions from this theorem:
\begin{enumerate}
\item
For large $\Delta$, the bound on the remainder behaves as  $P^{\Delta(N+2L+1)}/\sqrt{\Delta !}$. Therefore, we conclude again that the OPE converges, but
it converges more slowly for large $P$. This is physically plausible, because $P$ is an upper bound on the momentum space support of the spectator fields. Hence, 
the factorisation phenomenon exhibited by the OPE sets in more slowly if the ``state'' generated by the smeared spectator fields from the ``vacuum'' contains more ``UV modes''. 
 \item
We see that if the quantity $\epsilon$ becomes small, then the bound on the remainder is also larger due to the inverse powers such as $\eps^{-3N}$ and powers of $\log \eps$. This is also physically plausible, 
because $\epsilon$ characterises how close the momenta in the support of  the spectator fields are to becoming ``exceptional''. 
Exceptional momentum configurations are in fact well known to lead to IR singularities in massless theories, as one can see already at tree level from the fact that the propagator associated with a line
is $1/k^2$.  
\end{enumerate}
\noindent To prove theorems~\ref{thmope} and \ref{OPEbound1} we use the renormalisation group flow equation method proposed first in~\cite{Polchinski:1983gv}, and developed significantly further in~\cite{Keller:1990ej, Keller:1991bz, Keller:1992by}. This method characterises the quantities of 
interest in QFT (correlation functions, OPE coefficients, etc.) as solutions to certain flow equations, where the flow parameter $\Lambda$ plays the role of a cutoff. It is possible to establish 
bounds on the solutions of the flow equations which ultimately yield the bound on the remainder presented in the theorems. While the general strategy is thereby rather similar to that 
employed in~\cite{Hollands:2011gf} in the massive case, the structure of the actual bounds and the proofs are very different in the massless case. In fact, we need to make use of, and 
considerably extend the technique of ``tree bounds''. Such tree structures are needed in order to control the (physical) IR-singularities present in massless theories.  Tree structures of this nature have appeared previously in the study of massless theories~\cite{GK, Kopper:2001to}, and we partly rely on -- but considerably extend -- these techniques. 

We believe that our methods can also establish 
a convergence results \ref{thmope}, \ref{OPEbound1} for the OPE of gauge invariant local fields analogous to theorem~\ref{thmope} in Yang-Mills theories. For this, one should combine the tools of the present paper with the well-known BRST-method to isolate the physical degrees of freedom. 
\section{Notation and conventions}
 We use a
standard
multi-index notation.
Our multi-indices are elements $w = (w_1, \dots, w_n) \in \mn^{4n}$,
so
that each  $w_i \in \mn^4$
is a four-dimensional multi-index whose entries are $w_{i,\mu} \in \mn$
and $\mu=1,\dots,4$. We often write $\vec{p}=(p_{1},\ldots,p_{n})\in\mathbb{R}^{4n}$ for tuples of four-vectors. If $f(\vec p)$ is a smooth function on $\mr^{4n}$, we set
\ben\label{multder}
\pa_{\vec{p}}^{w} f(\vec p) = \prod_{i,\mu}
\left( {\pa \over \pa p_{i,\mu}} \right)^{w_{i,\mu}} f(\vec p)\ , 
\quad \mbox{and}\quad 
w! = \prod_{i,\mu} w_{i,\mu}! \, , \quad |w|=\sum_{i,\mu} w_{i,\mu} \, .
\een
Taking derivatives $\partial^w$ of a product of
functions
$f_1 \dots f_n$, such derivatives get distributed over the
factors,
resulting
in the sum of all terms of the form $c_{\{v_i\}} \ \partial^{v_1} f_1
\dots \partial^{v_r} f_r$. Here
each $v_i$ is now a $4n$-dimensional multi-index, where
$v_1+\dots+v_r=w$,
and where
\ben\label{cwest}
c_{\{v_i\}} = \frac{(v_1+\dots+v_r)!}{v_1! \dots v_r!} \le r^{|w|}
\een
is the associated weight factor. We also note the bound
\ben\label{cwest2}
\sum_{v_{1}+\ldots+v_{r}=w\in\mathbb{N}^{4n}}
\leq \sum_{v_{1}+\ldots+v_{r}=w\in\mathbb{N}^{4n}} c_{\{v_{i}\}} = r^{|w|}\, .
\een
 For later convenience, we introduce the shorthand notations
\ben
|\vec{p}|=|(p_{1},\ldots,p_{n})|:=\sup_{{I\subseteq \{1,\ldots,n\}  } }|\sum_{e\in I} p_{e}|\, ,
\een
\ben
|\vec{p}|_{a}=\sup(|\vec{p}|, a), \quad a\in\mathbb{R}\, ,
\een
\ben\label{etadef}
 \eta(\vec{p})=\eta(p_{1},\ldots,p_{n}):=\inf_{\substack{I\subsetneq \{1,\ldots,n\} \\ I\neq \emptyset } }|\sum_{e\in I}p_{e}|\, ,
\een
\ben\label{etabdef}
 \bar\eta(\vec{p}):=\inf_{\substack{I\subseteq \{1,\ldots,n\} \\ I\neq \emptyset } }|\sum_{e\in I}p_{e}|
\een
and
\ben\label{kappadef}
\kappa(\La,\vec{p},\LIR):=\sup(\La, \inf(\eta(\vec{p}), \LIR))\, .
\een
When considering Schwinger functions without operator insertions, we will refer to momentum configurations satisfying $\eta(\vec{p})=0$ as \emph{exceptional}, while in the case with operator insertions the corresponding condition is $\bar\eta(\vec{p})=0$.
We write
\ben
\log_{+}(x)=\log\sup(1,x)
\een
for the ``positive part of the logarithm''. For the Fourier transform, we use the convention
\ben
f(x) = \int_p  \hat f(p)\, \e^{ipx} := \int_{\mr^4} \frac{\d^4 p}{(2\pi)^4}\, 
\e^{ipx} \hat f(p) \, .
\een
\paragraph{Weighted Trees:} Our bounds on the various quantities of interest will be expressed in terms of \emph{weighted trees} (i.e. connected graphs with no loops), which will be defined\footnote{In order to shorten the text we appeal to the reader's intuition, assuming that concepts such as external/internal lines of a tree are evident. For more details see~\cite{GK}.} in the following (cf.~\cite{GK}). 
\begin{definition}\label{deftrees1}
Let $4\leq N\in\mathbb{N}$, $R \in \mathbb{N}\,$, $w\in\mathbb{N}^{4(N-1)}$ 
and $\vec{p}=(p_{1},\ldots,p_{N})\in\mathbb{R}^{4N}$ with $p_{N}=-(p_{1}+\ldots+p_{N-1})$.
Let ${\cal T}_{N,R,w}(\vec{p})\,$ be the set of all weighted trees $T=(\tau,\rho,\sigma)$
 satisfying the following properties:
 \begin{enumerate}
 \item\label{it1} The tree $\tau$ has $N$ external lines and vertices of coordination number $2,3$ or $4$.
 \item\label{it2} Each vertex of coordination number $2$ is incident to one and only one internal line.
 \item\label{it3} Denoting by $V_{2},V_{3}$ the number of vertices of coordination number $2$ and $3$ respectively, we require that
 \ben\label{vertexL}
 2V_{2}+V_{3}\leq R\, .
 \een
  \item\label{it4} To each external line we associate (bijectively) one of the ``momentum four-vectors'' $p_{i}$ in the tuple $\vec{p}=(p_{1},\ldots, p_{N})\in\mathbb{R}^{4N}$. 
  \item\label{it5} We denote by $\mathcal{I}(T)$ the set of internal lines of the tree $\tau$. We associate a momentum $k_{i}\in\mathbb{R}^{4}$ ``flowing through'' the internal line $i\in\mathcal{I}$ by demanding momentum conservation at each vertex (note that this assignment is unique).
 \item\label{it6} To every internal line $i\in\mathcal{I}(T)$ we associate a weight $\rho_{i}\in\{0,1,2\}$. The sum of these weights satisfies the rule
 \ben
\rho:= \sum_{i\in\mathcal{I}(T)} \rho_{i}= N-4\, .
 \een
 \item\label{it7} To every internal line  $i\in\mathcal{I}(T)$ we further associate a weight $\sigma_{i}\in\mathbb{N}$ such that
 \ben
 \sigma:=\sum_{i\in\mathcal{I}(T)} \sigma_{i}=|w|\, \quad,\quad \sum_{i\in\mathcal{I}(T) \atop p_{j}\text{ flows through }i} \hspace{-.5cm}\sigma_{i}=|w_{j}|\, .
 \een
 \item\label{it8} The total weight $\theta_{i}\in\mathbb{N}$ associated to an internal line $i\in\mathcal{I}(T)$ is strictly positive, i.e.
 \ben
 \theta_{i}=\rho_{i}+\sigma_{i}>0\quad \forall i\in\mathcal{I}(T)\, .
 \een
\item\label{it9} To each internal line $i$ with weight $\rho_i=1$ is associated a vertex of coordination number 3 to which this line is incident. To each internal line $i$ with weight $\rho_i=0$ is associated either
\begin{itemize}
\item a vertex of coordination number 2 to which this line is incident
\item or a pair of vertices of coordination number 3 which are connected by this line. 
\end{itemize}
In this way every vertex of coordination number smaller than 4 is 
associated to exactly one internal line with $\rho_i<2\,$.
 \end{enumerate}
\end{definition}
 \paragraph{Remark:}
 For an example of a tree in ${\cal T}_{N,R,w}$, see e.g. fig.\ref{fig:reduction1}. We note a few properties of such trees, which are proven in~\cite{GK}:
\begin{itemize}
\item Nestedness:
${\cal T}_{N,R,w} \subset {\cal T}_{N,R+1,w}\,$
\item Saturation: ${\cal T}_{N,R,w} = {\cal T}_{N,3N-2,w}\,$
for any $R \ge 3N -2$  
\item The total weight associated to the internal lines of a tree is
\ben
\theta=\sum_{i\in\mathcal{I}}\theta_{i}=N+|w|-4
\een
\item The number of internal lines in a tree $T\in{\cal T}_{N,2L,w}$ satisfies the inequality
\ben\label{inlinebound}
|\mathcal{I}(T)| \leq \frac{N-4}{2}+L\, .
\een
This follows from property \ref{it3} in definition \ref{deftrees1} combined with the general formula (see~\cite{GK})
\ben\label{INV}
|\mathcal{I}|=\frac{N-4}{2}-\sum_{n=1}^{\infty}\frac{n-4}{2}\, V_{n}\, ,
\een
where $V_{n}$ is the number of vertices of coordination number $n$.
\item Note also that the trees in ${\cal T}_{N,R,w}$ actually only depend on $(|w_{1}|,\ldots,|w_{N}|)\in\mathbb{N}^{N}$. Thus, the trees in ${\cal T}_{N,R,w}$ and in ${\cal T}_{N,R,w'}$ are the same when $|w_{i}|=|w_{i}'|$ for all $1\leq i\leq N$.
\end{itemize}
Our bounds for the Schwinger functions with operator insertions will be formulated in terms of trees with a ``special vertex''. 
 \begin{definition}\label{deftrees2}
 Let $R,N\in\mathbb{N}$, $w\in\mathbb{N}^{4N}$ and $\vec{p}\in\mathbb{R}^{4N}$. The elements of $\tkn_{N,R,w}(\vec{p})$ are weighted trees $T=(\tau,\rho,\sigma)$ defined as before in def.~\ref{deftrees1}, satisfying in addition the following conditions:
 \begin{itemize}
 \item The tree $\tau$ has one special vertex, called $\mathcal{V}$, of coordination number $N_{\mathcal{V}}\in\{0,\ldots,N\}$, at which momentum conservation is not imposed (all momenta ``flow into'' $\mathcal{V}$). We do not count this vertex in the requirements \ref{it1} - \ref{it3} and \ref{it9} of def.~\ref{deftrees1}.
 \item 
 We define $\vec{p}_{\mathcal{V}}\in \mathbb{R}^{4N_{\mathcal{V}}}$ as the set of momenta ``flowing into'' the vertex $\mathcal{V}$, i.e. the collection of momenta $\Big(({p}_{\mathcal{V}})_{1},\ldots,({p}_{\mathcal{V}})_{N_{\mathcal{V}}}\Big)$ associated to the  lines directly attached to the vertex $\mathcal{V}$.
 \item The requirement \ref{it6} in def.~\ref{deftrees1} is replaced by
  \ben
\rho:= \sum_{i\in\mathcal{I}(T)} \rho_{i}= N-N_{\mathcal{V}}\, ,
 \een
 and the requirement $N\geq 4$ is replaced by $N\in\mathbb{N}$.
 \end{itemize}
 \end{definition}
  \paragraph{Remark:}
See e.g. fig.\ref{fig:LoopCase3} below for an example tree in $\tkn_{N,R,w}$. We again collect a few properties of trees in $\tkn_{N,R,w}(\vec{p})$:
\begin{itemize}
\item The nestedness and saturation properties mentioned in the previous remark are unchanged. 
\item The total weight associated to the internal lines of a tree is
\ben
\theta=\sum_{i\in\mathcal{I}}\theta_{i}=N-N_{\mathcal{V}}+|w|\, .
\een
\item The number of internal lines in a tree $T\in\tkn_{N,2L,w}$ satisfies the inequality
\ben\label{internalboundtkn}
|\mathcal{I}(T)| \leq\begin{cases}  \frac{N-2}{2}+L & \text{for }N\geq 2\\
0 & \text{for }N=0
\end{cases}\, .
\een
\end{itemize}
The proof of the first two properties follows the same arguments as in the $\mathcal{T}_{N,R,w}$ case. 
The bound on the number of internal lines follows again from the general formula \eqref{INV} (the case $N=0$ is trivial in the sense that it only contains the vertex $\mathcal{V}$ and no external or internal lines).

We will later also need a bound on the number of weighted trees in the forests ${\cal T}_{N,2L,w}$ and $\tkn_{N,2L,w}$:
\begin{lemma}\label{lemtreebd}
The cardinality of ${\cal T}_{N,2L,w}(\vec{p})$ satisfies the bound
\ben\label{treecard}
|{\cal T}_{N,2L,w}(\vec{p})| \leq  N!\cdot 4^{3N-2} \cdot [3(|w|+1)]^{\frac{N-4}{2}+L}\, .
\een
\end{lemma}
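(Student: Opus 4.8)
The plan is to count an element $T=(\tau,\rho,\sigma)$ of $\mathcal{T}_{N,2L,w}(\vec{p})$ by splitting its data into three essentially independent pieces: the underlying tree shape $\tau$ with its external legs left \emph{unlabelled}, the bijective assignment of the momenta $p_{1},\dots,p_{N}$ to these $N$ external legs (condition \ref{it4}), and the weight assignment $(\rho,\sigma)$ on the internal lines. I will bound the number of choices for each piece separately and multiply; the three factors will produce $4^{3N-2}$, $N!$ and $[3(|w|+1)]^{\frac{N-4}{2}+L}$ respectively. Since the defining conditions of $\mathcal{T}_{N,2L,w}$ do not refer to the numerical values of $\vec{p}$, the resulting bound comes out independent of $\vec{p}$, as required.

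The weight count is the easy part. Fix a shape $\tau$ with internal-line set $\mathcal{I}(T)$. By condition \ref{it6} each line carries $\rho_{i}\in\{0,1,2\}$, i.e.\ at most $3$ choices, and by condition \ref{it7} each carries $\sigma_{i}\in\mn$ with $\sum_{i}\sigma_{i}=|w|$, so $\sigma_{i}\in\{0,\dots,|w|\}$ leaves at most $|w|+1$ choices. Dropping the remaining constraints in \ref{it7}--\ref{it9} only enlarges the count, so there are at most $3(|w|+1)$ possibilities per internal line, hence at most $[3(|w|+1)]^{|\mathcal{I}(T)|}$ in total. Invoking the line bound \eqref{inlinebound}, $|\mathcal{I}(T)|\le\frac{N-4}{2}+L$, together with $3(|w|+1)\ge 1$, this is at most $[3(|w|+1)]^{\frac{N-4}{2}+L}$, uniformly in $\tau$. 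The momentum assignment contributes a further factor of at most $N!$, since there are $N!$ bijections of $\{p_{1},\dots,p_{N}\}$ onto the $N$ external legs of a fixed shape.

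The main obstacle will be the topology count, i.e.\ the bound $4^{3N-2}$ on the number of admissible unlabelled shapes $\tau$. The crucial point is that, although $L$ and $|w|$ are unbounded, the number of vertices and internal lines of $\tau$ is bounded in terms of $N$ alone. Indeed, by condition \ref{it2} every coordination-$2$ vertex is incident to exactly one internal line, hence to exactly one external line, so distinct such vertices occupy distinct external legs and $V_{2}\le N$. Moreover, combining the tree identity $|\mathcal{I}|=V_{2}+V_{3}+V_{4}-1$ with \eqref{INV} gives $V_{4}=\tfrac{1}{2}(N-2-V_{3})$, whence $V_{3}\le N-2$ and $V_{4}\le\tfrac{1}{2}(N-2)$; equivalently one may simply invoke the saturation property $\mathcal{T}_{N,2L,w}\subseteq\mathcal{T}_{N,3N-2,w}$. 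Consequently the total number of vertices satisfies $V=V_{2}+\tfrac{1}{2}V_{3}+\tfrac{N-2}{2}\le 2N-2$ and $|\mathcal{I}|\le 2N-3$, both independent of $L$ and $|w|$.

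It then remains to enumerate the shapes of such bounded trees of maximal coordination $4$. I would carry this out in two steps, following the combinatorial arguments of \cite{GK}: first count the ``base shapes'' built only from coordination-$3$ and coordination-$4$ vertices -- these are bifurcating/trifurcating trees with $N$ external legs and at most $N-3$ internal lines, enumerated by a Catalan/Schr\"oder-type bound of the form $c^{N}$ -- and then account for the coordination-$2$ vertices, each of which can be inserted on at most one external leg and hence contributes at most a factor $2^{N}$. Choosing the constants so that the product is dominated by $4^{3N-2}$ finishes the estimate; it is this bookkeeping of the constant, rather than any conceptual difficulty, where the real work lies. Multiplying the three factors $4^{3N-2}$, $N!$ and $[3(|w|+1)]^{\frac{N-4}{2}+L}$ then yields \eqref{treecard}.
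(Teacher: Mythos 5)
Your decomposition into three factors (unlabelled tree shape, bijective momentum assignment, weight assignment) is exactly the paper's, and two of the three counts are complete and correct: the weight factor $[3(|w|+1)]^{|\mathcal{I}(T)|}\leq [3(|w|+1)]^{\frac{N-4}{2}+L}$ via \eqref{inlinebound}, and the factor $N!$ for attaching the momenta $p_{1},\ldots,p_{N}$ to the external legs. Your vertex bookkeeping also coincides with the paper's: $V_{2}\leq N$ from condition \ref{it2}, the identity $V_{3}+2V_{4}=N-2$, and hence $V\leq 2N-2$, $|\mathcal{I}|\leq 2N-3$, all independent of $L$ and $|w|$.

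The gap is in the topology count, which is the real substance of the lemma. You reduce it to ``a Catalan/Schr\"oder-type bound of the form $c^{N}$'' for the $3$-/$4$-valent base shapes, times $2^{N}$ for insertions of coordination-$2$ vertices, but you never establish that enumeration for unlabelled (non-plane) trees with mixed coordination numbers, never exhibit the constant $c$, and never verify $c^{N}\cdot 2^{N}\leq 4^{3N-2}$; you even identify this as ``where the real work lies'' and then omit it. Note that one cannot ``choose'' the constants: $c$ is whatever the enumeration problem dictates, and the inequality must be checked against it (there is in fact ample room, since $4^{3N-2}=64^{N}/16$, but that must be demonstrated, including for small $N$). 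The paper closes precisely this step by a different and self-contained route: it quotes a ready-made theorem (\cite[Theorem 8.5.1]{Lovasz:2003ua}) stating that there are at most $4^{V+N-1}$ unlabelled trees with $V$ vertices and $N$ external lines, and then bounds
\ben
\sum_{V\leq 2N-2} 4^{V+N-1} < 4^{3N-2}
\een
using the vertex bound above. So your overall strategy is viable — the $2^{N}$ insertion argument for coordination-$2$ vertices is sound — but as written the central exponential estimate on shapes is asserted rather than proven; to complete the proof you should either supply that enumeration with an explicit constant and verify the product against $4^{3N-2}$, or invoke a concrete counting theorem as the paper does.
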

\begin{proof}
5
Let $T\in{\cal T}_{N,2L,w}(\vec{p})$. Note that the vertices in our trees are ``unlabelled''. An upper bound on the number of different unlabelled trees with $V$ vertices and $N$ external lines is given by $4^{V+N-1}$, see e.g.~\cite[Theorem 8.5.1]{Lovasz:2003ua}.
 How many vertices do our trees $T$ have? Recall that we denote by $V_{i}$ the number of vertices of coordination number $i$. Using the formula (see~\cite{GK})
\ben\label{v3v4bound}
V_{3}+2V_{4}=N-2
\een
as well as (this follows from the requirement that vertices of coordination number $2$ are incident to only one internal line)
\ben\label{v2bound}
V_{2}\leq N\, ,
\een
we find that
\ben
\frac{N-2}{2}\leq V=V_{2}+V_{3}+V_{4} \leq   2N-2\, .
\een
Thus, we find that up to weightings and momentum assignments, the number of trees in ${\cal T}_{N,2L,w}(\vec{p})$ is bounded by
\ben
\sum_{V=N-2/2}^{2N-2} 4^{V+N-1}< 
4^{3N-2} \, .
\een
Let us come to the weightings of our trees. To every internal line we associate a weight $\sigma_{i}\in\{0,\ldots, |w|\}$ and a weight $\rho_{i}\in\{0,1,2\}$. The number of possible weight assignments is therefore bounded by the factor $[3(|w|+1)]^{|\mathcal{I}|}$. Using \eqref{inlinebound}, 
we see that there are at most $[3(|w|+1)]^{\frac{N-4}{2}+L}$ possible assignments of weights. Finally, we also associate to every external line of our trees one of the four momenta $p_{1},\ldots,p_{N}$, which can be realised in $N!$ distinct ways. Combining these estimates, we arrive at the bound \eqref{treecard}.
\end{proof}
\begin{lemma}\label{lemtreebd2}
The cardinality of $\tkn_{N,2L,w}(\vec{p})$ satisfies the bound
\ben\label{treecard2}
|\tkn_{N,2L,w}(\vec{p})| \leq  (N+1)!\cdot 4^{3N} \cdot [3(|w|+1)]^{\frac{|N-2|_{+}}{2}+L} \, .
\een
\end{lemma}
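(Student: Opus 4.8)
The plan is to mirror the proof of Lemma~\ref{lemtreebd}, splitting the count of $\tkn_{N,2L,w}(\vec p)$ into three independent factors: the number of admissible unlabelled tree \emph{topologies}, the number of \emph{weight assignments} $(\rho_i,\sigma_i)$ to the internal lines, and the number of ways to \emph{distribute} the external momenta. The only new ingredient compared with Lemma~\ref{lemtreebd} is the special vertex $\mathcal{V}$ of Definition~\ref{deftrees2}, and the entire task is to track how it modifies each of these three factors so as to produce the constants $(N+1)!$, $4^{3N}$ and the exponent $\frac{|N-2|_{+}}{2}+L$ in \eqref{treecard2}.

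First I would bound the number of topologies. Writing $V_i$ for the number of \emph{ordinary} vertices of coordination $i$ (so $\mathcal{V}$ is excluded) and $N_{\mathcal{V}}\in\{0,\dots,N\}$ for the coordination of $\mathcal{V}$, the tree relation (one fewer edge than vertices) now gives $|\mathcal{I}|=V_2+V_3+V_4$, while the handshake identity $2V_2+3V_3+4V_4+N_{\mathcal{V}}=2|\mathcal{I}|+N$ yields the analogue of \eqref{v3v4bound}, namely $V_3+2V_4=N-N_{\mathcal{V}}$. Since $\tau$ is connected, for $N\ge 1$ the vertex $\mathcal{V}$ must carry at least one internal line, i.e. $N_{\mathcal{V}}\ge 1$, so that $V_3+2V_4\le N-1$; combined with $V_2\le N$ (which holds for the same reason as \eqref{v2bound}) this gives $V_2+V_3+V_4\le 2N-1$, so the total number of vertices $m:=V_2+V_3+V_4+1$, now including $\mathcal{V}$, is at most $2N$. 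Inserting this into the bound $4^{m+N-1}$ of \cite[Theorem~8.5.1]{Lovasz:2003ua} for the number of unlabelled trees with $m$ vertices and $N$ external lines and summing the geometric series exactly as in Lemma~\ref{lemtreebd} produces a factor $<4^{3N}$; the extra $4^{2}$ relative to the $\mathcal{T}$ case reflects the one additional vertex together with the slightly weaker vertex bound $2N-1$ in place of $2N-2$.

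Next come the weightings. Each internal line still carries independently $\sigma_i\in\{0,\dots,|w|\}$ and $\rho_i\in\{0,1,2\}$, giving at most $[3(|w|+1)]^{|\mathcal{I}|}$ assignments; substituting the internal-line bound \eqref{internalboundtkn} in place of \eqref{inlinebound} yields the factor $[3(|w|+1)]^{\frac{|N-2|_{+}}{2}+L}$ and thereby accounts for the shifted exponent. For the remaining combinatorial freedom I would assign the momenta $p_1,\dots,p_N$ to the external lines in $N!$ ways and attach to this a further factor of at most $N+1$ to account for the special vertex, coming from the $N+1$ possible values of its coordination number $N_{\mathcal{V}}\in\{0,\dots,N\}$; together these give $(N+1)!$. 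Multiplying the three factors gives \eqref{treecard2}. The degenerate cases are checked separately and are consistent with the stated bound: for $N=0$ the tree is the single vertex $\mathcal{V}$ with no lines, so the cardinality is $1$, and $N=1$ is handled by the same bookkeeping with the convention $|N-2|_{+}=0$.

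The step I expect to be the main obstacle is precisely the topology count in the presence of $\mathcal{V}$. Two points require care. First, one must justify applying the Lovász bound $4^{m+N-1}$ even though $\mathcal{V}$ may have coordination as large as $N$, so the tree is no longer of bounded degree; this is harmless because that bound makes no assumption on the vertex degrees, but it should be stated explicitly. Second, it is the connectivity input $N_{\mathcal{V}}\ge 1$ that sharpens $V_3+2V_4\le N$ to $\le N-1$ and hence keeps the prefactor at $4^{3N}$ rather than $4^{3N+1}$; identifying this as the decisive observation is what makes the clean exponent in \eqref{treecard2} come out. Everything else is the same routine bookkeeping as in Lemma~\ref{lemtreebd}.
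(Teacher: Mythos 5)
Your overall strategy coincides with the paper's: redo Lemma~\ref{lemtreebd}, with the only new bookkeeping coming from the special vertex $\mathcal{V}$, and your preparatory steps (the modified identities $|\mathcal{I}|=V_2+V_3+V_4$ and $V_3+2V_4=N-N_{\mathcal{V}}$, the observation $N_{\mathcal{V}}\ge 1$ for $N\ge 1$, the remark that the Lov\'asz bound needs no degree assumption, and the degenerate cases $N=0,1$) are correct and in fact more explicit than the paper's own proof. However, there is a genuine error in how you account for the special vertex. The Lov\'asz bound counts \emph{unlabelled} topologies, so after fixing a topology, a weight assignment and a momentum assignment one must still choose \emph{which vertex} of the tree is $\mathcal{V}$; this matters because momentum conservation is violated precisely at $\mathcal{V}$, so two trees differing only in the location of $\mathcal{V}$ --- even among vertices of the same coordination number --- are distinct elements of $\tkn_{N,2L,w}(\vec{p})$, with different momenta $k_i$ flowing through their internal lines. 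The number of such choices is bounded by the number of vertices, which you yourself bound by $2N$, and \emph{not} by the number $N+1$ of possible values of the coordination number $N_{\mathcal{V}}$: specifying $N_{\mathcal{V}}$ does not locate $\mathcal{V}$ when several vertices share that coordination number. Since $2N>N+1$ for $N\ge 2$, your factor $N+1$, as justified, undercounts.

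The gap is easily repaired, and this is exactly the point where the paper's proof proceeds differently: the paper multiplies the unlabelled-topology count $4^{3N-1}$ by the number of vertices ($<2N$) and then absorbs this factor via $2N\cdot N!\cdot 4^{3N-1}\le (N+1)!\cdot 4^{3N}$; that is, the passage from $N!$ to $(N+1)!$ \emph{together with} the extra power of $4$ pays for the $\le 2N$ possible placements of $\mathcal{V}$, not for its coordination number. Alternatively you can keep your own constants: your geometric sum actually gives $\sum_{m\le 2N}4^{m+N-1}<4^{3N}/3$, and since $2N\le 3(N+1)$ the corrected product (topologies)$\,\times\, 2N\,\times\, N!\,\times\,$(weights) is still below $(N+1)!\,4^{3N}\,[3(|w|+1)]^{\frac{|N-2|_{+}}{2}+L}$. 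Either patch yields \eqref{treecard2}; as written, however, your justification of the factor $N+1$ is not valid.
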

\begin{proof}
For $N>0$, one proceeds essentially as in the proof of lemma \ref{lemtreebd}. Due to the additional vertex $\mathcal{V}$, our bound on the number of trees in $\tkn_{N,2L,w}$ up to decorations is $4^{3N-1}$. Since one vertex in these trees is in fact labelled, we have to multiply this bound by the number of vertices $V< 2N$. We account for this factor by increasing the factorial to $(N+1)!$ and by increasing the exponent of $4$ by one. Noting that the number of internal lines for the trees $\tkn_{N,2L,w}$ is bounded by \eqref{internalboundtkn}, we obtain a factor $[3(|w|+1)]^{\frac{|N-2|}{2}+L}$ from the possible weightings of the internal lines. 

The case $N=0$ corresponds to the trivial situation where we only have the unique ``tree'' containing only the vertex $\mathcal{V}$ and no lines. 
\end{proof}
\section{The flow equation framework}\label{sec:framework}
In this paper we consider the perturbative quantum field theory of a massless scalar field with self-interaction $g\varphi^{4}$ on $4$-dimensional Euclidean space. We adopt the renormalisation group flow equation framework~\cite{Polchinski:1983gv, Wilson:1971bg,Wilson:1971dh,Wegner:1972ih}. In the following  we will give a brief review of the general formalism and define the  objects of interest for the purpose of this paper. See~\cite{Keller:1990ej, GK,Muller:2002he,Kopper:1997vg} for more comprehensive reviews of the flow equation approach, and~\cite{Hollands:2011gf, Keller:1992by,Holland:2012vw} for a discussion of the OPE in the massive case.
\subsection{Connected amputated Schwinger functions}
At first, we formulate our quantum field theory with ultraviolet (UV) cutoff $\La_{0}$ and infrared (IR) cutoff $\La$ in the standard path integral formalism. This requires two main ingredients:
\begin{enumerate}
\item We define the regularised momentum space propagator as
\ben
C^{\La,\La_{0}}(p)=\frac{1}{p^{2}}\, \left[ \exp\left(-\frac{p^{2}}{\La_{0}^{2}}\right)-\exp\left(-\frac{p^{2}}{\La^{2}}\right) \right]\, ,
\een
For ${p}^{2}\neq 0$, one checks that upon removal of the cutoffs, i.e. in the limit $\La\to 0, \La_{0}\to \infty$, we indeed recover the massless propagator $1/p^{2}$. 
\item The interaction Lagrangian is given by
\ben
L^{\Lambda_0}(\varphi,M) = \int \d^4 x \ \bigg( a^{\Lambda_0}
\, \varphi(x)^2
+b^{\Lambda_0} \, (\partial \varphi(x))^2+\left(\frac{g}{4!}+c^{\Lambda_0}\right)
\, \varphi(x)^4 \bigg) \ .
\label{ac}
\een
Here the \emph{basic field} $\varphi $ is assumed to be in the Schwartz space $\S(\mr^4)$.  The counter terms
$a^{\Lambda_0}(\hbar,M),  b^{\Lambda_0}(\hbar,M)  $
and $c^{\Lambda_0}(\hbar,M)$ are formal power series in $\, \hbar$ of order $\geq 1$, whose coefficients will be
adjusted  order by order to satisfy appropriate renormalisation conditions. They also depend on an arbitrary but fixed {\em renormalisation scale} $M>0$, which is introduced in the massless theory in order to avoid infrared divergences (see \eqref{CAGBC1}-\eqref{CAGBC2} below). 
In order to obtain a well defined limit of the quantities of interest, the counterterms need to be chosen as appropriate functions of the ultraviolet cutoff $\Lambda_0$.
\end{enumerate}
The correlation ($=$ Schwinger- $=$ $n$-point-) functions of $n$ basic fields with
cutoff are defined by the expectation values
\ben\label{pathint}
\begin{split}
 \langle \varphi(x_1) \cdots \varphi(x_n) \rangle &\equiv  \mathbb{E}_{\mu^{\Lambda,\Lambda_{0}}} \bigg[\exp \bigg( -\frac{1}{\hbar}
L^{\Lambda_0}\bigg) \, \varphi(x_1) \cdots \varphi(x_n) \bigg] \bigg/ Z^{\Lambda,\Lambda_0} \\
& =
\int \d\mu^{\Lambda,\Lambda_0} \ \exp \bigg( -\frac{1}{\hbar}
L^{\Lambda_0}\bigg) \, \varphi(x_1) \cdots \varphi(x_n)\bigg/ Z^{\Lambda,\Lambda_0} \, .
\end{split}
\een
This expression is simply the standard Euclidean path-integral, but with the free part in the Lagrangian absorbed into the normalised Gaussian measure\footnote{
See the Appendix to Part I of~\cite{glimm} for mathematical details about Gaussian functional integrals. 
} $\d\mu^{\La,\Lao}$ of covariance $\hbar C^{\La,\La_{0}}$.
The normalisation factor $Z^{\Lambda,\Lambda_0}$
is chosen so that $\langle 1 \rangle = 1$. In the perturbative approach to quantum field theory, which we will follow in this paper, the exponentials in the path integral are expanded out and the
Gaussian integrals are then performed. In this way, we obtain a formal series in $\,\hbar$. But we note that, for finite values of the cutoffs $0<\Lambda<\Lambda_0<\infty\, $ and on imposing a finite (space) volume, the functional integral 
\eqref{pathint} exists in the non-perturbative sense. In the perturbative theory it is shown that one can remove the cutoffs, $\Lambda_0 \to
\infty$ and $\Lambda \to 0$, for a suitable choice of the running couplings at each given but fixed order in $\,\hbar$. 
The correct behaviour of these couplings is determined, in the flow equation  framework, by deriving
first a differential equation in the parameter
$\Lambda$ for the Schwinger functions,
and by then defining the running couplings implicitly through the boundary
conditions for this equation.

These differential equations, referred to from now on as \emph{flow equations}, are written more conveniently in
terms of the hierarchy of ``connected, amputated
Schwinger functions'' (CAS's),
whose generating functional is given by the following 
convolution\footnote{The convolution is defined in general by
$(\mu^{\Lambda,\Lambda_0} \star F)(\varphi) =
\int \d\mu^{\Lambda,\Lambda_0}(\varphi') \ F(\varphi+\varphi')$.}
of the Gaussian measure with the exponentiated interaction,
\ben\label{CAGdef}
-L^{\Lambda, \Lambda_0} := \hbar \, \log \left[ \mu^{\Lambda,\Lambda_0}
\star \exp \bigg(-\frac{1}{\hbar} L^{\Lambda_0}
\bigg)\right] - \hbar \log  Z^{\La,\Lao} \ .
\een
The full Schwinger functions can be recovered from the CAS's in the end. 
One can expand the functionals $L^{\Lambda,\Lambda_0}$ as formal power series in terms of Feynman diagrams with $L$ loops, $N$ external legs and propagator $C^{\Lambda,\Lambda_{0}}(p)$. One can show that, indeed, only connected diagrams contribute,
and the (free) propagators on the external legs are removed. While we will 
not use diagrammatic decompositions in terms of Feynman diagrams here, we will analyse the functional (\ref{CAGdef})
in the sense of formal power series (in momentum space)
\ben\label{genfunc}
L^{\Lambda, \Lambda_0}(\varphi,M) := \sum_{N>0}^\infty
\sum_{L=0}^\infty {\hbar^L}
\int \frac{\d^4 p_1}{(2\pi)^{4}} \dots \frac{\d^4 p_N}{(2\pi)^{4}}\ \bar\L^{\Lambda,\Lambda_0}_{N,L}(p_1, \dots, p_N; M)
\,
\hat\varphi(p_1) \cdots \hat\varphi(p_N) \, .
\een
No
statement
is made about the
convergence of these series. 
Translation invariance of the connected amputated functions in position space implies that
the functions 
$\bar{\L}^{\La,\Lao}_{N,L}(p_1, \dots, p_N; M)$ are supported
at $p_1+\ldots+p_N=0$ (momentum conservation), and thus only depend on $N-1$ independent momenta. We write
\ben\label{deltaCAG}
\bar{\L}^{\Lambda,\Lambda_0}_{N,L}(p_1, \dots, p_N; M) = \frac{\delta^{4}{(\sum_{i=1}^N
p_i)}}{(2\pi)^{4}}
\, \L^{\Lambda,\Lambda_0}_{N,L}(p_1, \dots, p_{N}; M) \, .
\een
In the following we will use the convention that the variable $p_{N}$ is determined in terms of the remaining $N-1$
 four-vectors by momentum conservation, i.e. $p_{N}=-p_{1}-\ldots -p_{N-1}$.
  One should keep in mind, however, that the functions $\bar{\L}^{\La,\Lao}_{N,L}(p_1, \dots, p_N; M)$ are in fact fully symmetric under permutation of $p_{1},\ldots, p_{N}$.

To obtain the flow equations for the CAS's, we take the $\La$-derivative of
eq.\eqref{CAGdef}:
\ben
\partial_{\La} L^{\La,\Lao} \,=\,
\frac{\hbar}{2}\,
\langle\frac{\delta}{\delta \vp},\dot {C}^{\La}\star
\frac{\delta}{\delta \vp}\rangle L^{\La,\Lao}
\,-\,
\frac{1}{2}\, \langle \frac{\delta}{\delta
  \vp} L^{\La,\Lao},
\dot {C}^{\La}\star
\frac{\delta}{\delta \vp} L^{\La,\Lao}\rangle  +
\hbar \partial_\Lambda \log Z^{\La,\Lao} \ .
\label{fe}
\een
Here we use the following notation:
 We write $\,\dot {C}^{\La}\,$ for the derivative 
$\partial_{\La} {C}^{\La,\Lao}\,$, which, as we note,
does not depend on $\Lao$. Further, by $\langle\ ,\  \rangle$ we denote the standard scalar product in
$L^2(\mathbb{R}^4, \d^4 x)\,$, and $\star$ stands for
convolution in $\mr^4$. As an example, 
\ben
\langle\frac{\delta}{\delta \vp},\dot {C}^{\La}\star
\frac{\delta}{\delta \vp}\rangle = \int \d^4x\, \d^4y \ \dot {C}^{\La}(x-y)
\frac{\delta}{\delta \varphi(x)} \frac{\delta}{\delta \varphi(y)}
\een
is the ``functional Laplace operator''. We can also write the flow equation \eqref{fe} in an expanded version as
\ben\label{CAGFEexpand}
\begin{split}
&\partial_{\Lambda}\L^{\Lambda,\Lambda_{0}}_{N,L}(p_{1},\ldots,p_{N};M)= \left({N+2}\atop{2}\right) \, \int_{k} \dot{C}^{\Lambda}(k)\L^{\Lambda,\Lambda_{0}}_{N+2,L-1}( k, -k,  p_{1},\ldots,p_{N}; M)\\
&-\frac{1}{2}\sum_{\substack{l_{1}+l_{2}=L \\ n_{1}+n_{2}=N+2}} n_{1}n_{2}\,  \mathbb{S}\, \left[  \L^{\Lambda,\Lambda_{0}}_{n_{1},l_{1}}(  p_{1},\ldots,p_{n_{1}-1}, q; M)\,  \dot{C}^{\Lambda}(q)\,   \L^{\Lambda,\Lambda_{0}}_{n_{2},l_{2}}(-q,  p_{n_{1}},\ldots,p_{N}; M) \right]\, ,
\end{split}
\een
with $q=p_{n_{1}}+\ldots+p_{N}=-p_{1} -\ldots -p_{n_{1}-1})$ and where $\mathbb{S}$ is the symmetrisation operator acting on functions of the momenta $(p_{1},\ldots, p_{N})$ by taking the mean value over all permutations $\pi$ of $1,\ldots, N$ satisfying $\pi(1)<\pi(2)<\ldots<\pi(n_{1}-1)$ and $\pi(n_{1})<\ldots<\pi(N)$. We also note that for the theory proposed through \eqref{ac}, only even moments of the CAS's (i.e. even $N$) will be non-vanishing due to the symmetry $\varphi\to-\varphi$. Further, it follows from \eqref{ac} and \eqref{CAGFEexpand} that $\L^{\La,\La_{0}}_{2,0}$ vanishes identically. 

The CAS's are defined uniquely as a solution to this differential equation only after we impose suitable boundary
conditions, which are fixed by adjusting the counter terms $a^{\La_{0}},b^{\La_{0}},c^{\La_{0}}$ in the Lagrangian.  In this paper, we make the following choice (cf.~\cite{GK}):
\begin{align}
\L_{2,L}^{0,\La_{0}}(\vec 0;M)&=0  \label{CAGBC1}\\
\partial_{p_{\mu}}\partial_{p_{\nu}}\L_{2,L}^{0,\La_{0}}(M e_0; M)&=
0  \label{CAGBC3} \\
\L_{4,L}^{0,\La_{0}}(Me_1, Me_2, Me_3;M)&=\frac{g}{4!}\delta_{L,0}\label{CAGBC4} \\
\partial_{\vec{p}}^{w}\L_{N,L}^{\La_{0},\La_{0}}(\vec{p};M)&=0 &\text{ for }N+|w|> 4 \label{CAGBC2}
\end{align}
Here the parameter $M>0$ specifies a \emph{renormalisation scale},  $e_0 \in\mathbb{R}^{4}\setminus\{0\}$ and $(e_{1},e_{2},e_{3},-(e_{1}+e_{2}+e_{3}))\in\mathbb{R}^{4\times 4}$ are unit vectors satisfying $\eta(\vec{e})>0$ (we refer to such combinations of momenta as ``non-exceptional'')\footnote{Changing $e_0, ..., e_3$ corresponds to a finite change of the renormalisation conditions imposed. The dependence of the Schwinger functions on these vectors is not explicitly shown in our notation.}. The parameter $M$ is not needed in the case of massive fields, where one can simply impose renormalisation conditions at zero momentum. This is not possible in the massless case, where Schwinger functions will generally be divergent at vanishing external momentum. 

The last boundary condition, \eqref{CAGBC2}, simply follows by noting that $L^{\Lambda_{0},\Lambda_{0}}=L^{\Lambda_{0}}$, see \eqref{CAGdef}, and by recalling the definition of the interaction $L^{\Lambda_{0}}$, \eqref{ac}. As will turn out in \eqref{N2estweak}, the condition \eqref{CAGBC1} is necessary in order to guarantee IR-finiteness of the massless theory. 

The  CAS's are then uniquely determined by integrating the
flow equations subject to these boundary conditions, see e.g.~\cite{Keller:1990ej,Muller:2002he}. The existence of the limits $\La \to 0, \La_0 \to \infty$ follows from the 
uniform bounds given below\footnote{More precisely, we show in the present paper that Schwinger functions are bounded in the limit $\La \to 0, \La_0 \to \infty$. To show that the limit is in fact convergent (i.e. does not oscillate within the bounds), one has to derive in addition bounds on the $\La_{0}$-derivative of the CAS's. This has been carried out in \cite{GK}.}. 
\subsection{Composite field insertions}\label{subsec:compfields}
In the previous section we have defined Schwinger functions of products of the basic field. We now turn to the so called \emph{composite operators}, or composite fields, which are given by the monomials
\ben\label{compop}
\O_{A}= \partial^{w_{1}}\varphi\cdots \partial^{w_{N}}\varphi\, , \quad A=\{N,w\}\, .
\een
Here $w=(w_{1},\ldots,w_{N})\in\mathbb{N}^{4N}$ is a multi-index [see \eqref{multder}], and we define the canonical dimension of such a field by
\ben
[A]:= N+\sum_{i}|w_{i}| \, .
\een
The Schwinger functions with insertions of composite operators $\O_{A_{1}},\ldots,\O_{A_{r}}$ are obtained by replacing the action $L^{\Lambda_0}$ with an
action containing additional sources, expressed through smooth functionals.
Particular examples of such functionals are {\em local} ones. We consider local functionals
\ben\label{func}
F (\varphi)= \sum_{i=1}^{r} \int \d^4 x \  \O_{A_{i}}(x) \ f^{A_{i}}(x) \,\, ,
\quad f^{A_{i}} \in C^\infty(\mr^4) \, ,
\een
where the composite operators $\O_{A_{i}}$ are as in eq.~\eqref{compop}. We now modify the action $L^{\Lambda_0}$ by adding sources $f^A$ as follows:
\ben\label{LFdef1}
L^{\Lambda_0}\to L^{\Lao}_F:=L^{\Lambda_0}+ F + \sum_{j=0}^{r}
B^{\Lambda_0}_j(\underbrace{F , \cdots , F}_j) 
\een
The expression $B^{\Lambda_0}_j$ represents the counter terms which are needed to eliminate the additional divergences arising from composite field insertions in the limit $\Lambda_0 \to \infty$. It is a symmetric, multilinear map acting on local
functionals of the type~\eqref{func}, and returns again a local functional.
More explicitly, we can write (denoting by $I$ subsets of $\{1,...,r\}$
and writing $\vec{x}_{I}=(x_{i})_{i\in I}\in\mathbb{R}^{4|I|},
\vec{A}_I=(A_{i})_{i\in I}$ as well
as $\langle \vec x \rangle_I=|I|^{-1} \sum_{i \in I} x_i$ for the ``center of mass'')
\ben
\label{countert}
B^{\La_{0}}_{j}(F,\dots,F)
:= \sum_{\substack{I\subseteq\{1,\ldots, r\} \\ |I|=j  }}\, \prod_{i\in I} \int\d^{4} x_{i}\, f^{A_{i}}(x_{i}) \sum_{[C]\leq \sum_{i\in I}[A_{i}]}
b^{\Lambda_0, C}_{\vec{A}_I} (\vec{x}_{I})\ \O_{C}(\langle \vec{x}_I \rangle) 
\een
for the counter terms.  The $j=2$ term, for example, is of the form
\ben
B^{\La_{0}}_{2}(F, F)=
\sum_{\{i,j\}\subseteq\{1,\ldots,r\}}\int\d^{4} x_{i}\int\d^{4}x_{j}\, f^{A_{i}}(x_{i})f^{A_{j}}(x_{j}) \sum_{[C]\leq [A_{i}]+[A_{j}]}  b^{\La_{0},C}_{A_{i}A_{j}}(x_{i}-x_{j})\, \O_{C}(\frac{x_{i}+x_{j}}{2})\, .
\een
The counter terms are defined implicitly below through a flow equation and boundary conditions, see eqs.~\eqref{BCunsub}, \eqref{BCL1} and \eqref{BCL2}. 
They are (at least) of order $\,\hbar\,$.
To obtain the Schwinger functions with insertion of the composite operators $\O_{A_{1}},\ldots,\O_{A_{r}}$, we now simply take functional
derivatives with respect to the sources, setting the sources
$f^{A_i} =0$ afterwards:
\ben
\begin{split}
&\langle \O_{A_1}(x_1) \cdots \O_{A_r}(x_r)  \rangle :=
\frac{(-\hbar)^r\delta^r}{\delta f^{A_1}(x_1) \dots \delta f^{A_r}(x_r)}
\ (Z^{\La,\Lao})^{-1} \int \d\mu^{\Lambda,\Lambda_0}
\exp \bigg(-\frac{1}{\hbar}  L^{\Lambda_0}_F(\varphi)
\bigg)\biggr|_{ f^{A_i}=0} \, .
\end{split}
\een
Note that the Schwinger functions from \eqref{pathint} are a special case
of this equation; there we
take $F = \int \d^4 x \ f(x) \ \varphi(x)$, and we have
$B^{\Lambda_0}_j=0$, because no extra counter terms
are required for this insertion. As above, we can define
a corresponding effective action
as
\ben\label{LFins}
-L^{\Lambda,\Lambda_0}_F := \hbar \, \log \, \mu^{\Lambda,\Lambda_0}
\star \exp \bigg(-\frac{1}{\hbar} ( L^{\Lambda_0}
+ F + \sum_{j=0}^\infty B^{\Lambda_0}_j(F^{\otimes j})) \bigg)
- \hbar\log Z^{\La,\Lao}
\een
which now depends on the sources $f^{A_i}$, as well as on $\varphi$. From this modified effective action we determine the generating functionals of the CAS's with
$r$ operator insertions:
\ben
L^{\Lambda,\Lambda_0}(\O_{A_1}(x_1) \otimes \dots \otimes \O_{A_r}(x_r); M)
:=
\frac{\delta^r \ L^{\Lambda,\Lambda_0}_F}{\delta f^{A_1}(x_1) \dots \delta
  f^{A_r}(x_r)}
\,  \Bigg|_{f^{A_i} =  0} \, .
\een
The CAS's with insertions defined this way are multi-linear, as indicated by the
tensor product notation, and symmetric in the insertions. We can also expand the CAS's with insertions in $\hat\varphi$ and $\,\hbar$ again (in momentum space):
\ben
L^{\Lambda,\Lambda_0} \bigg( \bigotimes_{i=1}^r \O_{A_i}(x_i); M \bigg) =\sum_{N,L \ge 0} {\hbar^L} \int \frac{\d^4p_1}{(2\pi)^{4}}\dots \frac{\d^4p_N}{(2\pi)^{4}} \
\L^{\Lambda,\Lambda_0}_{N,L}\bigg( \bigotimes_{i=1}^r \O_{A_i}(x_i);
p_1,
\dots, p_N; M \bigg)
\prod_{j=1}^N \hat{\vp}(p_j) \, .
\een
Due to the insertions in $\L_{N,L}^{\La,\Lao} (\otimes_i
\O_{A_i}(x_i),
\vec p; M)$, there is no restriction on the momentum set
$\vec p$ in this case. Translation invariance, however, implies that the CAS's 
with insertions at a translated set of points $x_j + y\,$ are
obtained
from those
at $y=0\,$ through multiplication by $\e^{iy\sum_{i=1}^n p_i}$, i.e.
\ben\label{CAGtrans}
\L^{\Lambda,\Lambda_0}_{N,L}\bigg( \bigotimes_{i=1}^r \O_{A_i}(x_i+y);
p_1,
\dots, p_N; M \bigg)=
\e^{iy\sum_{i=1}^N p_i}\, \L^{\Lambda,\Lambda_0}_{N,L}\bigg( \bigotimes_{i=1}^r \O_{A_i}(x_i);
p_1,
\dots, p_N; M \bigg)\, .
\een
From \eqref{LFins} we can determine the flow equation for CAS's with insertions. In this paper, we are interested in CAS's with one or two operator insertions. For those cases, the flow equation reads~\cite{Keller:1991bz,Keller:1992by}
\ben
\partial_{\La} L^{\La,\Lao}(\O_{A}) \,=\,
\frac{\hbar}{2}\,
\langle\frac{\delta}{\delta \vp},\dot {C}^{\La}\star
\frac{\delta}{\delta \vp}\rangle L^{\La,\Lao}(\O_{A})
\,-\,
 \langle \frac{\delta}{\delta
  \vp} L^{\La,\Lao}(\O_{A}),
\dot {C}^{\La}\star
\frac{\delta}{\delta \vp} L^{\La,\Lao}\rangle  +\hbar
 \partial_\Lambda \log Z^{\La,\Lao}
\label{FE1ins}
\een
and
\ben
\begin{split}
\partial_{\La} L^{\La,\Lao}(\O_{A}\otimes\O_{B}) \,&=\,
\frac{\hbar}{2}\,
\langle\frac{\delta}{\delta \vp},\dot {C}^{\La}\star
\frac{\delta}{\delta \vp}\rangle L^{\La,\Lao}(\O_{A}\otimes\O_{B})\\
&-\,
 \langle \frac{\delta}{\delta
  \vp} L^{\La,\Lao}(\O_{A}\otimes\O_{B}),
\dot {C}^{\La}\star
\frac{\delta}{\delta \vp} L^{\La,\Lao}\rangle \\
&-\,
 \langle \frac{\delta}{\delta
  \vp} L^{\La,\Lao}(\O_{A}),
\dot {C}^{\La}\star
\frac{\delta}{\delta \vp} L^{\La,\Lao}(\O_{B})\rangle  + \hbar
 \partial_\Lambda \log Z^{\La,\Lao} \ .
\label{FE2ins}
\end{split}
\een
To complete the definition of the CAS's with insertions, we again have to specify boundary conditions on the corresponding flow equation.  For CAS's with one insertion we choose the convention (here $A=\{N',w'\}$)
\ben\label{BCL1}
\partial^{w}_{\vec{p}}\L^{M,\La_{0}}_{N,L}(\O_{A}(0); \vec{0}; M)= i^{|w|}w! \delta_{w,w'}\delta_{N,N'}\delta_{L,0} \quad \text{ for }N+|w|\leq [A]
\een
\ben\label{BCL2}
\partial^{w}_{\vec{p}}\L^{\La_{0},\La_{0}}_{N,L}(\O_{A}(0); \vec{p}; M)=0\quad \text{ for }N+|w|>[A] \quad .
\een
Our freedom to choose boundary conditions different from \eqref{BCL1} can be seen to correspond to composite field redefinitions. Due to the linearity of the flow equation \eqref{FE1ins}, linear superpositions of CAS's with one insertion are again solutions to the system of flow equations, with boundary values given by the corresponding superpositions. 

The simplest choice of boundary conditions in the case of two insertions is
\ben\label{BCunsub}
\partial^{w}_{\vec{p}}\L^{\La_{0},\La_{0}}_{N,L}(\O_{A}(x)\otimes\O_{B}(0); \vec{p}; M)=0\quad \text{for all } w,N,L.
\quad
\een
Imposing these boundary conditions means that no regularising counter
terms for the corresponding operator product are introduced. We also define \emph{subtracted} 
versions of the CAS functions with two insertions, i.e. versions which are more regular on the diagonal $x=0$. These ``normal products'' also satisfy the flow equation \eqref{FE2ins}, but subject to the boundary conditions
\ben\label{BCL2ins1A}
\partial^{w}_{\vec{p}}\L^{M,\La_{0}}_{D,N,L}(\O_{A}(x)\otimes\O_{B}(0); \vec{0}; M)= 0\quad \text{ for }N+|w|\leq D
\een
\ben\label{BCL2ins2A}
\partial^{w}_{\vec{p}}\L^{\La_{0},\La_{0}}_{D,N,L}(\O_{A}(x)\otimes\O_{B}(0); \vec{p}; M)=0\quad \text{ for }N+|w|>D \quad .
\een
The parameter $D\geq -1$ controls the degree of oversubtraction. For $D=-1$, the normal products coincide with the previously defined Schwinger functions with two insertions. 

In \eqref{BCL1} and \eqref{BCL2ins1A}, we have fixed the renormalisation conditions at the finite scale $\La=M>0$. This differs slightly from the conventions used in the massive case (see~\cite{Hollands:2011gf}), where one usually specifies renormalisation conditions at $\La=0$. As long as no infrared singularities appear, both schemes are strictly equivalent. The boundary conditions at $\La=0$ may be calculated in terms of those at $\La=M$ and vice versa. However, since CAS's at $\La=0$ may diverge at exceptional momenta, in particular at zero momentum, it is technically simpler and safer to impose the boundary conditions at $\La=M>0$ and $\vec{p}=0$. Then we may still calculate the boundary values at $\La=0$ and any non-exceptional momentum configuration as a function of those given in \eqref{BCL1}-\eqref{BCL2ins2A}.

To conclude this section, we state useful identities of the CAS's with insertions, which go by the name of \emph{Lowenstein rules} in the literature. Namely, it is known that~\cite{ Keller:1991bz,Keller:1992by, low}
\ben\label{Low1}
\partial_{x}^{w}\, L^{\La,\La_{0}}(\O_{A}(x)) =  L^{\La,\La_{0}}(\partial_{x}^{w}\O_{A}(x))
\een
and
\ben\label{Low2}
\partial_{x}^{w}\, L_{D}^{\La,\La_{0}}(\O_{A}(x)\otimes\O_{B}(0)) =  L_{D}^{\La,\La_{0}}(\partial_{x}^{w}\O_{A}(x)\otimes\O_{B}(0)) \, .
\een
The relations are non-trivial, since the $x$-derivatives act on the Schwinger functions on the left hand side, whereas they act on the composite operators $\O_{A}$ themselves on the right hand side. To prove these equations, one has to show that both sides satisfy the same linear flow equations and boundary conditions. This can be verified using the definitions above as well as the translation property \eqref{CAGtrans}.
\subsection{The Operator Product Expansion}
We are finally ready to give the definition of the  OPE coefficients in our theory. To have a more compact notation, let us define the operator $\D^{A}$ acting on differentiable and sufficiently regular functionals $F(\varphi)$ of Schwartz space functions $\varphi\in\S(\mathbb{R}^{4})$ by
\ben\label{defD}
\D^{A} F(\varphi) = \left. \frac{(-i)^{|w|}}{N!\,w!}\, \partial_{\vec{p}}^{w}\frac{\delta^{N}}{\delta\hat\varphi(p_{1})\cdots\delta\hat\varphi(p_{N})}\, F(\varphi)\, \right|_{\hat\varphi=0, \vec{p}=0}\quad ,
\een
where $A=\{N,w\}$. Further, let us also define the Taylor expansion operator
\ben\label{defT}
\T^{j}_{x}\, f(x)=\sum_{|w|=j}\, \frac{{x}^{w}}{w!}\, \partial^{w}f({0})
\een
where $f$ is a sufficiently smooth function on $\mathbb{R}^{4}$.  
\begin{definition}[OPE coefficients~\cite{Hollands:2011gf, Keller:1992by}]\label{defOPE}
Let $\Delta:=[C]-([A]+[B])$. The OPE coefficients are defined as
\ben\label{OPEhigh}
\C_{AB}^{C}(x)\,
:=\, \D^{C}\left\{(1-\sum_{j < \Delta}\T^{j}_{{x}}\,) \Big[L^{M,\La_{0}}(\O_{A}(x))L^{M,\La_{0}}(\O_{B}(0)) - L^{M,\Lambda_{0}}_{[C]-1}\left( \O_{A}(x)\otimes\O_{B}(0)\right)  \Big] \right\}\, ,
\een
where $M>0$ is the renormalisation scale introduced above.
\end{definition}
This definition is ultimately motivated by thm.~\ref{OPEbound}.
%
\section{Bounds on Schwinger functions}\label{sec:bounds}
In this section, we prove bounds on Schwinger functions with up to two operator insertions, using an induction scheme based on the renormalisation group flow equations~\cite{Polchinski:1983gv,Keller:1990ej,Muller:2002he}. These bounds are uniform in the cutoffs and imply in particular that it is safe to take the limits $\La \to 0, \La_0 \to \infty$. 
To obtain the bounds for the connected amputated Schwinger functions without insertions, one basically integrates \eqref{CAGFEexpand} over $\Lambda$ for increasing values of $N+2L$ and for given $N+2L$ for increasing values $L$. In this way, the right hand side of the equation is always known inductively. The ``integration constants'' are fixed through the boundary conditions
on the CASs. Since these were given at different values of $\La$ (see \eqref{CAGBC1} - \eqref{CAGBC2}), we have to distinguish the cases $N+|w|\leq 4$ (``relevant terms'') and $N+|w|>4$ (``irrelevant terms'') and adjust the limits of integration accordingly (integrate from $0$ to $\La$ in the former-, and from $\La$ to $\La_{0}$ in the latter case). The procedure for
CAS's with insertions is a relatively straightforward extension of the same idea: The CAS's without insertions enter the construction of CAS's with one insertion via their flow equation, which in turn serve as an input for CAS's with two insertions, etc. 
 
In either case, one wants to take the limits $\La \to 0, \La_0 \to \infty$, and for this one must have suitable uniform bounds on the CASs. The above inductive procedure in principle gives 
a means to obtain these if one can show that the form of the bound is reproduced in the induction step. In massless theories, the behavior of the CAS's in momentum space is more complicated owing to their more complicated IR behaviour, and the bounds will clearly have to take this into account. In particular, straightforwardly taking the limit $m\to 0$ in the known bounds for massive fields such as~\cite{Hollands:2011gf,Keller:1990ej,Keller:1991bz,Keller:1992by,Holland:2012vw} does not work, and a more refined control over the momentum dependence of the Schwinger functions is needed in the massless case. Our methodology for obtaining such bounds on the CAS's with and without insertions can be viewed as an extension of the ideas developed in~\cite{GK, Kopper:2001to}, relying also on~\cite{Hollands:2011gf, Keller:1991bz}. Our results are presented in sections~\ref{subsec:noins} (no insertions),~\ref{sec:1ins} (one insertion) and~\ref{sec:2ins} (two insertions). 

In the following we will set $g=1$ for convenience. The dependence of the Schwinger functions on the coupling constant can in general be  determined by relatively simple considerations. For example, the connected amputated Schwinger functions without insertions $\L_{N,L}^{\La,\La_{0}}$ carry a power of $g^{\frac{N-2}{2}+L}$.
\subsection{Schwinger functions without operator insertion}\label{subsec:noins}
In all what follows, $\LIR > 0\,$ is the renormalisation scale
 and 
 $\mathcal{P}_{L}$ are - each time they appear possibly new - polynomials of degree $L$ with non-negative coefficients depending on $N$ and $L$. $ {\cal T}_{N,2L,w}(\vec{p})$ are the trees specified in def.~\ref{deftrees1},  $k_i\in\mathbb{R}^{4}$ is the momentum flowing through the internal line $i\in\mathcal{I}(T)$ (i.e. the unique momentum associated to that line by momentum conservation at the vertices), 
and $\theta_{i} > 0$ is the total weight associated to the line $i\,$.  The weight function $f_{\La}(k;N,L,\theta)\,$ is defined as
\eq\label{fdef}
f_{\La}(k;N,L,\theta) =   e^{-\frac{k^2}{ \al(N,L) \La^2}}\ \La^{-\theta-1}\, ,
\eqe
where
\eq\label{alphadef}
\al(N,L)= 2\left(\frac{N}{2}+2L\right)^2\ .
\eqe
Our result is:
\begin{thm}\label{thm1}
For $N=2$ and any $L\geq 1$ and $w\in\mathbb{N}^{4}$ with $|w|>2$ there exists a constant $K>0$ such that 
\ben\label{boundN2}
|\pa^w_{\vec{p}} {\cal L}^{\La,\Lao}_{2,L}(\vec{p};\LIR)| 
\le    \, \sqrt{|w|!}\ K^{(4L-2)(|w|+1)}   \int_{\La}^{\La_{0}}\d\lambda\ \frac{e^{-\frac{p^{2}}{\alpha(2,L)\lambda^{2}}} }{\lambda^{|w|-1}} \mathcal{P}_{L}\left( \log_{+}\sup\left(\frac{|p|_\LIR}{\lambda},
 \frac{\lambda}{\LIR}\right)\right)\, .
\een
For any $N\geq4$, any $L\geq 0$ and any multi-index
$ w\in\mathbb{N}^{4(N-1)}$ satisfying $N+|w|> 4$ there exists $K>0$ such that
\ben\label{mainbound}
\begin{split}
|\pa^w_{\vec{p}} {\cal L}^{\La,\Lao}_{N,L}(\vec{p};\LIR)| 
&\le  \sqrt{|w|!}\ K^{(N+4L-4)(|w|+1)}\,   \sum_{T \in {\cal T}_{N,2L,w}(\vec{p})}
\\
&\times  
 \prod_{i \in \mathcal{I}(T)} \int_{\La}^{\La_{0}} \d\la_{i}\ f_{\lambda_{i}}(k_i;N,L,\theta_i) 
\mathcal{P}_{L}\left( \log_{+}\sup\left(\frac{|\vec{p}|_\LIR}{\min_{j\in\mathcal{I}}\lambda_{j}},
 \frac{\max_{j\in\mathcal{I}}\lambda_{j}}{\LIR}\right)\right) \, ,
\end{split}
\een
where $  0 \leq \La \le \Lao$
 and $\vec{p}=(p_{1},\ldots,p_{N}) \in \mathbb{R}^{4N}\,$  with $p_{N}=-(p_{1}+\ldots+p_{N-1})$. 
 \end{thm}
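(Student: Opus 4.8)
The plan is to prove both bounds \eqref{boundN2} and \eqref{mainbound} simultaneously by induction on the flow equation \eqref{CAGFEexpand}, ordering the induction upward in $s:=N+2L$ and, for fixed $s$, upward in the loop order $L$. With this ordering the ``loop term'' $\binom{N+2}{2}\int_k \dot C^{\La}(k)\,\mathcal{L}^{\La,\Lao}_{N+2,L-1}$ involves the same value of $s$ but strictly smaller $L$, while every factor in the quadratic ``tree term'' $\mathcal{L}_{n_1,l_1}\dot C^{\La}\mathcal{L}_{n_2,l_2}$ carries $n_i+2l_i<s$ (the lowest nonvanishing CAS has order $4$, and $n_1+2l_1+n_2+2l_2=s+2$). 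Hence in each step the whole right-hand side is controlled by the inductive hypothesis. Since the theorem concerns only the irrelevant sector $N+|w|>4$, I would apply $\pa^w_{\vec p}$ to the $\La$-derivative, then integrate from $\La$ up to $\La_0$, using the vanishing boundary condition \eqref{CAGBC2} at $\La_0$; the relevant terms $N+|w|\le 4$, needed as inductive input, are treated in the opposite direction (integrating up from the renormalisation scale, with \eqref{CAGBC1} and its companions fixing the constants) and carry an auxiliary form of the bound that must be folded into the induction hypothesis.

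The heart of the argument is to show that the weighted-tree structure of \eqref{mainbound} is exactly reproduced. Here the explicit form $\dot C^{\La}(k)=-\tfrac{2}{\La^{3}}e^{-k^{2}/\La^{2}}$ is decisive: it carries no infrared singularity and, up to the constant $\alpha$, matches the profile of the weight function $f_{\La}$ in \eqref{fdef}. For the quadratic term I would read the product $\mathcal{L}_{n_1,l_1}\,\dot C^{\La}(q)\,\mathcal{L}_{n_2,l_2}$ as the gluing of a tree $T_1\in\mathcal{T}_{n_1,2l_1,\cdot}$ to a tree $T_2\in\mathcal{T}_{n_2,2l_2,\cdot}$ along a new internal line carrying $q$, on which $\dot C^{\La}(q)$ supplies exactly one new factor of $f$; one then checks that the glued object satisfies all constraints of Definition \ref{deftrees1}, in particular the sum rules $\rho=N-4$, $\sigma=|w|$ and the coordination-number/association rules in item \ref{it9}. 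For the loop term I would contract the two external lines of a tree in $\mathcal{T}_{N+2,2(L-1),\cdot}$ carrying the loop momenta $k,-k$; the $k$-integration against $\dot C^{\La}(k)$ realises the reduction of this tree to one in $\mathcal{T}_{N,2L,\cdot}$ (cf.\ fig.~\ref{fig:reduction1}), the bookkeeping of vertices being arranged so that $2V_2+V_3\le 2L$ is preserved. The nestedness and saturation properties quoted after Definition \ref{deftrees1}, together with Lemma \ref{lemtreebd}, then keep the tree sum finite and of the asserted size.

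Next I would control the momentum derivatives and the constants. Applying $\pa^w_{\vec p}$ distributes by the Leibniz rule \eqref{cwest}--\eqref{cwest2} over $\dot C^{\La}(q)$ and the inductive factors; a derivative landing on $\dot C^{\La}$ produces a factor $\sim k/\La^{2}$ absorbed by the Gaussian (which is precisely why $\alpha$ in \eqref{alphadef} is taken $>1$, giving the needed slack), while a derivative landing on an inductive factor raises its $|w|$ by one and is handled by the inductive bound. The weight factors $r^{|w|}\le 3^{|w|}$ from \eqref{cwest2}, the tree-count $[3(|w|+1)]^{(N-4)/2+L}$ from Lemma \ref{lemtreebd}, and the combinatorics of distributing $\sigma$ over the internal lines combine to reproduce the prefactor $\sqrt{|w|!}\,K^{(N+4L-4)(|w|+1)}$ for suitable $K$; the characteristic square-root factorial $\sqrt{|w|!}$ is propagated through the splitting $\sqrt{(a+b)!}\le 2^{(a+b)/2}\sqrt{a!}\,\sqrt{b!}$ each time the derivative multi-index is shared between two factors, the powers of $2$ being absorbed into $K$. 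The logarithmic polynomials are reproduced by the $\lambda$-integrations: each integration of $f_{\lambda}$ against a logarithm raises the degree by at most one per loop, and the argument $\sup(|\vec{p}|_\LIR/\min_{j}\lambda_j,\ \max_j\lambda_j/\LIR)$ is exactly the combination stable under both the loop ($k$-)integration, which feeds the UV logarithms, and the gluing operation, which feeds the IR logarithms near exceptional momenta. The case $N=2$ in \eqref{boundN2} is degenerate, since no admissible tree exists; I would treat it as a separate base step, using that the relevant part is fixed at $\La=0$ by \eqref{CAGBC1} (essential for IR finiteness) and integrating the single surviving loop and quadratic contributions directly into the stated one-dimensional integral.

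The step I expect to be the \emph{main obstacle} is the infrared control at exceptional momentum configurations, i.e.\ showing that integrating a tree bound for $\mathcal{L}_{N+2,L-1}$ over the loop momentum $k$ against $\dot C^{\La}(k)$ again yields a bound of the form \eqref{mainbound}, \emph{uniformly} as $\La\to 0$. The danger is that an internal line of the reduced tree may carry a near-exceptional momentum $k_i$, so that $\int_{\La}^{\La_0}\!d\lambda_i\,f_{\lambda_i}(k_i;\cdots)$ grows like an inverse power of $|k_i|$; the whole design of the weights $\theta_i=\rho_i+\sigma_i$ and of the rule $\theta_i>0$ on every internal line is meant to make these singularities integrable and to show that they are faithfully transported through both the reduction and the gluing operations. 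Establishing the requisite ``tree convolution'' estimate, namely that $\int_k \dot C^{\La}(k)\prod_i\int f_{\lambda_i}$ is dominated by a single tree integral with the correct weights and the correct $\log_{+}$-argument, is the technically decisive lemma, and it is precisely where the refinements over the massive analysis of \cite{Hollands:2011gf} and over the tree techniques of \cite{GK} must be supplied.
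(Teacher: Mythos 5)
Your skeleton coincides with the paper's proof: the same flow-equation induction (up in $N+2L$, then in $L$), integration from $\La$ up to $\Lao$ against the vanishing boundary condition \eqref{CAGBC2} in the irrelevant sector, gluing of two trees along a new internal line carrying $q$ for the quadratic term, and a reduction procedure contracting the loop legs of a tree in ${\cal T}_{N+2,2L-2,\cdot}$ for the linear term. However, the two ingredients you defer are precisely where the substance of the proof lies, and the first of them is a genuine gap. You propose to carry ``an auxiliary form of the bound'' for the relevant sector $N+|w|\leq 4$ through the induction, but you never state it, and its precise form is not auxiliary but decisive. The paper does not prove these bounds at all: it imports them from \cite{GK} as \eqref{boundrel1}--\eqref{boundrel2}, with the characteristic structure $\kappa(\La,\vec p,\LIR)=\sup(\La,\inf(\eta(\vec p),\LIR))$ and, for the two-point function, the vanishing factor $|p|_{\La}^{2-|w|}$ enforced by \eqref{CAGBC1}. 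This last factor is exactly what rescues the quadratic term with a two-point insertion ($n_2=2$): there $\dot C^{\la}(q)\sim\la^{-3}e^{-q^2/\la^2}$ multiplies ${\cal L}_{2,l_2}(q,-q)$ at a possibly exceptional momentum $q$, and only the bound \eqref{N2estweak} together with $|q|_{\la}^{2}e^{-q^{2}/4\la^{2}}\leq 2\la^{2}$ converts this into the harmless weight that becomes the new internal line. Without specifying (and justifying) these relevant-sector inputs -- whose derivation is the content of the separate work \cite{GK} -- your induction cannot close; guessing their form is not routine, since the entire massless IR difficulty is concentrated there.

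The second deferral is the ``tree convolution estimate'' for the loop term, which you correctly identify as the technical crux but leave unproven. In the paper it is not one lemma but a chain: lemma \ref{lablemma1} bounding the $\ell$-integral (this is where the specific choice \eqref{alphadef} of $\alpha(N,L)$ enters, through the inequality \eqref{alphaineq} relating $\alpha(N,L)$ to $\alpha(N+2,L-1)$, not merely ``$\alpha>1$''); the ordering trick \eqref{lambdaintsest} producing the factor $\min_{j}\la_{j}^{2}$; the weight-reduction inequalities \eqref{theta1} and \eqref{internalmerge} that absorb this factor by lowering $\theta_i$ or deleting a line; and the multiplicity count \eqref{reductionbound} compensating the non-injectivity of the reduction map. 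Two smaller inaccuracies: the paper's induction also descends in $|w|$ for fixed $N,L$, and the case $N=2$, $|w|>2$ is not a base step -- it is an induction step requiring the bound on ${\cal L}_{4,L-1}$ through the loop term (the trees of ${\cal T}_{4,2L-2,(0,0,w)}$ are enumerated explicitly and collapse to the one-dimensional integral in \eqref{boundN2}). So: right strategy, but the proposal is missing the quantitative relevant-sector bounds and the loop-integration machinery that make it a proof.
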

\begin{proof}[Proof of theorem \ref{thm1}:]
To begin with, we recall that CAS functions with odd $N$ vanish identically due to the symmetry $\varphi\to - \varphi$, so the bound \eqref{mainbound} is satisfied trivially in that case. We therefore assume $N$ to be even in the following.  As mentioned above, we prove the claimed bounds inductively, using a standard scheme based on the flow equation~\eqref{CAGFEexpand}~\cite{Polchinski:1983gv,Keller:1990ej,GK,Muller:2002he}.  The induction goes up in $N+2L\geq 2$, for given $N+2L$ ascends in $L$, and for given $N,L$ descends in $|w|$.

Theorem \ref{thm1} states bounds only for values of the parameters satisfying $N+|w|>4$ (``irrelevant terms''). In order to perform the induction, we  will also need bounds for the remaining values of the parameters $N,w$, i.e. for $N+|w|\leq 4$ (``relevant terms''). For this purpose, we will make use of the bounds established in~\cite{GK}, which state that 
\ben\label{boundrel1}
\begin{split}
| {\cal L}^{\La,\Lao}_{4,L}(\vec{p};\LIR)| 
&\le  \,  
\mathcal{P}_{L}\left( \log_{+}\sup\left(\frac{|\vec{p}|_\LIR}{\kappa(\La,\vec{p}, \LIR)},
 \frac{\La}{\LIR}\right)\right)
\end{split}
\een
as well as for $|w|\leq 2$
\ben\label{boundrel2}
\begin{split}
|\partial^{w}_{{p}} {\cal L}^{\La,\Lao}_{2,L}(\vec{p};\LIR)| 
&\le  \,  |p|^{2-|w|}_{\La}
\mathcal{P}_{L-1}\left( \log_{+}\sup\left(\frac{|{p}|_\LIR}{\kappa(\La,{p}, \LIR)},
 \frac{\La}{\LIR}\right)\right)\, ,
\end{split}
\een
with $\kappa(\vec{p},\La,\LIR)$ as defined in eq.\eqref{kappadef}. 
Note that this bound implies vanishing of the two point function at zero momentum as $\La\to 0$.

To start the induction, we recall that for $N+2L=2$ the $0$-loop two point function vanishes identically. We then fix an arbitrary $|w_{0}|>0$ and proceed in the order indicated above considering $|w|\leq |w_{0}|$.
To verify the induction step we integrate both terms on the r.h.s. of the flow equation over $\la$ between $\La$ and $\La_{0}$ and use the induction hypothesis (i.e. the bounds stated in the theorem) in order to prove bounds consistent with theorem \ref{thm1}.
\subsubsection{First term on the r.h.s. of the flow equation \eqref{CAGFEexpand}}
Since our inductive bounds in theorem \ref{thm1} have a different structure for the cases $N\geq 4$ and $N=2$, we also have to distinguish these cases in our induction. We may restrict to values $L\geq 1$ for this term.
\paragraph{The case $N\geq 4, |w|>0 $:}
Integrating the flow equation over $\la$ between $\Lambda$ and $\Lambda_{0}$, inserting our inductive bound, \eqref{mainbound}, for the first term on the r.h.s. of the flow equation and using also the formula
\eq
{\dot C}^{\La}(p) = -\frac{2}{\La ^3} \
\e^{-\frac{p^2}{\La ^2}}\ 
\label{33}
\eqe
for the $\La$-derivative of the propagator, we arrive at the bound
\ben\label{irr1st}
\begin{split}
&\left|\int_{\La}^{\La_{0}}\d\la\, \left(N+2 \atop 2 \right) \, \int_{\ell} \dot{C}^{\la}(\ell)\partial_{\vec{p}}^{w}\L^{\la,\Lambda_{0}}_{N+2,L-1}( \ell, -\ell,  \vec{p};\LIR)\right|\\
&\leq\left(N+2 \atop 2 \right)\int_{\Lambda}^{\Lambda_{0}} \d{\lambda} \int_{\ell}\ \frac{2e^{-\frac{\ell^{2}}{\la^{2}}}}{\la^{3}}     \sqrt{|w|!}\ K^{(N+4L-6)(|w|+1)}\,   \sum_{T \in {\cal T}_{N+2,2L-2,(0,0,w)}(\ell, -\ell, \vec{p})}
\\
&\times  
 \prod_{i \in \mathcal{I}(T)} \int_{\la}^{\La_{0}} \d\la_{i}\ f_{\lambda_{i}}(k_i(\ell);N+2,L-1,\theta_i) 
\mathcal{P}_{L-1}\left( \log_{+}\sup\left(\frac{|(\vec{p},\ell)|_\LIR}{\min_{j\in\mathcal{I}}\lambda_{j}},
 \frac{\max_{j\in\mathcal{I}}\lambda_{j}}{\LIR}\right)\right) \\
&\leq   \sqrt{|w|!}\ K^{(N+4L-6)(|w|+1)}\,   \sum_{T \in {\cal T}_{N+2,2L-2,(0,0,w)}(0,0,\vec{p})}
\\
&\times  \int_{\Lambda}^{\Lambda_{0}} \d{\lambda} \, \la
 \prod_{i \in \mathcal{I}(T)} \int_{\la}^{\La_{0}} \d\la_{i}\ f_{\lambda_{i}}(k_i;N,L,\theta_i) 
\mathcal{P}'_{L-1}\left( \log_{+}\sup\left(\frac{|\vec{p}|_\LIR}{\min_{j\in\mathcal{I}}\lambda_{j}},
 \frac{\max_{j\in\mathcal{I}}\lambda_{j}}{\LIR}\right)\right) \\
&\leq   \sqrt{|w|!}\ K^{(N+4L-6)(|w|+1)}\,   \sum_{T \in {\cal T}_{N+2,2L-2,(0,0,w)}(0,0,\vec{p})}
\\
&\times   \prod_{i \in \mathcal{I}(T)} \int_{\La}^{\La_{0}} \d\la_{i}\ f_{\lambda_{i}}(k_i;N,L,\theta_i) 
\mathcal{P}'_{L-1}\left( \log_{+}\sup\left(\frac{|\vec{p}|_\LIR}{\min_{j\in\mathcal{I}}\lambda_{j}},
 \frac{\max_{j\in\mathcal{I}}\lambda_{j}}{\LIR}\right)\right)\cdot \min_{j\in\mathcal{I}}\la_{j}^{2} \, .
\end{split}
\een
Here we have used the notation $k_{i}(\ell)$ in order to indicate that the momenta associated to the internal lines of a tree may  depend on the integration variable $\ell$. 
The second inequality is obtained with the help of lemma \ref{lablemma1} for $d=0$ (see the appendix), which allows us to bound the $\ell$-integral. To see that the conditions stated in this lemma are met, we note that the number of factors in the product over $i\in\mathcal{I}(T)$ is bounded  by $\frac{N-4}{2}+L$, as required (this number corresponds to $N'$ in the lemma). We have further absorbed some constants as well as the $N$ and $L$ dependent factors into the ``new'' (larger) polynomials $\mathcal{P}'_{L-1}$. Note that the degree of the polynomial does not change in the process, so it is at most equal to $L-1$. The last inequality follows from the bound
\ben\label{lambdaintsest}
\begin{split}
\int_{\Lambda}^{\La_{0}} \d{\lambda}\, \la \, \int_{\la}^{\La_{0}}\d\la_{1}\ldots \int_{\la}^{\La_{0}}\d\la_{|\mathcal{I}|} \, &=\, \int_{\La}^{\La_{0}}\d\la_{1}\ldots \int_{{\La}}^{\La_{0}}\d\la_{|\mathcal{I}|}\,  \int_{\Lambda}^{\min_{i}\la_{i}} \d{\lambda}\, \la \\
& \leq \int_{\La}^{\La_{0}}\d\la_{1}\ldots \int_{{\La}}^{\La_{0}}\d\la_{|\mathcal{I}|}\, \min_{i}\la_{i}^{2}\, .
\end{split}
\een
The factor $\min_{j\in\mathcal{I}}\la_{j}^{2}$ now allows us to ``reduce'' the weights $f_{\la_{i}}$ in \eqref{irr1st}: 
\begin{itemize}
\item Let us fix an internal line $a\in\mathcal{I}(T)$. If the weight factor of this internal line satisfies $\theta_{a}>1$, then the trivial bound $f_{\lambda_{a}}(k_a;N,L,\theta_a)\cdot \min_{j\in\mathcal{I}}\la_{j}\leq f_{\lambda_{a}}(k_a;N,L,\theta_a-1)$ holds.
\item If $\theta_{a}=1$, we can use one of the factors $\min_{j\in\mathcal{I}}\la_{j}$ to ``remove'' the corresponding integral over $\la_{a}$ using the bound (see lemma~\ref{theta1lem} in the appendix for the proof)
\ben\label{theta1}
\begin{split}
&\int_{\La}^{\La_{0}} \d\la_{a}\ f_{\lambda_{a}}(k_a;N,L,\theta_a=1) 
\mathcal{P}_{L-1}\left( \log_{+}\sup\left(\frac{|\vec{p}|_\LIR}{\min_{j\in\mathcal{I}}\la_{j}},
 \frac{\max_{j\in\mathcal{I}}\lambda_{j}}{\LIR}\right)\right)\cdot \min_{j\in\mathcal{I}}\la_{j}^{2}\\
 &\leq \mathcal{P}'_{L-1}\left( \log_{+}\sup\left(\frac{|\vec{p}|_\LIR}{\min_{j\in\mathcal{I}\setminus \{a\}}\la_{j}},
 \frac{\max_{j\in\mathcal{I}\setminus \{a\}}\la_{j}}{\LIR}\right)\right)\cdot \min_{j\in\mathcal{I}\setminus \{a\}}\lambda_{j} \ \ ,
\end{split}
\een
where $\mathcal{P}'_{L-1}$ is again a larger polynomial. 
In terms of our trees, the process of removing the integral over $\la_{a}$ corresponds to removing the internal line $a$.
\end{itemize}
In summary, following~\cite{GK, Kopper:2001to}, we can use the factor $\min_{j\in\mathcal{I}}\la_{j}^{2}$ in \eqref{irr1st} to decrease the weight of any two internal lines. We can organise this ``reduction procedure'' systematically on the level of trees, which will yield a relation between trees in ${\cal T}_{N+1,2L-1,(0,w)}(0,\vec{p})$ and trees in ${\cal T}_{N,2L,w}(\vec{p})$. Thus, by applying this reduction procedure twice, we can relate the trees appearing in \eqref{irr1st} to the ones which appear in the claimed bound \eqref{mainbound}.
\paragraph{Reduction procedure for trees:} Given a tree $T\in{\cal T}_{N+1,L-1,(0,w)}(\ell,p_{1},\ldots,p_{N})$, where $w\in\mathbb{N}^{4(N-1)}$, we define the reduced tree $\mathcal{R}_{\ell}(T)\in{\cal T}_{N,L,{w}}(p_{1},\ldots, p_{N})$ by the following successive operations (see fig.\ref{fig:reduction1} for a visualisation and see~\cite{GK} for a proof that the procedure indeed yields a tree in ${\cal T}_{N,L,w}$):
\begin{enumerate}\label{reduct}
\item Delete the external line with associated momentum $\ell$ 
from the tree $T$. Imposing momentum conservation at the vertices, it follows that the momenta associated to the internal lines of the resulting tree $\mathcal{R}_{\ell}(T)$ do not depend on $\ell$.
\item Reduce by one the weight $\rho_{r}>0$ of an internal line $r\in\mathcal{I}(T)$ which is either\footnote{The cases a) and b) are exclusive, and there always exists a line $r$ satisfying one of the conditions (see~\cite{GK}).}
\begin{enumerate}
\item adjacent to the vertex where the external line with momentum $\ell$ was suppressed. 
\item adjacent to an internal line $r_{0}$ with weight $\rho_{r_{0}}=0$ which in turn is adjacent to the removed external line. 
\end{enumerate}
\item If an internal line has acquired the weight $\theta_{j}=0$ in this process, delete that line by merging the adjacent vertices.\footnote{Property \ref{it9} in definition \ref{deftrees1} guarantees that this procedure does not produce vertices with coordination number larger than four.}
\item If a vertex has acquired coordination number two, and if it is adjacent to two internal lines, then fuse these lines into one, adding up their weights.
\end{enumerate}
\begin{figure}[htbp]
\begin{center}
\includegraphics{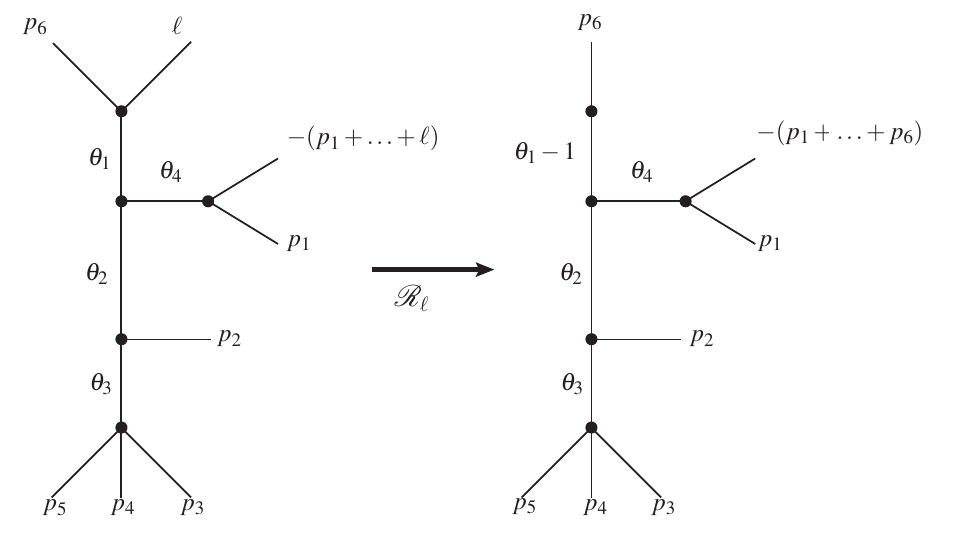}
\end{center}
\caption{The reduction procedure acting on a tree $T\in\mathcal{T}_{8,4,w}$.}
\label{fig:reduction1}
\end{figure}
The last point demands  further explanation. It follows from the bound (note that two lines connected by a vertex of coordination number two carry the same momentum)
\ben\label{internalmerge}
\begin{split}
&\int_{\La}^{\La_{0}}\d\la_{i} \, \la_{i}^{-\theta_{i}-1}\, \e^{-\frac{k^{2}}{\alpha(N,L)\la_{i}^{2}}}\cdot \int_{\La}^{\La_{0}}\d\la_{j} \, \la_{j}^{-\theta_{j}-1}\, \e^{-\frac{k^{2}}{\alpha(N,L)\la_{j}^{2}}} \,\mathcal{P}_{L-1}\left( \log_{+}\sup\left(\frac{|\vec{p}|_\LIR}{\min(\la_{i},\lambda_{j})},
 \frac{\max(\la_{i},\lambda_{j})}{\LIR}\right)\right)\\
&=\int_{\La}^{\La_{0}}\d\la_{i} \, \la_{i}^{-\theta_{i}-1}\, \e^{-\frac{k^{2}}{\alpha(N,L)\la_{i}^{2}}}\cdot \int_{\la_{i}}^{\La_{0}}\d\la_{j} \, \la_{j}^{-\theta_{j}-1}\, \e^{-\frac{k^{2}}{\alpha(N,L)\la_{j}^{2}}} \,\mathcal{P}_{L-1}\left( \log_{+}\sup\left(\frac{|\vec{p}|_\LIR}{\lambda_{i}},
 \frac{\la_{j}}{\LIR}\right)\right)+ ``(i\leftrightarrow j)''\\ 
 &\leq \int_{\La}^{\La_{0}}\d\la_{i} \, \la_{i}^{-\theta_{i}-\theta_{j}-1}\, \e^{-\frac{k^{2}}{\alpha(N,L)\la_{i}^{2}}}\,\mathcal{P}_{L-1}'\left( \log_{+}\sup\left(\frac{|\vec{p}|_\LIR}{\la_{i}},
 \frac{\la_{i}}{\LIR}\right)\right) \ \ ,
\end{split}
\een
where we used lemma~\ref{lemmalambdalarge} to bound the $\la_{j}$-integral.
Thus, applying the reduction operation twice to remove the external lines carrying the loop momenta $\ell,-\ell$ from the trees in ${\cal T}_{N+2,2L-2,(0,0,w)}(\ell,-\ell, \vec{p})$, we arrive at the bound
\ben\label{R2bound}
\begin{split}
&\text{r.h.s. of \eqref{irr1st}} \leq  \sqrt{|w|!}\ K^{(N+4L-6)(|w|+1)}\,   \sum_{\substack{\mathcal{R}_{\ell}\circ\mathcal{R}_{-\ell}(T) \\ T \in {\cal T}_{N+2,2L-2,(0,0,w)}(\ell,-\ell,\vec{p}) }}
\\
&\times  \prod_{i \in \mathcal{I}(\mathcal{R}_{\ell}\circ\mathcal{R}_{-\ell}(T))} \int_{\La}^{\La_{0}} \d\la_{i}\ f_{\lambda_{i}}(k_i;N,L,\theta_i)
\mathcal{P}_{L-1}\left( \log_{+}\sup\left(\frac{|\vec{p}|_\LIR}{\min_{j\in\mathcal{I}}\lambda_{j}},
 \frac{\max_{j\in\mathcal{I}}\lambda_{j}}{\LIR}\right)\right)
 \, .
 \end{split}
\een
%
As mentioned above, we know from~\cite{GK} that the reduction operation $\mathcal{R}$ maps trees from  $\mathcal{T}_{N+1,2L-1,(0,w)}$ into $\mathcal{T}_{N,2L,w}$. However, different trees $T\in\mathcal{T}_{N+1,2L-1,(0,w)}$ may yield the same reduced tree $T'=\mathcal{R}(T)\in\mathcal{T}_{N,2L,w}$, i.e. the reduction map is not injective. Since we want to write \eqref{R2bound} as a sum over trees in $\mathcal{T}_{N,2L,w}$, we have to know how many different trees $T$ can yield the same reduced tree $T'$ and multiply our bound by this factor. We claim that this number is bounded by the following expression:
\ben\label{reductionbound}
 2L+N+ ({N-4}+2L)\cdot (|w|+1) \ .
\een
To verify this bound, we fix a tree $T'\in\mathcal{T}_{N,2L,w}$ and ``invert'' the reduction procedure, i.e. we attach an additional  external line to the tree $T'$ in any possible way such that it yields a tree  $T\in\mathcal{T}_{N+1,2L-1,(0,w)}$. How many different trees $T$ can be obtained this way?
\begin{description}
\item[Vertices:]  We can attach the additional external line to any vertex of coordination number two or three of the tree $T'$ and end up with a different tree in $T$. As there are  $V_{2}(T')+V_{3}(T')\leq 2L$ [cf. item \ref{it3} in def.~\ref{deftrees1}] vertices of this kind, we obtain up to $2L$ different trees this way. 
\item[Lines:] We can attach  the additional external line to any internal or external line of $T'$, creating a vertex of coordination number three and splitting the initial line into two. If the initial line was internal, its weights $\sigma,\rho$ are distributed over the two resulting lines, which can be done in $2(\sigma+1)\leq 2(|w|+1)$ ways. As there are $|\mathcal{I}(T')|\leq \frac{N-4}{2}+L$ internal lines in $T'$, we obtain up to $(N-4+2L)(|w|+1)$ different trees $T$ this way. If the initial line was external, the weights of the resulting internal line are uniquely determined to be $\rho=0=\sigma$. Thus, we obtain up to $N$ different trees this way. In total, we can bound the number of different trees $T$ obtained by attaching a line to $T'$ by $(N-4+2L)(|w|+1)+N$.
\end{description}
Since we apply the reduction operation twice on trees $T\in\mathcal{T}_{N+2,2L-2,(0,0,w)}$, we pick up a factor 
\ben\label{totalredfactor}
 \left[2L+N+ ({N-4}+2L)\cdot (|w|+1)\right]^{2} 
\een
in total. Hence, we infer that the following bound holds:
\ben\label{R2boundremove}
\begin{split}
&\text{r.h.s. of \eqref{R2bound}} 
\leq  (|w|+1)^{2}\, \sqrt{|w|!}\ K^{(N+4L-6)(|w|+1)}\,   \sum_{T \in {\cal T}_{N,2L,w}(\vec{p})}\\
&\times  \prod_{i \in \mathcal{I}(T)} \int_{\La}^{\La_{0}} \d\la_{i}\ f_{\lambda_{i}}(k_i;N,L,\theta_i)
\mathcal{P}'_{L-1}\left( \log_{+}\sup\left(\frac{|\vec{p}|_\LIR}{\min_{j\in\mathcal{I}}\lambda_{j}},
 \frac{\max_{j\in\mathcal{I}}\lambda_{j}}{\LIR}\right)\right)
 \, .
 \end{split}
\een
Here we have again absorbed $N$ and $L$ dependent factors into the polynomials $\mathcal{P}'_{L-1}$. We find that this contribution satisfies the inductive bound \eqref{mainbound}, provided that $K$ is chosen sufficiently large such that
\ben
K^{-2(|w|+1)}\,  (|w|+1)^{2}\leq 1\, .
\een
\paragraph{The case $N= 2, |w|>2$:}
Following the same steps as in \eqref{irr1st}, we find that this contribution is bounded by
\ben\label{irr1stN=2}
\begin{split}
&\left|\int_{\La}^{\La_{0}}\d\la\, \left(4 \atop 2 \right) \, \int_{\ell} \dot{C}^{\la}(\ell)\partial_{{p}}^{w}\L^{\la,\Lambda_{0}}_{4,L-1}( \ell, -\ell ,{p},-p;\LIR)\right|\leq    \sqrt{|w|!}\ K^{(4L-4)(|w|+1)}\, \sum_{T \in {\cal T}_{4,2L-2,(0,0,w)}(0,0,p,-p)}  \\
&\times   \prod_{i \in \mathcal{I}(T)} \int_{\La}^{\La_{0}} \d\la_{i}\ f_{\lambda_{i}}(k_i;2,L,\theta_i) 
\mathcal{P}_{L-1}\left( \log_{+}\sup\left(\frac{|{p}|_\LIR}{\min_{j\in\mathcal{I}}\lambda_{j}},
 \frac{\max_{j\in\mathcal{I}}\lambda_{j}}{\LIR}\right)\right)\cdot \min_{j\in\mathcal{I}}\la_{j}^{2} \, .
\end{split}
\een
There are three types of trees in the set ${ {\cal T}_{4,2L-2,(0,0,w)}(0,0,p,-p)}$, see fig.\ref{fig:4trees} (see~\cite{GK} for more details). 
\begin{figure}[htbp]
\begin{center}
\includegraphics[width=\textwidth]{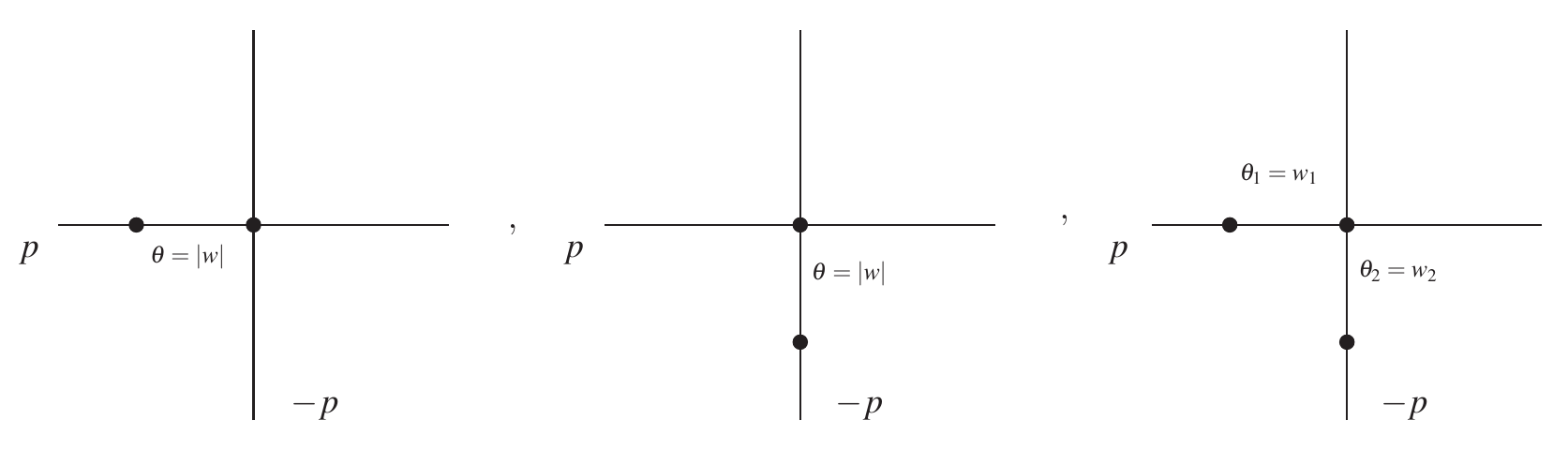}
\end{center}
\caption{The trees in ${\cal T}_{4,2L-2,(0,0,w)}(0,0,p,-p)$. Here $w_{1}+w_{2}=w$ and $|w_{i}|>0$.}
\label{fig:4trees}
\end{figure}

The contribution of each of those trees is bounded by
\ben\label{irr1stN=2b}
\begin{split}
&   \sqrt{|w|!}\ K^{(4L-4)(|w|+1)} \int_{\La}^{\La_{0}} \d\la\ \la^{2-|w|-1} \e^{-\frac{p^{2}}{\alpha(2,L) \lambda^{2}}}
\mathcal{P}_{L-1}\left( \log_{+}\sup\left(\frac{|{p}|_\LIR}{\lambda},
 \frac{\lambda}{\LIR}\right)\right) 
\end{split}
\een
in accordance with  the claimed bound \eqref{boundN2}.
 For the trees with one internal line this follows trivially from \eqref{irr1stN=2}, and for the trees with two internal lines we have also used the bound \eqref{internalmerge}. It remains to bound the cardinality of ${\cal T}_{4,2L-2,(0,0,w)}(0,0,p,-p)$. The two trees with one internal line are unique, and there are $|w|+1$ combinations of trees with two internal lines. Thus, we pick up a factor of $(|w|+1)$ from the sum over $T\in {\cal T}_{4,2L-2,(0,0,w)}(0,0,p,-p)$, and we verify that the contribution at hand satisfies the claimed bound, \eqref{boundN2}, provided that $K$ is chosen large enough that
\ben
(|w|+1)\leq K^{2(|w|+1)}\, .
\een
\subsubsection{Second term on the r.h.s. of the flow equation \eqref{CAGFEexpand}}\label{subsec:secondterm}
We distinguish the following cases in the sum over $n_{1}+n_{2}=N+2$ in \eqref{CAGFEexpand}: 
\begin{enumerate}
\item The case $n_{1},n_{2}\geq 4$
\item The case  $n_{1}\geq 4$, $n_{2}=2$
\item The case $n_{1}=2=n_{2}$
\end{enumerate}
\paragraph{For $n_{1},n_{2}\geq 4$:} 
For the sake of brevity, it will be useful to combine the inductive bound \eqref{mainbound} and the known bound \eqref{boundrel1} into one slightly weaker bound by writing for $N\geq 4$ and any $w\in\mathbb{N}^{4(N-1)}$
 \ben\label{mainboundcomb}
\begin{split}
|\pa^w {\cal L}^{\La,\Lao}_{N,L}(\vec{p};\LIR)| 
&\le  \sqrt{|w|!}\ K^{(N+4L-4)(|w|+1)}\,   \sum_{T \in {\cal T}_{N,2L,w}(\vec{p})}
\\
&\times  
 \prod_{i \in \mathcal{I}(T)} \int_{\La}^{\La_{0}} \d\la_{i}\ f_{\lambda_{i}}(k_i;N,L,\theta_i) 
\mathcal{P}_{L}\left( \log_{+}\sup\left(\frac{|\vec{p}|_\LIR}{\La},
 \frac{\max_{j\in\mathcal{I}}(\lambda_{j},\La)}{\LIR}\right)\right) \, ,
\end{split}
\een
where we use the convention
\ben\label{emptyprod}
\prod_{i \in \mathcal{I}(T)} \int_{\La}^{\La_{0}} \d\la_{i}\ f_{\lambda_{i}}(k_i;N,L,\theta_i)  =1 \qquad \text{if }\mathcal{I}(T)=\emptyset\, .
\een
Integrating the flow equation \eqref{CAGFEexpand}, inserting the bound \eqref{mainboundcomb} and making use of the inequality~\cite{Hollands:2011gf}
\eq
\Bigl| \pa^{w}_{q} \e^{-\frac{q^2}{\La ^2}} | \ \le\
k\ \La ^{-|w|}\ \sqrt{|w|!}\ 2^{\frac{|w|}{2}}\ \e^{-\frac{q^2}{2\La ^2}}\
\ , \qquad k=1.086\ldots\, ,
\label{w}
\eqe 
we find that an entry in the sum over $n_{i}$ in the second term on the r.h.s. of the flow equation satisfies the bound
\ben\label{irr2nd}
\begin{split}
&\left|\int_{\La}^{\La_{0}}\d\la  \sum_{\substack{ {v_{1}+v_{2}+v_{3}=w}\\ l_{1}+l_{2}=L }} c_{\{v_{i}\}} \partial_{\vec{p}}^{v_{1}} \L^{\la,\Lambda_{0}}_{n_{1},l_{1}}(   p_{1},\ldots,p_{n_{1}-1},q;\LIR)\,  \partial_{\vec{p}}^{v_{3}}\dot{C}^{\la}(q)\,   \partial_{\vec{p}}^{v_{2}}\L^{\la,\Lambda_{0}}_{n_{2},l_{2}}(-q,  p_{n_{1}},\ldots,p_{N};\LIR)  \right| \\
&\leq \int_{\Lambda}^{\Lambda_{0}} \d{\lambda} \hspace{-.4cm} \sum_{\substack{ {v_{1}+v_{2}+v_{3}=w}\\ (v_{1})_{j}=0\, \forall j\geq n_{1} \\l_{1}+l_{2}=L }}c_{\{v_{i}\}}2k \frac{\sqrt{|v_{3}|!\, 2^{|v_{3}|} }\  e^{-\frac{q^{2}}{2\la^{2}}}}{\la^{3+|v_{3}|}}\times \bigg[  \sqrt{|v_{1}|!}\ K^{(n_{1}+4l_{1}-4)(|v_{1}|+1)}\,  \sum_{T_{1} \in {\cal T}_{n_{1},2l_{1},\tilde{v}_{1}}(\vec{p}_{1})}
\\
&\times  
 \prod_{i \in \mathcal{I}(T_{1})} \int_{\la}^{\La_{0}} \d\la_{i}\ f_{\lambda_{i}}(k_i;n_{1},l_{1},\theta_i) 
\mathcal{P}_{l_{1}}\left( \log_{+}\sup\left(\frac{|\vec{p}_{1}|_\LIR}{\la},
 \frac{\max\limits_{j\in\mathcal{I}(T_{1})}(\lambda_{j},\la)}{\LIR}\right)\right)
\bigg]\\
&\times\bigg[   \sqrt{|v_{2}|!}\ K^{(n_{2}+4l_{2}-4)(|v_{2}|+1)}\sum_{T_{2} \in {\cal T}_{n_{2},2l_{2},\tilde{v}_{2}}(\vec{p}_{2})}
\\
&\times  
 \prod_{i \in \mathcal{I}(T_{2})} \int_{\la}^{\La_{0}} \d\la_{i}\ f_{\lambda_{i}}(k_i;n_{2},l_{2},\theta_i) 
\mathcal{P}_{l_{2}}\left( \log_{+}\sup\left(\frac{|\vec{p}_{2}|_\LIR}{\la},
 \frac{\max\limits_{j\in\mathcal{I}(T_{2})}(\lambda_{j},\la)}{\LIR}\right)\right)
\bigg]\, .
\end{split}
\een
Here we use the notation $\vec{p}_{1}=(p_{1},\ldots,p_{n_{1}-1},q), \vec{p}_{2}=(-q,p_{n_{1}},\ldots,p_{N})$ and $q$ as in the flow equation, see \eqref{CAGFEexpand}. We made use of the fact that the sum over $v_{1}$ is non-vanishing only if $(v_{1})_{j}=0$ for $j\geq n_{1}$, since the CAS $\L^{\la,\Lambda_{0}}_{n_{1},l_{1}}(   p_{1},\ldots,p_{n_{1}-1},q;\LIR)$ does not depend on $p_{n_{1}},\ldots, p_{N-1}$. Here we used the notation $(v_{1})_{j}\in\mathbb{N}^{4}$ for the four-indices in the tuple $v_{1}=((v_{1})_{1},\ldots, (v_{1})_{N-1})\in\mathbb{N}^{4(N-1)}$. The multi-indices $\tilde{v}_{i}\in\mathbb{N}^{4(n_{i}-1)}$ appearing in the trees are related to the summation indices $v_{i}\in\mathbb{N}^{4(N-1)}$ via
\ben
\tilde{v}_{1}:=((v_{1})_{1},\ldots,(v_{1})_{n_{1}-1})
\een
\ben
\tilde{v}_{2}:=(\sum_{i=1}^{n_{1}-1} (v_{2})_{i},(v_{2})_{n_{1}},\ldots,(v_{2})_{N-1})\, .
\een
We can ``merge'' the logarithmic polynomials with the help of the inequality
\ben\label{logsmerge}
\begin{split}
&\mathcal{P}_{l_{1}}\left( \log_{+}\sup\left(\frac{|\vec{p}_{1}|_\LIR}{\la},
 \frac{\max\limits_{j\in\mathcal{I}(T_{1})}(\lambda_{j},\la)}{\LIR}\right)\right)\cdot \mathcal{P}_{l_{2}}\left( \log_{+}\sup\left(\frac{|\vec{p}_{2}|_\LIR}{\la},
 \frac{\max\limits_{j\in\mathcal{I}(T_{2})}(\lambda_{j},\la)}{\LIR}\right)\right)\\
 &\qquad\leq \mathcal{P}_{L}\left( \log_{+}\sup\left(\frac{|\vec{p}|_\LIR}{\la},
 \frac{\max\limits_{j\in\mathcal{I}(T_{1})\cup\mathcal{I}(T_{2})} (\lambda_{j},\la) }{\LIR}\right)\right)\, .
\end{split}
\een
Noting also that 
\ben\label{expfbound}
\frac{ e^{-\frac{q^{2}}{2\la^{2}}}}{\la^{3+|v_{3}|}}\leq f_{\la}(q;N,L,\theta=2+|v_{3}|)\, ,
\een
and that
\ben\label{factorialtriv}
|v_{1}|!|v_{2}|! |v_{3}|!\leq |w|!\quad \text{for } v_{1}+v_{2}+v_{3}=w\, ,
\een 
 we find that \eqref{irr2nd} is smaller than
\ben\label{2ndirrnestedint}
\begin{split}
&  \sqrt{|w|!2^{|w|}}\ K^{(N+4L-6)(|w|+1)}  \sum_{\substack{ {v_{1}+v_{2}+v_{3}=w}\\ (v_{1})_{j}=0\, \forall j\geq n_{1}  \\ l_{1}+l_{2}=L }}c_{\{v_{i}\}} \int_{\Lambda}^{\Lambda_{0}} \d{\lambda} f_{\la}(q,N,L,\theta=2+|v_{3}|)\sum_{{T_{1} \in {\cal T}_{n_{1},2l_{1},\tilde{v}_{1}}(\vec{p}_{1}) } \atop {T_{2} \in {\cal T}_{n_{2},2l_{2},\tilde{v}_{2}}(\vec{p}_{2}) } }\\
&  \times\hspace{-.3cm}
 \prod_{i \in \mathcal{I}(T_{1})\cup\mathcal{I}(T_{2}) } \int\limits_{\La}^{\La_{0}} \d\la_{i}\ f_{\lambda_{i}}(k_i;N,L,\theta_i) 
\mathcal{P}_{L}\left( \log_{+}\sup\left(\frac{|\vec{p}|_\LIR}{\la},
 \frac{\max\limits_{j\in\mathcal{I}(T_{1})\cup\mathcal{I}(T_{2})}(\la_{j},\la)}{\LIR}\right)\right)\, .
\end{split}
\een
Here we have also made use of the bound $f_{\lambda_{i}}(k_i;n_{a},l_{a},\theta_i)\leq f_{\lambda_{i}}(k_i;N,L,\theta_i)$, where $a\in\{1,2\}$. This inequality follows directly from our definition of the weight functions, eq.\eqref{alphadef}, combined with the inequality\footnote{\label{footni}Recall that $\mathcal{L}^{\La, \La_{0}}_{2,0}(p)=0$ by definition, so non-vanishing contributions even satisfy, in view of $n_{1}+n_{2}=N+2$ and $l_{1}+l_{2}=L$,
\ben
n_{2}/2+2l_{2}\geq 2\quad \Rightarrow\quad n_{1}/2+2l_{1}=N/2+2L+1-(n_{2}/2+2l_{2})\leq N/2+2L-1\, .
\een
Exchanging the roles of the indices $1$ and $2$, one also verifies $n_{2}/2+2l_{2}\leq N/2+2L-1$.
 } $n_{a}/2+2l_{a}\leq N/2+2L$.
 
It remains again to express this bound in terms of trees $T\in\mathcal{T}_{N,2L,w}(\vec{p})$ in order to verify consistency with our hypothesis \eqref{mainbound}. This is achieved as follows (see fig.\ref{fig:merge1}):
\begin{figure}[htbp]
\begin{center}
\includegraphics[width=10cm]{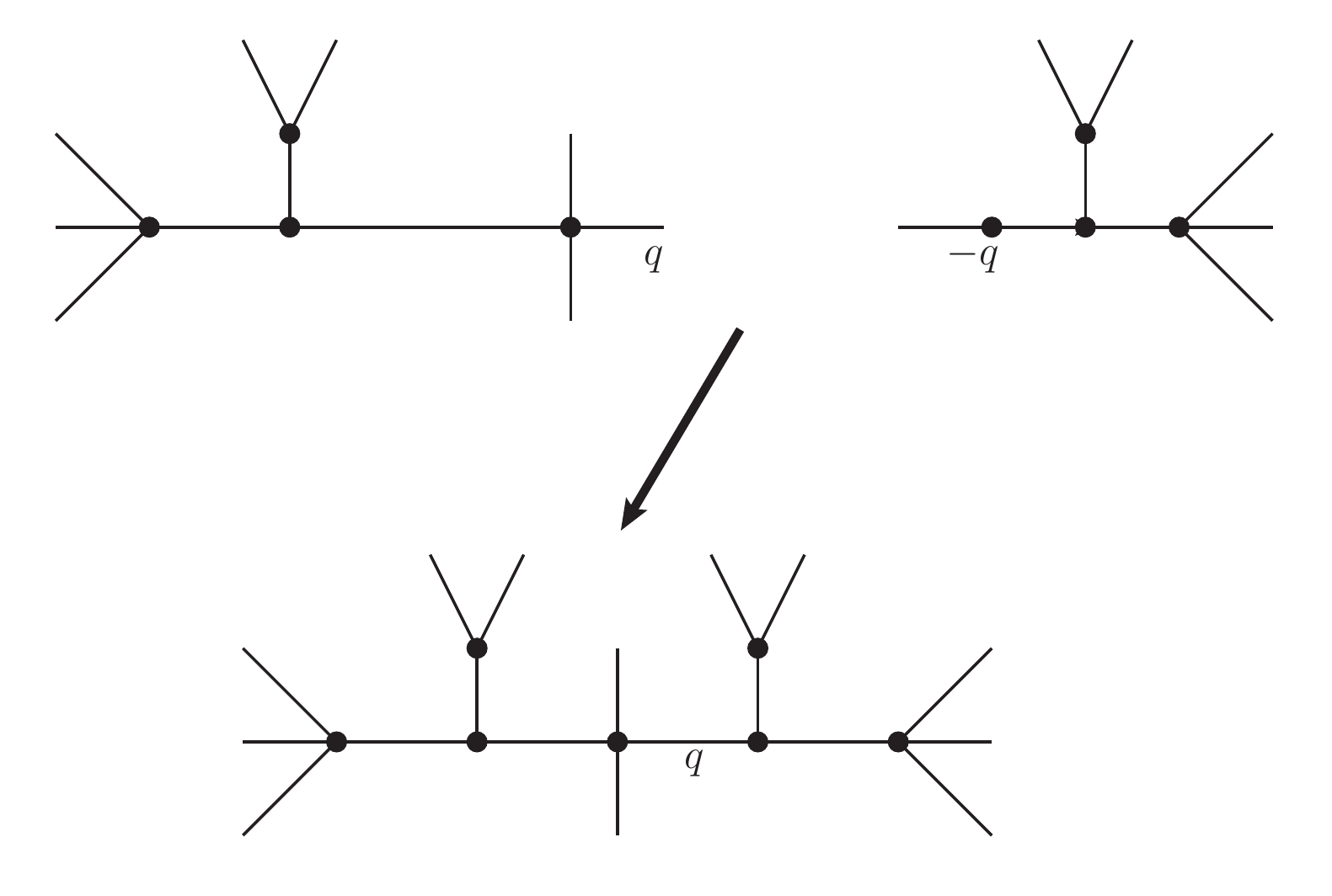}
\end{center}
\caption{Merging two trees $T_{1}\in\mathcal{T}_{8,2,\tilde{v}_{1}}, T_{2}\in\mathcal{T}_{6,4,\tilde{v}_{2}}$ into a tree $T\in\mathcal{T}_{12,6,w}$. The weight $\theta_{i}$ of the new internal line connecting the two trees is set to be $|v_{3}|+\sigma_{j}+2$, where the index $j$ corresponds to the internal line adjacent to the external line with momentum $-q$ in the tree $T_{2}$ on the top right.}
\label{fig:merge1}
\end{figure}

To every pair of trees $T_{1} \in {\cal T}_{n_{1},2l_{1},\tilde{v}_{1}}(\vec{p_{1}})$ and $T_{2} \in {\cal T}_{n_{2},2l_{2},\tilde{v}_{2}}(\vec{p_{2}})$ we associate the tree $T\in\mathcal{T}_{N,2L,w}(\vec{p})$, which results from joining $T_{1}$ and $T_{2}$ along the external lines carrying momentum $q$ and $-q$, respectively (note that the trees have such a line by definition). The resulting new internal line therefore carries momentum $-q$, and we associate the weights $\rho=2, \sigma=|v_{3}|$ (thus $\theta=2+|v_{3}|$) to it. If one of the external lines with momenta $q,-q$ in $T_{1}$  and $T_{2}$ is adjacent to a vertex of coordination number two (as in the right tree in fig.\ref{fig:merge1}), then this procedure yields a vertex of coordination number two adjacent to two internal lines in the tree $T$. In that case, we remove this vertex and fuse the adjacent lines, adding up their weights (just as in item 4. of the reduction procedure, see page \pageref{reduct}). One can show that 
this procedure indeed yields a tree $T\in\mathcal{T}_{N,2L,w}(\vec{p})$ (see~\cite{GK}).

For fixed $v_{1},v_{2},v_{3}$, this merging procedure is injective (different combinations of trees $T_{1},T_{2}$ yield different trees $T$). Using \eqref{cwest2} to bound the sum over $v_{1},v_{2},v_{3}$ and using $\sum_{l_{1}+l_{2}=L}=L+1$, we thus find that \eqref{2ndirrnestedint} is bounded by
\ben\label{2ndirrboundfinal}
\begin{split}
& \sqrt{|w|! 2^{|w|}} \, 3^{|w|}\, K^{(N+4L-6)(|w|+1)}   \sum_{T \in {\cal T}_{N,2L,w}(\vec{p})}
\\
&\times
 \  \prod_{i \in \mathcal{I}(T) } \int_{\La}^{\La_{0}} \d\la_{i}\ f_{\lambda_{i}}(k_i;N,L,\theta_i) 
\mathcal{P}_{L}\left( \log_{+}\sup\left(\frac{|\vec{p}|_\LIR}{\min_{j\in\mathcal{I}}\lambda_{j}},
 \frac{\max_{j\in\mathcal{I}}\lambda_{j}}{\LIR}\right)\right)\, ,
\end{split}
\een
where we have absorbed the factor $L+1$ into the logarithmic polynomial. 
We conclude that the claimed bound \eqref{mainbound} is satisfied by this contribution  provided that $K$ is chosen large enough that
\ben\label{Kcond1}
 3^{|w|}\cdot{2^{|w|/2}} \leq K^{2(|w|+1)} \, .
\een
\paragraph{For $n_{1}\geq 4,n_{2}= 2$:} 
For the two point function $|\pa^{v_{2}} {\cal L}^{\La,\Lao}_{2,l_{2}}(q,-q;\LIR)|$ we can use
\ben\label{N2estweak}
|\pa^{v_{2}} {\cal L}^{\La,\Lao}_{2,l_{2}}(q,-q;\LIR)| 
\leq   \sqrt{|v_{2}|!}\ K^{(4l_{2}-2)(|v_{2}|+1)}\, |q|_{\La}^{2} {\La}^{-|v_{2}|}\, \mathcal{P}_{l_{2}-1}\left( \log_{+}\sup\left(\frac{|q|_\LIR}{\La},
 \frac{{\La}}{\LIR}\right)\right)\, .
\een
If $|v_{2}|\leq 2$, this bound follows trivially from \eqref{boundrel2}, while for $|v_{2}|>2$ it follows from \eqref{boundN2} after  application of formula \eqref{Lambdaint} from the appendix and after absorbing $l_{2}$ dependent factors into the polynomials $\mathcal{P}_{l_{2}-1}$. Let us also mention that the inequality \eqref{boundrel2}, and thus also \eqref{N2estweak}, relies crucially on the boundary condition \eqref{CAGBC1}. 
Using the inequality \eqref{N2estweak} along with our inductive bound \eqref{mainboundcomb}, we find that the contribution at hand is bounded by
\ben\label{irr2ndB}
\begin{split}
&\left|\int_{\La}^{\La_{0}}\d\la   \sum_{\substack{ {v_{1}+v_{2}+v_{3}=w}\\ l_{1}+l_{2}=L }}c_{\{v_{i}\}} \partial_{\vec{p}}^{v_{1}}\L^{\la,\Lambda_{0}}_{N,l_{1}}(   p_{1},\ldots,p_{N-1},q;\LIR)\,  \partial_{\vec{p}}^{v_{3}}\dot{C}^{\la}(q)\,   \partial_{\vec{p}}^{v_{2}}\L^{\la,\Lambda_{0}}_{2,l_{2}}(- q,q;\LIR)  \right| \\
&\leq\int_{\Lambda}^{\Lambda_{0}} \d\la   \sum_{\substack{ {v_{1}+v_{2}+v_{3}=w}\\ l_{1}+l_{2}=L }}c_{\{v_{i}\}} 2k \frac{\sqrt{|v_{3}|!\, 2^{|v_{3}|} }\  e^{-\frac{q^{2}}{2\la^{2}}}}{\la^{3+|v_{3}|}}  \times \bigg[  \sqrt{|v_{1}|!}\ K^{(N+4l_{1}-4)(|v_{1}|+1)}\,  \sum_{T\in {\cal T}_{N,2l_{1},v_{1}}(\vec{p})}
\\
&\times  
 \prod_{i \in \mathcal{I}(T)} \int_{\la}^{\La_{0}} \d\la_{i}\ f_{\lambda_{i}}(k_i;N,l_{1},\theta_i) 
\mathcal{P}_{l_{1}}\left( \log_{+}\sup\left(\frac{|\vec{p}|_\LIR}{\la},
 \frac{\max_{j\in\mathcal{I}}(\lambda_{j},\la)}{\LIR}\right)\right)
\bigg]\\
&\times \sqrt{|v_{2}|!}\ K^{(4l_{2}-2)(|v_{2}|+1)}\, |q|_{\la}^{2}\ \la^{-|v_{2}|}\, \mathcal{P}_{l_{2}-1}\left( \log_{+}\sup\left(\frac{|q|_\LIR}{\la},
 \frac{\la}{\LIR}\right)\right)\\
&\leq  K^{(N+4L-6)(|w|+1)}\sqrt{|w|!\, 2^{|w|}} \sum_{\substack{ {v_{1}+v_{2}+v_{3}=w}\\ l_{1}+l_{2}=L }}c_{\{v_{i}\}} \int_{\Lambda}^{\Lambda_{0}} \d\la\,  f_{\la}(q,N,L,\theta=|v_{2}|+|v_{3}|)   \sum_{T \in {\cal T}_{N,2l_{1},v_{1}}(\vec{p})}  \\
&\times \prod_{i \in \mathcal{I}(T)} \int_{\La}^{\La_{0}} \d\la_{i}\ f_{\lambda_{i}}(k_i;N,L,\theta_i) 
\mathcal{P}_{L-1}\left( \log_{+}\sup\left(\frac{|\vec{p}|_\LIR}{\la},
 \frac{\max_{j\in\mathcal{I}}(\lambda_{j},\la)}{\LIR}\right)\right)\, .
\end{split}
\een
Here we made use of the elementary bound 
\ben\label{qest}
|q|_{\la}^{2}e^{-\frac{q^{2}}{4\la^{2}}}\leq  2\la^{2}\, .
\een
To see that \eqref{irr2ndB} is indeed bounded by \eqref{mainbound}, we associate to every tree $T \in {\cal T}_{N,2l_{1},v_{1}}(\vec{p}) $ another tree $T' \in {\cal T}_{N,2L,w}(\vec{p}) $ as follows:
\begin{itemize}
\item Add a vertex of coordination number two to the external line of $T$ carrying momentum $q$, as shown in fig.~\ref{fig:merge2}. To the resulting new internal line $i$ associate the weight  $\rho_{i}=0$, $\sigma_{i}=|v_{2}|+|v_{3}|$.
\item If this procedure yields two internal lines adjacent to a vertex of coordination number two, fuse these lines into one by removing that vertex and add up their weights (as in item 4. of the reduction procedure).
\end{itemize}
\begin{figure}[htbp]
\begin{center}
\includegraphics{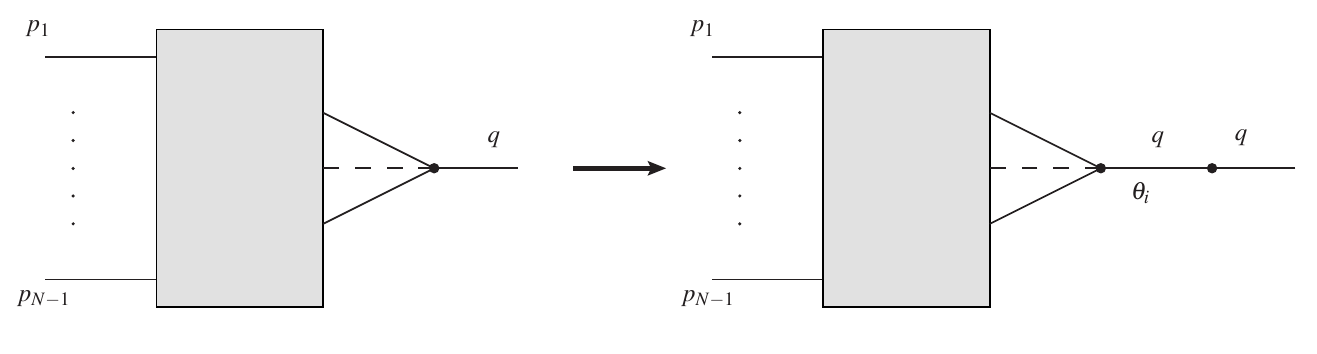}
\end{center}
\caption{If the external line carrying momentum $q$ is incident to a vertex of coordination number greater than $2$, we separate this line into two by adding a vertex of coordination number $2$. The resulting new internal line, which carries momentum $q$, is assigned the weight $\theta_{i}=|v_{2}|+|v_{3}|$.}
\label{fig:merge2}
\end{figure}
Again, for fixed values of $v_{1},v_{2},v_{3}$, this procedure is injective. Using \eqref{cwest2} to bound the sum over the $v_{i}$, and bounding the sum over $l_{1},l_{2}$ again by a factor $L+1$, which we absorb into the polynomial $\mathcal{P}_{L-1}$, it follows that the expression \eqref{irr2ndB} is smaller than 
\ben\label{case4merge}
\begin{split}
&   K^{(N+4L-6)(|w|+1)}3^{|w|}\sqrt{|w|!\, 2^{|w|}}
\sum_{T \in {\cal T}_{N,2L,w}(\vec{p})}  \\
&\times \prod_{i \in \mathcal{I}(T)} \int_{\La}^{\La_{0}} \d\la_{i}\ f_{\lambda_{i}}(k_i;N,L,\theta_i) 
\mathcal{P}_{L-1}\left( \log_{+}\sup\left(\frac{|\vec{p}|_\LIR}{\min_{j\in\mathcal{I}}\lambda_{j}},
 \frac{\max_{j\in\mathcal{I}}\lambda_{j}}{\LIR}\right)\right)
\end{split}
\een
which is consistent with our claim \eqref{mainbound}, provided that \eqref{Kcond1} holds true.
\paragraph{For $n_{1}=2=n_{2}$:} 
We make use of \eqref{N2estweak} and bound the second term in the flow equation by
\ben
\begin{split}
&\left|\int_{\La}^{\La_{0}}\d\la  \sum_{\substack{ {v_{1}+v_{2}+v_{3}=w}\\ l_{1}+l_{2}=L }}c_{\{v_{i}\}} \partial_{{p}}^{v_{1}}\L^{\la,\Lambda_{0}}_{2,l_{1}}(   p,-p;\LIR)\,  \partial_{{p}}^{v_{3}}\dot{C}^{\la}(-p)\,   \partial_{{p}}^{v_{2}}\L^{\la,\Lambda_{0}}_{2,l_{2}}(- p,p;\LIR)  \right| \\
&\leq\int_{\La}^{\La_{0}}\d\la\,\sum_{\substack{ {v_{1}+v_{2}+v_{3}=w}\\ l_{1}+l_{2}=L }}c_{\{v_{i}\}} 2k \frac{\sqrt{|v_{3}|!\, 2^{|v_{3}|} }\  e^{-\frac{p^{2}}{2\la^{2}}}}{\la^{3+|v_{3}|}}\\
&\times   \sqrt{|v_{1}|!}\ K^{(4l_{1}-2)(|v_{1}|+1)}\,|p|_{\la}^{2}\, \la^{-|v_{1}|}\, \mathcal{P}_{l_{1}-1}\left( \log_{+}\sup\left(\frac{|p|_\LIR}{\la},
 \frac{\la}{\LIR}\right)\right)\\
 &\times  \sqrt{|v_{2}|!}\ K^{(4l_{2}-2)(|v_{2}|+1)}|p|_{\la}^{2}\, \la^{-|v_{2}|}\, \mathcal{P}_{l_{2}-1}\left( \log_{+}\sup\left(\frac{|p|_\LIR}{\la},
 \frac{\la}{\LIR}\right)\right)\\
 &\leq \int_{\La}^{\La_{0}}\d\la\,\sum_{{v_{1}+v_{2}+v_{3}=w}}\hspace{-.4cm}  \frac{c_{\{v_{i}\}}\sqrt{|w|!\, 2^{|w|} }\  e^{-\frac{p^{2}}{4\la^{2}}}}{\la^{|w|-1}} K^{(4L-4)(|w|+1)} \mathcal{P}_{L-2}\left( \log_{+}\sup\left(\frac{|p|_\LIR}{\la},
 \frac{\la}{\LIR}\right)\right)\, .
\end{split}
\een
To obtain the last inequality we used the bound
\ben
|p|_{\la}^{4}\, e^{-\frac{p^{2}}{4\la^{2}}}\leq 9\, \la^{4} 
\een
and we absorbed some constants into the polynomials $\mathcal{P}_{L-2}$. The bound is compatible with our claim \eqref{boundN2}, provided that
\ben
K^{-2|w|}\,3^{|w|}\, 2^{|w|/2}\leq 1\, .
\een


We have verified that any item $(n_{1},n_{2})$ in the second term on the r.h.s. of the flow equation \eqref{CAGFEexpand} satisfies the inductive bounds \eqref{boundN2} and \eqref{mainbound}.  It remains to bound the sum over these parameters, for which we use 
\ben
\sum_{n_{1}+n_{2}=N+2}n_{1}n_{2}\leq (N+1)^{3}\, .
\een 
Absorbing this factor into the logarithmic polynomials, we find that the complete second term on the r.h.s. of the flow equation satisfies the claimed bounds. This finishes the proof of theorem~\ref{thm1}.
\end{proof}


For $\Lambda=0$ we may now state the following bounds on the massless (connected amputated) Schwinger functions:
\begin{cor}\label{cor1}
For any $N,L\in\mathbb{N}$ and $w\in\mathbb{N}^{4(N-1)}$ there exists $K>0$ such that for $\eta(\vec{p})>0$
\ben\label{corbound1}
\begin{split}
|\pa^w {\cal L}^{0,\Lao}_{N,L}(\vec{p};\LIR)| 
&\le  {|w|!}\ K^{(N+4L-3)(|w|+1)}\,  \eta(\vec{p})^{-(N+|w|-4)}
\mathcal{P}_{L}\left( \log_{+}\frac{|\vec{p}|_\LIR}{\inf(\eta(\vec{p}),M)}
\right) \, ,
\end{split}
\een
 where $\eta$ is defined as in \eqref{etadef}.
\end{cor}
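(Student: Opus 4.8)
The plan is to obtain \eqref{corbound1} by putting $\La=0$ in theorem~\ref{thm1} and carrying out the remaining $\lambda_{i}$-integrals explicitly. I treat the generic case $N\geq 4$, $N+|w|>4$ first, starting from the tree bound \eqref{mainbound}; the relevant cases $N+|w|\leq 4$ follow directly from \eqref{boundrel1}--\eqref{boundrel2} (note that by \eqref{kappadef} one has $\kappa(0,\vec p,M)=\inf(\eta(\vec p),M)$), and the case $N=2$, $|w|>2$ follows from \eqref{boundN2} by a single integration. Throughout I bound $\int_{0}^{\La_{0}}\leq\int_{0}^{\infty}$, which is legitimate since the integrands are integrable at $\infty$ thanks to $\theta_{i}>0$.

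The geometric input is that for every $T\in\mathcal{T}_{N,2L,w}(\vec p)$ and every internal line $i\in\mathcal{I}(T)$, cutting $i$ splits $\tau$ into two connected subtrees, each of which must carry at least one external line: a subtree with none would have odd total coordination $2|V|-1$ (the $|V|-1$ internal edges plus the one dangling cut line), contradicting the requirement that every vertex has coordination $\geq 2$. Hence the momentum $k_{i}$ flowing through $i$ is a partial sum $\sum_{e\in I_{i}}p_{e}$ over a proper nonempty $I_{i}\subsetneq\{1,\dots,N\}$, so that by \eqref{etadef}
\ben
\eta(\vec p)\leq |k_{i}|\leq |\vec p|\qquad\text{for all }i\in\mathcal{I}(T)\,.
\een
This is precisely what converts the removal of the cutoff $\La=0$ into an effective infrared scale $\eta(\vec p)>0$.

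Next I evaluate the integrals and collapse the logarithm. Using $\log_{+}\sup(a,b)\leq\log_{+}a+\log_{+}b$, together with $\log_{+}(\max_{j}\lambda_{j}/M)\leq\sum_{j}\log_{+}(\lambda_{j}/M)$ and $\log_{+}(|\vec p|_{M}/\min_{j}\lambda_{j})\leq\sum_{j}\log_{+}(|\vec p|_{M}/\lambda_{j})$, the factor $\mathcal{P}_{L}$ is dominated by a polynomial of the same degree $L$ in the single-line quantities $\log_{+}(\lambda_{i}/M)$ and $\log_{+}(|\vec p|_{M}/\lambda_{i})$, which then distributes over the product over $i$. Each resulting factor decouples, and the substitution $u=k_{i}^{2}/(\alpha(N,L)\lambda^{2})$ gives
\ben
\int_{0}^{\infty}\!\d\lambda\ \lambda^{-\theta_{i}-1}\,e^{-k_{i}^{2}/(\alpha(N,L)\lambda^{2})}\big(\log_{+}\tfrac{\lambda}{M}\big)^{a}\big(\log_{+}\tfrac{|\vec p|_{M}}{\lambda}\big)^{b}\leq |k_{i}|^{-\theta_{i}}\,\alpha(N,L)^{\theta_{i}/2}\,\Gamma(\tfrac{\theta_{i}}{2})\,\mathcal{Q}_{a+b}\big(\log_{+}\tfrac{|k_{i}|}{M},\log_{+}\tfrac{|\vec p|_{M}}{|k_{i}|}\big),
\een
where $\mathcal{Q}_{a+b}$ has degree $a+b$ and convergence at $\lambda\to 0$ uses $k_{i}\neq 0$. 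Since $\eta\leq |k_{i}|\leq|\vec p|$ and $\inf(\eta,M)\leq M\leq|\vec p|_{M}$, both arguments are bounded by $\log_{+}(|\vec p|_{M}/\inf(\eta,M))$ — the case $\eta\geq M$ being covered by monotonicity and $|\vec p|_{M}\geq M$ — so the logarithmic factors collapse to the single degree-$L$ polynomial of \eqref{corbound1}.

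Finally I collect the constants. The product over $i$ together with $\sum_{i}\theta_{i}=N+|w|-4$ yields $\prod_{i}|k_{i}|^{-\theta_{i}}\leq\eta(\vec p)^{-(N+|w|-4)}$, the announced power. The integral constants give $\prod_{i}\alpha(N,L)^{\theta_{i}/2}\Gamma(\theta_{i}/2)=\alpha(N,L)^{(N+|w|-4)/2}\prod_{i}\Gamma(\theta_{i}/2)$, and, using $\prod_{i}\Gamma(\theta_{i}/2)\leq \mathrm{const}^{|w|}\sqrt{(N+|w|-4)!}\leq \mathrm{const}^{|w|}\sqrt{|w|!}\,(|w|+N)^{(N-4)/2}$, the prefactor $\sqrt{|w|!}$ of \eqref{mainbound} is upgraded to the $|w|!$ of \eqref{corbound1}. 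The tree sum is controlled by Lemma~\ref{lemtreebd}, $|\mathcal{T}_{N,2L,w}(\vec p)|\leq N!\,4^{3N-2}[3(|w|+1)]^{\frac{N-4}{2}+L}$, and all remaining factors that are merely polynomial in $|w|$ (the tree count, the $(|w|+N)$-powers, the $N,L$-dependent constants) are absorbed into the one extra factor $K^{|w|+1}$ obtained by raising the exponent from $N+4L-4$ to $N+4L-3$, for $K$ chosen large enough. I expect the main obstacle to lie in step three: disentangling the coupled $\min/\max$ logarithm without inflating the degree of $\mathcal{P}_{L}$, and verifying that the residual logarithms are genuinely dominated by $\log_{+}(|\vec p|_{M}/\inf(\eta,M))$ uniformly in the relative size of $\eta$ and $M$.
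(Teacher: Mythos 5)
Your proof is correct, and it reaches \eqref{corbound1} by a genuinely different route for the central analytic step. The skeleton is the same as the paper's: set $\La=0$ in theorem~\ref{thm1}, use the geometric fact $\eta(\vec p)\leq|k_i|\leq|\vec p|$ for every internal line, control the tree sum with lemma~\ref{lemtreebd}, and absorb all $N,L$-dependent and geometric-in-$|w|$ factors by raising the exponent of $K$ from $N+4L-4$ to $N+4L-3$ (which also upgrades $\sqrt{|w|!}$ to $|w|!$ exactly as in the paper). But where the paper handles the coupled $\min/\max$ logarithm by ordering the $\la_i$ into permutation sectors [see \eqref{intprodorder}], iteratively integrating out the \emph{largest} scales via lemma~\ref{lemmalambdalarge} (discarding all Gaussian factors except the one attached to the smallest $\la$), and finishing with a single application of lemma~\ref{lemmaLambdaint}, you instead decouple the logarithm via $\log_+\sup(a,b)\leq\log_+a+\log_+b$ and $\max_j\leq\sum_j$, factorize the multi-dimensional $\la$-integral completely, and evaluate each one-dimensional factor by the substitution $u=k_i^2/(\alpha\la^2)$, using \emph{every} line's Gaussian and obtaining the power $\eta(\vec p)^{-(N+|w|-4)}$ from $\prod_i|k_i|^{-\theta_i}$ rather than from a single integral. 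Your route is more elementary and symmetric — it avoids both appendix lemmas and the $(\frac{N-4}{2}+L)!$ permutation factor — at the cost of expanding $\mathcal{P}_L$ into multinomials (the resulting combinatorial coefficients are $N,L$-dependent, hence harmless) and of $\theta_i$-dependent constants hidden in the Gamma-function evaluation (the $\log u$ moments produce factors like $2^{\theta_i/2}$, which are geometric in $|w|$ and absorbed the same way); the paper's route keeps $\mathcal{P}_L$ intact throughout and reuses lemmas it needs elsewhere in the induction anyway. Two further points in your favour: you supply an actual proof of the relation \eqref{etabound} (the parity/coordination-number argument for why each side of a cut internal line carries an external leg), which the paper takes as obvious, and you correctly note that the relevant cases follow from \eqref{boundrel1}--\eqref{boundrel2} via $\kappa(0,\vec p,M)=\inf(\eta(\vec p),M)$ and that $N=2$, $|w|>2$ is a single-integral special case — a case the paper's tree-based argument technically glosses over.
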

\begin{proof}
For $N+|w|\leq 4$, this follows directly from the bounds \eqref{boundrel1} and \eqref{boundrel2}. For $N+|w|>4$, we start from our bounds established in theorem \ref{thm1} and first rewrite the product of $\la_{i}$ integrals as follows:
\ben\label{intprodorder}
\begin{split}
&\prod_{i \in \mathcal{I}(T)} \int_{0}^{\La_{0}} \d\la_{i}\ f_{\lambda_{i}}(k_i;N,L,\theta_i) 
\mathcal{P}_{L}\left( \log_{+}\sup\left(\frac{|\vec{p}|_\LIR}{\min_{j\in\mathcal{I}}\lambda_{j}},
 \frac{\max_{j\in\mathcal{I}}\lambda_{j}}{\LIR}\right)\right) \\
& =\sum_{\pi\in\mathfrak{S}(\mathcal{I})} \int_{0}^{\La_{0}}\d\la_{\pi_{1}} \int_{\la_{\pi_{1}}}^{\La_{0}}\d\la_{\pi_{2}}\ldots  \int_{\la_{\pi_{|\mathcal{I}|-1}}}^{\La_{0}}\d\la_{\pi_{|\mathcal{I}|}} \\
&\qquad\times f_{\lambda_{\pi_{1}}}(k_{\pi_{1}};N,L,\theta_{\pi_{1}})\cdots f_{\lambda_{\pi_{|\mathcal{I}|}}}(k_{\pi_{|\mathcal{I}|}};N,L,\theta_{\pi_{|\mathcal{I}|}})\mathcal{P}_{L}\left( \log_{+}\sup\left(\frac{|\vec{p}|_\LIR}{\la_{\pi_{1}}},
 \frac{\lambda_{\pi_{|\mathcal{I}|}}}{\LIR}\right)\right)\, ,
\end{split}
\een
where $\mathfrak{S}(\mathcal{I})$ is the set of permutations of the internal lines of the tree $T$ (i.e. the \emph{symmetric group} on $\mathcal{I}$).  
This way of writing the integral allows us to apply lemma \ref{lemmalambdalarge} (see the appendix) in order to bound the integral over $\la_{\pi_{|\mathcal{I}|}}$
\ben
\begin{split}
&\prod_{i \in \mathcal{I}(T)} \int_{0}^{\La_{0}} \d\la_{i}\ f_{\lambda_{i}}(k_i;N,L,\theta_i) 
\mathcal{P}_{L}\left( \log_{+}\sup\left(\frac{|\vec{p}|_\LIR}{\min_{j\in\mathcal{I}}\lambda_{j}},
 \frac{\max_{j\in\mathcal{I}}\lambda_{j}}{\LIR}\right)\right) \\
 & \leq\sum_{\pi\in\mathfrak{S}(\mathcal{I})} \int_{0}^{\La_{0}}\d\la_{\pi_{1}} \int_{\la_{\pi_{1}}}^{\La_{0}}\d\la_{\pi_{2}}\ldots  \int_{\la_{\pi_{|\mathcal{I}|-2}}}^{\La_{0}}\d\la_{\pi_{|\mathcal{I}|-1}} \\
 &\qquad\times  f_{\lambda_{\pi_{1}}}(k_{\pi_{1}};N,L,\theta_{\pi_{1}})\cdots f_{\lambda_{\pi_{|\mathcal{I}|-1}}}(k_{\pi_{|\mathcal{I}|-1}};N,L,\theta_{\pi_{|\mathcal{I}|-1}})\cdot \la_{\pi_{|\mathcal{I}|-1}}^{-\theta_{\pi_{|\mathcal{I}|}}} \mathcal{P}_{L}\left( \log_{+}\sup\left(\frac{|\vec{p}|_\LIR}{\la_{\pi_{1}}},
 \frac{\lambda_{\pi_{|\mathcal{I}|-1}}}{\LIR}\right)\right)\, ,
\end{split}
\een
where $L$ dependent factors were absorbed into the polynomials $\mathcal{P}_{L}$. Repeating this procedure, we can bound all the $\la_{\pi_i}$-integrals except the first one (over $\la_{\pi_{1}}$), which leads us to
\ben
\begin{split}
&\prod_{i \in \mathcal{I}(T)} \int_{0}^{\La_{0}} \d\la_{i}\ f_{\lambda_{i}}(k_i;N,L,\theta_i) 
\mathcal{P}_{L}\left( \log_{+}\sup\left(\frac{|\vec{p}|_\LIR}{\min_{j\in\mathcal{I}}\lambda_{j}},
 \frac{\max_{j\in\mathcal{I}}\lambda_{j}}{\LIR}\right)\right) \\
 & \leq\sum_{\pi\in\mathfrak{S}(\mathcal{I})} \int_{0}^{\La_{0}}\d\la_{\pi_{1}}  \, f_{\lambda_{\pi_{1}}}(k_{\pi_{1}};N,L,\theta_{\pi_{1}})\cdot \la_{\pi_{1}}^{-\theta_{\pi_{2}}-\ldots-\theta_{\pi_{|\mathcal{I}|}}}   \mathcal{P}_{L}\left( \log_{+}\sup\left(\frac{|\vec{p}|_\LIR}{\la_{\pi_{1}}},
 \frac{\lambda_{\pi_{1}}}{\LIR}\right)\right)\\
 & \leq(\frac{N-4}{2}+L)! \int_{0}^{\La_{0}}\d\la  \, e^{ - \frac{\eta(\vec{p})^{2}}{\alpha(N,L) \la^{2} } }\, \la^{-(N+|w|-3)}  \mathcal{P}_{L}\left( \log_{+}\sup\left(\frac{|\vec{p}|_\LIR}{\la},
 \frac{\lambda}{\LIR}\right)\right)\, .
\end{split}
\een
In the last line we used the condition $\sum_{i}\theta_{i}=N-4+|w|$, the obvious relation 
\ben\label{etabound}
\inf_{r\in\mathcal{I}(T)}|k_{r}|\ge \eta(\vec{p}) \quad \text{ for }T\in\mathcal{T}_{N,2L,w}(\vec{p})
\een
 as well as the bound \eqref{inlinebound} on the number of internal lines of  a tree, which allows us to bound the sum over index permutations. To bound the remaining $\la$-integral, we use lemma \ref{lemmaLambdaint} from the appendix, which yields
\ben
\begin{split}
&\prod_{i \in \mathcal{I}(T)} \int_{0}^{\La_{0}} \d\la_{i}\ f_{\lambda_{i}}(k_i;N,L,\theta_i) 
\mathcal{P}_{L}\left( \log_{+}\sup\left(\frac{|\vec{p}|_\LIR}{\min_{j\in\mathcal{I}}\lambda_{j}},
 \frac{\max_{j\in\mathcal{I}}\lambda_{j}}{\LIR}\right)\right) \\
 & \leq (\frac{N-4}{2}+L)!\cdot \sqrt{(N-4+|w|)!}\left(\frac{\sqrt{\alpha}}{\eta(\vec{p})}\right)^{N+|w|-4}  \mathcal{P}_{L}\left( \log_{+}\frac{|\vec{p}|_\LIR}{\inf(|\eta(\vec{p})|, M)}\right)\\
 & \leq \sqrt{ |w|!\, (2\alpha)^{|w|}}   \eta(\vec{p})^{-(N+|w|-4)}  \mathcal{P}_{L}\left( \log_{+}\frac{|\vec{p}|_\LIR}{\inf(|\eta(\vec{p})|, M)}\right)\, .
\end{split}
\een
The last inequality follows by absorbing $N$ and $L$ dependent factors into the polynomial $\mathcal{P}_{L}$. Using this bound  in theorem \ref{thm1}, we arrive at
\ben
\begin{split}
|\pa^w {\cal L}^{0,\Lao}_{N,L}(\vec{p};\LIR)| 
&\le  {|w|!}\ K^{(N+4L-4)(|w|+1)}\, (2\alpha)^{\frac{|w|}{2}}\hspace{-.5cm}  \sum_{T \in {\cal T}_{N,2L,w}(\vec{p})}
 \hspace{-.5cm}    \eta(\vec{p})^{-(N+|w|-4)}  \mathcal{P}_{L}\left( \log_{+}\frac{|\vec{p}|_\LIR}{\inf(|\eta(\vec{p})|, M)}\right)\, .
\end{split}
\een
It remains to bound the sum over $T \in {\cal T}_{N,2L,w}$.  Here we use lemma \ref{lemtreebd}, which states that $\mathcal{T}_{N,2L,w}$ satisfies the bound $|\mathcal{T}_{N,2L,w}|\leq N!\cdot 4^{3N-2} \cdot [3(|w|+1)]^{\frac{N-4}{2}+L}$. Absorbing the purely $N$ dependent part into the polynomial $\mathcal{P}_{L}$ and choosing the constant $K$ large enough in order to guarantee (here we denote by $K_{0}$ the constant from theorem \ref{thm1})
\ben
(|w|+1)^{\frac{N-4}{2}+L}\, (2\alpha)^{|w|/2} \, K_{0}^{(N+4L-4)(|w|+1)} \leq K^{(N+4L-3)(|w|+1)}
\een
we finally arrive at the bound \eqref{corbound1}.
\end{proof}
%


\noindent We can also use theorem \ref{thm1} to bound the smeared, connected Schwinger functions:
\begin{cor}\label{cor1b}
For any $N,L\in\mathbb{N}$ there exists $K>0$ such that
\ben\label{cor1beq}
\left|\int_{p_{1},\ldots,p_{N}}  \L_{N,L}^{0,\infty}\left( \vec{p}\right)\ \prod_{i=1}^{N} \frac{\hat{\test}_{i}(p_{i})}{p_i^2} \right| \leq K\,   \LIR^{N}\ \sum_{\mu=0}^{N+2} \ \sum_{{ \mu_1+\ldots+\mu_N=\mu }  }   \frac{\prod_{i=1}^{N}\|\hat{\test}_{i}\|_{\mu_i/2}}{M^\mu} 
\een
for arbitrary test functions $\test_i\in\mathcal{S}(\mathbb{R}^4)$ and for $\|\cdot\|_n$ as in \eqref{Schwnorms}.
\end{cor}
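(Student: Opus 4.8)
The plan is to combine the tree bound of Theorem~\ref{thm1} (applied at $\La=0$ and $w=0$, supplemented by the relevant bounds \eqref{boundrel1}, \eqref{boundrel2} and \eqref{boundN2} for $N\le 4$) with the elementary estimate $|\hat\test_i(p_i)|\le \|\hat\test_i\|_{\mu_i/2}\,(M^2+p_i^2)^{-\mu_i/2}$, valid for each leg and for each exponent $\mu_i\in\mn$ to be chosen later. Inserting \eqref{mainbound} for $|\L^{0,\infty}_{N,L}(\vec p)|$ turns the left side of \eqref{cor1beq} into a sum over trees $T\in\mathcal{T}_{N,2L,0}(\vec p)$ of integrals of the schematic form $\int\prod_{j=1}^{N-1}\frac{\d^4 p_j}{(2\pi)^4}\,\prod_{j=1}^{N}\frac{|\hat\test_j(p_j)|}{p_j^2}\,\prod_{i\in\mathcal{I}(T)}\int_0^\infty\d\lambda_i\, f_{\lambda_i}(k_i;N,L,\theta_i)\,\mathcal{P}_L(\cdots)$, where $p_N=-(p_1+\dots+p_{N-1})$ and the $k_i$ are the associated internal momenta. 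Since every $\theta_i>0$ (property~\ref{it8}), the $\lambda_i$-integrals converge at both ends, so the whole issue is the convergence of the momentum integrations and the bookkeeping of the resulting powers of $M$.

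Next I would perform the external-momentum integrations following the topology of the tree $T$, integrating the ``leaf'' momenta first. The Gaussian factor $e^{-k_i^2/(\al(N,L)\lambda_i^2)}$ attached to each internal line acts as a smooth ultraviolet cutoff at the scale $\lambda_i$, the four-dimensional measure $\d^4 p_j$ makes the propagator $1/p_j^2$ integrable near $p_j=0$, and the Schwartz weight $(M^2+p_j^2)^{-\mu_j/2}$ supplies whatever extra ultraviolet decay is required. The crucial structural input is that the constraints of Definition~\ref{deftrees1} (in particular \ref{it8} and \ref{it9}) force any internal line carrying a single external momentum $p_j$ — i.e. a line incident to a coordination-two vertex — to have total weight $\theta_i=1$ when $w=0$. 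This is exactly what renders the product $\frac{1}{p_j^2}\,|k_i|^{-\theta_i}$ locally integrable near the exceptional configurations $p_j=0$ and $\sum_{j<N}p_j=0$, where both the external propagator and the amputated function are singular. Each momentum integral then produces one factor $\|\hat\test_j\|_{\mu_j/2}$ together with a function of the remaining $\lambda_i$ and of $M$.

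After the momentum integrations I am left with nested $\lambda_i$-integrals with shifted exponents, which are bounded by the appendix lemmas~\ref{lemmalambdalarge} and \ref{lemmaLambdaint}; these yield a pure constant times appropriate powers of $M$, the logarithmic polynomials $\mathcal{P}_L$ being absorbed because the full integral converges absolutely. The summations are then routine: the number of trees is controlled by Lemma~\ref{lemtreebd}, and this count together with all $N,L$-dependent constants is absorbed into $K$. The exponents are summed over $\mu_1+\dots+\mu_N=\mu$ with $0\le\mu\le N+2$, the upper limit reflecting the superficial ultraviolet degree: a power count shows that $\mu>N$ is needed to make every momentum integral converge in the ultraviolet, so the weights beyond $N+2$ are never required. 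Matching dimensions fixes the prefactor $M^N/M^\mu$, and one arrives at \eqref{cor1beq}. The cases $N\le 4$ follow the same scheme but are simpler, since then $N-4\le 0$ and the bounds \eqref{boundrel1}--\eqref{boundN2} carry no exceptional enhancement.

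The step I expect to be the main obstacle is the control of the infrared region near the exceptional momenta, in particular near $p_j=0$ and $\sum_{j<N}p_j=0$, where the external propagators $1/p_j^2$ and the amputated function $\L_{N,L}$ are simultaneously singular. Here the cruder pointwise estimate of Corollary~\ref{cor1} does not suffice: the single power $\eta(\vec p)^{-(N-4)}$ overcounts the singularity and, combined with $\prod_j 1/p_j^2$, already fails to be integrable for $N\ge 6$. It is precisely the finer distribution of the singularity across the internal lines encoded in the tree bound of Theorem~\ref{thm1}, together with the weight constraints of Definition~\ref{deftrees1}, that renders each momentum integral convergent; making this organisation systematic for an arbitrary tree is the technical heart of the argument.
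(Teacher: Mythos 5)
Your starting point (insert the tree bound of theorem~\ref{thm1} at $\La=0$, $w=0$, then integrate the momenta against Schwartz-norm weights) is the right one, and your observation that the cruder pointwise bound of corollary~\ref{cor1} is not integrable against $\prod_j p_j^{-2}$ for $N\geq 6$ is correct -- this is indeed why the tree structure must be kept. But there is a genuine gap at the decisive step, namely your claim that after the momentum integrations one is left with ``nested $\la_i$-integrals with shifted exponents'' that lemmas~\ref{lemmalambdalarge} and~\ref{lemmaLambdaint} can handle. Those lemmas require a net \emph{negative} power of $\la$ at the upper limit of integration, and this is exactly what your scheme destroys: whenever a Gaussian weight $e^{-k_i^2/(\al\la_i^2)}$ of an internal line is used to perform a momentum integration, it produces a positive power $\la_i^{4}$ (or $\la_i^{2}$ if the propagator $1/p_j^2$ is absorbed as well), so the corresponding flow integral behaves like $\int^\infty \d\la_i\,\la_i^{3-\theta_i}$ (resp.\ $\la_i^{1-\theta_i}$) up to logarithms, which diverges at $\la_i\to\infty$ since $\theta_i\leq 2$ -- the Gaussian that controlled the large-$\la_i$ region has been spent. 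The alternative of drawing the ultraviolet convergence of each momentum integral from the test-function weights alone costs $\mu_j>2$ units per integrated leg, i.e.\ a total budget of order $3N$ rather than the claimed $N+2$. So, as written, your scheme either diverges or overshoots the Schwartz-norm budget; the global power count $\mu>N$ that you invoke does not by itself prove convergence of the nested integrals. A secondary error: your ``crucial structural input'' is false. For $w=0$ there are no coordination-two vertices at all, because a line incident to such a vertex would need $\rho_i=0$ by property~\ref{it9} while property~\ref{it8} demands $\theta_i>0$ (the paper notes this conflict in appendix~\ref{appdistr}); hence no internal line ever carries a single external momentum $p_j$. The integrability near $p_j=0$ is therefore even better than you state, but your argument for it rests on a wrong premise.

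The paper's proof (appendix~\ref{appdistr}, of which corollary~\ref{cor1b} is stated to be the simpler analogue) supplies precisely the mechanism you are missing, and it is organised quite differently from ``leaf momenta first'': (1) each factor $|\hat\test_j(p_j)|/p_j^2$ is rewritten through a heat-kernel representation so that every \emph{external} line also acquires its own $\la$-parameter and Gaussian weight, cf.~\eqref{ftildedef}; (2) the product of all $\la$-integrals is decomposed into sectors $\la_{\pi_1}\leq\ldots\leq\la_{\pi_E}$ running over \emph{all} lines, internal and external; (3) momenta are integrated one at a time in order of increasing $\la$, so that the positive powers of $\la_{\pi_1}$ generated by lemma~\ref{lablemma1} are cut off at $\la_{\pi_2}$ and absorbed by a cutting/reduction operation on the tree; (4) the accumulated powers are finally dumped on the largest-$\la$ external line, whose weight $s$ is chosen of order $N+1$ so that the last integral converges -- and this choice is precisely the origin of the total budget $N+2$ appearing in~\eqref{cor1beq}. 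Interleaving the momentum integrations with the $\la$-ordering in this way is not a technical nicety but the heart of the proof; without it (or an equivalent bookkeeping device) the bound cannot be closed.
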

The bound implies that the Schwinger functions are tempered distributions. See appendix \ref{appdistr} for the proof of an analogous bound for the Schwinger functions with two operator insertions. The proof of corollary \ref{cor1b} follows along the same lines. 
\subsection{Schwinger functions with one insertion}\label{sec:1ins}
Using our bounds from the previous subsection, we next give bounds on the CAS's with one insertion. 
Recall from section~\ref{subsec:compfields} that the CAS's with insertion of the composite operator $\O_{A}$, where $A=\{N',w'\}$ is a multi-index, are defined through the flow equation 
\ben\label{CAGFEInsertion}
\begin{split}
&\partial_{\La}\partial_{\vec{p}}^{w}\L^{\La,\Lambda_{0}}_{N,L}(\O_{A};\vec{p};\LIR)= \left(N+2 \atop 2 \right) \, \int_{k} \dot{C}^{\La}(k)\partial_{\vec{p}}^{w}\L^{\La,\Lambda_{0}}_{N+2,L-1}(\O_{A}; k, -k,  \vec{p};\LIR)  \\
&-\hspace{-.3cm}\sum_{\substack{l_{1}+l_{2}=L \\ n_{1}+n_{2}=N+2 \\ v_{1}+v_{2}+v_{3}=w }}\hspace{-.3cm} n_{1}n_{2}\, c_{\{v_{i}\}} \mathbb{S}\, \left[ \partial_{\vec{p}}^{v_{1}}\L^{\La,\Lambda_{0}}_{n_{1},l_{1}}(\O_{A};   p_{1},\ldots,p_{n_{1}-1},q;\LIR)\,  \partial_{\vec{p}}^{v_{3}}\dot{C}^{\La}(q)\,  \partial_{\vec{p}}^{v_{2}} \L^{\La,\Lambda_{0}}_{n_{2},l_{2}}(-q,  p_{n_{1}},\ldots,p_{N};\LIR) \right]\, ,
\end{split}
\een
where $q=p_{n_{1}}+\ldots+p_{N}$,  through the boundary conditions
\ben\label{BCL1b}
\partial^{w}_{\vec{p}}\L^{\LIR, \La_{0}}_{N,L}(\O_{A}(0); \vec{0}; \LIR)= i^{|w|}w! \delta_{w,w'}\delta_{N,N'}\delta_{L,0} \quad \text{ for }N+|w|\leq [A]\, ,
\een
\ben\label{BCL2b}
\partial^{w}_{\vec{p}}\L^{\La_{0},\La_{0}}_{N,L}(\O_{A}(0); \vec{p};\LIR)=0\quad \text{ for }N+|w|>[A] \quad ,
\een
where $0<\LIR<\La_{0}$ is a finite renormalisation scale, and through the translation property \eqref{CAGtrans}. For the sake of simplicity we assume $N'$ to be even in the following, i.e. we consider \emph{even} monomials in $\varphi$. The odd case can be treated similarly. It follows directly from the flow equation and boundary conditions that $\L^{\La,\Lambda_{0}}_{0,0}(\O_{A};\LIR)=0$ (assuming $[A]>0\,$).  For other values of $N,L$, we have: 
\begin{thm}\label{thmCAGins1}
For any $N,L\in\mathbb{N}$ with $N+L>0$ and $w\in\mathbb{N}^{4N}$  there exists $\,K>0\,$ such that for $0\leq\La\leq M$, and the following bound holds:
\ben\label{CAG1boundeq}
\begin{split}
&|\pa^w {\cal L}^{\La,\Lao}_{N,L}(\O_{A}(0);\vec{p};\LIR)| 
\le  \sqrt{|w|!\, |w'|!}\ K^{(2N+8L-4)(|w|+1)}K^{[A](N/2+2L)^3 } \sum_{v_{1}+v_{2}=w} 
  \\
&\times \sum_{T \in \tkn_{N,2L,v_{1}}(\vec{p})} \LIR^{[A]-N_{\mathcal{V}}-|v_{2}|}  \sum_{\mu=0}^{[A](N+2L+1)}\frac{1}{\sqrt{\mu!}}\
\left(\frac{|\vec p_{\mathcal{V}}|}{\LIR}\right)^{\mu}     \prod_{i \in \mathcal{I}(T)} \int_{\La}^{\La_{0}} \d\la_{i}\ f_{\lambda_{i}}(k_i;N,L,\theta_i) 
\\
&\times\,
 \mathcal{P}_{2L+\frac{N}{2}-1}\left( \log_{+}\sup\left( \frac{|\vec{p}|}{\LIR}, \frac{|\vec{p}|_\LIR}{\min_{j\in\mathcal{I}}\lambda_{j}},
 \frac{\max_{j\in\mathcal{I}}\lambda_{j}}{\LIR}\right)\right)\, .
\end{split}
\een
Here 
$f$ is given in \eqref{fdef} and the trees $\tkn_{N,2L,w}$ and 
momentum $\vec{p}_{\mathcal{V}}$ are defined in def.~\ref{deftrees2}.
\end{thm}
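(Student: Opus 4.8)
The plan is to establish \eqref{CAG1boundeq} by induction on the flow equation \eqref{CAGFEInsertion}, following the same organisation as the proof of theorem~\ref{thm1} but now also tracking the momentum $\vec{p}_{\mathcal{V}}$ flowing into the special vertex and the oversubtraction scale set by $[A]$. The induction ascends in $N+2L$, for fixed $N+2L$ ascends in $L$, and for fixed $N,L$ descends in $|w|$, starting from $\L^{\La,\La_0}_{0,0}(\O_A;\LIR)=0$. As in the no-insertion case one must distinguish \emph{relevant} terms ($N+|w|\leq[A]$), whose boundary data are fixed at $\La=\LIR$ and $\vec{p}=0$ by \eqref{BCL1b}, from \emph{irrelevant} terms ($N+|w|>[A]$), which vanish at $\La_0$ by \eqref{BCL2b}; accordingly I integrate the flow equation downward from $\LIR$ to $\La$ in the former case and upward from $\La$ to $\La_0$ in the latter, which is also why the statement is restricted to $0\leq\La\leq\LIR$. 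Note that the loop term feeds $\L_{N+2,L-1}(\O_A)$, which has the same value of $N+2L$ but smaller $L$, while the product term feeds $\L_{n_1,l_1}(\O_A)$ with $n_1+2l_1\leq N+2L-1$ (since the non-inserted factor $\L_{2,0}$ vanishes), so both inputs are available by the induction hypothesis.

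Both terms on the right-hand side of \eqref{CAGFEInsertion} are then treated in close analogy with theorem~\ref{thm1}. For the loop term $\binom{N+2}{2}\int_k\dot C^\La(k)\,\partial^w_{\vec{p}}\L_{N+2,L-1}(\O_A;k,-k,\vec{p})$ I insert the inductive bound, bound the $k$-integral using lemma~\ref{lablemma1}, and apply the tree-reduction procedure of page~\pageref{reduct}, now adapted to the trees $\tkn_{N,2L,w}$ with a special vertex of def.~\ref{deftrees2}: deleting the two external lines carrying $\pm k$ and using the factor $\min_j\la_j^2$ to reduce weights (via \eqref{theta1lem} and \eqref{lemmalambdalarge}) yields a tree in $\tkn_{N,2L,w}(\vec{p})$, the combinatorial multiplicity being controlled exactly as in \eqref{reductionbound} at the cost of a factor absorbed into $K$. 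For the product term the insertion sits in the factor $\L_{n_1,l_1}(\O_A)$, while the second factor $\L_{n_2,l_2}$ carries no insertion and is bounded by theorem~\ref{thm1} in the combined form \eqref{mainboundcomb}, or by \eqref{N2estweak} when $n_2=2$. Distributing $\partial^w$ by the Leibniz rule \eqref{cwest}, I merge the two trees along the lines carrying $\pm q$ into a single tree in $\tkn_{N,2L,w}(\vec{p})$ as in fig.~\ref{fig:merge1}, combining the logarithmic polynomials through \eqref{logsmerge} and bounding the sum over $v_1,v_2,v_3$ by \eqref{cwest2}; the bound \eqref{internalboundtkn} on $|\mathcal{I}(T)|$ then reproduces the degree $2L+\frac{N}{2}-1$ of the polynomial.

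The genuinely new ingredient, and the step I expect to be the main obstacle, is the propagation through the induction of the factor $\LIR^{[A]-N_{\mathcal{V}}-|v_2|}\sum_{\mu}\frac{1}{\sqrt{\mu!}}\bigl(|\vec{p}_{\mathcal{V}}|/\LIR\bigr)^{\mu}$, which has no analogue in theorem~\ref{thm1}. For relevant terms, integrating downward from $\LIR$ forces one to reconstruct the full $\vec{p}_{\mathcal{V}}$-dependence from the Taylor data at $\vec{p}=0$ fixed by \eqref{BCL1b}; the constraint $N+|w|\leq[A]$ bounds the order of this expansion and thereby produces the summation range $\mu\leq[A](N+2L+1)$, while the normalisation $i^{|w'|}w'!$ in \eqref{BCL1b} accounts for the $\sqrt{|w'|!}$ prefactor. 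The crucial point is that the momentum derivatives at the special vertex inherit the $\sqrt{|w|!}$-type growth already present in the no-insertion bounds, so that Taylor-resumming them yields precisely the Gevrey-type series $\sum_{\mu}\frac{1}{\sqrt{\mu!}}(|\vec{p}_{\mathcal{V}}|/\LIR)^{\mu}$, and the power $[A]-N_{\mathcal{V}}-|v_2|$ on $\LIR$ records the net scaling dimension carried by $\mathcal{V}$ after $|v_2|$ derivatives have acted on the momenta attached to it. In the merging step the momentum $\vec{p}_{\mathcal{V}}$ of the composite tree is inherited from the insertion-carrying subtree, and the accompanying $\mu$-factors combine through an estimate of the form $\frac{1}{\sqrt{\mu_1!}\,\sqrt{\mu_2!}}=\binom{\mu}{\mu_1}^{1/2}\frac{1}{\sqrt{\mu!}}\leq \frac{2^{\mu/2}}{\sqrt{\mu!}}$ with $\mu=\mu_1+\mu_2$, the binomial factor being absorbed into $K$. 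Bounding the remaining $\la$-integrals with \eqref{lemmalambdalarge} and \eqref{lemmaLambdaint}, summing over $n_1,n_2,l_1,l_2$, and choosing $K$ large enough (uniformly in the cutoffs) to dominate every combinatorial constant generated, one verifies that the inductive form \eqref{CAG1boundeq} is reproduced, which closes the induction.
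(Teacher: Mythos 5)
There is a genuine gap in your proposal, and it sits exactly where you locate the ``main obstacle''. You set up a single induction in which relevant terms ($N+|w|\leq[A]$) are integrated downward from $\LIR$ and irrelevant terms ($N+|w|>[A]$) are integrated upward from $\La$ to $\La_{0}$, while simultaneously restricting your induction hypothesis (and the theorem itself is so restricted) to $0\leq\La\leq\LIR$. These two choices are incompatible: the upward integration of the irrelevant terms runs through all scales $\la\in(\LIR,\La_{0}]$, where the integrand involves $\pa^{w}\L^{\la,\La_{0}}_{N+2,L-1}(\O_{A};\cdot)$ and $\pa^{v}\L^{\la,\La_{0}}_{n_{1},l_{1}}(\O_{A};\cdot)$, and for those scales your inductive tree bound \eqref{CAG1boundeq} simply does not exist. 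So the induction cannot close as described. The paper resolves this by a two-step strategy that your proposal is missing: for $\La\geq\LIR$ the massless flow is dominated by the massive flow with mass $M$ (since $|\dot{C}^{\La}(p;m=0)|\leq e^{-1}|\dot{C}^{\La}(p;m=M)|$ there), so the already-established massive bounds of \cite{Hollands:2011gf} apply and yield \eqref{boundCAG1} at $\La=\LIR$; then, on $[0,\LIR]$, \emph{every} term (no relevant/irrelevant distinction at all) is integrated downward from $\LIR$, with \eqref{boundCAG1} serving as the boundary contribution. Without either importing the massive bounds for the region above $M$, or else proving a version of the tree bound valid up to $\La_{0}$ (a substantial reformulation you do not attempt), the irrelevant-term step of your scheme fails.

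A second, related gap: your mechanism for generating the factor $\LIR^{[A]-N_{\mathcal{V}}-|v_{2}|}\sum_{\mu}\frac{1}{\sqrt{\mu!}}(|\vec{p}_{\mathcal{V}}|/\LIR)^{\mu}$ --- ``Taylor-resumming'' the boundary data \eqref{BCL1b} at $\vec{p}=0$ --- is speculative and is not supported by any estimate in your sketch. In the paper this structure is not produced by a Taylor argument in the IR region at all; it is inherited from the massive bound \eqref{boundCAG1}, whose form matches the inductive hypothesis precisely in the degenerate case $N_{\mathcal{V}}=N$ (all external lines attached to $\mathcal{V}$, no internal lines, so $|\vec{p}_{\mathcal{V}}|=|\vec{p}|$), and is then merely propagated by the reduction and gluing operations you correctly describe. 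Your proposed reconstruction of the full momentum dependence from data at the exceptional configuration $\vec{p}=0$ would moreover have to control the irrelevant Taylor remainders uniformly as $\La\to0$ in a massless theory, which is exactly the kind of IR problem the downward-only integration avoids. The portions of your proposal concerning the loop and product terms on $[0,\LIR]$ (lemma \ref{lablemma1}, the reduction procedure, tree merging, \eqref{logsmerge}, \eqref{cwest2}) do match the paper's treatment, but they cannot rescue the induction without the missing control of the region $\La\geq\LIR$.
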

\paragraph{Remark:} We use the convention \eqref{emptyprod} for the case $\mathcal{I}=\emptyset$, and it is understood that\\  $\sup\left( \frac{|\vec{p}|}{\LIR}, \frac{|\vec{p}|_\LIR}{\min_{j\in\mathcal{I}}\lambda_{j}},
 \frac{\max_{j\in\mathcal{I}}\lambda_{j}}{\LIR}\right)= \frac{|\vec{p}|}{\LIR}$ in that case.
\begin{proof}
Note first that we can again restrict to even $N$, since the CAS functions with insertion of an even monomial in $\varphi$ vanish due to the symmetry $\varphi\to- \varphi$. Our general strategy of proof is as follows:
\begin{enumerate}
\item 
Our bounds obtained for massive fields~\cite{Hollands:2011gf} imply corresponding bounds for massless fields as long as the infrared cutoff $\Lambda$ is 
above the chosen renormalisation scale $M$.\footnote{The underlying reason for this is that for $\La\geq M$ one has $|\dot{C}^{\La}(p;m=0)|\leq \frac{1}{e} |\dot{C}^{\La}(p;m=M)|$, i.e. the massless propagator is bounded by the one with mass $M$ up to a constant.} More precisely, we obtain\footnote{
Strictly speaking, the bounds in~\cite{Hollands:2011gf} are derived for different boundary conditions. The condition \eqref{BCL1} is taken at $M=0$ there. 
  The proof of the bounds however remains largely unaffected, so we are not going to repeat it here. We note that, using our boundary conditions \eqref{BCL1}, one can replace the mass parameter $m$ appearing in the logarithms by the scale $M$. This is found by adapting the inequality (89) in \cite{Hollands:2011gf} to our situation, where $\La\geq M$ in the induction.
\label{foot}
}
\ben\label{boundCAG1}
\begin{split}
|\partial^{w}_{\vec{p}}\L^{\LIR,\Lambda_{0}}_{ N,L}(\O_{A}(0); \vec{p}; \LIR)|\leq\, & \sqrt{|w|!\, |w'|!}\, K^{(2N+8L-4)|w|}\, K^{[A](N/2+2L)^{3}}\,\LIR^{[A]-N-|w|}  \\
\times & \sum_{\mu=0}^{[A](N+2L+1)}\frac{1}{\sqrt{\mu!}}\left(\frac{|\vec{p}|}{\LIR}\right)^{\mu}\,  \mathcal{P}_{2L+\frac{N}{2}-1}\left( \log_{+}\frac{|\vec{p}|}{\LIR}\right)\, .
\end{split}
\een
\item We can then bound $|\pa^w {\cal L}^{\La,\Lao}_{N,L}(\O_{A};\vec{p};\LIR)|$ for $0\leq \La\leq \LIR$ by integrating the flow equation \eqref{CAGFEInsertion} over $\la$ between $\La$ and $\LIR$, using the bound established in step 1 as boundary condition at the upper limit of integration. This amounts to ``integrating out'' the momenta below the chosen scale $M$ (IR-region).
\end{enumerate}
Thus, we only need to complete the second step mentioned above, i.e. the integration of the flow equation \eqref{CAGFEInsertion} from $\Lambda = M$ to small $\Lambda$. In this step, we will again use the induction scheme based on the flow equations, which goes up in $N+2L$, for given $N+2L$ 
ascends in $L$, and for given $N,L$ descends in $|w|$. At each induction step, we have 
three contributions: A boundary term from $\Lambda = M$, the integral of the first term on the right side, and the integral of the second term on the right side from $\La = M$ to general $\La$.
We look at these separately.

\subsubsection{Boundary contributions}\label{subsec:BC}
When integrating the flow equation \eqref{CAGFEInsertion} between $\La$ and $M$, we obtain boundary contributions from the upper limit of integration, i.e.
\ben
\L^{\La,\La_{0}}_{N,L}(\O_{A};\vec{p})=\L^{M,\La_{0}}_{N,L}(\O_{A};\vec{p}) - \int_{\La}^{M}\d\la\, \text{``r.h.s. of eq.\eqref{CAGFEInsertion}''}\, .
\een
These boundary contributions $\L^{M,\La_{0}}_{N,L}(\O_{A};\vec{p})$ satisfy the bound \eqref{boundCAG1}, which, crucially, is consistent with our hypothesis \eqref{CAG1boundeq} (it corresponds to the case $N_{\mathcal{V}}=N$, i.e. all external legs directly attached to the vertex $\mathcal{V}$). 

With boundary contributions taken care of, we are now ready to verify the induction step, i.e. we verify that the integral over the r.h.s. of the flow equation \eqref{CAGFEInsertion} reproduces the inductive bound \eqref{CAG1boundeq}.

\subsubsection{First term on the r.h.s. of the flow equation \eqref{CAGFEInsertion}}
Inserting our inductive bound, \eqref{CAG1boundeq}, for the first term in the flow equation, \eqref{CAGFEInsertion}, and integrating over $\la$, we find the bound

\ben\label{CAGins1st1}
\begin{split}
&\left|\int_{\La}^{M}\d\la\, \left(N+2 \atop 2 \right) \, \int_{\ell} \dot{C}^{\la}(\ell)\partial_{\vec{p}}^{w} \L^{\la,\Lambda_{0}}_{N+2,L-1}(\O_{A}(0); 
\ell, -\ell,  \vec{p};\LIR) \right|\\
&\leq \left(N+2 \atop 2 \right) \sqrt{|w|!\, |w'|!}\ K^{(2N+8L-8)(|w|+1)}K^{[A](N/2+2L-1)^3 } \sum_{v_{1}+v_{2}=w} \sum_{T \in \tkn_{N+2,2L-2,(0,0,v_{1})}(\ell,-\ell, \vec{p})}
  \\
&\times\LIR^{[A]-N_{\mathcal{V}}-|v_{2}|} \int_{\Lambda}^{\LIR} \d\la \int_{\ell}\ \frac{2e^{-\frac{\ell^{2}}{\la^{2}}}}{\la^{3}} \sum_{\mu=0}^{[A](N+2L+1)}\frac{1}{\sqrt{\mu!}}\
\left(\frac{|\vec p_{\mathcal{V}}(\ell)|}{\LIR}\right)^{\mu} \prod_{i \in \mathcal{I}(T)} \int_{\la}^{\La_{0}} \d\la_{i}   \\
&\times    f_{\lambda_{i}}(k_i(\ell);N+2,L-1,\theta_i) 
 \mathcal{P}_{2L+\frac{N}{2}-2}\left( \log_{+}\sup\left(\frac{|(\vec{p},\ell)|}{\LIR},\frac{|(\vec{p},\ell)|_\LIR}{\min_{j\in\mathcal{I}}\lambda_{j}},
 \frac{\max_{j\in\mathcal{I}}\lambda_{j}}{\LIR}\right)\right)\, .
 \end{split}
\een
Here we have again used the notation $k_{i}(\ell)$ and $\vec{p}_{\mathcal{V}}(\ell)$ in order to indicate that these momenta may implicitly depend on the ``loop variable'' $\ell$. Recall from \eqref{internalboundtkn} that $|\mathcal{I}(T)|\leq \frac{N}{2}+L$, which allows us to use lemma \ref{lablemma1}
in order to bound the $\ell$-integral. Noting further that $(N/2+2L-1)^{3}=(N/2+2L)^{3}-3(N/2+2L)(N/2+2L-1)-1$, we arrive at the bound (absorbing constants and purely $N,L$ dependent factors into the polynomial $\mathcal{P}$)
\ben\label{CAGins1st2}
\begin{split}
&|\text{r.h.s. of \eqref{CAGins1st1}}|\leq  \sqrt{|w|!\, |w'|!}\ K^{(2N+8L-8)(|w|+1)}K^{[A](N/2+2L)^3} K^{-3[A](N/2+2L)(N/2+2L-1)-[A]}\  2^{[A](N+2L+1)} 
  \\
&\times \sum_{v_{1}+v_{2}=w} \sum_{T \in \tkn_{N+2,2L-2,(0,0,v_{1})}(0,0,\vec{p})}\LIR^{[A]-N_{\mathcal{V}}-|v_{2}|}  \int_{\Lambda}^{\LIR} \d\la\ \la \sum_{\mu=0}^{[A](N+2L+1)}\frac{1}{\sqrt{\mu!}}\
\left(\frac{|\vec p_{\mathcal{V}}|}{\LIR}\right)^{\mu}  \\
&\times   \prod_{i \in \mathcal{I}(T)} \int_{\la}^{\La_{0}} \d\la_{i}\ f_{\lambda_{i}}(k_i;N,L,\theta_i) 
 \mathcal{P}_{2L+\frac{N}{2}-2}\left( \log_{+}\sup\left( \frac{|\vec{p}|}{\LIR}, \frac{|\vec{p}|_\LIR}{\min_{j\in\mathcal{I}}\lambda_{j}},
 \frac{\max_{j\in\mathcal{I}}\lambda_{j}}{\LIR}\right)\right)\, .
\end{split}
\een
We can bound the $\la$-integral as in \eqref{lambdaintsest}. Taking into account also the fact $\la\leq M$, this yields a factor $\min_{j\in\mathcal{I}}(\la_{j}, M)^{2}$. Thus, we have the bound
\ben\label{CAGins1st1case2b}
\begin{split}
& |\text{r.h.s. of \eqref{CAGins1st2}}|\leq \sqrt{|w|!\, |w'|!}\ K^{(2N+8L-8)(|w|+1)}K^{[A](N/2+2L)^3} K^{-3[A](N/2+2L)(N/2+2L-1)-[A]} \ 2^{[A](N+2L+1)} 
  \\
&\times \sum_{v_{1}+v_{2}=w} \sum_{T \in \tkn_{N+2,2L-2,(0,0,v_{1})}(0,0,\vec{p})}\LIR^{[A]-N_{\mathcal{V}}-|v_{2}|}   \sum_{\mu=0}^{[A](N+2L+1)}\frac{1}{\sqrt{\mu!}}\
\left(\frac{|\vec p_{\mathcal{V}}|}{\LIR}\right)^{\mu}\prod_{i \in \mathcal{I}(T)} \int_{\La}^{\La_{0}} \d\la_{i}  \\
&\times    f_{\lambda_{i}}(k_i;N,L,\theta_i) 
 \mathcal{P}_{2L+\frac{N}{2}-2}\left( \log_{+}\sup\left(\frac{|\vec{p}|}{\LIR},\frac{|\vec{p}|_\LIR}{\min_{j\in\mathcal{I}}\lambda_{j}},
 \frac{\max_{j\in\mathcal{I}}\lambda_{j}}{\LIR}\right)\right) \cdot \min_{j\in\mathcal{I}}(\la_{j},M)^{2} \, .
\end{split}
\een
Note that each factor of $\min_{j\in\mathcal{I}}(|\la_{j}|,M)$ can be absorbed by decreasing the value $\theta_{i}$ for some internal line $i\in\mathcal{I}(T)$ [cf. the discussion following \eqref{lambdaintsest}], or alternatively by decreasing the value of $N_{\mathcal{V}}$ (keeping $N-N_{\mathcal{V}}$ fixed). This procedure of decreasing weights can again be organised with the help of a ``reduction procedure on trees''. 
\paragraph{Reduction procedure for trees \rom{2}:} Given a tree $T\in{\tkn}_{N+1,L-1,(0,w)}(\ell,\vec{p})$ we define the reduced tree $\mathcal{R}_{\ell}(T)\in{\tkn}_{N,L,w}(\vec{p})$ as follows:
\begin{enumerate}
\item If the external line with momentum $\ell$ is not directly attached to the vertex $\mathcal{V}$, then proceed as specified on page \pageref{reduct}. This reduces the number of external lines, $N$, as well as the weight $\theta_{j}$ of some adjacent  internal line by one. The reduction of the weight $\theta_{j}$ corresponds to absorbing a factor of $\la_{j}$.
\item If the external line with momentum $\ell$ is directly attached to the vertex $\mathcal{V}$, then simply delete this line. This reduces the value of both $N$ and $N_{\mathcal{V}}$ by one.
 This procedure corresponds to absorbing a factor of $M$.
\end{enumerate}
Thus, by applying this reduction procedure twice, we can remove the external legs with associated momenta $\ell,-\ell$, which allows us to express the bound \eqref{CAGins1st1case2b} in terms of trees in ${\tkn}_{N,L,w}(\vec{p})$ (see fig. \ref{fig:LoopCase3} for an example). 
\begin{figure}[htbp]
\begin{center}
\includegraphics{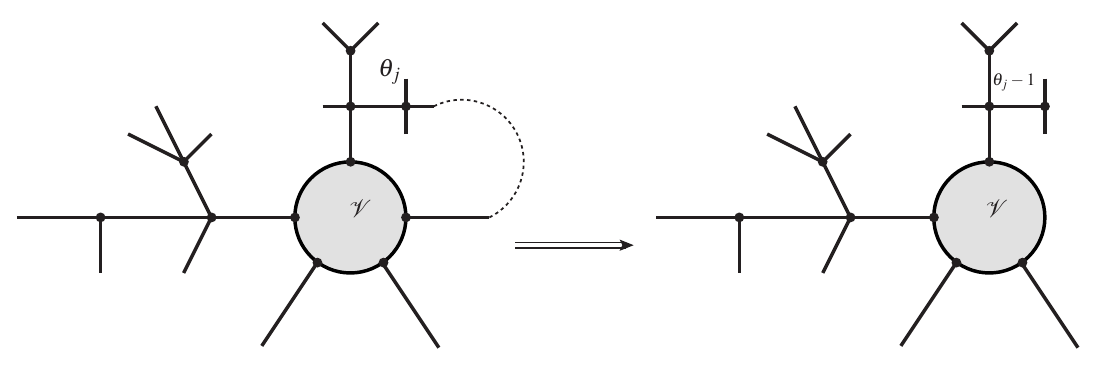}
\end{center}
\caption{To express our bound in terms of trees $T' \in \tkn_{N,2L,v_{1}}(\vec{p})$, we delete from $T \in \tkn_{N+2,2L-2,(0,0,v_{1})}(\ell,-\ell,\vec{p})$ the external lines carrying momentum $\ell,-\ell$ using the reduction procedure.}
\label{fig:LoopCase3}
\end{figure}

As in the case without insertion, the reduction procedure is not injective, i.e. different trees in $T\in{\tkn}_{N+1,L-1,(0,w)}(\ell,\vec{p})$ may yield the same reduced tree (see the discussion following \eqref{R2bound}). To account for this, we again multiply our bound by the factor \eqref{totalredfactor} (it is not hard to check that this factor is again suitable). The resulting bound is 
\ben\label{CAGins1st1case2bfin}
\begin{split}
&\left|\int_{\La}^{M}\d\la\, \left(N+2 \atop 2 \right) \, \int_{\ell} \dot{C}^{\la}(\ell)\partial_{\vec{p}}^{w}\L^{\la,\Lambda_{0}}_{N+2,L-1}(\O_{A}(0); \ell, -\ell,  \vec{p};\LIR) \right|\\
& \leq (|w|+1)^{2} \sqrt{|w|!\, |w'|!}\ K^{(2N+8L-8)(|w|+1)}K^{[A](N/2+2L)^3} K^{-3[A](N/2+2L)(N/2+2L-1)-[A]} \ 2^{[A](N+2L+1)} 
  \\
&\times \sum_{v_{1}+v_{2}=w} \sum_{T \in \tkn_{N,2L,v_{1}}(\vec{p})} \LIR^{[A]-N_{\mathcal{V}}-|v_{2}|} \sum_{\mu=0}^{[A](N+2L+1)}\frac{1}{\sqrt{\mu!}}\
\left(\frac{|\vec p_{\mathcal{V}}|}{\LIR}\right)^{\mu}    \\
&\times   \prod_{i \in \mathcal{I}(T)} \int_{\La}^{\La_{0}} \d\la_{i}\ f_{\lambda_{i}}(k_i;N,L,\theta_i) 
 \mathcal{P}_{2L+\frac{N}{2}-2}\left( \log_{+}\sup\left(\frac{|\vec{p}|}{\LIR}, \frac{|\vec{p}|_\LIR}{\min_{j\in\mathcal{I}}\lambda_{j}},
 \frac{\max_{j\in\mathcal{I}}\lambda_{j}}{\LIR}\right)\right)
\end{split}
\een
which is consistent with our hypothesis \eqref{CAG1boundeq} provided that
\ben
 K^{-3[A](N/2+2L)(N/2+2L-1)-[A]-4(|w|+1)} \ 2^{[A](N+2L+1)}\, (|w|+1)^{2} \leq 1 \, .
\een
\subsubsection{Second term on the r.h.s. of the flow equation \eqref{CAGFEInsertion}}
We distinguish the cases $n_{2}\geq 4$ and $n_{2}=2$ in the sum over $n_{1}+n_{2}=N+2$.
\paragraph{\textbf{For $n_{2}\geq 4$:}} 
The bound \eqref{mainboundcomb} on the CAS's without insertion combined with the inductive bound \eqref{CAG1boundeq} for the CAS's with one insertion implies that this contribution to the second term on the r.h.s. of the flow equation satisfies the bound
\ben\label{irr2ndins}
\begin{split}
&\left|\int_{\La}^{M}\d\la\sum_{\substack{ {v_{1}+v_{2}+v_{3}=w} \\ n_{1}+n_{2}=N+2,\, n_{2}\geq 4 \\  {l_{1}+l_{2}=L} } } n_{1}n_{2}c_{\{v_{i}\}} \partial_{\vec{p}}^{v_{1}}\L^{\la,\Lambda_{0}}_{n_{1},l_{1}}(\O_{A}(0);   \vec{p}_{1};\LIR)\,  \partial_{\vec{p}}^{v_{3}}\dot{C}^{\la}(q)\,  \partial_{\vec{p}}^{v_{2}} \L^{\la,\Lambda_{0}}_{n_{2},l_{2}}( \vec{p}_{2};\LIR)\right|\\
&\leq\int_{\Lambda}^{\LIR} \d\la \sum_{\substack{  n_{1}+n_{2}=N+2,\, n_{2}\geq 4 \\  {l_{1}+l_{2}=L} } }\sum_{\substack{v_{1}+v_{2}+v_{3}=w \\ (v_{2})_{i}=0=(v_{3})_{i}\, \forall i <n_{1}  }}  n_{1}n_{2}c_{\{v_{i}\}} 2k \frac{\sqrt{|v_{3}|!\, 2^{|v_{3}|} }\  e^{-\frac{q^{2}}{2\la^{2}}}}{\la^{3+|v_{3}|}}\\
&  \times \bigg[  \sqrt{|v_{1}|!\, |w'|!}\ K^{(2n_{1}+8l_{1}-4)(|v_{1}|+1)}K^{[A](n_{1}/2+2l_{1})^3 } \sum_{u_{1}+u_{2}=v_{1}} \sum_{T_{1} \in \tkn_{n_{1},2l_{1},\tilde{u}_{1}}(\vec{p}_{1})}
  \\
&\times \LIR^{[A]-N_{\mathcal{V}}-|u_{2}|} \sum_{\mu=0}^{[A](n_{1}+2l_{1}+1)}\frac{1}{\sqrt{\mu!}}\
\left(\frac{|\vec p_{\mathcal{V}}|}{\LIR}\right)^{\mu}     \prod_{i \in \mathcal{I}(T_{1})} \int_{\la}^{\La_{0}} \d\la_{i}\ f_{\lambda_{i}}(k_i;n_{1},l_{1},\theta_i) 
\\
&\times\,
 \mathcal{P}_{2l_{1}+\frac{n_{1}}{2}-1}\left( \log_{+}\sup\left(\frac{|\vec{p}_{1}|}{\LIR},\frac{|\vec{p}_{1}|_\LIR}{\min\limits_{j\in\mathcal{I}(T_{1})}\lambda_{j}},
 \frac{\max\limits_{j\in\mathcal{I}(T_{1})}\lambda_{j}}{\LIR}\right)\right)
\bigg]\\
&\times\bigg[    \sqrt{|v_{2}|!}\ K^{(n_{2}+4l_{2}-4)(|v_{2}|+1)}\,   \sum_{T_{2} \in {\cal T}_{n_{2},2l_{2},\tilde{v}_{2}}(\vec{p}_{2})}
\\
&\times  
 \prod_{i \in \mathcal{I}(T_{2})} \int_{\la}^{\La_{0}} \d\la_{i}\ f_{\lambda_{i}}(k_i;n_{2},l_{2},\theta_i) 
\mathcal{P}_{l_{2}}\left( \log_{+}\sup\left(\frac{|\vec{p}_{2}|_\LIR}{\la},
 \frac{\max\limits_{j\in\mathcal{I}(T_{2})}(\lambda_{j},\la)}{\LIR}\right)\right) \bigg]\, .
\end{split}
\een
Here we used the notation $\vec{p}_{1}=(p_{1},\ldots,p_{n_{1}-1},q)$ and $\vec{p}_{2}=( p_{n_{1}},\ldots,p_{N},-q)$ with $q=p_{n_{1}}+\ldots+p_{N}$, and we write $v_{i}=((v_{i})_{1},\ldots,(v_{i})_{N})$ with $(v_{i})_{j}\in\mathbb{N}^{4}$. Note that the l.h.s. of \eqref{irr2ndins} vanishes unless $(v_{2})_{i}=0=(v_{3})_{i}$ for $i<n_{1}$, since neither $q$, nor $\vec{p}_{2}$ depend on the variables $p_{1},\ldots,p_{n_{1}-1}$. This explains the restriction in the sum over $v_{1},v_{2},v_{3}$ in the second line. In the summation over the trees, we have used the notation
\ben
\tilde{u}_{1}:=((u_{1})_{1},\ldots,(u_{1})_{n_{1}-1},\quad 
\sum_{j=n_{1}}^{N}(u_{1})_{j})\in\mathbb{N}^{4n_{1}}\ \ ,
\een
\ben
\tilde{v}_{2}:=((v_{2})_{n_{1}},\ldots,(v_{2})_{N})\in\mathbb{N}^{4(n_{2}-1)}\ .
\een
Using the inequality $n_{1}/2+2l_{1}\leq N/2+2L-1$ (cf. footnote \ref{footni}) in order to bound the exponent of $K$, bounding  the factorial factors with the help of \eqref{factorialtriv} and combining the two polynomials into one as in \eqref{logsmerge} (the degree of this polynomial is at most $2l_{1}+n_{1}/2-1+l_{2}\leq 2L+N/2-1$), we find that \eqref{irr2ndins} is smaller than
\ben\label{irr2ndinsb}
\begin{split}
&\int_{\Lambda}^{\La_{0}} \d\la \hspace{-.4cm} \sum_{\substack{  n_{1}+n_{2}=N+2,\, n_{2}\geq 4 \\  {l_{1}+l_{2}=L} } }\sum_{\substack{v_{1}+v_{2}+v_{3}=w \\ (v_{2})_{i}=0=(v_{3})_{i}\, \forall i <n_{1}  }} n_{1}n_{2}c_{\{v_{i}\}} 2k \frac{\sqrt{|w|!\, 2^{|w|} }\  e^{-\frac{q^{2}}{2\la^{2}}}}{\la^{3+|v_{3}|}}\\
&  \times \bigg[  \sqrt{ |w'|!}\ K^{(2N+8L-6)(|w|+1)}K^{[A](N/2+2L-1)^3 } \sum_{u_{1}+u_{2}=v_{1}} \sum_{{T_{1} \in \tkn_{n_{1},2l_{1},\tilde{u}_{1}}(\vec{p}_{1})}\atop {T_{2} \in {\cal T}_{n_{2},2l_{2},\tilde{v}_{2}}(\vec{p}_{2})}}
  \\
&\times \LIR^{[A]-N_{\mathcal{V}}-|u_{2}|}\, \sum_{\mu=0}^{[A](N+2L+1)}\frac{1}{\sqrt{\mu!}}\
\left(\frac{|\vec p_{\mathcal{V}}|}{\LIR}\right)^{\mu}    \prod_{i \in \mathcal{I}(T_{1})\cup\mathcal{I}(T_{2})} \int_{\la}^{\La_{0}}\d\la_{i} f_{\la_{i}}(k_i;N,L,\theta_i)\\
&\times\mathcal{P}_{2L+\frac{N}{2}-1}\left( \log_{+}\sup\left(\frac{|\vec{p}|}{\LIR},\frac{|\vec{p}|_\LIR}{\la},
 \frac{\max\limits_{j\in\mathcal{I}(T_{1})\cup\mathcal{I}(T_{2})}(\lambda_{j},\la)}{\LIR}\right)\right)
\bigg] \ \ . \\
\end{split}
\een
Here we also used the bound $ f_{\la_{i}}(k_i;n_{a},l_{a},\theta_i)\leq  f_{\la_{i}}(k_i;N,L,\theta_i)$, where $a\in\{1,2\}$, which holds since $n_{a}/2+2l_{a}\leq N/2+2L$. 

In order to express the bound \eqref{irr2ndinsb} in terms of trees $T \in \tkn_{N,2L,u_{1}+v_{2}+v_{3}}(\vec{p})$ 
we perform a ``gluing procedure'', which is analogous to the one  described in fig.\ref{fig:merge1} and the corresponding discussion. Namely, we take the trees $T_{1} \in \tkn_{n_{1},2l_{1},\tilde{u}_{1}}(\vec{p}_{1})$ and $T_{2} \in {\cal T}_{n_{2},2l_{2},\tilde{v}_{2}}(\vec{p}_{2})$ and join them along the external lines carrying momentum $q$ and $-q$, respectively. The resulting new internal line therefore carries momentum $q$, and we associate the weights $\rho=2$, $\sigma=|v_{3}|$ to it (see fig.\ref{fig:Gluecase1} for a sketch of the procedure). If the procedure yields a vertex of coordination number two adjacent to two internal lines (this is the case if one of the joined external lines is adjacent to a vertex of coordination number two), we remove this vertex and join the internal lines, adding up their weights.
\begin{figure}[htbp]
\begin{center}
\includegraphics{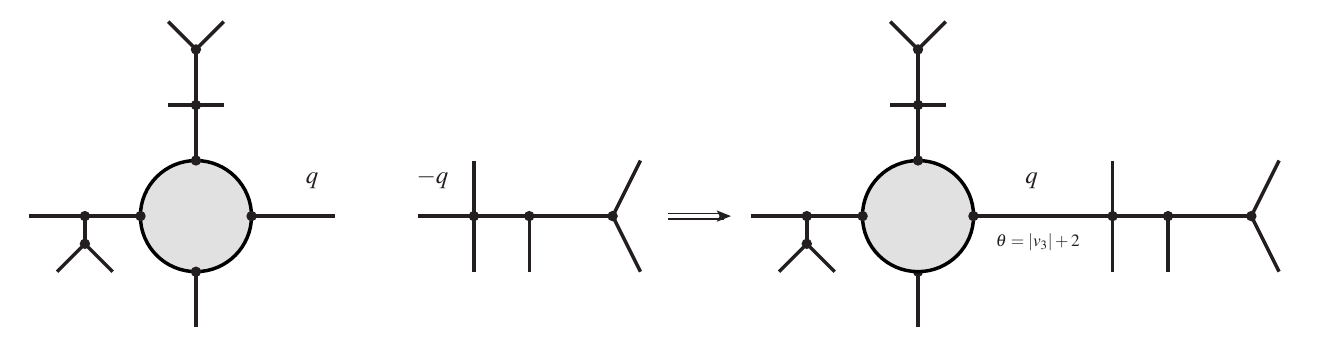}
\end{center}
\caption{Gluing the trees $T_{1} \in \tkn_{n_{1},2l_{1},\tilde{u}_{1}}(\vec{p}_{1})$ and $T_{2} \in {\cal T}_{n_{2},2l_{2},\tilde{v}_{2}}(\vec{p}_{2})$ in the way sketched above, we obtain a tree $T \in \tkn_{N,2L,u_{1}+v_{2}+v_{3}}(\vec{p})$ }
\label{fig:Gluecase1}
\end{figure}

It is straightforward to check that the resulting tree is indeed in $\tkn_{N,2L,u_{1}+v_{2}+v_{3}}(\vec{p})$. Note also that for fixed values of  $u_{1},v_{2},v_{3}$, the procedure is injective (i.e. it yields different trees $T$ for different $T_{1},T_{2}$). Thus, renaming the summation variables $u_{1}+v_{2}+v_{3}\to v_{1}$ and $u_{2}\to v_{2}$ and using the inequality \eqref{cwest2} to bound the sum over different combinations of $u_{1}, v_{2}, v_{3}$, we conclude that \eqref{irr2ndinsb} implies the bound
\ben\label{irr2ndinsc}
\begin{split}
&2^{|w|/2}\cdot 3^{|w|}\,  \sum_{{v_{1}+v_{2}=w}} \bigg[  \sqrt{|w|! |w'|!}\ K^{(2N+8L-6)(|w|+1)}K^{[A](N/2+2L-1)^3 } 
  \\
&\times  \sum_{{T \in \tkn_{N,2L,v_{1}}(\vec{p})}} \LIR^{[A]-N_{\mathcal{V}}-|v_{2}|} \sum_{\mu=0}^{[A](N+2L+1)}\frac{1}{\sqrt{\mu!}}\
\left(\frac{|\vec p_{\mathcal{V}}|}{\LIR}\right)^{\mu}     \prod_{i \in \mathcal{I}(T)} \int_{\La}^{\La_{0}}\d\la_{i} f_{\la_{i}}(k_i;N,L,\theta_i)
\\
&\times\, 
 \mathcal{P}_{2L+\frac{N}{2}-1}\left( \log_{+}\sup\left(\frac{|\vec{p}|}{\LIR},\frac{|\vec{p}|_\LIR}{\min_{j\in\mathcal{I}}\lambda_{j}},
 \frac{\max_{j\in\mathcal{I}}\lambda_{j}}{\LIR}\right)\right)
\bigg]\, .
\end{split}
\een
Here the sums over $l_{1},l_{2}$ and $n_{1},n_{2}$ have been bounded by $L+1$ and $N^{3}$, respectively, and we have absorbed these factors into the polynomial $\mathcal{P}$. Raising the upper limit of integration for the integral over $\la$ to $\La_{0}$,
 we therefore arrive at a bound that is consistent with our claim, \eqref{CAG1boundeq}, provided that $K$ is chosen large enough that
\ben\label{127bound}
2^{|w|/2}\cdot 3^{|w|} \leq K^{2|w|}\, .
\een
\paragraph{\textbf{For $n_{2}=2$:}}
In this case, we combine the bound \eqref{N2estweak} with our inductive bound \eqref{CAG1boundeq}, which implies that the contribution at hand satisfies the inequality
\ben\label{irr2ndins1c}
\begin{split}
&\left|\int_{\La}^{M}\d\la\sum_{\substack{v_{1}+v_{2}+v_{3}=w \\ l_{1}+l_{2}=L  }}2N\,c_{\{v_{i}\}} \partial_{\vec{p}}^{v_{1}}\L^{\la,\Lambda_{0}}_{N,l_{1}}(\O_{A}(0);   p_{1},\ldots,p_{N};\LIR)\,  \partial_{\vec{p}}^{v_{3}}\dot{C}^{\la}(p_{N})\,  \partial_{\vec{p}}^{v_{2}} \L^{\la,\Lambda_{0}}_{2,l_{2}}(  -p_{N},p_{N};\LIR)\right|\\
&\leq\int_{\Lambda}^{\LIR} \d\la  \sum_{\substack{v_{1}+v_{2}+v_{3}=w \\ l_{1}+l_{2}=L   }}2Nc_{\{v_{i}\}} 2k \frac{\sqrt{|v_{3}|!\, 2^{|v_{3}|} }\  e^{-\frac{q^{2}}{2\la^{2}}}}{\la^{3+|v_{3}|}}  \Bigg[  \sqrt{|v_{1}|!\, |w'|!}\ K^{(2N+8l_{1}-4)(|v_{1}|+1)}K^{[A](N/2+2l_{1})^3 }  \\
&  \times\sum_{u_{1}+u_{2}=v_{1}} \sum_{T \in \tkn_{N,2l_{1},u_{1}}(\vec{p})}
 \LIR^{[A]-N_{\mathcal{V}}-|u_{2}|} \sum_{\mu=0}^{[A](N+2l_{1}+1)}\frac{1}{\sqrt{\mu!}}\
\left(\frac{|\vec p_{\mathcal{V}}|}{\LIR}\right)^{\mu}     \prod_{i \in \mathcal{I}(T)} \int_{\la}^{\La_{0}} \d\la_{i}\ f_{\lambda_{i}}(k_i;N,l_{1},\theta_i) 
\\
&\times\,
 \mathcal{P}_{2l_{1}+\frac{N}{2}-1}\left( \log_{+}\sup\left(\frac{|\vec{p}|}{\LIR},\frac{|\vec{p}|_\LIR}{\min_{j\in\mathcal{I}}\lambda_{j}},
 \frac{\max_{j\in\mathcal{I}}\lambda_{j}}{\LIR}\right)\right)
\Bigg]\\
&\times \sqrt{|v_{2}|!}\ K^{(4l_{2}-2)(|v_{2}|+1)}\, |q|_{\la}^{2}\, \la^{|v_{2}|}\, \mathcal{P}_{l_{2}-1}\left( \log_{+}\sup\left(\frac{|q|_\LIR}{\la},
 \frac{\la}{\LIR}\right)\right)\\
&\leq \int_{\Lambda}^{\La_{0}} \d\la  \sum_{\substack{v_{1}+v_{2}+v_{3}=w \\ l_{1}+l_{2}=L  }} \frac{c_{\{v_{i}\}}\sqrt{ 2^{|w|} }\  e^{-\frac{q^{2}}{4\la^{2}}}}{\la^{1+|v_{2}|+|v_{3}|}}    \sqrt{|w|!\, |w'|!}\ K^{(2N+8L-6)(|w|+1)}K^{[A](N/2+2L-1)^3 } \\
&\times\Bigg[\sum_{u_{1}+u_{2}=v_{1}} \sum_{T \in \tkn_{N,2l_{1},u_{1}}(\vec{p})}\LIR^{[A]-N_{\mathcal{V}}-|u_{2}|} \, \sum_{\mu=0}^{[A](N+2L+1)}\frac{1}{\sqrt{\mu!}}\
\left(\frac{|\vec p_{\mathcal{V}}|}{\LIR}\right)^{\mu}   \\
&\times\prod_{i\in\mathcal{I}(T)} \int_{\La}^{\La_{0}} \d\la_{i}\ f_{\lambda_{i}}(k_i;N,L,\theta_i)   \mathcal{P}_{2L+\frac{N}{2}-2}\left( \log_{+}\sup\left(\frac{|\vec{p}|}{\LIR},\frac{|\vec{p}|_\LIR}{\min\limits_{j\in\mathcal{I}}(\lambda_{j},\la)},
 \frac{\max\limits_{j\in\mathcal{I}}(\lambda_{j},\la)}{\LIR}\right)\right)
\Bigg]\, .
\end{split}
\een
Here we used \eqref{qest}, combined the logarithmic polynomials as in \eqref{logsmerge} and absorbed $N$ and $L$ dependent factors into the new polynomial. We can express the bound in terms of trees $T' \in \tkn_{N,2L,u_{1}+v_{2}+v_{3}}(\vec{p})$ by adding a vertex of coordination number two to the external line of $T \in \tkn_{N,2l_{1},u_{1}}(\vec{p})$ carrying momentum $q$, see fig.\ref{fig:Gluecase1c}. To the resulting new internal line, which also carries momentum $q$, we associate the weights $\rho=0$, $\sigma=|v_{2}|+|v_{3}|$. If this process has created a vertex of coordination number two adjacent to two internal lines, then we remove this vertex and fuse the adjacent lines, adding up their weights. 
\begin{figure}[htbp]
\begin{center}
\includegraphics{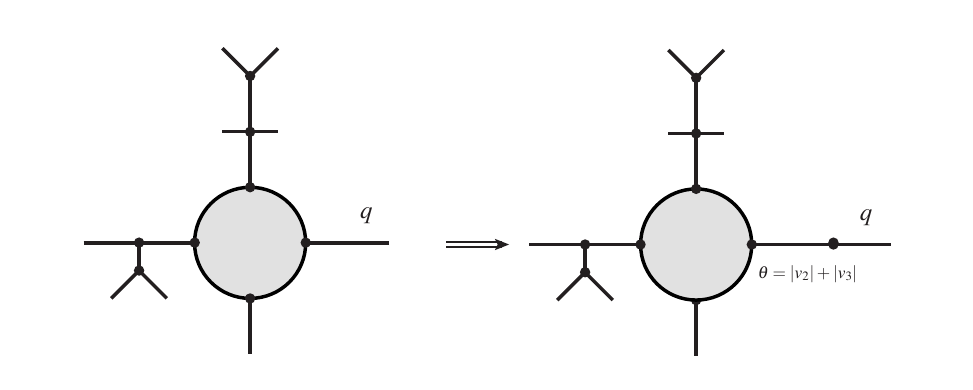}
\end{center}
\caption{We add a vertex of coordination number two to the external line in $T_{1} \in \tkn_{N,2l_{1},u_{1}}(\vec{p})$ which carries momentum $q$. This yields a tree $T \in \tkn_{N,2L,u_{1}+v_{2}+v_{3}}(\vec{p})$.}
\label{fig:Gluecase1c}
\end{figure}
\noindent Renaming the summation variables ($u_{1}+v_{2}+v_{3}\to v_{1}$, $u_{2}\to v_{2}$) and using \eqref{cwest2} we find that the inductive bound is reproduced, provided that \eqref{127bound} holds true.

This concludes the proof of theorem~\ref{thmCAGins1}.
\end{proof}
\noindent We can again simplify the bound established in theorem \ref{thmCAGins1} into a version without $\la_{i}$-integrals:
\begin{cor}\label{cor2}
For  $N,L,w$ as in theorem \ref{thmCAGins1} and non-exceptional momenta $\vec{p}\in\mathbb{R}^{4N}$ (i.e. for $\bar\eta(\vec{p})>0$), there exists $K>0$ such that
\ben\label{corbound2}
\begin{split}
|\pa^w {\cal L}^{0,\La_{0}}_{N,L}(\O_{A}(0),\vec{p};\LIR)| 
&\le  {|w|!} \sqrt{|w'|!} \ 
K^{(2N+8L-3)(|w|+1)}K^{[A](N/2+2L)^3}\, M^{[A]} \  
\inf(\bar\eta(\vec{p}),M)^{-(N+|w|)}\\
&\times\sup(1, \frac{|\vec{p}|}{M})^{[A](N+2L+1)}\mathcal{P}_{2L+\frac{N}{2}-1}\left( \log_{+}\frac{|\vec{p}|_\LIR}{\inf(\bar\eta(\vec{p}),M)}\right) \, ,
\end{split}
\een
 where $\bar\eta$  is defined in \eqref{etabdef}.
\end{cor}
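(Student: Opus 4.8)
The plan is to follow the proof of corollary~\ref{cor1} essentially verbatim, performing the nested $\la_i$-integrals in the bound of theorem~\ref{thmCAGins1} (evaluated at $\La=0$) and then summing over the weighted trees in $\tkn_{N,2L,v_1}(\vec p)$. Since theorem~\ref{thmCAGins1} already covers all $N,L,w$ with $N+L>0$, no separate treatment of ``relevant'' parameter values is required; I restrict to even $N$ and to the main case $N\ge 2$, the degenerate case $N=0$ (where the trees carry no internal lines, so that $v_1=0$ and only $\mu=0$ survives) being immediate and consistent with the convention $\inf(\bar\eta(\vec p),\LIR)=\LIR$ for an empty momentum set. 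First I would rewrite the product $\prod_{i\in\mathcal I(T)}\int_0^{\La_0}\d\la_i\,f_{\la_i}$ as a sum over orderings $\pi\in\mathfrak S(\mathcal I)$ of the internal lines, exactly as in \eqref{intprodorder}, and then apply lemma~\ref{lemmalambdalarge} repeatedly to carry out every integral except the innermost one over $\la_{\pi_1}=\min_j\la_j$.

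After this reduction, recalling $f_{\la}(k;N,L,\theta)=e^{-k^2/(\al(N,L)\la^2)}\la^{-\theta-1}$ and the total-weight identity $\sum_{i\in\mathcal I}\theta_i=N-N_{\mathcal V}+|v_1|$ valid for trees in $\tkn_{N,2L,v_1}$, the remaining integrand carries the power $\la^{-(N-N_{\mathcal V}+|v_1|)-1}$. To control the exponential I would invoke the analogue of \eqref{etabound}, namely $\inf_{r\in\mathcal I(T)}|k_r|\ge\bar\eta(\vec p)$ for $T\in\tkn_{N,2L,v_1}(\vec p)$: cutting an internal line splits off a subtree all of whose vertices conserve momentum, so the momentum through that line is a sum over a \emph{nonempty} (and possibly complete) subset of the external momenta, whence the bound by $\bar\eta$ rather than $\eta$. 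The sum over orderings is bounded by $|\mathcal I(T)|!\le(\tfrac{N-2}{2}+L)!$ via \eqref{internalboundtkn}. Applying lemma~\ref{lemmaLambdaint} to the single remaining integral then produces a factor $\sqrt{(N-N_{\mathcal V}+|v_1|)!}\,\big(\sqrt{\al}/\bar\eta(\vec p)\big)^{\,N-N_{\mathcal V}+|v_1|}$ together with the logarithmic polynomial $\mathcal P_{2L+\frac N2-1}\big(\log_+\tfrac{|\vec p|_\LIR}{\inf(\bar\eta(\vec p),\LIR)}\big)$; the $\la$-independent entry $|\vec p|/\LIR$ in the supremum is harmless since $|\vec p|/\LIR\le|\vec p|_\LIR/\inf(\bar\eta(\vec p),\LIR)$.

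It then remains to reassemble the prefactors. Combining the explicit factor $\LIR^{[A]-N_{\mathcal V}-|v_2|}$ from theorem~\ref{thmCAGins1} with the factor $\bar\eta(\vec p)^{-(N-N_{\mathcal V}+|v_1|)}$ just produced, and using $\inf(\bar\eta,\LIR)\le\LIR$ to trade the powers $\LIR^{-N_{\mathcal V}-|v_2|}$ for $\inf(\bar\eta,\LIR)^{-N_{\mathcal V}-|v_2|}$ together with $\bar\eta^{-1}\le\inf(\bar\eta,\LIR)^{-1}$, I get exactly $\LIR^{[A]}\inf(\bar\eta(\vec p),\LIR)^{-(N+|w|)}$, since $|v_1|+|v_2|=|w|$. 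The truncated sum $\sum_{\mu=0}^{[A](N+2L+1)}\tfrac{1}{\sqrt{\mu!}}(|\vec p_{\mathcal V}|/\LIR)^\mu$ is bounded, using $1/\sqrt{\mu!}\le1$ and $|\vec p_{\mathcal V}|\le|\vec p|$ (each line at $\mathcal V$ carries a subsum of the external momenta), by $([A](N+2L+1)+1)\sup(1,|\vec p|/\LIR)^{[A](N+2L+1)}$, yielding the stated power of $\sup(1,|\vec p|/\LIR)$. The factorials combine as $\sqrt{|w|!}\cdot\sqrt{(N-N_{\mathcal V}+|v_1|)!}\le|w|!\,(N+|w|)^{N/2}$, reproducing the $|w|!$ of \eqref{corbound2}; and the residual factor $(N+|w|)^{N/2}$, the powers of $\sqrt\al$, the count $\prod_{i,\mu}(w_{i,\mu}+1)$ of splittings $v_1+v_2=w$, and the cardinality bound $|\tkn_{N,2L,v_1}(\vec p)|\le(N+1)!\,4^{3N}[3(|w|+1)]^{\frac{|N-2|_+}{2}+L}$ from lemma~\ref{lemtreebd2} are all polynomial in $|w|$ and are absorbed by raising the $K$-exponent from $(2N+8L-4)(|w|+1)$ to $(2N+8L-3)(|w|+1)$, the $\LIR^{[A]}$ and $K^{[A](N/2+2L)^3}$ factors being carried through unchanged.

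The main obstacle is the simultaneous power-counting in $\LIR$ and $\bar\eta$: the exponents $-N_{\mathcal V}-|v_2|$ and $-(N-N_{\mathcal V}+|v_1|)$ individually depend on the tree datum $N_{\mathcal V}$ and on the split $w=v_1+v_2$, and only their specific recombination collapses to the clean, tree-independent power $\LIR^{[A]}\inf(\bar\eta,\LIR)^{-(N+|w|)}$; checking that this recombination goes through for every $N_{\mathcal V}\in\{0,\dots,N\}$ (including when $[A]-N_{\mathcal V}-|v_2|$ is negative, where one relies on $\LIR^{-1}\le\inf(\bar\eta,\LIR)^{-1}$) is where care is needed. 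A secondary point is to confirm that lemma~\ref{lemmaLambdaint} preserves the degree of the logarithmic polynomial at $2L+\tfrac N2-1$, so that \eqref{corbound2} holds with precisely the claimed degree.
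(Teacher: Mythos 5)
Your proposal reproduces the paper's own proof essentially step for step: rewriting the $\la_i$-integrals as ordered nested integrals as in \eqref{intprodorder}, iterating lemma \ref{lemmalambdalarge} to remove all but the innermost integral, finishing with lemma \ref{lemmaLambdaint} combined with the weight identity $\sum_{i}\theta_i=N-N_{\mathcal{V}}+|v_1|$ and the inequality \eqref{etabound2}, then using the nonnegativity of the exponents to pass from $\bar\eta$ and $M$ to $\inf(\bar\eta(\vec{p}),M)$, and bounding the sums over $v_1+v_2=w$ and over $\tkn_{N,2L,v_1}(\vec{p})$ via \eqref{cwest2} and lemma \ref{lemtreebd2}, absorbing all residual factors into the enlarged exponent of $K$. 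The only differences are presentational — you spell out the recombination of the $M$- and $\bar\eta$-powers, the $\mu$-sum estimate, and the splitting of the $\la$-independent entry $|\vec{p}|/M$ in the logarithm, details the paper leaves implicit — so the proposal is correct and takes the same route.
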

\begin{proof}
The proof is largely analogous to the proof of corollary \ref{cor1}. Starting from the bound stated in theorem \ref{thmCAGins1}, we first rewrite the $\la_{i}$-integrals in the form \eqref{intprodorder}. This allows us to use lemma \ref{lemmalambdalarge} repeatedly to bound all but one of these integrals, and to finally use lemma \ref{lemmaLambdaint} to bound the final integral.  Absorbing $N$ and $L$ dependent factors in the polynomial, we thereby arrive at the bound
\ben
\begin{split}
&|\pa^w {\cal L}^{0,\Lao}_{N,L}(\O_{A}(0);\vec{p};\LIR)| 
\le  \sqrt{|w|!\, |w'|!}\ K^{(2N+8L-4)(|w|+1)}K^{[A](N/2+2L)^3 } \sum_{v_{1}+v_{2}=w} \sum_{T \in \tkn_{N,2L,v_{1}}(\vec{p})}
  \\
&\times \LIR^{[A]-N_{\mathcal{V}}-|v_{2}|}\,  \sum_{\mu=0}^{[A](N+2L+1)}\frac{1}{\sqrt{\mu!}}\
\left(\frac{|\vec p_{\mathcal{V}}|}{\LIR}\right)^{\mu}   \bar\eta(\vec{p})^{-(N-N_{\mathcal{V}}+|v_{1}|)}\cdot \sqrt{|v_{1}|![2\alpha(N,L)]^{|v_{1}|}}
\\
&\times\,
 \mathcal{P}_{2L+\frac{N}{2}-1}\left( \log_{+}\frac{|\vec{p}|_\LIR}{\inf(\bar\eta(\vec{p}), M)}\right)\, ,
\end{split}
\een
where we used the fact that the weights of the tress $T \in \tkn_{N,2L,v_{1}}$ satisfy $\sum_{i}\theta_{i}=N-N_{\mathcal{V}}  +|v_{1}|$. We also made use of the inequality\footnote{In the case where $N_{\mathcal{V}}=1$, an internal line may be assigned the sum of \emph{all} momenta $p_{1}+\ldots+p_{N}$, which is not possible in the case without insertions. This explains why $\bar\eta$ appears in the bound \eqref{etabound2} whereas $\eta$ appears in~\eqref{etabound}, and it also explains why the notion of exceptional momenta is slightly different for the cases with and without insertions.}
\ben\label{etabound2}
\inf_{r\in\mathcal{I}(T)}|k_{r}|\geq \bar\eta(\vec{p}) \quad \text{ for } T\in \tkn_{N,2L,v_{1}}(\vec{p})\, .
\een
Note that $N-N_{\mathcal{V}}+|v_{1}|$ is always positive, which allows us to use the inequality 
\ben
\bar\eta(\vec{p})^{-(N-N_{\mathcal{V}}+|v_{1}|)}\leq \inf(M,\bar\eta(\vec{p}))^{-(N-N_{\mathcal{V}}+|v_{1}|)}\, .
\een
To bound the cardinality of the set $\tkn_{N,2L,v_{1}}$ we make use of lemma \ref{lemtreebd2}, which tells us that we pick up a factor $ 4^{3N} \cdot [3(|w|+1)]^{\frac{N-2}{2}+L}$.  Absorbing  $N$ and $L$ dependent factors into the polynomial and using  \eqref{cwest2}, 
we verify the bound claimed in the corollary for $K$  large enough such that (here $K_{0}$ is the constant from theorem \ref{thmCAGins1})
\ben
2^{3|w|/2}(|w|+1)^{\frac{N-2}{2}+L}\, [\alpha(N,L)]^{|w|/2}  K_{0}^{(N+4L-4)(|w|+1)} \leq K^{(N+4L-3)(|w|+1)}\, .
\een 
\end{proof}
\noindent Theorem \ref{thmCAGins1} implies the following bound for the smeared, connected Schwinger functions with one insertion:
\begin{cor}\label{cor2b}
For any $N,L\in\mathbb{N}$ there exists $K>0$ such that
\ben\label{cor2beq}
\begin{split}
&\left|\int_{\vec{p}}  \L_{N,L}^{0,\infty}\left( \O_{A}(0);\vec{p}\right)\, \prod_{i=1}^{N}  \frac{\hat{\test}_{i}(p_{i})}{p_i^2} \right| \leq  \sqrt{[A]!}\ K^{[A]} \, 
  \LIR^{[A]+N}  \sum_{\mu=0}^{([A]+2)(N+2L+1)}\hspace{-.3cm} \sum_{ \mu_1+\ldots+\mu_N=\mu   }    \frac{\prod\limits_{i=1}^{N}\|\hat{\test}_{i}\|_{\frac{\mu_{i}}{2}}} { M^{\mu}}   \, ,
\end{split}
\een
for arbitrary test functions $\test_i\in\mathcal{S}(\mathbb{R}^4)$ and with $\|\cdot\|_n$ as in \eqref{Schwnorms}.
\end{cor}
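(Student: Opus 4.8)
The plan is to feed the tree bound of Theorem~\ref{thmCAGins1} (specialised to $\La=0$, $\La_0=\infty$ and $w=0$, so that $v_1=v_2=0$ and the prefactor reduces to $M^{[A]-N_{\mathcal V}}$) into the smeared integral, and then to carry out the momentum integrations leg by leg, exactly in the spirit of the appendix argument for the two-insertion case [appendix~\ref{appdistr}] and of corollary~\ref{cor1b}. Writing the bound on $|\L_{N,L}^{0,\infty}(\O_{A}(0);\vec p)|$ as a sum over $T\in\tkn_{N,2L,0}(\vec p)$ of
$$M^{[A]-N_{\mathcal V}}\sum_{\mu=0}^{[A](N+2L+1)}\frac{1}{\sqrt{\mu!}}\Big(\frac{|\vec p_{\mathcal V}|}{M}\Big)^{\mu}\prod_{i\in\mathcal I(T)}\int_0^{\La_0}\!\d\la_i\,f_{\la_i}(k_i;N,L,\theta_i)\,\mathcal P_{2L+\frac N2-1}(\cdots),$$
I would first note that the momenta entering $\mathcal V$ are sums over disjoint subtrees, hence over disjoint subsets of the external momenta, so that $|\vec p_{\mathcal V}|\le\sum_j|p_j|$; expanding $(\sum_j|p_j|)^{\mu}$ by the multinomial theorem then splits the factor $(|\vec p_{\mathcal V}|/M)^{\mu}$ into $\prod_j|p_j|^{\mu_j}$ with $\mu_1+\cdots+\mu_N=\mu$. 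This already puts the estimate into the product-over-legs form matching the right-hand side of~\eqref{cor2beq}.

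Next I would take absolute values and interchange the $\vec p$-integration with the $\la_i$-integrals (legitimate by Tonelli's theorem, since the integrand is then non-negative), and perform the momentum integrals. Here the tree structure is essential: integrating the external momenta from the leaves inwards, each leg $j$ carries the factor $|p_j|^{\mu_j}\,|\hat\test_j(p_j)|/p_j^2$, which I would control using $|\hat\test_j(p_j)|\le\|\hat\test_j\|_{\mu_j/2}\,(M^2+p_j^2)^{-\mu_j/2}$. The internal Gaussian weights $e^{-k_i^2/(\alpha\la_i^2)}$ supply decay in the internal momenta and, upon integration, yield compensating powers of the $\la_i$; the propagator $1/p_j^2$ is locally integrable in four dimensions, so the only danger is in the ultraviolet, which is tamed either by an incident Gaussian or by the Schwartz decay of $\hat\test_j$. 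The resulting $\la_i$-integrals are then handled exactly as in corollaries~\ref{cor1} and~\ref{cor2}: one orders the lines by the symmetric group as in~\eqref{intprodorder} and applies lemmas~\ref{lemmalambdalarge} and~\ref{lemmaLambdaint}, convergence being guaranteed by $\theta_i>0$ on every internal line. Finally the number of trees is controlled by lemma~\ref{lemtreebd2}, and all purely $N,L$-dependent constants, together with the logarithmic polynomials $\mathcal P_{2L+\frac N2-1}$, are absorbed by enlarging $K$ and by widening the range of $\mu$ from $[A](N+2L+1)$ to $([A]+2)(N+2L+1)$; the extra $2(N+2L+1)$ precisely accommodates the additional Schwartz-norm decay needed to render the external propagators integrable and to dominate the logarithms.

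The main obstacle is the momentum integration. The difficulty is twofold: the external propagators $1/p_j^2$ are infrared singular, and the Schwinger function itself is singular at exceptional configurations, where the bound of corollary~\ref{cor2} blows up like $\inf(\bar\eta(\vec p),M)^{-(N+|w|)}$. Integrating such singularities directly against the test functions would be delicate, which is exactly why I would keep the Gaussian regularisation of Theorem~\ref{thmCAGins1} intact throughout and perform the $\la_i$-integrals \emph{after} the $\vec p$-integrals: for fixed $\la_i>0$ the weights $f_{\la_i}$ smooth the exceptional-momentum singularities, and the IR behaviour is thereby traded for convergent $\la_i$-integrals. The remaining effort is essentially bookkeeping — tracking the precise powers of $M$ so as to reach $M^{[A]+N}$, and the precise Schwartz-norm indices so as to land on $\|\hat\test_i\|_{\mu_i/2}$ over the stated range of $\mu$ — and it is this accounting, rather than any single hard estimate, that constitutes the bulk of the work, following closely the appendix computation for two insertions.
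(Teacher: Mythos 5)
Your overall strategy is the paper's own (the proof of corollary~\ref{cor2b} is the two-insertion argument of appendix~\ref{appdistr}, adapted and simplified): start from theorem~\ref{thmCAGins1} at $\La=0$, $w=0$, so the prefactor is $M^{[A]-N_{\mathcal{V}}}$; correctly refuse to integrate corollary~\ref{cor2} directly, since its factor $\inf(\bar\eta(\vec{p}),M)^{-N}$ is not integrable against $\prod_j p_j^{-2}$; split $|\vec{p}_{\mathcal{V}}|^{\mu}\le(\sum_j|p_j|)^{\mu}$ over the legs; and pay for the external propagators with extra Schwartz-norm decay, which is indeed what widens the $\mu$-range to $([A]+2)(N+2L+1)$. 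However, the execution of the central analytic step --- carrying out the $4N$-dimensional momentum integration jointly with the $\la_i$-integrations --- contains a genuine gap: you claim that after integrating out all momenta (Tonelli), ``the resulting $\la_i$-integrals are handled exactly as in corollaries~\ref{cor1} and~\ref{cor2}'' via \eqref{intprodorder} and lemmas~\ref{lemmalambdalarge} and~\ref{lemmaLambdaint}, ``convergence being guaranteed by $\theta_i>0$''. This step fails as stated.

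The condition $\theta_i>0$ controls only the $\la_i\to\infty$ end of \eqref{fdef}; at $\la_i\to 0$ the factor $\la_i^{-\theta_i-1}$ is non-integrable, and the only thing taming it is the Gaussian $e^{-k_i^2/\alpha\la_i^2}$ at \emph{fixed} $k_i\neq 0$ --- which is precisely what lemma~\ref{lemmaLambdaint} exploits, producing $(\sqrt{\alpha}/|k_i|)^{\theta_i}$. Once the momenta have been integrated out, the Gaussians and the $k_i$ are gone, so neither lemma applies; what the momentum integration actually yields is a power $\la^4$ attached to the \emph{smallest} scale among the lines through which the integrated momentum flows, and verifying that these powers suffice to make every $\la_i$-integral converge at $0$, line by line and uniformly over the tree, is exactly the hard part of the proof. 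A second, related defect is that your ``leaves inwards'' integration does not close on itself: $\int d^4p_j\,(M^2+p_j^2)^{-s_j/2}\,p_j^{-2}\,e^{-k_i^2/\alpha\la_i^2}$ is not again of the same functional form, so the step cannot be iterated. The paper resolves both issues with two devices absent from your outline: (i) each external factor $|p_j|^{\mu_j}|\hat{\test}_j(p_j)|/p_j^2$ is rewritten, via the integral representation of $p^{-2}(M^2+p^2)^{-s_j/2}$, as a Gaussian weight factor \eqref{ftildedef}, putting external and internal lines on the same footing; (ii) the $\la$-integrations are decomposed into sectors over permutations of \emph{all} lines, $\mathfrak{S}(\mathcal{E})$, and within each sector one alternates between integrating the momentum flowing through the line with the smallest $\la$ (lemma~\ref{lablemma1}), integrating that $\la$ over $[0,\la_{\mathrm{next}}]$ (convergent at $0$ precisely because the momentum integral supplied $\la^4$ and $\theta_i\le 2$), and spending the resulting positive power of $\la_{\mathrm{next}}$ in a cutting/reduction operation on the tree; this is iterated $N-1$ times. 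That interleaving, together with the tree combinatorics guaranteeing the powers match, is the actual content of the proof; your appeal to corollaries~\ref{cor1} and~\ref{cor2} replaces it but cannot substitute for it.
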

The proof of this corollary follows essentially the same arguments as that of corollary \ref{cor3b}, which can be found in appendix \ref{appdistr}.
\subsection{Schwinger functions with two insertions}\label{sec:2ins}
We next derive bounds on (subtracted)
 CAS's with two insertions using our previous results on the CAS's without and with one insertion. 
The CAS's with insertion of two composite operators $\O_{A}=\{N',w'\}$ and $\O_{B}=\{N'',w''\}$ and with subtraction to degree $D\leq[A]+[B]$ were defined in section~\ref{subsec:compfields} via the flow equation 
\ben\label{CAGFE2Insertion}
\begin{split}
&\partial_{\La}\L^{\La,\Lambda_{0}}_{D,N,L}(\O_{A}(x)\otimes\O_{B}(0);\vec{p};\LIR)= \left(N+2 \atop 2 \right) \, \int_{\ell} \dot{C}^{\La}(\ell)\L^{\La,\Lambda_{0}}_{D,N+2,L-1}(\O_{A}(x)\otimes\O_{B}(0); \ell, -\ell,  \vec{p};\LIR)  \\
&-\hspace{-.2cm}\sum_{{l_{1}+l_{2}=L} \atop{n_{1}+n_{2}=N+2}}\hspace{-.2cm} n_{1}n_{2}\,  \mathbb{S}\, \bigg[  \L^{\La,\Lambda_{0}}_{D,n_{1},l_{1}}(\O_{A}(x)\otimes\O_{B}(0);   p_{1},\ldots,p_{n_{1}-1},q;\LIR)\,  \dot{C}^{\La}(q)\,   \L^{\La,\Lambda_{0}}_{n_{2},l_{2}}(-q,  p_{n_{1}},\ldots,p_{N};\LIR) \\
&\qquad+\int_{\ell} \L^{\La,\Lambda_{0}}_{n_{1},l_{1}}(\O_{A}(x);   p_{1},\ldots,p_{n_{1}-1}, \ell;\LIR)\,  \dot{C}^{\La}(\ell)\,   \L^{\La,\Lambda_{0}}_{n_{2},l_{2}}(\O_{B}(0);-\ell,  p_{n_{1}},\ldots,p_{N};\LIR) \bigg]
\end{split}
\een
and the boundary conditions (see \eqref{BCL2ins1A} and \eqref{BCL2ins2A})
\ben\label{BCLins21}
\partial^{w}_{\vec{p}}\L^{\LIR, \La_{0}}_{D,N,L}(\O_{A}(x)\otimes\O_{B}(0); \vec{0}; \LIR)= 0 \quad \text{ for }N+|w|\leq D
\een
\ben\label{BCLins22}
\partial^{w}_{\vec{p}}\L^{\La_{0},\La_{0}}_{D,N,L}(\O_{A}(x)\otimes\O_{B}(0); \vec{p};\LIR)=0\quad \text{ for }N+|w|>D \quad .
\een
For simplicity, we again assume that both $N'$ and $N''$ are even. One can show: 
\begin{thm}\label{thm3}
For $M\geq \La \geq 0\,$, for any $N,L\in\mathbb{N}, w\in\mathbb{N}^{4N}$ and for $[A]+[B]\geq D\geq -1$ there exists a constant $\,K>0\,$ such that the bound
\ben\label{CAG2boundeq}
\begin{split}
&|\pa^w {\cal L}^{\La,\Lao}_{D,N,L}(\O_{A}(x)\otimes\O_{B}(0);\vec{p};\LIR)| \\
&\le  \sqrt{(|w|+[A]+[B]-D)!\, |w'|!\,|w''|!}\ K^{(2N+8L-3)(|w|+1)}K^{([A]+[B])(N/2+2L)^3 } \sum_{v_{1}+v_{2}=w}\, \sum_{T \in \tkn_{N,2L+2,v_{1}}(\vec{p})}
  \\
&\times\LIR^{D-N_{\mathcal{V}}} \cdot \sup\left(M, \frac{1}{|x|} \right)^{[A]+[B]-D}\sup\left(|x|,\frac{1}{M}\right)^{|v_{2}|} \cdot  \sum_{\mu=0}^{([A]+[B])(N+2L+3)}\frac{1}{\sqrt{\mu!}}\
\left(\frac{|\vec p_{\mathcal{V}}|}{\LIR}\right)^{\mu}    \\
&\times \prod_{i \in \mathcal{I}(T)} \int_{\La}^{\La_{0}} \d\la_{i}\ f_{\lambda_{i}}(k_i;N-2,L+1,\theta_i)  \,
 \mathcal{P}_{2L+\frac{N}{2}}\left( \log_{+}\sup\left(\frac{|\vec{p}|}{\LIR},\frac{|\vec{p}|_\LIR}{\min_{j\in\mathcal{I}}\lambda_{j}},
 \frac{\max_{j\in\mathcal{I}}\lambda_{j}}{\LIR}\right)\right)
\end{split}
\een
holds, with the same notation as in theorem \ref{thmCAGins1}.
\end{thm}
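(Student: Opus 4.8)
The plan is to follow the inductive flow-equation scheme already used in the proofs of theorems \ref{thm1} and \ref{thmCAGins1}, now applied to the flow equation \eqref{CAGFE2Insertion} for the $D$-subtracted CAS's with two insertions. As in the one-insertion case, I would split the argument into two steps: first establish the stated bound at the renormalisation scale $\La=M$ by importing the corresponding massive-theory estimates of \cite{Hollands:2011gf} (legitimate because $|\dot C^\La(\ell;m=0)|\leq \frac{1}{e}\,|\dot C^\La(\ell;m=M)|$ for $\La\geq M$), and then integrate the flow equation downward from $M$ to general $\La$, running the induction upward in $N+2L$, ascending in $L$ and descending in $|w|$. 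The boundary contribution at $\La=M$ should reproduce precisely the $N_{\mathcal V}=N$ sector of the claimed bound \eqref{CAG2boundeq} (all external legs attached to the special vertex), exactly as in section \ref{subsec:BC}.

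The induction step decomposes into the three contributions visible on the right side of \eqref{CAGFE2Insertion}. The loop term $\binom{N+2}{2}\int_\ell \dot C^\La(\ell)\,\L_{D,N+2,L-1}(\O_A(x)\otimes\O_B(0);\ell,-\ell,\vec p)$ is handled exactly as the first term in theorem \ref{thmCAGins1}: insert the inductive hypothesis, bound the $\ell$-integral with lemma \ref{lablemma1}, and remove the two loop legs $\ell,-\ell$ by applying the reduction procedure \rom{2} twice, picking up the combinatorial factor \eqref{totalredfactor}. The first half of the second term, namely the product of a two-insertion function $\L_{D,n_1,l_1}(\O_A(x)\otimes\O_B(0))$ with a pure function $\L_{n_2,l_2}$, is treated by the gluing construction of figs.\ref{fig:merge1} and \ref{fig:Gluecase1}: combine a tree in $\tkn_{n_1,2l_1+2,\cdot}$ with a tree in ${\cal T}_{n_2,2l_2,\cdot}$ into a tree in $\tkn_{N,2L+2,\cdot}$, use \eqref{cwest2} to resum the Leibniz indices, and absorb the sums over $n_i,l_i$ into the polynomial prefactor.

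The genuinely new ingredient is the second half of the second term, the loop-integrated product of two \emph{one-insertion} functions $\int_\ell \L_{n_1,l_1}(\O_A(x);\ldots,\ell)\,\dot C^\La(\ell)\,\L_{n_2,l_2}(\O_B(0);-\ell,\ldots)$. Here I would insert the one-insertion bound from theorem \ref{thmCAGins1} for each factor, noting that the $\O_A(x)$ factor carries the translation phase $e^{ix(\ell+p_1+\cdots+p_{n_1-1})}$ from \eqref{CAGtrans}. The $x$-dependent prefactors $\sup(M,1/|x|)^{[A]+[B]-D}$ and $\sup(|x|,1/M)^{|v_2|}$ in \eqref{CAG2boundeq} are precisely what must be produced at this point: the subtraction to degree $D$, encoded in the boundary conditions \eqref{BCLins21}--\eqref{BCLins22}, removes the $D$ leading short-distance orders of the product, so that the surviving singularity is of order $[A]+[B]-D$. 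The relevant-term integration from $\La=M$ together with the Taylor subtraction at $\vec p=0$ trades the oversubtracted low-momentum behaviour for the stated power of $\sup(M,1/|x|)$, after which the two one-insertion trees are glued along the $\ell$-leg into a single tree in $\tkn_{N,2L+2,\cdot}$; the extra loop generated by joining the insertions is exactly the shift $L\mapsto L+1$ already visible in the weight functions $f_{\lambda_i}(k_i;N-2,L+1,\theta_i)$ and in the index $2L+2$ of the forest.

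The main obstacle I anticipate is the bookkeeping of the $x$-dependence and the subtraction degree $D$ in this last term. Unlike the one-insertion case, the two factors $\L(\O_A(x))$ and $\L(\O_B(0))$ each carry their own dimension-dependent momentum growth $\sup(1,|\vec p|/M)^{[A](\cdots)}$ and $\sup(1,|\vec p|/M)^{[B](\cdots)}$, and one must verify that, after the $\ell$-integration and the $D$-subtraction, their product reorganises exactly into $\sup(M,1/|x|)^{[A]+[B]-D}\,\sup(|x|,1/M)^{|v_2|}$ together with the single polynomial $\mathcal{P}_{2L+N/2}$. Matching simultaneously the exponents of $M$, of $1/|x|$, and of $\LIR^{D-N_{\mathcal V}}$, while keeping the combinatorial factor $\sqrt{(|w|+[A]+[B]-D)!}$ under control through the Leibniz resummation, is the delicate step; everything else parallels the one-insertion proof and the tree manipulations of \cite{GK}.
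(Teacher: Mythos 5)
Your overall skeleton coincides with the paper's: import the massive-theory bounds at $\La=M$ (bound \eqref{boundCAG2}, from \cite{Hollands:2011gf,Holland:2012vw}), note that they realise the $N_{\mathcal{V}}=N$ sector of \eqref{CAG2boundeq}, then integrate \eqref{CAGFE2Insertion} downward, treating the loop term and the mixed term $\L_{D,n_1,l_1}(\O_A\otimes\O_B)\,\dot C\,\L_{n_2,l_2}$ as in the one-insertion proof and reserving the real work for the source term built from two one-insertion functions. However, your account of what you call the delicate step rests on a misconception. The source term $\int_\ell \L^{\la,\La_0}_{n_1,l_1}(\O_A(x);\ldots,\ell)\,\dot C^\la(\ell)\,\L^{\la,\La_0}_{n_2,l_2}(\O_B(0);-\ell,\ldots)$ does not depend on $D$ at all: the subtraction enters the flow equation only through the boundary conditions \eqref{BCLins21}--\eqref{BCLins22} and, inductively, through the $D$-subtracted functions appearing in the \emph{other} terms. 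Consequently no mechanism "trading oversubtracted low-momentum behaviour for the stated power of $\sup(M,1/|x|)$" can, or needs to, operate inside this term. What actually happens in the paper is much tamer: inserting theorem \ref{thmCAGins1} for both factors produces $\LIR^{[A]+[B]-N_{\mathcal{V}_1}-N_{\mathcal{V}_2}-|v_3|}$ together with non-negative powers $|x|^{|v_4|}$ coming from the derivatives of the translation phase; these combine into $\LIR^{[A]+[B]-N_{\mathcal{V}}}\sup(|x|,1/M)^{|v_2|}$, which, since $[A]+[B]\geq D$, is trivially dominated by $\LIR^{D-N_{\mathcal{V}}}\sup(M,1/|x|)^{[A]+[B]-D}\sup(|x|,1/M)^{|v_2|}$. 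The genuinely singular factor $|x|^{-([A]+[B]-D)}$ enters the induction solely through the boundary term at $\La=M$ and is carried along passively; it is never produced during the downward integration. So the obstacle you flag as the crux is a non-obstacle, and planning the proof around it would send you looking for an estimate that does not exist.

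A second, smaller inaccuracy: the two one-insertion trees are not "glued along the $\ell$-leg". Since $\ell$ is an integrated loop momentum, an internal line carrying $\ell$ would not even be admissible in the tree formalism (internal momenta are fixed by momentum conservation from the external ones). Instead, the $\ell$-integral is first bounded via lemma \ref{lablemma1}; the factor $\min_j(\la_j,M)^2$ obtained from the $\la$-integration as in \eqref{lambdaintsest} is then used to delete the two $\ell$-legs by the reduction procedure, giving trees $T_1\in\tkn_{n_1-1,2l_1+1,v_1}$ and $T_2\in\tkn_{n_2-1,2l_2+1,v_2}$; only afterwards are these merged by fusing the special vertices $\mathcal{V}_1,\mathcal{V}_2$ into a single vertex $\mathcal{V}$ with $N_{\mathcal{V}}=N_{\mathcal{V}_1}+N_{\mathcal{V}_2}$ (fig.\ref{fig:2insmerge}). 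This last point is not cosmetic: it is precisely what makes the exponent in $\LIR^{D-N_{\mathcal{V}}}$ additive over the two factors and lets the merged tree land in $\tkn_{N,2L+2,v_1}(\vec p)$ with the weight functions $f_{\la_i}(k_i;N-2,L+1,\theta_i)$ claimed in \eqref{CAG2boundeq}.
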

\begin{proof}
The l.h.s. of \eqref{CAG2boundeq} vanishes for odd $N$ due to the $\varphi\to -\varphi$ symmetry.  For even $N$, our strategy is again to prove this bound in two steps:
\begin{enumerate}
\item Derive a bound for $|\pa^w {\cal L}^{M,\Lao}_{D,N,L}(\O_{A}(x)\otimes\O_{B}(0);\vec{p};\LIR)|$ using the flow equation \eqref{CAGFE2Insertion} and boundary conditions  \eqref{BCLins21}, \eqref{BCLins22}.
\item Derive a bound for $|\pa^w {\cal L}^{\La,\Lao}_{D,N,L}(\O_{A}(x)\otimes\O_{B}(0);\vec{p};\LIR)|$ with $0\leq\La\leq M$, employing again the flow equation \eqref{CAGFE2Insertion}, but using the bounds derived in step 1 as boundary conditions at $\La=M$.
\end{enumerate}
To implement the first step, we appeal, as in the proof of theorem \ref{thmCAGins1}, to the results of~\cite{Hollands:2011gf,Holland:2012vw} (see in particular corollary 1 in \cite{Holland:2012vw}). The bounds given there imply in the limit $m\to 0$ for finite infrared cutoff $\La = M$ (the subtlety discussed in footnote \ref{foot} applies here as well):
\ben\label{boundCAG2}
\begin{split}
&|\partial^{w}_{\vec{p}}\L^{\LIR,\Lambda_{0}}_{ D,N,L}(\O_{A}(x)\otimes\O_{B}(0); \vec{p}; \LIR)|\\
&\quad\leq  K^{(2N+8L-3)|w|}\, K^{([A]+[B])(N/2+2L)^{3}}\, \sqrt{(|w|+[A]+[B]-D)!\, |w'|!\, |w''|!}  \\
&\quad\times \frac{\LIR^{D-N-|w|}}{|x|^{[A]+[B]-D}}\cdot  \sum_{\mu=0}^{([A]+[B])(N+2L+3)}\frac{1}{\sqrt{\mu!}}\left(\frac{|\vec{p}|}{\LIR}\right)^{\mu}\,  \mathcal{P}_{2L+\frac{N}{2}}\left( \log_{+}\frac{|\vec{p}|}{\LIR}\right) \, .
\end{split}
\een
Note that this bound is consistent with our hypothesis \eqref{CAG2boundeq}. It corresponds to the case $N_{\mathcal{V}}=N$, i.e. all external legs of the tree are directly attached to the special vertex $\mathcal{V}$. This guarantees that boundary contributions, which appear when integrating the flow equation, satisfy the bound  \eqref{CAG2boundeq} (cf. section \ref{subsec:BC}).
It remains to carry out step 2, i.e. integrating the flow equation between $\La$ and $M$. Let us start the induction.
\subsubsection{First and second term on the r.h.s. of the flow equation \eqref{CAGFE2Insertion}}
To verify that these contributions are consistent with our induction hypothesis, \eqref{CAG2boundeq}, we follow essentially the same steps as in the proof of our bound for the CAS's with one insertion (cf. section \ref{sec:1ins}). Since the necessary adjustments are minor, we refrain from repeating the lengthy calculations.
\subsubsection{Third term on the r.h.s. of the flow equation \eqref{CAGFE2Insertion}}
To begin with, we make use of the translation properties of the CAS's with one insertion in order to write
\ben\label{partint}
\begin{split}
&\Big|\partial_{\vec{p}}^{w}\int_{\La}^{M}\d\la\int_{\ell}\hspace{-.2cm} \sum_{\substack{ l_{1}+l_{2}=L \\ n_{1}+n_{2}=N+2  }}\hspace{-.5cm} n_{1}n_{2} \L^{\la,\Lambda_{0}}_{n_{1},l_{1}}(\O_{A}(x);   p_{1},\ldots,p_{n_{1}-1}, \ell;\LIR)\,  \dot{C}^{\la}(\ell)\,   \L^{\la,\Lambda_{0}}_{n_{2},l_{2}}(\O_{B}(0);-\ell,  p_{n_{1}},\ldots,p_{N};\LIR)\Big|\\
&=\Big|\partial_{\vec{p}}^{w}\int_{\La}^{M}\d\la\int_{\ell} \sum_{\substack{ l_{1}+l_{2}=L \\ n_{1}+n_{2}=N+2  }}n_{1}n_{2}\, e^{i (p_{1}+\ldots+p_{2n_{1}-1}+\ell)x}\\
&\qquad\qquad\times \L^{\la,\Lambda_{0}}_{n_{1},l_{1}}(\O_{A}(0);   \vec{p}_{1}, \ell;\LIR)\,  \dot{C}^{\la}(\ell)\,   \L^{\la,\Lambda_{0}}_{n_{2},l_{2}}(\O_{B}(0);-\ell, \vec{p}_{2};\LIR)\Big|\\
&\leq\sum_{v_{1}+v_{2}=w}\sum_{\substack{ l_{1}+l_{2}=L \\ n_{1}+n_{2}=N+2  }}n_{1}n_{2}\, c_{\{v_{i}\}} |x|^{|v_{2}|} \int_{\La}^{M}\d\la\int_{\ell}\\ 
&\qquad\qquad\times \Big| [\prod_{i=1}^{n_{1}-1}\partial_{p_{i}}^{(v_{1})_{i}} \L^{\la,\Lambda_{0}}_{n_{1},l_{1}}(\O_{A}(0);  \vec{p}_{1}, \ell;\LIR)]\,  \dot{C}^{\la}(\ell)\,  [\prod_{i=n_{1}}^{N}\partial_{p_{i}}^{(v_{1})_{i}}  \L^{\la,\Lambda_{0}}_{n_{2},l_{2}}(\O_{B}(0);-\ell, \vec{p}_{2};\LIR)]  \Big|\, .
\end{split}
\een
In order to bound this ``source term'', we make use of our bounds on the CAS's with one insertion derived above in section \ref{sec:1ins}. Using also \eqref{cwest} for $c_{\{v_{i}\}}$, this yields
\ben\label{CAG23rd}
\begin{split}
|\text{r.h.s. of \eqref{partint}}| &\leq \sum_{\substack{ l_{1}+l_{2}=L \\ n_{1}+n_{2}=N+2  }}  \sum_{\substack{(v_{1},v_{2})+v_{3}=w \\ v_{1}\in\mathbb{N}^{4(n_{1}-1)} \\ v_{2}\in\mathbb{N}^{4(n_{2}-1)}  }}n_{1}n_{2}\,2^{|w|} |x|^{|v_{3}|} \int_{\La}^{M}\d\la \int_{\ell}  \frac{2  e^{-\frac{\ell^{2}}{\la^{2}}}}{\la^{3}}\\
&\times \bigg[\sqrt{|v_{1}|!\, |w'|!}\ K_{1}^{(2n_{1}+8l_{1}-4)(|v_{1}|+1)}K_{1}^{[A](n_{1}/2+2l_{1})^3 } \sum_{u_{1}+u_{2}=v_{1}} \sum_{T_{1} \in \tkn_{n_{1},2l_{1},(0,u_{1})}(\ell, \vec{p}_{1})}
  \\
&\times\LIR^{[A]-N_{\mathcal{V}_{1}}-|u_{2}|}  \sum_{\mu=0}^{[A](n_{1}+2l_{1}+1)}\frac{1}{\sqrt{\mu!}}\
\left(\frac{|\vec p_{\mathcal{V}_{1}}(\ell)|}{\LIR}\right)^{\mu}    \prod_{i \in \mathcal{I}(T_{1})} \int_{\la}^{\La_{0}} \d\la_{i}\ f_{\lambda_{i}}(k_i(\ell);n_{1},l_{1},\theta_i) 
\\
&\times\,
 \mathcal{P}_{2l_{1}+\frac{n_{1}}{2}-1}\left( \log_{+}\sup\left(\frac{|(\vec{p}_{1},\ell)|}{\LIR},\frac{|(\vec{p}_1,\ell)|_\LIR}{\min\limits_{j\in\mathcal{I}(T_{1})}\lambda_{j}},
 \frac{\max\limits_{j\in\mathcal{I}(T_{1})}\lambda_{j}}{\LIR}\right)\right) \bigg]\\
 &\times \bigg[\sqrt{|v_{2}|!\, |w''|!}\ K_{1}^{(2n_{2}+8l_{2}-4)(|v_{2}|+1)}K_{1}^{[B](n_{2}/2+2l_{2})^3 } \sum_{u_{3}+u_{4}=v_{2}} \sum_{T_{2} \in \tkn_{n_{2},2l_{2},(0,u_{3})}(-\ell, \vec{p}_{2})}
  \\
&\times\LIR^{[B]-N_{\mathcal{V}_{2}}-|u_{4}|}  \sum_{\mu=0}^{[B](n_{2}+2l_{2}+1)}\frac{1}{\sqrt{\mu!}}\
\left(\frac{|\vec p_{\mathcal{V}_{2}}(\ell)|}{\LIR}\right)^{\mu}    \prod_{i \in \mathcal{I}(T_{2})} \int_{\la}^{\La_{0}} \d\la_{i}\ f_{\lambda_{i}}(k_i(\ell);n_{2},l_{2},\theta_i) 
\\
&\times\,
 \mathcal{P}_{2l_{2}+\frac{n_{2}}{2}-1}\left( \log_{+}\sup\left(\frac{|(\vec{p}_{2},\ell)|}{\LIR},\frac{|(\vec{p}_2,\ell)|_\LIR}{\min\limits_{j\in\mathcal{I}(T_{2})}\lambda_{j}},
 \frac{\max\limits_{j\in\mathcal{I}(T_{2})}\lambda_{j}}{\LIR}\right)\right) \bigg] \, .
\end{split}
\een
Here we denote by $K_{1}$ the constant (called $K$ there) appearing in our bound for the CAS's with one insertion, \eqref{CAG1boundeq}. Since $2\leq n_{1},n_{2}\leq N$, one has
 \ben
 [A](n_{1}/2+2l_{1})^{3}\cdot [B](n_{2}/2+2l_{2})^{3}\leq ([A]+[B]) (N/2+2L)^{3}
 \een
and also
 \ben\label{musummerge}
 [A](n_{1}+2l_{1}+1)+[B](n_{2}+2l_{2}+1)\leq ([A]+[B])(N+2L+1)\, .
 \een
 The relation \eqref{musummerge} allows us to combine the sums over $\mu_{i}$:
 \ben
 \begin{split}
&\left[ \sum_{\mu_{1}=0}^{[A](n_{1}+2l_{1}+1)}\frac{1}{\sqrt{\mu_{1}!}}\
\left(\frac{|\vec p_{\mathcal{V}_{1}}|}{\LIR}\right)^{\mu_{1}}\right]\times \left[ \sum_{\mu_{2}=0}^{[B](n_{2}+2l_{2}+1)}\frac{1}{\sqrt{\mu_{2}!}}\
\left(\frac{|\vec p_{\mathcal{V}_{2}}|}{\LIR}\right)^{\mu_{2}}\right]\\
& \leq \sqrt{2}^{([A]+[B])(N+2L+1)}\sum_{\mu=0}^{([A]+[B])(N+2L+1)}\frac{1}{\sqrt{\mu!}}\
\left(\frac{\sup(|\vec p_{\mathcal{V}_{1}}|,|\vec p_{\mathcal{V}_{2}}|)}{\LIR}\right)^{\mu}\, .
\end{split}
 \een
 For the weight functions, we use again that
%
$  f_{\lambda_{i}}(k_i;n_{a},l_{a},\theta_i)  \leq  f_{\lambda_{i}}(k_i;N,L,\theta_i) $ .

 The loop integral over $\ell$ can again be bounded with the help of lemma \ref{lablemma1}\footnote{Note that $|\mathcal{I}(T_{1})|+|\mathcal{I}(T_{2})|\leq \frac{n_{1}+n_{2}-4}{2}+l_{1}+l_{2}= \frac{N-2}{2}+L$, so the conditions in the lemma are met.}, which yields a factor $2^{([A]+[B])(N+2L+1)}$ along with some purely $N,L$ dependent terms. Combining the two logarithmic polynomials and renaming summation indices $u_{1}\to v_{1}, (u_2,u_4) \to v_3, u_3 \to v_2,
v_3 \to v_4 $, we may infer
\ben\label{CAG23rdb}
\begin{split}
&|\text{r.h.s. of \eqref{CAG23rd}}|\leq 4^{([A]+[B])(N+2L+1)}\, 2^{|w|} \sum_{\substack{ l_{1}+l_{2}=L \\ n_{1}+n_{2}=N+2  }} n_{1}n_{2}  \\
&\times  \,   \sqrt{|w|!\, |w'|!|w''|!}\quad  K_{1}^{(2N+8L-4)(|w|+1)}K_{1}^{([A]+[B])(N/2+2L)^3 }\sum_{(v_{1},v_{2})+v_{3}+v_{4}=w}   \\
&\times \sum_{ {T_{1} \in \tkn_{n_{1},2l_{1},(0,v_{1})}(0,\vec{p}_{1})} \atop{T_{2} \in \tkn_{n_{2},2l_{2},(0,v_{2})}(0,\vec{p}_{2})}} M^{[A]+[B]-N_{\mathcal{V}_{1}}-N_{\mathcal{V}_{2}}-|v_{3}|} \, |x|^{|v_{4}|} \,
 \sum_{\mu=0}^{([A]+[B])(N+2L+1)}  \frac{1}{\sqrt{\mu!}}\
\left(\frac{\sup(|\vec p_{\mathcal{V}_{1}}|,|\vec{p}_{\mathcal{V}_{2}}|)}{\LIR}\right)^{\mu} \\
&\times  \int_{\La}^{M}\d\la\, \la  \prod_{i \in\mathcal{I}(T_{1})} \int_{\la}^{\La_{0}} \d\la_{i}\ f_{\lambda_{i}}(k_i;N-2,L+1,\theta_i) \prod_{j \in\mathcal{I}(T_{2})} \int_{\la}^{\La_{0}} \d\la_{j}\ f_{\lambda_{i}}(k_j;N-2,L+1,\theta_j) \\
&\times \mathcal{P}_{2L+\frac{N}{2}}\left( \log_{+}\sup\left(\frac{|\vec{p}|}{\LIR},\frac{|\vec{p}|_\LIR}{\min\limits_{j\in\mathcal{I}(T_{1})\cup \mathcal{I}(T_{2})}\lambda_{j}},
 \frac{\max\limits_{j\in\mathcal{I}(T_{1})\cup\mathcal{I}(T_{2})}\lambda_{j}}{\LIR}\right)\right)\, .
\end{split}
\een
The $\la$-integral is bounded as in \eqref{lambdaintsest}, leading to a factor of $\min_{j}(\la_{j},M)^{2}$. This additional factor can again be absorbed  by our reduction operation on trees (cf. pages \pageref{Lambdaint} 
and \pageref{CAGins1st1case2b}). As we have explained on page \pageref{CAGins1st1case2b}, this procedure ``eats'' a factor  $\min_{j}(\la_{j},M)$ and maps a tree $T \in \tkn_{N+1,2L,(0,w)}(0,\vec{p})$ to a tree $T' \in \tkn_{N,2L+1,w}(\vec{p})$. Applying this procedure twice to remove the external lines corresponding to the loop momenta $\ell,-\ell$ (recall also that we pick up the additional factor \eqref{reductionbound} due to the fact that the reduction is not-injective), we find 
\ben\label{CAG23rdc}
\begin{split}
&|\text{r.h.s. of \eqref{CAG23rdb}}| \leq 2^{|w|}4^{([A]+[B])(N+2L+1)}(|w|+1)^{2} \sum_{\substack{ l_{1}+l_{2}=L \\ n_{1}+n_{2}=N+2  }}n_{1}n_{2}    \\
&\times \sqrt{|w|!\, |w'|!|w''|!}\quad  K_{1}^{(2N+8L-4)(|w|+1)}K_{1}^{([A]+[B])(N/2+2L)^3 } \sum_{(v_{1},v_{2})+v_{3}+v_{4}=w}  \\
&\times \sum_{ {T_{1} \in \tkn_{n_{1}-1,2l_{1}+1,v_{1}}(\vec{p}_{1})} \atop{T_{2} \in \tkn_{n_{2}-1,2l_{2}+1,v_{2}}(\vec{p}_{2})}}   M^{[A]+[B]-N_{\mathcal{V}_{1}}-N_{\mathcal{V}_{2}}-|v_{3}|}\, |x|^{|v_{4}|} \,
 \sum_{\mu=0}^{([A]+[B])(N+2L+1)}  \frac{1}{\sqrt{\mu!}}\
\left(\frac{\sup(|\vec p_{\mathcal{V}_{1}}|,|\vec{p}_{\mathcal{V}_{2}}|)}{\LIR}\right)^{\mu} \\
&\times  \prod_{i \in\mathcal{I}(T_{1})\cup \mathcal{I}(T_{2})} \int_{\La}^{\La_{0}} \d\la_{i}\ f_{\lambda_{i}}(k_i;N-2,L+1,\theta_i) \\
&\times \mathcal{P}_{2L+\frac{N}{2}}\left( \log_{+}\sup\left(\frac{|\vec{p}|}{\LIR},\frac{|\vec{p}|_\LIR}{\min\limits_{j\in\mathcal{I}(T_{1})\cup \mathcal{I}(T_{2})}\lambda_{j}},
 \frac{\max\limits_{j\in\mathcal{I}(T_{1})\cup\mathcal{I}(T_{2})}\lambda_{j}}{\LIR}\right)\right)\, .
\end{split}
\een
Finally, to express the bound in terms of trees $T \in \tkn_{N,2L+2,(v_{1},v_{2})}(\vec{p})$, we merge the two trees ${T_{1} \in \tkn_{n_{1}-1,2l_{1}+1,v_{1}}(\vec{p}_{1})}$ and ${T_{2} \in \tkn_{n_{2}-1,2l_{2}+1,v_{2}}(\vec{p}_{2})}$ by combining the vertices $\mathcal{V}_{1}$ and $\mathcal{V}_{2}$ into one vertex $\mathcal{V}$ of coordination number $N_{\mathcal{V}}=N_{\mathcal{V}_{1}}+N_{\mathcal{V}_{2}}$, see fig.\ref{fig:2insmerge}. 
\begin{figure}[htbp]
\begin{center}
\includegraphics{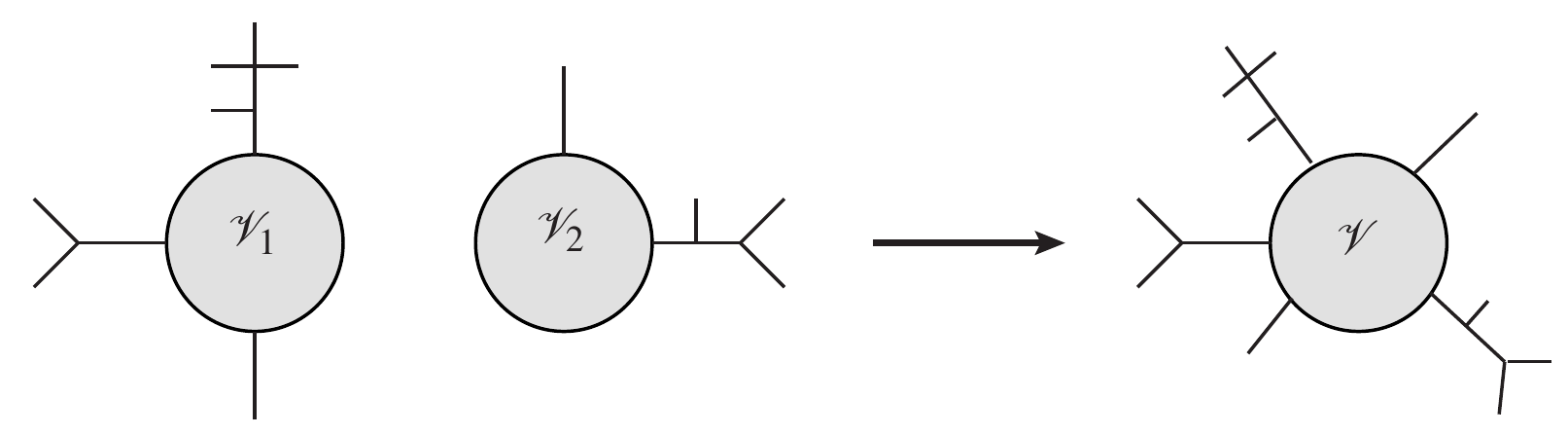}
\end{center}
\caption{We merge two trees ${T_{1} \in \tkn_{n_{1}-1,2l_{1}+1,v_{1}}(\vec{p}_{1})}$ and ${T_{2} \in \tkn_{n_{2}-1,2l_{2}+1,v_{2}}(\vec{p}_{2})}$ by combining the vertices $\mathcal{V}_{1}$ and $\mathcal{V}_{2}$ into one vertex $\mathcal{V}$}
\label{fig:2insmerge}
\end{figure}

Noting that $|\vec{p}_{\mathcal{V}}|\geq \sup(|\vec{p}_{\mathcal{V}_{1}}|,|\vec{p}_{\mathcal{V}_{2}}|)$ and renaming  multi-indices $(v_1,v_2) \to v_1, 
v_3 + v_4 \to v_2$, we arrive at the bound 
\ben\label{CAG23rdfin}
\begin{split}
&|\text{r.h.s. of \eqref{CAG23rdc}}|\\
 &\leq 4^{|w|}4^{([A]+[B])(N+2L+1)}   \sqrt{|w|!\, |w'|!|w''|!}\quad K_{1}^{(2N+8L-4)(|w|+1)}K_{1}^{([A]+[B])(N/2+2L)^3 }  \sum_{v_{1}+v_{2}=w}  \\
&\times \sum_{T \in \tkn_{N,2L+2,v_{1}}(\vec{p})}  M^{[A]+[B]-N_{\mathcal{V}}} \sup(|x|,\frac{1}{M})^{|v_{2}|} \quad
 \sum_{\mu=0}^{([A]+[B])(N+2L+1)}  \frac{1}{\sqrt{\mu!}}\
\left(\frac{|\vec p_{\mathcal{V}}|}{\LIR}\right)^{\mu} \\
&\times  \prod_{i \in\mathcal{I}(T)} \int_{\La}^{\La_{0}} \d\la_{i}\ f_{\lambda_{i}}(k_i;N-2,L+1,\theta_i)  \mathcal{P}_{2L+\frac{N}{2}}\left( \log_{+}\sup\left(\frac{|\vec{p}|}{\LIR},\frac{|\vec{p}|_\LIR}{\min_{j\in\mathcal{I}}\lambda_{j}},
 \frac{\max_{j\in\mathcal{I}}\lambda_{j}}{\LIR}\right)\right)\, .
\end{split}
\een
Here we picked up another factor of $2^{|w|}$ from the redefinition $v_{3}+v_{4}\to v_{2}$ 
. We also bounded the sums over $l_{1},l_{2}$ and $n_{1},n_{2}$ by $L+1$ and $N^{3}$, respectively, and absorbed these factors into the logarithmic polynomial. The bound \eqref{CAG23rdfin} is consistent with our induction hypothesis provided that the constant $K$ is chosen large enough that
\ben
K_{1}^{(2N+8L-4)(|w|+1)+([A]+[B])(N/2+2L)^3 }  4^{|w|+([A]+[B])(N+2L+1)} \leq K^{(2N+8L-3)(|w|+1)}K^{([A]+[B])(N/2+2L)^3 }\, .
\een
This establishes theorem \ref{thm3}.
\end{proof}


\noindent We can again establish a bound where the $\la_{i}$-integrals do not appear anymore:
\begin{cor}\label{cor3}
For any $N,L\in\mathbb{N}$ and  $\bar\eta(\vec{p})>0$ there exists $K>0$ such that
\ben\label{corbound3}
\begin{split}
&| {\cal L}^{0,\La_{0}}_{D,N,L}(\O_{A}(x)\otimes \O_{B}(0),\vec{p};\LIR)| \\
&\le   \sqrt{([A]+[B]-D)!\, ([A]+[B])!} \ 
K^{([A]+[B])(N/2+2L)^3} M^{D} \, 
\inf(\bar\eta(\vec{p}),M)^{-N}\, \sup(M,\frac{1}{|x|})^{[A]+[B]-D}\\
&\times \sup(1, \frac{|\vec{p}|}{M})^{([A]+[B])(N+2L+1)}\mathcal{P}_{2L+\frac{N}{2}}\left( \log_{+}\frac{|\vec{p}|_\LIR}{\inf(\bar\eta(\vec{p}) ,M)}\right) \, ,
\end{split}
\een
 where $\bar\eta$ is defined as in \eqref{etabdef}.
\end{cor}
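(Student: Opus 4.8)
The plan is to mirror the proofs of corollaries \ref{cor1} and \ref{cor2}: start from the bound of theorem \ref{thm3} specialised to $\Lambda = 0$, and eliminate the chain of $\lambda_i$-integrals in favour of a single inverse power of $\bar\eta(\vec{p})$. Note that the left-hand side of \eqref{corbound3} carries no momentum derivative, so we set $w=0$; then the sum $\sum_{v_1+v_2=w}$ collapses to the single term $v_1=v_2=0$, the trees reduce to $\tkn_{N,2L+2,0}(\vec{p})$, and the companion factor $\sup(|x|,\tfrac1M)^{|v_2|}$ equals $1$.

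First I would reorganise the product $\prod_{i\in\mathcal{I}(T)}\int_0^{\Lambda_0}\d\lambda_i\, f_{\lambda_i}(k_i;N-2,L+1,\theta_i)$ into the nested form \eqref{intprodorder}, summing over all orderings $\pi\in\mathfrak{S}(\mathcal{I})$. Applying lemma \ref{lemmalambdalarge} repeatedly bounds all but the innermost integral, each step leaving a power $\lambda_{\pi_1}^{-(\theta_{\pi_2}+\cdots+\theta_{\pi_{|\mathcal{I}|}})}$ and absorbing logarithmic factors into a larger polynomial of the same degree. For the final integral I would use the weight relation $\sum_i\theta_i = N-N_{\mathcal{V}}$ valid for $\tkn_{N,2L+2,0}$, the bound $\inf_r|k_r|\geq\bar\eta(\vec{p})$ from \eqref{etabound2} to control the Gaussian, and lemma \ref{lemmaLambdaint}, producing a factor $\sim\big(\sqrt{\alpha(N-2,L+1)}/\bar\eta(\vec{p})\big)^{N-N_{\mathcal{V}}}$ together with a polynomial of degree $2L+\tfrac{N}{2}$ in $\log_+\big(|\vec{p}|_{M}/\inf(\bar\eta,M)\big)$. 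The sum over $\pi$ is bounded using \eqref{internalboundtkn}.

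The decisive bookkeeping step is the recombination of the $M$-prefactors. Theorem \ref{thm3} carries $M^{D-N_{\mathcal{V}}}$, which I would merge with the tree weight $\bar\eta^{-(N-N_{\mathcal{V}})}$. Since $N-N_{\mathcal{V}}\geq 0$ one has $\bar\eta^{-(N-N_{\mathcal{V}})}\leq\inf(M,\bar\eta)^{-(N-N_{\mathcal{V}})}$, and then $\inf(M,\bar\eta)^{N_{\mathcal{V}}}\leq M^{N_{\mathcal{V}}}$ turns $M^{D-N_{\mathcal{V}}}\inf(M,\bar\eta)^{-(N-N_{\mathcal{V}})}$ into $M^{D}\inf(M,\bar\eta)^{-N}$, exactly the prefactor demanded by \eqref{corbound3}. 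The factor $\sup(M,\tfrac1{|x|})^{[A]+[B]-D}$ is untouched, and the $\mu$-sum is bounded by pulling out the appropriate power of $\sup(1,|\vec{p}|/M)$ appearing in \eqref{corbound3} (using $|\vec{p}_{\mathcal{V}}|\leq|\vec{p}|$ and absorbing the remaining polynomial in $\mu$ into $\mathcal{P}_{2L+\frac{N}{2}}$).

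Finally I would bound the cardinality of $\tkn_{N,2L+2,0}(\vec{p})$ via lemma \ref{lemtreebd2} (a purely $N,L$-dependent factor, absorbed into the polynomial and the constant), and estimate the factorials through $|w'|!\,|w''|!\leq[A]!\,[B]!\leq([A]+[B])!$, which together with $([A]+[B]-D)!$ yields the prefactor $\sqrt{([A]+[B]-D)!\,([A]+[B])!}$. All accumulated constants and $N,L$-dependent factors are then absorbed by choosing $K$ large enough (relative to the constant $K_0$ of theorem \ref{thm3}). The main obstacle is precisely the merging of the $M$-powers with $\bar\eta^{-(N-N_{\mathcal{V}})}$ in the presence of the special vertex $\mathcal{V}$: it is the exponent $N-N_{\mathcal{V}}$, not $N$, that the tree weights deliver, so the positivity $N-N_{\mathcal{V}}\geq 0$ must be invoked precisely to pass from $\bar\eta$ to $\inf(M,\bar\eta)$ while recovering the correct clean power $M^D\inf(\bar\eta,M)^{-N}$.
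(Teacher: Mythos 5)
Your proposal is correct and follows essentially the same route as the paper, whose entire proof of corollary \ref{cor3} is the remark that it is analogous to corollaries \ref{cor1} and \ref{cor2}; your write-up carries out precisely that analogy (specialising theorem \ref{thm3} to $\La=0$, $w=0$, reorganising the $\la_i$-integrals as in \eqref{intprodorder}, repeated use of lemma \ref{lemmalambdalarge}, final integral via lemma \ref{lemmaLambdaint} together with \eqref{etabound2} and $\sum_i\theta_i=N-N_{\mathcal{V}}$, tree counting via lemma \ref{lemtreebd2}, and $|w'|!\,|w''|!\leq[A]!\,[B]!\leq([A]+[B])!$), and it correctly isolates the one genuinely new bookkeeping point, namely recombining $M^{D-N_{\mathcal{V}}}$ with $\bar\eta(\vec{p})^{-(N-N_{\mathcal{V}})}$ using $N-N_{\mathcal{V}}\geq 0$ to produce $M^{D}\inf(\bar\eta(\vec{p}),M)^{-N}$. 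The only blemish -- inherited from the paper itself, since the $\mu$-sum in theorem \ref{thm3} runs up to $([A]+[B])(N+2L+3)$ -- is that this argument actually delivers the power $\sup(1,|\vec{p}|/M)^{([A]+[B])(N+2L+3)}$ rather than the exponent $([A]+[B])(N+2L+1)$ written in \eqref{corbound3}, a discrepancy your proof glosses over in the same way the paper does.
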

The proof is analogous to that of corollaries \ref{cor1} and \ref{cor2}. 
 One can also derive a bound on the smeared, connected Schwinger functions with two insertions:
\begin{cor}\label{cor3b}
For any $N,L\in\mathbb{N}$ there exists $K>0$ such that
\ben\label{CAGloopintfinal}
\begin{split}
&\left|\int_{p_{1},\ldots,p_{N}}  \L_{D,N,L}^{0,\infty}\left( \O_{A}(x)\otimes\O_{B}(0);\vec{p}\right)\, \prod_{i=1}^{N}  \frac{\hat{\test}_{i}(p_{i})}{p_i^2} \right| \\
&\leq  \sqrt{[A]![B]!}\ K^{[A]+[B] } \, \sup(M,\frac{1}{|x|})^{[A]+[B]-D} 
  \LIR^{D+N}  \sum_{\mu=0}^{([A]+[B]+2)(N+2L+3)}\hspace{-.5cm} \sum_{ \mu_1+\ldots+\mu_N=\mu   }  \frac{ \prod_{i=1}^{N} \|\hat{\test}_{i}\|_{\frac{\mu_{i}}{2}}} { M^{\mu}}   \, ,
\end{split}
\een
for arbitrary test functions $\test_i\in\mathcal{S}(\mathbb{R}^4)$.
\end{cor}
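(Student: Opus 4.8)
The plan is to deduce \eqref{CAGloopintfinal} from the tree-resolved bound of theorem~\ref{thm3}, specialised to $\La=0$, $\La_0=\infty$ and $w=0$ (so that $v_1=v_2=0$, the factor $\sup(|x|,1/M)^{|v_2|}$ is trivial, and $\LIR^{D-N_{\mathcal V}}=M^{D-N_{\mathcal V}}$). It is essential to start from this form rather than from the simpler corollary~\ref{cor3}: the latter carries the exceptional-momentum singularity $\inf(\bar\eta(\vec p),M)^{-N}$, which for $N\ge 4$ is not locally integrable on $\mathbb R^{4N}$, whereas the unintegrated product $\prod_{j\in\mathcal I(T)}\int_0^\infty\d\la_j\,f_{\la_j}(k_j)$ keeps the would-be singularities spread out along the tree and hence integrable against Schwartz data. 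After inserting the bound, \eqref{CAGloopintfinal} reduces, for each $T\in\tkn_{N,2L+2,0}(\vec p)$ and each $\mu\le([A]+[B])(N+2L+3)$, to estimating
\[
\int_{p_1,\ldots,p_N}\prod_{e=1}^N\frac{|\hat{\test}_e(p_e)|}{p_e^2}\ \frac{|\vec p_{\mathcal V}|^\mu}{M^\mu}\ \prod_{j\in\mathcal I(T)}\int_0^\infty\d\la_j\,f_{\la_j}(k_j;N-2,L+1,\theta_j)\ \mathcal P_{2L+\frac N2}\!\Big(\log_+\sup\big(\tfrac{|\vec p|}{M},\tfrac{|\vec p|_M}{\min_j\la_j},\tfrac{\max_j\la_j}{M}\big)\Big),
\]
with the scalar prefactor $\sup(M,1/|x|)^{[A]+[B]-D}$ carried along unchanged. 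The first step is to replace each $|\hat{\test}_e(p_e)|$ by $\|\hat{\test}_e\|_{n_e}(M^2+p_e^2)^{-n_e}$ [see \eqref{Schwnorms}], keeping the indices $n_e$ free, and to resolve the growth factor through $|\vec p_{\mathcal V}|\le c_N\sum_e|p_e|$ and the multinomial theorem, which spreads the power $\mu$ over the legs as $\prod_e|p_e|^{\mu_e}$ with $\sum_e\mu_e=\mu$ and generates the sum $\sum_{\mu_1+\cdots+\mu_N=\mu}$ of the claim.

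The core is the momentum integration, which I would perform \emph{before} the $\la_j$-integrals so that, at fixed $\la_j$, the Gaussians $e^{-k_j^2/(\alpha\la_j^2)}$ in $f_{\la_j}$ act as an ultraviolet cutoff on the internal momenta. Because $T$ is a tree, the internal momenta $\{k_j\}$ together with a complementary family of external momenta form a unimodular change of variables, and one can integrate outward from the leaves. An external leg attached to an ordinary vertex of coordination $2,3,4$ feeds its momentum into an adjacent internal line, hence into a Gaussian, so its $p_e$-integral converges in the ultraviolet with little help from the test function; an external leg attached directly to the special vertex $\mathcal V$ sits in no Gaussian, so its $p_e$-integral is controlled in the infrared by the four-dimensional integrability of $1/p_e^2$ and in the ultraviolet purely by the decay extracted from $\hat{\test}_e$, and it is these legs that force $n_e$ upward and thereby raise the final Schwartz degree. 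The remaining nested $\la_j$-integrals I would then organise and bound exactly as in the proof of corollary~\ref{cor3}, i.e.\ by rewriting $\prod_j\int_0^\infty\d\la_j$ as a sum over orderings and applying lemma~\ref{lemmalambdalarge} and lemma~\ref{lemmaLambdaint} to dispose of the logarithmic polynomials; since $\sum_j\theta_j=N-N_{\mathcal V}$ this produces a factor $M^{N-N_{\mathcal V}}$, so that $M^{D-N_{\mathcal V}}\cdot M^{N-N_{\mathcal V}}$ combines with the $M$-powers from the $\mathcal V$-legs to give the advertised $M^{D+N}$.

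Finally I would sum over trees using lemma~\ref{lemtreebd2} and $|\mathcal I(T)|\le N/2+L$ from \eqref{internalboundtkn}, bound the index sums by geometric factors as in \eqref{cwest2}, and absorb all purely $N,L$-dependent constants (including $c_N$) into $K$. The hard part will be the factorial and degree bookkeeping: one must show that the factor $1/\sqrt{\mu!}$ supplied by theorem~\ref{thm3}, taken together with the multinomial coefficients $\mu!/\prod_e\mu_e!$ and the factorials generated by the Gaussian leaf-moments and by lemma~\ref{lemmaLambdaint}, reorganises so as to leave precisely $\sqrt{[A]![B]!}$ (the sharper constant, improving on the $\sqrt{([A]+[B]-D)!\,([A]+[B])!}$ of corollary~\ref{cor3}) multiplying the clean product $\prod_e\|\hat{\test}_e\|_{\mu_e/2}$, and simultaneously that the extra powers of $(M^2+p_e^2)$ needed for absolute convergence of the $N$ four-dimensional integrals push the total degree $\sum_e\mu_e$ no higher than $([A]+[B]+2)(N+2L+3)$. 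Matching these two counts is what pins down both the prefactor and the summation range in \eqref{CAGloopintfinal}, and choosing $K$ large enough then completes the proof; the bounds of corollaries~\ref{cor1b} and~\ref{cor2b} follow by the same scheme, with the $|\vec p_{\mathcal V}|$-factor absent or carrying a single insertion.
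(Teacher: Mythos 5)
Your starting point (theorem~\ref{thm3} at $\La=0$, $\La_{0}=\infty$, $w=0$), your reason for avoiding corollary~\ref{cor3} (the factor $\inf(\bar\eta(\vec{p}),M)^{-N}$ is not locally integrable), and the multinomial spreading of $|\vec{p}_{\mathcal{V}}|^{\mu}$ over the legs all agree with the paper. But the core of your argument --- integrate all momenta first at fixed $\{\la_{j}\}$, then dispose of the $\la_{j}$-integrals ``exactly as in the proof of corollary~\ref{cor3}'' --- has a genuine gap. Lemmas~\ref{lemmalambdalarge} and \ref{lemmaLambdaint} cannot be invoked once the momenta are gone: lemma~\ref{lemmaLambdaint} produces the factor $(\sqrt{\alpha}/|k|)^{\theta}$ and needs $|k|>0$; it is precisely the Gaussian $e^{-k_{j}^{2}/\alpha\la_{j}^{2}}$ at fixed \emph{non-exceptional} momentum that renders the region $\la_{j}\to 0$ integrable in corollary~\ref{cor3}, and this mechanism is destroyed by integrating $\vec{p}$ over all of $\mathbb{R}^{4N}$, which includes exceptional configurations. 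Conversely, if you really perform the momentum integrals first and control the legs not attached to $\mathcal{V}$ by the Gaussians alone (``with little help from the test function''), each such integral costs a factor of order $\alpha\la_{j}^{2}$, and the subsequent scale integral $\int^{\infty}\d\la_{j}\,\la_{j}^{-\theta_{j}-1}\cdot\la_{j}^{2}$ diverges at large $\la_{j}$, since for $w=0$ the weights obey $\theta_{j}\le 2$. So neither pure Gaussian control nor pure Schwartz-norm control of a given momentum integral works for all $\la_{j}$; the estimate must switch between the two according to whether the relevant scale is small or large, simultaneously for all lines, and your two decoupled steps provide no mechanism for this.

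The missing idea is the one the paper's proof is built around: the Schwartz-norm decay is itself rewritten as a superposition of Gaussians,
\ben
|p|^{\mu-2}\,|\hat{\test}(p)| \;\le\; \|\hat{\test}\|_{\frac{s+\mu}{2}}\,\frac{2}{\Gamma(s/2)}\int_{0}^{\infty}\d\la\,\la^{-s-3}\,e^{-p^{2}/\la^{2}}\int_{0}^{1}\d\tau\,\tau^{s/2-1}\,e^{-\tau M^{2}/\la^{2}}\,,
\een
so that every \emph{external} line acquires a weight factor of the same form as the internal ones, with its own scale parameter and a freely adjustable index $s_{i}$. One then decomposes the product of \emph{all} scale integrals (internal and external) into sectors according to the ordering of the $\la$'s, and integrates the momenta one at a time, always against the Gaussian carrying the currently smallest scale via lemma~\ref{lablemma1}, absorbing the resulting positive powers of that scale by a cutting/reduction operation on the tree. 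Convergence of the very last scale integral is then supplied not by momentum non-exceptionality but by the factor $e^{-\tau M^{2}/\la^{2}}$, with the choice $s_{\pi_{N}}=N+N_{\mathcal{V}}+1$; this is also where the summation range $([A]+[B]+2)(N+2L+3)$ in \eqref{CAGloopintfinal} comes from, namely the extra indices $s_{1}+\ldots+s_{N}=N+n+1$ generated by the representation above --- a count your factorial-bookkeeping paragraph leaves open. It is this interleaving of momentum and scale integrations, ordered by sectors and accompanied by tree surgery, that your proposal replaces by two separated stages, and in that separated form the estimate fails.
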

See appendix \ref{appdistr} for the proof of this corollary.
\section{Convergence of the Operator Product Expansion}\label{sec:convergence}
With the bounds on Schwinger functions with operator insertions at our disposal, we are now in a position to prove convergence of the operator product expansion. We would like to insert the OPE into a correlation function with suitable \emph{spectator fields} and estimate the difference between the left- and right hand side of the expansion. The spectator fields play the role of a quantum state in the Euclidean context. In order to have spectator fields with sufficient regularity, 
we average the $i$-th spectator field against a test function, $\test_i \in {\mathscr S}(\mathbb{R}^4), i=1, ..., N$. We write $\varphi(\test):=\int\d^4 x\,  \varphi(x) \test(x)$.
We can then express the remainder of the OPE, truncated at dimension $D\in\mathbb{N}$, as follows:
\ben\label{OPER}
\begin{split}
& \Big| \Big\langle \O_{A}(x)\O_{B}(0)\, \varphi(\test_{1})\cdots\varphi(\test_{{N}}) \Big\rangle- \sum_{C:[C]\leq D}\C_{A B}^{C}(x)\ \Big\langle \O_{C}(0)\, \varphi(\test_{{1}})\cdots\varphi(\test_{{N}}) \Big\rangle  \Big|=\\
 &\sum_{j=1}^{N}\sum_{\substack{I_{1}\cup\ldots\cup I_{j}=\{1,\ldots, N\}\\ I_{i}\cap I_{j}=\emptyset \\ l_{1}+\ldots+l_{j}=L  }} \int_{\vec q}  \mathcal{R}^{\La,\Lao}_{D,|I_{1}|,l_{1}}(\O_{A}(x)\otimes\O_{B}(0); \vec{q}_{I_{1}}) \bar\L^{\La,\Lambda_{0}}_{|I_{2}|, l_{2}}(\vec{q}_{I_{2}})\cdots \bar\L^{\La,\Lambda_{0}}_{|I_{j}|, l_{j}}(\vec{q}_{I_{j}}) \prod_{i=1}^{N}\hat \test_i(q_i)\, C^{\La,\Lambda_{0}}(q_{i})\, ,
\end{split}
\een
where $\bar\L^{\La,\Lambda_{0}}_{N,L}$ are the moments of the generating functional $\bar{L}^{\La,\Lambda_{0}}(\varphi)=-L^{\La,\Lambda_{0}}(\varphi)+\frac{1}{2}\langle \varphi,\, (C^{\La,\Lambda_{0}})^{-1}\star\varphi \rangle$ without the momentum conservation delta functions taken out [recall \eqref{deltaCAG}], and where we defined the remainder functional (i.e. the functional generating $\mathcal{R}^{\La,\Lao}_{D,N,L}$)
\ben\label{Rdef}
\begin{split}
&R_{D}^{\Lambda,\Lambda_{0}}(\O_{A}(x)\otimes\O_{B}(0)):=\\
&L^{\La,\La_{0}}(\O_{A}(x))L^{\La,\La_{0}}(\O_{B}(0))-L^{\La,\Lambda_{0}}\left( \O_{A}(x)\otimes\O_{B}(0)\right)  -\sum_{C:[C]\leq D}\, \C_{AB}^{C}(x)\, L^{\Lambda,\Lambda_{0}}(\O_{C}(0))\, .
\end{split}
\een
Our definition of the OPE coefficients, def.~\ref{defOPE}, implies the following lemma:
\begin{lemma}\label{remainder}
The remainder functionals $R_{D}^{\Lambda,\Lambda_{0}}$ can be written as
\ben\label{RGD}
R_{D}^{\Lambda,\Lambda_{0}}(\O_{A}(x)\otimes\O_{B}(0))= (1-\sum_{j\leq\Delta}\T^{j}_{x})\left[ L^{\La,\La_{0}}(\O_{A}(x))L^{\La,\La_{0}}(\O_{B}(0)) -L_{D}^{\La,\Lambda_{0}}\left( \O_{A}(x)\otimes\O_{B}(0)\right) \right]
\een
with $\Delta=D-([A]+[B])$.
\end{lemma}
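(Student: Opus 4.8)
The plan is to prove \eqref{RGD} by rewriting the difference between the proposed form and the definition \eqref{Rdef} as a single identity, and then establishing that identity from the flow-equation characterisation of the CAS's. Write $G^{\La}:=L^{\La,\Lao}(\O_A(x))\,L^{\La,\Lao}(\O_B(0))$ for the product of the two one-insertion functionals, and recall that the unsubtracted two-insertion functional $L^{\La,\Lao}(\O_A(x)\otimes\O_B(0))$ appearing in \eqref{Rdef} is exactly the normal product $L^{\La,\Lao}_{-1}$ at subtraction degree $D=-1$ (see the remark following \eqref{BCunsub}). Subtracting \eqref{RGD} from \eqref{Rdef} and cancelling the common term $G^{\La}$, the claim becomes equivalent to
\ben
\big(L^{\La,\Lao}_{D}-L^{\La,\Lao}_{-1}\big)(\O_A(x)\otimes\O_B(0))+\sum_{j\le\Delta}\T^{j}_{x}\big[G^{\La}-L^{\La,\Lao}_{D}(\O_A(x)\otimes\O_B(0))\big]=\sum_{C:[C]\le D}\C_{AB}^{C}(x)\,L^{\La,\Lao}(\O_C(0))\, ,
\een
with $\Delta=D-([A]+[B])$. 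I would take this as the object to prove.

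The key observation is that, as functions of the flow parameter $\La$, both sides of this identity solve the \emph{same} linear single-insertion flow equation \eqref{FE1ins}. Indeed, $G^{\La}-L^{\La,\Lao}_{-1}$ obeys \eqref{FE1ins} because the bilinear ``source'' term in the two-insertion equation \eqref{FE2ins} is precisely the cross term generated by the Leibniz rule when differentiating the product $G^{\La}$; the difference $L^{\La,\Lao}_{D}-L^{\La,\Lao}_{-1}$ of two normal products with different subtraction degrees obeys \eqref{FE1ins} since that common source term cancels in the difference; each $L^{\La,\Lao}(\O_C(0))$ obeys \eqref{FE1ins} by construction; and the operators $\T^{j}_{x}$ and $\partial^{w}_{x}$ commute with the flow by the Lowenstein rules \eqref{Low1}, \eqref{Low2}. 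Because the coefficients $\C_{AB}^{C}(x)$ are $\La$-independent numbers, both sides are single-insertion solutions, so it suffices to match their boundary data: a solution of \eqref{FE1ins} is fixed by its relevant part (dimension $\le D$) at $\La=M$, $\vec p=0$ together with its irrelevant part at $\La=\Lao$. In the irrelevant sector both sides vanish at $\La=\Lao$, on the right by \eqref{BCL2} and on the left by \eqref{BCLins22} and \eqref{BCunsub} (the Taylor sum contributes only finitely many low-dimensional monomials in $x$).

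For the relevant sector one applies the extraction operator $\D^{C'}$ at $\La=M$, $\vec p=0$. On the right, the boundary condition \eqref{BCL1} gives $\D^{C'}L^{M,\Lao}(\O_C(0))=\delta_{C',C}$ for $[C']\le D$, so the right side reproduces $\C_{AB}^{C'}(x)$; it remains to verify that applying $\D^{C'}$ and the Taylor projection to the left side, after moving the $x$-derivatives onto $\O_A$ via \eqref{Low1}, \eqref{Low2}, reproduces exactly the definition \eqref{OPEhigh}. I expect this relevant-sector matching to be the main obstacle, the difficulty being that \eqref{OPEhigh} employs a subtraction degree $[C]-1$ and a Taylor truncation at order $\Delta_C=[C]-([A]+[B])$ that both depend on the individual operator $C$, whereas the identity above uses the \emph{uniform} degrees $D$ and $\Delta$.

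I would reconcile these by an induction on $D$. The base case $D=-1$ is immediate: then $\Delta<0$, the Taylor projector and the OPE sum are empty, and both \eqref{Rdef} and \eqref{RGD} collapse to $G^{\La}-L^{\La,\Lao}_{-1}$. For the step from $D-1$ to $D$ one has, from \eqref{Rdef}, $R_D-R_{D-1}=-\sum_{[C]=D}\C_{AB}^{C}(x)\,L^{\La,\Lao}(\O_C(0))$, which must be matched against the corresponding difference of the right-hand sides of \eqref{RGD}; a short computation shows that difference equals $-(1-\sum_{j\le\Delta-1}\T^{j}_{x})\big[L^{\La,\Lao}_{D}-L^{\La,\Lao}_{D-1}\big]-\T^{\Delta}_{x}\big[G^{\La}-L^{\La,\Lao}_{D}\big]$. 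One then identifies $L^{\La,\Lao}_{D}-L^{\La,\Lao}_{D-1}$ as a single-insertion functional whose relevant part lives at dimension exactly $D$, with $x$-dependent coefficients that are homogeneous of degree $\Delta$ by dimensional analysis, and checks via $\D^{C}$ that these coefficients, together with the single uncancelled Taylor term, combine to yield precisely the $\C_{AB}^{C}(x)$ with $[C]=D$ of \eqref{OPEhigh}. This closes the induction and hence proves the lemma.
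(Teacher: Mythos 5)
Your proof architecture is sound, and it is in fact the standard one (the paper itself gives no proof of this lemma, deferring to the massive case \cite{Hollands:2011gf}, where the same strategy is used): writing $G^{\La}$ for the product of the two one-insertion functionals as you do, reduce \eqref{RGD} to
\ben
\bigl(L^{\La,\Lao}_{D}-L^{\La,\Lao}_{-1}\bigr)+\sum_{j\le\Delta}\T^{j}_{x}\bigl[G^{\La}-L^{\La,\Lao}_{D}\bigr]
=\sum_{C:[C]\le D}\C_{AB}^{C}(x)\,L^{\La,\Lao}(\O_{C}(0))\, ,
\een
observe that both sides solve the linear one-insertion flow equation because all bilinear source terms cancel, and pin solutions down by their relevant data at $\La=M$, $\vec p=0$ together with their irrelevant data at $\La=\Lao$; your computation of the difference of the two sides of \eqref{RGD} between orders $D-1$ and $D$ is also correct. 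The genuine gap is in the relevant-sector matching, exactly where you anticipated trouble, and the mechanism you propose there is wrong. Writing $L^{\La,\Lao}_{D}-L^{\La,\Lao}_{D-1}=\sum_{[C]=D}\bigl(-\D^{C}L^{M,\Lao}_{D-1}\bigr)\,L^{\La,\Lao}(\O_{C})$, the coefficients $\D^{C}L^{M,\Lao}_{D-1}(\O_{A}(x)\otimes\O_{B}(0))$ are \emph{not} homogeneous of degree $\Delta$ in $x$: two scales $M$ and $\Lao$ are present, so dimensional analysis gives no homogeneity, and these functions are not even polynomial in $x$ (as $\Lao\to\infty$ they develop powers of $\log (M|x|)$ --- this is precisely why the $\C^{C}_{AB}$ of massless $\varphi^{4}_{4}$ carry logarithms). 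So the induction step cannot be closed by that argument.

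What the step actually requires, after applying $\D^{C'}$ (for $[C']\le D$) to your induction-step identity and using the boundary conditions \eqref{BCL1}, \eqref{BCL2ins1A}, are two statements about the single function $g^{C'}(x):=\D^{C'}\bigl[L^{M,\Lao}(\O_{A}(x))\,L^{M,\Lao}(\O_{B}(0))\bigr]$: first, $\T^{\Delta}_{x}g^{C'}=0$ for all $[C']<D$ --- without this, the term $\T^{\Delta}_{x}[G-L_{D}]$ contributes to the basis solutions $L(\O_{C'})$ with $[C']<D$, a contribution your sketch never rules out; and second, $g^{C'}=\sum_{j\le\Delta}\T^{j}_{x}\,g^{C'}$ for $[C']=D$, which is exactly what reconciles the uniform truncation at order $\Delta$ with the $C$-dependent truncation at $\Delta_{C}$ in \eqref{OPEhigh}. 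Both follow from one structural fact that your proposal is missing:
\ben
g^{C'}(x)\ \text{is a polynomial in } x \text{ of degree at most } [C']-[A]-[B]\, ,\quad\text{and vanishes if } [C']<[A]+[B]\, .
\een
This is where the boundary conditions enter: by translation covariance \eqref{CAGtrans} the moments of the product carry phases $\e^{ix\sum p_{i}}$, so the Leibniz rule produces monomials $x^{w_{a}}$ multiplied by $\partial^{w_{b}}_{\vec p}\L^{M,\Lao}_{N_{1},l_{1}}(\O_{A}(0);\vec 0)\cdot\partial^{w_{c}}_{\vec p}\L^{M,\Lao}_{N_{2},l_{2}}(\O_{B}(0);\vec 0)$ with $w_{a}+w_{b}+w_{c}=w$ and $N_{1}+N_{2}=N$; if $|w_{a}|>[C']-[A]-[B]$ then $(N_{1}+|w_{b}|)+(N_{2}+|w_{c}|)<[A]+[B]$, so at least one factor is annihilated by \eqref{BCL1}. (The analogous locality statement at $\La=\Lao$, coming from \eqref{BCL2}, is also what actually justifies your parenthetical remark about the irrelevant sector.) With this polynomial property supplied, your induction closes; without it, the key step is unsupported.
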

A proof of this lemma for the case of massive fields can be found in~\cite{Hollands:2011gf}. The proof within the massless theory studied in the present paper is analogous and will therefore be omitted here. Lemma \ref{remainder} combined with our bounds on the smeared connected Schwinger functions with up to two insertions, corollaries \ref{cor2b} and \ref{cor3b}, allows us to bound the smeared remainder functional:
\begin{lemma}\label{lemmaRsmeared}
For any $N,L\in\mathbb{N}$ there exists $K>0$ such that
\ben\label{Rsmearedbdeq}
\begin{split}
&\Bigg|\int_{\vec{p}}\mathcal{R}_{D,N,L}^{0,\infty}(\O_{A}(x)\otimes\O_{B}(0); \vec{p}) \prod_{i=1}^{N}\hat \test_i(p_i)\, \frac{1}{p_i^2} \Bigg| \\
&\leq \sqrt{\frac{[A]! [B]!}{\Delta!}} (K\, M)^{[A]+[B]}\, ( K\, M\, |x| )^{\Delta}  \
  \LIR^{N}\  \sum_{\mu=0}^{(D+2)(N+2L+3)} \sum_{ \mu_1+\ldots+\mu_N=\mu   }    \frac{\prod_{i=1}^{N}\|\hat{\test}_{i}\|_{\frac{\mu_{i}}{2}}} { M^{\mu}}\, , 
\end{split}
\een
with $\|\cdot\|_n$ as in \eqref{Schwnorms}.
\end{lemma}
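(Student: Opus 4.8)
The plan is to take as starting point the closed form \eqref{RGD} supplied by lemma~\ref{remainder}, which presents $R_D^{\La,\La_0}(\O_A(x)\otimes\O_B(0))$ as the Taylor remainder $(1-\sum_{j\leq\Delta}\T^j_x)$, with $\Delta=D-([A]+[B])$, of the difference $g(x):=L^{\La,\La_0}(\O_A(x))\,L^{\La,\La_0}(\O_B(0))-L_D^{\La,\La_0}(\O_A(x)\otimes\O_B(0))$. Passing to moments, smearing against $\prod_i\hat\test_i(p_i)/p_i^2$ and taking $\La\to0,\La_0\to\infty$ (where corollaries~\ref{cor2b} and \ref{cor3b} already live), it suffices to estimate the smeared Taylor remainder of $g$. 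Since the two summands of $g$ depend on $x$ in genuinely different ways, I would estimate them separately and only invoke their mutual cancellation where an individual estimate degenerates.

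For the product term I would use translation invariance \eqref{CAGtrans}: in the moment expansion of $L(\O_A(x))L(\O_B(0))$ each contribution depends on $x$ only through a phase $\e^{ixq}$, where $q$ is the total momentum entering the $\O_A$-insertion. Because $\T^j_x\e^{ixq}=(ix\cdot q)^j/j!$, the operator $(1-\sum_{j\leq\Delta}\T^j_x)$ reproduces exactly the tail of the exponential series, for which the elementary estimate $|\,\e^{iy}-\sum_{j\leq\Delta}(iy)^j/j!\,|\leq 2|y|^\Delta/\Delta!$ gives the factor $2(|x|\,|q|)^\Delta/\Delta!$. This is the step that produces the decay $|x|^\Delta$ together with the crucial $1/\sqrt{\Delta!}$: the $1/\Delta!$ combines with the factorials $\sqrt{[A]!},\sqrt{[B]!}$ coming from the two applications of corollary~\ref{cor2b}. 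The residual momentum power $|q|^\Delta=|\sum p_i|^\Delta$ is expanded and absorbed into the Schwartz norms $\|\hat\test_i\|_{\mu_i/2}$ through the $p_i^{-2}$-weighted smearing, which fixes the structure of the double sum over $\mu$ and $\mu_1+\dots+\mu_N=\mu$.

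For the two-insertion term I would instead use the Lowenstein rule \eqref{Low2}, $\partial_x^w L_D(\O_A(x)\otimes\O_B(0))=L_D(\partial_x^w\O_A(x)\otimes\O_B(0))$, in the integral form of Taylor's theorem with $|w|=\Delta+1$. Differentiating $\Delta+1$ times raises $\O_A$ to dimension $[A]+\Delta+1$, so that the oversubtracted normal product ($D=[A]+[B]+\Delta$) becomes a two-insertion function that is undersubtracted by exactly one unit and is therefore covered by corollary~\ref{cor3b}; the factor $\sup(M,1/|x|)^{[A]+[B]+\Delta+1-D}=\sup(M,1/|x|)$ there, multiplied by the Taylor weight $x^w$ with $|w|=\Delta+1$, again yields the power $|x|^\Delta$. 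The factorial bookkeeping then consists of collapsing the Taylor weights $1/w!$ summed via $\sum_{|w|=\Delta+1}\binom{\Delta+1}{w}=4^{\Delta+1}$, the dimensional factorial $\sqrt{([A]+\Delta+1)!}$ through $\binom{[A]+\Delta+1}{[A]}\leq 2^{[A]+\Delta+1}$, and the merging of the two $\mu$-summation ranges into the stated range $(D+2)(N+2L+3)$, all $\mathrm{const}^{\Delta}$ and $\mathrm{const}^{[A]+[B]}$ factors being absorbed into a single $K=K(N,L)$.

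The main obstacle is the behaviour at coincident points $x\to 0$. Once the derivative raises the operator dimension, the two-insertion bound \eqref{CAGloopintfinal} degenerates like $1/|x|$, and the integral remainder $\int_0^1(1-t)^\Delta(\cdots)(tx)\,\d t$ then meets a non-integrable $1/(t|x|)$ singularity at $t=0$; likewise the product term is itself singular as $x\to0$. Only the \emph{difference} $g$ is regular, because the oversubtracted normal product $L_D$ is constructed through the boundary conditions \eqref{BCLins21} precisely so as to cancel the short-distance singularity of the operator product to order $\Delta$. I would therefore organise the estimate so that the phase argument disposes of the product term exactly, while the intrinsic short-distance decay carried by $L_D$ supplies the regularity of the two-insertion term, never differentiating past the threshold where the $t$-integral ceases to converge. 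Verifying that the two contributions combine without a spurious singularity, and that all factorials collapse to exactly $\sqrt{[A]!\,[B]!/\Delta!}$ with the asserted $\mu$-range, is the delicate point of the argument.
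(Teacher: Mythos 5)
Your overall skeleton --- start from \eqref{RGD}, smear against $\prod_i\hat \test_i(p_i)/p_i^2$, pass to $\La\to 0,\La_0\to\infty$, and feed the pieces into corollaries \ref{cor2b} and \ref{cor3b} --- is the same as the paper's, and your phase-factor treatment of the product term $L(\O_A(x))L(\O_B(0))$ is a viable alternative for that piece. The genuine gap is in the two-insertion term. You commit to Taylor's theorem with $|w|=\Delta+1$ derivatives, which via the Lowenstein rule \eqref{Low2} produces $L_D(\partial^w\O_A(\tau x)\otimes\O_B(0))$ with operator dimension $[A]+\Delta+1$; this is undersubtracted by one unit, so corollary \ref{cor3b} carries the factor $\sup(M,1/|\tau x|)$, and, as you yourself then observe, the remainder integral meets a non-integrable $\int_0^1\tau^{-1}\,\d\tau$. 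The repair you offer (``never differentiating past the threshold'', invoking a cancellation between the two terms of $g$) is not a proof: you do not explain how to realize the operator $1-\sum_{j\le\Delta}\T^j_x$ using only $\Delta$ derivatives, and the cancellation you appeal to is in fact never used.

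The missing idea --- which is the heart of the paper's proof --- is to split $1-\sum_{j\le\Delta}\T^j_x=\bigl(1-\sum_{j\le\Delta-1}\T^j_x\bigr)-\T^\Delta_x$ and apply the integral form of the Taylor remainder of order $\Delta$, so that only derivatives $\partial^v$ with $|v|=\Delta$ ever act [this yields \eqref{Rtaylorfinal}]. Then $\partial^v\O_A$ has dimension exactly $[A]+\Delta$, i.e.\ $D=([A]+\Delta)+[B]$ is precisely the \emph{critical} subtraction degree: the exponent $([A]+\Delta)+[B]-D$ in corollary \ref{cor3b} vanishes, the factor $\sup(M,1/|\tau x|)$ is absent altogether, the bound is uniform in $\tau$, and $\frac{1}{(\Delta-1)!}\int_0^1(1-\tau)^{\Delta-1}\d\tau=\frac{1}{\Delta!}$ converges trivially. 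The prefactor $x^v$ with $|v|=\Delta$ supplies $|x|^\Delta$, and the bookkeeping $\sqrt{([A]+\Delta)!}\le\sqrt{2}^{\,[A]+\Delta}\sqrt{[A]!\,\Delta!}$ combined with the $1/\Delta!$ gives exactly the claimed $1/\sqrt{\Delta!}$, with all $\mathrm{const}^{\Delta}$ and $\mathrm{const}^{[A]+[B]}$ factors absorbed into $K$. In particular both bracketed terms in \eqref{Rtaylorfinal} are then bounded \emph{separately}; no mutual cancellation enters at this stage. This also shows that two of your premises are incorrect: the smeared product term is not singular as $x\to0$ (its $x$-dependence is a pure phase of modulus one, so corollary \ref{cor2b} bounds it uniformly), and it is not true that ``only the difference $g$ is regular'' --- after exactly $\Delta$ derivatives each piece is individually regular, whereas with $\Delta+1$ derivatives corollary \ref{cor3b} degenerates, and for the undifferentiated $L_D$ (where $D>[A]+[B]$) theorem \ref{thm3} and corollary \ref{cor3b} do not apply at all.
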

\begin{proof}
Writing $ (1-\sum_{j\leq\Delta}\T^{j}_{x})=  (1-\sum_{j\leq\Delta-1}\T^{j}_{x})- \T_{x}^{\Delta}$ and  using the integral formula
\ben
 (1-\sum_{j\leq\Delta-1}\T^{j}_{x}) f(x) = \sum_{|v|=\Delta} \frac{x^{v}}{(\Delta-1)!} \int_{0}^{1}\d\tau\, (1-\tau)^{\Delta-1} \partial^{v}f(\tau x)
\een
 for the remainder of the Taylor expansion, we can express the r.h.s. of \eqref{RGD} as
\ben\label{Rtaylorfinal}
\begin{split}
&R_{D}^{\La,\La_{0}}(\O_{A}(x)\otimes\O_{B}(0))\\
&= \sum_{|v|=\Delta}x^{v}\int_{0}^{1}\d\tau\, \frac{(1-\tau)^{\Delta-1}}{(\Delta-1)!}
\left[  L_{{D}}^{\La,\Lambda_{0}}\left(\partial^{v} \O_{A}(\tau x)\otimes\O_{B}(0)\right)- L^{\La,\La_{0}}(\partial^{v}\O_{A}(\tau x))L^{\La,\La_{0}}(\O_{B}(0))\right]\\
&+ \sum_{|v|=\Delta}\frac{x^{v}}{v!} \left[  L_{{D}}^{\La,\Lambda_{0}}\left(\partial^{v} \O_{A}(0)\otimes\O_{B}(0)\right)- L^{\La,\La_{0}}(\partial^{v}\O_{A}(0))L^{\La,\La_{0}}(\O_{B}(0))\right]\, .
\end{split}
\een
Here we have also made use of the Lowenstein rules \eqref{Low1} and \eqref{Low2} in order to pull the derivatives into the CAS's. Substituting this formula on the l.h.s. of \eqref{Rsmearedbdeq} one can use our bounds on the smeared Schwinger functions with up to two insertions, corollaries \ref{cor2b} and \ref{cor3b}, to verify the lemma.
\end{proof}
Combining this lemma with our bounds on the smeared connected Schwinger functions without insertion, corollary \ref{cor1b}, we finally arrive at the first main result of this paper:
\setcounter{thm}{0}

\newpage

\begin{thm}\label{propschw}
Let $\test_i\in\mathcal{S}(\mathbb{R}^{4}), i=1, ..., N$. For any $N,L\in\mathbb{N}$ there exists $K>0$ such that remainder of the operator product expansion, carried out up to operators of dimension $D=[A]+[B]+\Delta$, at $L$ loops, is bounded by
\ben
\begin{split}
&\Big| \Big\langle \O_{A}(x)\O_{B}(0)\, \varphi(\test_1)\cdots\varphi(\test_N) \Big\rangle - \sum_{C:[C] \leq D}\C_{A B}^{C}(x) \Big\langle \O_{C}(0)\, \varphi(\test_1)\cdots\varphi(\test_N) \Big\rangle  \Big|_{L-\text{loops}} \\
 &\leq  \sqrt{{[A]! [B]!}} \ (KM)^{D}\   \frac{|x|^{\Delta}}{\sqrt{\Delta!}}  \
  \LIR^{N}  \sum_{\mu=0}^{(D+2)(N+2L+3)} \sum_{ \mu_1+\ldots+\mu_N=\mu   }    \frac{\prod_{i=1}^{N}\|\hat{\test}_{i}\|_{\frac{\mu_{i}}{2}}} { M^{\mu}}   
  \, .
\end{split} 
\label{opeschwartz}
\een
\end{thm}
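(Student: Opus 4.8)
The plan is to feed the two quantitative estimates already at our disposal---the bound on the smeared remainder functional (lemma~\ref{lemmaRsmeared}) and the bound on the smeared connected amputated Schwinger functions without insertion (corollary~\ref{cor1b})---into the exact combinatorial decomposition \eqref{OPER} of the OPE remainder, and then to reorganise the resulting sum into the form claimed in \eqref{opeschwartz}. First I would take \eqref{OPER} as the starting point: it writes the left-hand side of \eqref{opeschwartz} as a sum over partitions $I_{1}\cup\cdots\cup I_{j}=\{1,\dots,N\}$ and loop distributions $l_{1}+\cdots+l_{j}=L$, in which the two insertions $\O_A(x)\otimes\O_B(0)$ sit in the single component indexed by $I_1$ (producing a factor $\mathcal{R}^{\La,\Lao}_{D,|I_1|,l_1}$) while the remaining components contribute pure factors $\bar\L^{\La,\Lao}_{|I_k|,l_k}$, everything being contracted with $\prod_i\hat\test_i(q_i)\,C^{\La,\Lao}(q_i)$. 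Since the uniform bounds of the previous sections guarantee the existence of the limit $\La\to0$, $\La_0\to\infty$ in which $C^{\La,\Lao}(q_i)\to1/q_i^2$, I would pass to this limit so that each factor becomes one of the two smeared objects already estimated.

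Next I would bound each factor separately: the $I_1$-factor by lemma~\ref{lemmaRsmeared} (with $N\to|I_1|$, $L\to l_1$) and each remaining factor, $k\ge2$, by corollary~\ref{cor1b} (with $N\to|I_k|$, $L\to l_k$). Multiplying these, the prefactors assemble cleanly: the powers combine to $\prod_k\LIR^{|I_k|}=\LIR^{N}$ because $\sum_k|I_k|=N$; the insertion-dependent factors give $(K\LIR)^{[A]+[B]}(K\LIR\,|x|)^{\Delta}=(K\LIR)^{D}|x|^{\Delta}$; and the factorial weight is exactly $\sqrt{[A]!\,[B]!}/\sqrt{\Delta!}$. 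The per-component constants from corollary~\ref{cor1b}, together with the (finite, $N$-dependent) number of partitions and of loop distributions, are absorbed into a single enlarged $K$.

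The decisive step is the merging of the Schwartz-norm sums. Each factor carries its own sum $\sum_{\mu^{(k)}}\sum_{\sum_{i\in I_k}\mu_i=\mu^{(k)}}$ over products $\prod_{i\in I_k}\|\hat\test_i\|_{\mu_i/2}/\LIR^{\mu^{(k)}}$, and taking the product over $k$ collapses these into the single sum $\sum_{\mu}\sum_{\mu_1+\cdots+\mu_N=\mu}\prod_{i}\|\hat\test_i\|_{\mu_i/2}/\LIR^{\mu}$ with $\mu=\sum_i\mu_i$ that appears in \eqref{opeschwartz}. I would then check that the combined upper limit of the summation is admissible, i.e.
\[
(D+2)(|I_1|+2l_1+3)+\sum_{k\ge2}\bigl(|I_k|+2\bigr)\ \le\ (D+2)(N+2L+3).
\]
Writing the difference of the two sides as $(N-|I_1|)(D+1)+2(D+2)(L-l_1)-2(j-1)$ and using $N-|I_1|=\sum_{k\ge2}|I_k|\ge j-1$ together with $D\ge[A]+[B]\ge2$, this difference is bounded below by $(j-1)(D-1)\ge0$. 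Hence the common exponent $(D+2)(N+2L+3)$ dominates every partition's contribution; bounding each partition's sum by the full sum up to this limit and absorbing the number of partitions into $K$ then yields \eqref{opeschwartz}.

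I expect the main obstacle to be precisely this combinatorial bookkeeping: one must verify that the largest power of $|\vec p|/\LIR$ produced across all partitions still fits under the single exponent $(D+2)(N+2L+3)$, which is exactly the displayed inequality and hinges on the truncation order $D$ being at least $[A]+[B]$. The remaining manipulations---passing to the limit, multiplying the prefactors, and absorbing finite combinatorial factors and per-component constants into $K$---are routine once lemma~\ref{lemmaRsmeared} and corollary~\ref{cor1b} are in hand.
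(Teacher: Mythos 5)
Your proposal is correct and follows essentially the same route as the paper: theorem \ref{propschw} is obtained there precisely by inserting lemma \ref{lemmaRsmeared} and corollary \ref{cor1b} into the partition decomposition \eqref{OPER} and absorbing the finite combinatorics into $K$. The bookkeeping you spell out explicitly --- merging the per-component Schwartz-norm sums and checking $(D+2)(|I_1|+2l_1+3)+\sum_{k\ge2}(|I_k|+2)\le(D+2)(N+2L+3)$ via $D\ge[A]+[B]\ge 2$ --- is exactly the detail the paper leaves implicit, and your verification of it is sound.
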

%
\noindent  To derive our second main result, theorem \ref{OPEbound1}, let us now consider test functions which are localised in momentum space away from exceptional configurations and which have bounded total momentum. 
Thus, we introduce as a condition on the collection of test functions, $\{\test_i(x)\}$, that
\ben\label{condition}
\prod_{i=1}^N \hat \test_i(p_i) \neq 0 \quad \Leftrightarrow \quad \bar\eta(\vec p) \ge \epsilon \ \ \ \ 
\text{and} \ \ \ \ |\vec p| \le P \ . 
\een
Here $P$ characterises, broadly speaking, the maximum momentum admitted and $\epsilon$ the minimal distance to the set of exceptional momenta admitted in the support of $\hat{\test}_{1}(p_{1})\ldots \hat{\test}_{N}(p_{N})$.  Under these conditions, we can simply bound \eqref{OPER} by taking the supremum of the integrand for all $\vec{p}$ in the support of the test functions, and bounding the momentum integrals simply by $P^{4N}$. Combining lemma \ref{remainder} with corollaries \ref{cor1},\ref{cor2} and \ref{cor3} to bound the CAS's with up to two insertions, one thereby immediately arrives at
\begin{thm}\label{OPEbound}
Suppose the smearing functions $F_i, i=1, ..., N$, satisfy \eqref{condition}.
Then the remainder of the operator product expansion, carried out up to operators of dimension $D=[A]+[B]+\Delta$, at $L$ loops, is bounded by
\ben
\begin{split}
&\Big| \Big\langle \O_{A}(x)\O_{B}(0)\, \varphi(\test_1)\cdots\varphi(\test_N) \Big\rangle- \sum_{C:[C] \leq D}\C_{A B}^{C}(x) \Big\langle \O_{C}(0)\, \varphi(\test_1)\cdots\varphi(\test_N) \Big\rangle  \Big|_{L-\text{loops}} \\
 &\leq P^{N} \sqrt{{[A]! [B]!}} \left(K\, M\,  \sup(1,\frac{P}{M})^{(N+2L+1)}\right)^{[A]+[B]}\, \left(\frac{P}{\inf(M, \epsilon)}\right)^{3N} \prod_{i}\sup|\hat{\test}_i| \\
&\times \frac{1}{\sqrt{\Delta!}} \left(K\, M\,|x|\,   \sup(1,\frac{P}{M})^{(N+2L+1)}\right)^{\Delta}\mathcal{P}_{2L+\frac{N}{2}}\left( \log_{+}\frac{P}{\inf(M,\epsilon)}\right) \, ,
\end{split} 
\label{ope3}
\een
where ${K}$ is a constant depending on $N,L$, and $\mathcal{P}_n$ is a polynomial of degree $n$
whose coefficients depend on $N,L$.
\end{thm}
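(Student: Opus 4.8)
The plan is to proceed exactly as the paragraph preceding \eqref{ope3} anticipates: start from the closed expression \eqref{OPER} for the OPE remainder and estimate its integrand \emph{pointwise} in $\vec q$, exploiting that the support condition \eqref{condition} keeps every momentum subset uniformly away from exceptional configurations, so that the $\Lambda=0,\Lambda_0\to\infty$ corollaries \ref{cor1}, \ref{cor2} and \ref{cor3} apply directly. First I would use Lemma \ref{remainder} together with the integral form of the Taylor remainder already written in \eqref{Rtaylorfinal} to rewrite the moments $\mathcal{R}^{0,\infty}_{D,|I_1|,l_1}(\O_A(x)\otimes\O_B(0);\vec q_{I_1})$ as $\sum_{|v|=\Delta}x^v(\cdots)$, where the bracket contains the subtracted two-insertion function $\L_D^{0,\infty}(\partial^v\O_A\otimes\O_B)$ and the product $\L^{0,\infty}(\partial^v\O_A)\,\L^{0,\infty}(\O_B)$ of one-insertion functions (recall from \eqref{Rdef} that these are precisely the two pieces of $R_D$). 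This is the step that produces the prefactor $|x|^\Delta$; since $\partial^v\O_A$ has dimension $[A]+\Delta$, the total dimension feeding into all subsequent bounds is $D=[A]+[B]+\Delta$, and the multinomial identity $\sum_{|v|=\Delta}|x^v|/v!=(\sum_\mu|x_\mu|)^\Delta/\Delta!$ supplies the combinatorial factor $1/\Delta!$.

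Next I would bound the two building blocks pointwise. For the insertion cluster $I_1$ I apply Corollary \ref{cor3} (with $\O_A$ replaced by $\partial^v\O_A$) to the two-insertion term and Corollary \ref{cor2} to the product of one-insertion terms; for each spectator cluster $I_2,\dots,I_j$ I apply Corollary \ref{cor1}. Each corollary is legitimate precisely because \eqref{condition} forces $\bar\eta(\vec q_I)\ge\epsilon>0$ on every relevant subset, i.e.\ all occurring configurations are non-exceptional. On the support I then substitute $\inf(\bar\eta,M)^{-1}\le\inf(\epsilon,M)^{-1}$ and $|\vec q|\le P$, use that $\bar\eta$ already controls singleton subsets to estimate each external propagator by $1/q_i^2\le\inf(M,\epsilon)^{-2}$, and pull the test functions out through $|\hat\test_i(q_i)|\le\sup|\hat\test_i|$. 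Since the logarithms are monotone, every $\log_+(|\vec q|_M/\inf(\bar\eta,M))$ is replaced by $\log_+(P/\inf(M,\epsilon))$, yielding the single polynomial $\mathcal{P}_{2L+N/2}$ of the claimed degree.

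With the integrand bounded by a constant (in $\vec q$) times these factors, I would estimate the remaining momentum integrals by the volume of the support, contained in $\{\,|q_i|\le P\,\}$ and hence at most $\sim P^{4N}$. The power counting recombines cleanly: the worst case is the trivial partition $j=1$, where Corollary \ref{cor3} (with $w=0$) contributes $\inf(\bar\eta,M)^{-N}$ and the $N$ external propagators contribute $\inf(M,\epsilon)^{-2N}$, so that $P^{4N}\cdot\inf(M,\epsilon)^{-3N}=P^N\,(P/\inf(M,\epsilon))^{3N}$; for $j\ge2$ the CAS singularity is milder (Corollary \ref{cor1} gives $\eta^{-(|I_k|-4)}$ for the $N\ge4$ clusters and a positive power for two-point clusters), so all partitions are dominated by the same factor. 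Likewise $\sup(1,|\vec q|/M)^{D(N+2L+1)}\le\sup(1,P/M)^{D(N+2L+1)}$ splits into the announced $[A]+[B]$- and $\Delta$-powers of $KM\sup(1,P/M)^{N+2L+1}$, with $|x|^\Delta$ attached to the latter. The main obstacle I anticipate is the factorial and combinatorial bookkeeping: the factor $\sqrt{D!}$ from Corollary \ref{cor3} (equivalently $\sqrt{([A]+\Delta)!\,[B]!}$) must be combined with the Taylor factor $1/\Delta!$, via $D!\le[A]!\,[B]!\,\Delta!\,3^{D}$, to produce exactly $\sqrt{[A]!\,[B]!}/\sqrt{\Delta!}$, the binomial factors being absorbed into $(KM)^D$; and one must verify that the partition sum $\sum_{I_1\cup\dots\cup I_j}$, the sum $\sum_{l_1+\dots+l_j=L}$, and all $N,L$-dependent constants generated by the corollaries can be absorbed into the single constant $K$ and the coefficients of $\mathcal{P}_{2L+N/2}$ without spoiling the clean $\Delta$-dependence $|x|^\Delta/\sqrt{\Delta!}\cdot\sup(1,P/M)^{\Delta(N+2L+1)}$ that drives convergence.
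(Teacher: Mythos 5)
Your proposal is correct and takes essentially the same route as the paper's (very terse) proof: rewrite the remainder via Lemma~\ref{remainder} and the Taylor-remainder form \eqref{Rtaylorfinal}, bound the integrand of \eqref{OPER} pointwise on the support permitted by \eqref{condition} using corollaries \ref{cor1}, \ref{cor2} and \ref{cor3}, and estimate the momentum integration by the support volume $P^{4N}$. One simplification you missed: under \eqref{condition} every partition with $j\geq 2$ in \eqref{OPER} vanishes identically, since the spectator factors $\bar\L_{|I_k|,l_k}$ carry momentum-conservation delta functions $\delta^{4}(\sum_{i\in I_k}q_i)$ whose support is disjoint from the region $\bar\eta(\vec q)\geq\epsilon$, so your corollary-\ref{cor1} domination argument for those terms (which is in any case delicate, because an integrand containing delta functions cannot be bounded pointwise) is unnecessary --- only the $j=1$ term survives.
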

\vspace{1cm}
{\bf Acknowledgements:} The work of S.H. and J.H. was supported by ERC starting grant QC\& C 259562. S.H. is grateful to Ecole Polytechnique for its hospitality and for providing financial support related to his visits. We thank Riccardo Guida for making available to us, and allowing us to use, results from the unpublished manuscript~\cite{GK}, as well as for perceptive comments.
\appendix
\section{Useful formulas}
In this appendix we collect a number of explicit bounds which are used in our inductive proofs.  The first one is an inequality which allows us to bound the ``loop-integrals'' in the induction:
\begin{lemma}
\label{lablemma1}
Let $N,N',L,d\in\mathbb{N}$ and $\la_{1},\ldots,\la_{N'},\la\in\mathbb{R}_{+}$. For  $0\leq N' \le \frac{N}{2}+L$, $ L\ge 1$,  $N \ge 0$, for $\alpha$ as in eq.\eqref{alphadef} and for $\la_{i}\geq \la$,  we have
\ben
\begin{split}
&\int_{\ell} \ e^{-\ell^2/\la^{2} } \prod_{i=1}^{N'} e^{-\frac{(k_i+\ell)^2}{ \al(N+2,L-1) \la_{i}^{2} }}\
 \log^{n}_{+}\sup\left(\frac{|(\vec{p},\ell)|_\LIR}{\min_{j}\lambda_{j}},
 \frac{\max_{j}\lambda_{j}}{\LIR}\right)\cdot \sum_{\mu=0}^{d}\frac{1}{\sqrt{\mu!}} \left(\frac{|(\vec{p},\ell)|}{M}\right)^{\mu} \\
 & \le\ 4\pi^{2} \la^{4} \,\prod_{i=1}^{N'}
 e^{-\frac{k_i^2} { \al(N,L)\la_{i}^{2}}} \left[  \log^{n}\sup\left(\frac{|\vec{p}|_\LIR}{\min_{j}\lambda_{j}},
 \frac{\max_{j}\lambda_{j}}{\LIR}\right) + \sqrt{n!} \right]\, (N'+1)^{(n+4)/2}\\
 &\quad\times 2^{d}\sup\left(1,\frac{ \la}{M}\right)^{d}\,  \sum_{\mu=0}^{d}\frac{1}{\sqrt{\mu!}} \left(\frac{|\vec{p}|}{M}\right)^{\mu}\, .
\end{split}
\een
\end{lemma}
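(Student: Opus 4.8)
The plan is to perform the four-dimensional Gaussian integral over the loop momentum $\ell$ in three stages: first extract the external weight factors $\prod_{i}e^{-k_i^2/(\alpha(N,L)\lambda_i^2)}$ by a completing-the-square argument, then bound the $\ell$-dependence of the logarithmic and polynomial weights pointwise against a fixed fraction of the surviving Gaussian, and finally integrate the clean Gaussian to produce the prefactor $4\pi^2\lambda^4$.

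For the first stage, set $a:=\alpha(N+2,L-1)$ and $b:=\alpha(N,L)$. Minimising $\frac{(k+\ell)^2}{a}-\frac{k^2}{b}$ over $k\in\mathbb{R}^4$ gives the elementary inequality $\frac{(k_i+\ell)^2}{a}\ge\frac{k_i^2}{b}-\frac{\ell^2}{b-a}$, whence $e^{-(k_i+\ell)^2/(a\lambda_i^2)}\le e^{-k_i^2/(b\lambda_i^2)}\,e^{\ell^2/((b-a)\lambda_i^2)}$. Using \eqref{alphadef} one computes $b-a=2(N+4L-1)$, and since $\lambda_i\ge\lambda$ together with $N'\le\frac N2+L$ one has $\sum_{i=1}^{N'}\frac{1}{(b-a)\lambda_i^2}\le\frac{N'}{2(N+4L-1)\lambda^2}\le\frac{1}{4\lambda^2}$. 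Thus the product of shift residuals is absorbed into $e^{-\ell^2/\lambda^2}$, pulling $\prod_i e^{-k_i^2/(\alpha(N,L)\lambda_i^2)}$ outside the integral and leaving at least $e^{-\ell^2/(2\lambda^2)}$, whose integral over $\mathbb{R}^4$ is exactly $(\sqrt{2\pi})^4\lambda^4=4\pi^2\lambda^4$, matching the claimed prefactor.

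For the second stage, since $\ell$ occupies one slot of the tuple $(\vec{p},\ell)$, the triangle inequality gives $|(\vec{p},\ell)|\le|\vec{p}|+|\ell|$ and hence $|(\vec{p},\ell)|_M\le|\vec{p}|_M+|\ell|$. Using $\min_j\lambda_j\ge\lambda$ I would bound the logarithm by $\log_+\sup(\ldots)(\vec{p},\ell)\le\log_+\sup(\ldots)(\vec{p})+\log_+(|\ell|/\lambda)+\log 2$ and the polynomial summand by $\frac{1}{\sqrt{\mu!}}\big(\tfrac{|\vec{p}|+|\ell|}{M}\big)^\mu\le 2^{\mu/2}\sum_{k}\frac{1}{\sqrt{k!}}\big(\tfrac{|\vec{p}|}{M}\big)^k\frac{1}{\sqrt{(\mu-k)!}}\big(\tfrac{|\ell|}{M}\big)^{\mu-k}$, the last step resting on $\sqrt{\mu!}\le 2^{\mu/2}\sqrt{k!}\sqrt{(\mu-k)!}$. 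The $|\ell|$-dependent pieces are then controlled uniformly in $\ell$ against a fixed fraction $\epsilon$ of the remaining Gaussian budget: elementary calculus and Stirling's estimate give $\sup_\ell \log^m_+(|\ell|/\lambda)\,e^{-\epsilon\ell^2/\lambda^2}\le C^m\sqrt{m!}$ and $\sup_\ell\frac{(|\ell|/M)^m}{\sqrt{m!}}\,e^{-\epsilon\ell^2/\lambda^2}\le C^m\sup(1,\lambda/M)^m$. Reducing the resulting cross-terms via $t^k\le 1+t^n$ for $k\le n$ produces the summand $\sqrt{n!}$ alongside $\log^n_+\sup(\ldots)(\vec{p})$, while the powers of $2$ and of $\sup(1,\lambda/M)$ assemble into the factors $2^d$ and $\sup(1,\lambda/M)^d$; the number-of-lines factor $(N'+1)^{(n+4)/2}$ emerges from this same bookkeeping once the $N'$ weights and the spent budget are accounted for.

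I expect the main obstacle to be this second stage: the logarithmic and polynomial weights both depend on $\ell$ through the single quantity $|(\vec{p},\ell)|$ and must be estimated simultaneously, so that the clean separation of the $\vec{p}$- and $\ell$-dependent parts, and the precise tracking of the factorials and of the constants $\sqrt{n!}$, $(N'+1)^{(n+4)/2}$, $2^d$ and $\sup(1,\lambda/M)^d$, reproduce exactly the stated inequality rather than a weaker one. The remaining assembly — combining the extracted external Gaussians, the weights evaluated at $\ell=0$, the collected constants, and the clean integration yielding $4\pi^2\lambda^4$ — is then routine.
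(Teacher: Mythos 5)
Your first stage is correct, and it is essentially the paper's own treatment of the exponential prefactors: your inequality $\frac{(k_i+\ell)^2}{a}\ge\frac{k_i^2}{b}-\frac{\ell^2}{b-a}$ with $a=\alpha(N+2,L-1)$, $b=\alpha(N,L)$, together with $\lambda_i\ge\lambda$, $N'\le\frac N2+L$ and $L\ge1$, is equivalent to the paper's minimisation argument \eqref{lemmaexpbound}--\eqref{alphaineq}. The genuine gap is your second stage, and it is not just unfinished bookkeeping. (a) Your cross-term reduction $t^k\le1+t^n$ fails qualitatively: applied to $\sum_k\binom nk\log_+^k(\cdot)\,C^{n-k}\sqrt{(n-k)!}$ it gives $\bigl(1+\log_+^n(\cdot)\bigr)\sum_k\binom nkC^{n-k}\sqrt{(n-k)!}$, and the $k=0$ term alone makes the coefficient of $\log_+^n$ at least $C^n\sqrt{n!}$; so you end up with $\sqrt{n!}$ \emph{multiplying} the logarithm instead of being \emph{added} to it, a factorially weaker statement than the lemma. (b) Independently, your Gaussian budget cannot close. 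A pointwise bound $\sup_t t^m e^{-\epsilon t^2}\le(2\epsilon)^{-m/2}\sqrt{m!}$ costs a factor $(2\epsilon)^{-m/2}$, so avoiding an exponential loss in the logarithmic weights requires a share $\epsilon\ge\frac12$ of $e^{-\ell^2/\lambda^2}$; your polynomial estimate needs at least $\frac14$ to produce $2^d$ rather than $C^d$ with $C>2$; stage one already spent $\frac14$; and your plan still needs a strictly positive remainder for the final integration producing $4\pi^2\lambda^4$. That adds up to strictly more than the total budget of $1$. The right-hand side of the lemma leaves no room to hide the resulting $C^n$: its only $n$-dependent constant is $(N'+1)^{(n+4)/2}$, which equals $1$ when $N'=0$ — a case that actually occurs (trees in $\tkn$ with $\mathcal I=\emptyset$). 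Relatedly, your claim that $(N'+1)^{(n+4)/2}$ ``emerges from the bookkeeping'' is unsupported: in your scheme the second stage is completely independent of $N'$, so nothing can generate that factor.

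The paper evades both problems by never taking a supremum of the logarithmic factor. It splits $e^{-\ell^2/\lambda^2}$ into $N'+1$ shares $e^{-\ell^2/(2(N'+1)\lambda^2)}$ (one per weight factor, one for the logarithm) plus $e^{-\ell^2/(2\lambda^2)}$ for the polynomial factor; the polynomial factor is bounded pointwise by \eqref{polestimate} (inequality (77) of \cite{Hollands:2011gf}), each weight factor pointwise by \eqref{lemmaexpbound}, but the logarithm is kept \emph{under the integral} with its share. There the substitution $\ell\to\sqrt{N'+1}\,\ell$ is precisely what produces $(N'+1)^{(n+4)/2}$, and the cited integral inequality \eqref{hoelderlog} (inequality (52) of \cite{Hollands:2011gf}) yields the additive structure $\log_+^n+\sqrt{n!}$ without exponential loss — the reason being that Gaussian moments of $\log^m$ are far smaller than $\sqrt{m!}$, and that the splitting $\log(S+T)\le\log S+T/S$ for $S\ge1$ suppresses the cross terms by $1/S$, which a pointwise estimate combined with the additive splitting $\log_+(\cdot)\le\log_+(\cdot)|_{\ell=0}+\log_+(|\ell|/\lambda)+\log 2$ cannot do. To repair your proof you would have to adopt the same mechanism: allocate $e^{-\ell^2/(2(N'+1)\lambda^2)}$ to the logarithm and integrate it, rather than estimate it by a supremum.
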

\begin{proof}
To begin with, we rewrite the integral in the form
\ben\label{ldistribute}
\begin{split}
&\int_{\ell} \ \prod_{i=1}^{N'} \left(e^{-\frac{(k_i+\ell)^2}{ \al(N+2,L-1) \la_{i}^{2} }}  e^{-\frac{\ell^2}{2(N'+1)\la^{2}} } \right)
 \left( \log^{n}_{+}\sup\left(\frac{|(\vec{p},\ell)|_\LIR}{\min_{j}\lambda_{j}},
 \frac{\max_{j}\lambda_{j}}{\LIR}\right)   e^{-\frac{\ell^2}{2(N'+1)\la^{2}} } \right) \\
 &\qquad\times \sum_{\mu=0}^{d}\frac{1}{\sqrt{\mu!}} \left(\frac{|(\vec{p},\ell)|}{M}\right)^{\mu}  e^{-\ell^2/2\la^{2} }\,.
\end{split}
\een
We use $x^{n}e^{-x^{2}}\leq \sqrt{n!}$ to bound the polynomial factor in the second line (cf. (77) in~\cite{Hollands:2011gf})
\ben\label{polestimate}
e^{-\ell^{2}/2\la^{2}}\sum_{\mu=0}^{d}\frac{1}{\sqrt{\mu!}} \left(\frac{|(\vec{p},\ell)|}{M}\right)^{\mu}\leq 2^{d}\sup\left(1,\frac{\la}{M}\right)^{d}\sum_{\mu=0}^{d}\frac{1}{\sqrt{\mu!}} \left(\frac{|\vec{p}|}{M}\right)^{\mu}\, .
\een
For the exponential factors, we apply the bound
\ben \label{lemmaexpbound}
\exp\left(-\frac{\ell^2}{2(N'+1)\la^{2}} \right)\cdot \exp\left(-\frac{(k_i+\ell)^2}{ \al(N+2,L-1) \la_{i}^{2} } \right) \leq \exp\left(-\frac{k_i^2}{ \al(N,L) \la_{i}^{2} }\right)\, ,
\een
which can be verified as follows:  Define $\xi>0$ such that $|\ell|=\xi |k_{i}|$. Then
\ben
\exp\left(-\frac{\ell^2}{2(N'+1)\la^{2}} \right)\cdot \exp\left(-\frac{(k_i+\ell)^2}{ \al(N+2,L-1) \la_{i}^{2} } \right) \leq \exp\left(-\frac{k_{i}^{2}}{\la_{i}^{2}} \left[ \frac{\xi^{2}}{2(N'+1)}+ \frac{(1-\xi)^{2}}{\alpha(N+2,L-1)} \right]  \right)
\een
The expression in square brackets is minimal for $\xi=\frac{2(N'+1)}{2(N'+1)+\alpha(N+2,L-1)}$, where it takes the value $\frac{1}{2(N'+1)+\alpha(N+2,,L-1)}$. Thus,
\ben\label{lemexpmin}
\exp\left(-\frac{\ell^2}{2(N'+1)\la^{2}} \right)\cdot \exp\left(-\frac{(k_i+\ell)^2}{ \al(N+2,L-1) \la_{i}^{2} } \right) \leq \exp\left(-\frac{k_{i}^{2}}{\la_{i}^{2}} \left[\frac{1}{2(N'+1)+\alpha(N+2,,L-1)} \right]  \right)\, .
\een
The bound \eqref{lemmaexpbound} then follows by noting that
\ben
\frac{1}{2(N'+1)+\alpha(N+2,,L-1)} \geq \frac{1}{\alpha(N,L)}\, ,
\een
which in turn follows from (recall that $L\geq 1$ by assumption)
\ben\label{alphaineq}
\al(N,L) = 2\left(\frac{N}{2}+2L\right)^2 =\al(N+2,L-1) +4\left(\frac{N}{2}+2L\right)-2  \ge \al(N+2,L-1)+2N'+4 \ .
\een%
It remains to find a bound for the integral over the logarithm in \eqref{ldistribute}. We use the inequality 
\ben\label{hoelderlog}
\begin{split}
 &\int_{{\ell}}e^{-\frac{{\ell}^2}{2(N'+1)\la^{2}} }  \log^{n}_{+}\sup\left(\frac{|(\vec{p},\ell)|_\LIR}{\min_{j}\lambda_{j}},
 \frac{\max_{j}\lambda_{j}}{\LIR}\right)\\
 &= (N'+1)^{2} \int_{{\ell}}e^{-\frac{{\ell}^2}{2\la^{2}} }  \log^{n}_{+}\sup\left(\frac{|(\vec{p},\ell\cdot \sqrt{N'+1})|_\LIR}{\min_{j}\lambda_{j}},
 \frac{\max_{j}\lambda_{j}}{\LIR}\right) \\
& \leq (N'+1)^{(n+4)/2} \int_{{\ell}}e^{-\frac{{\ell}^2}{2\la^{2}}}  \log^{n}_{+}\sup\left(\frac{|(\vec{p},\ell)|_\LIR}{\min_{j}\lambda_{j}},
 \frac{\max_{j}\lambda_{j}}{\LIR}\right) \\
 & \leq  \la^{4}\,(N'+1)^{(n+4)/2}\,
  \left[ \log^{n}_{+}\sup\left(\frac{|\vec{p}|_\LIR}{\min_{j}\lambda_{j}},
 \frac{\max_{j}\lambda_{j}}{\LIR}\right)+ \sqrt{n!} \right]\, ,
 \end{split}
\een
which follows from an inequality in~\cite[cf. (52)]{Hollands:2011gf}. Inserting the bounds \eqref{polestimate}, \eqref{lemmaexpbound} and \eqref{hoelderlog} into \eqref{ldistribute} finishes the proof of the lemma.
\end{proof}
The next three lemmas help us to bound the ``flow integrals'' in the induction:
\begin{lemma}\label{lemmalambdalarge}
Let $n\in\mathbb{N}$, $1\leq s\in\mathbb{N}$  and $a,b,M,\La_{0}\in\mathbb{R}_{+}$ with $0<a<\La_{0}$. Then the following bound holds:
\ben\label{theta1ineq3}
\int_{a}^{\La_{0}}\d\la\, \log^{n}_{+}\left[\sup\left(b, \frac{\lambda}{\LIR}\right)\right]\, \la^{-1-s} \leq   \frac{1}{a^{s}}\sum_{j=0}^{n}\frac{ 3\, n!\, \log^{j}_{+}\left[\sup\left(b, \frac{a}{M} \right)\right]}{j!}\, .
\een
\end{lemma}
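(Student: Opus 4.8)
The plan is to reduce the power $n$ of the logarithm by a single integration by parts, which produces a recursion in $n$ that I can then solve in closed form. Write
\ben
I_n:=\int_a^{\La_0}\d\la\,\log_+^n\!\Big[\sup\Big(b,\tfrac{\la}{M}\Big)\Big]\,\la^{-1-s},\qquad c:=\log_+\!\Big[\sup\Big(b,\tfrac{a}{M}\Big)\Big],
\een
for the quantity to be bounded. The key elementary observation is that the logarithmic factor $\la\mapsto\log_+[\sup(b,\la/M)]=\log\sup(1,b,\la/M)$ is continuous and piecewise smooth: it is constant ($=\log\sup(1,b)$) for $\la\le M\sup(1,b)$ and equals $\log(\la/M)$ for $\la\ge M\sup(1,b)$. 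Hence it is absolutely continuous, with almost-everywhere derivative $\tfrac{1}{\la}\,\mathbf{1}[\la>M\sup(1,b)]\le\tfrac1\la$, so that integration by parts is legitimate and the pointwise estimate $\tfrac{\d}{\d\la}\log_+^n[\sup(b,\la/M)]\le\tfrac{n}{\la}\log_+^{n-1}[\sup(b,\la/M)]$ holds a.e.

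First I would integrate by parts with $u=\log_+^n[\sup(b,\la/M)]$ and $v=-\tfrac1s\la^{-s}$ (so $v'=\la^{-1-s}$). The boundary term at $\la=\La_0$ is nonpositive; the boundary term at $\la=a$ contributes $\tfrac{1}{s\,a^s}c^{\,n}$. In the remaining integral $-\int_a^{\La_0}u'\,v\,\d\la$ both $-v=\tfrac1s\la^{-s}$ and the density $u'$ are nonnegative, so the pointwise bound on $u'$ above dominates it by $\tfrac{n}{s}I_{n-1}$. This yields the recursion
\ben
I_n\;\le\;\frac{c^{\,n}}{s\,a^s}+\frac{n}{s}\,I_{n-1},\qquad I_0=\frac1s\big(a^{-s}-\La_0^{-s}\big)\le\frac{1}{s\,a^s}.
\een

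Next I would solve the recursion. Setting $J_n:=s\,a^s I_n$ turns it into $J_n\le c^{\,n}+n\,J_{n-1}$ with $J_0\le1$, which unrolls to $J_n\le\sum_{k=0}^{n-1}\tfrac{n!}{(n-k)!}c^{\,n-k}+n!\,J_0$. Reindexing by $j=n-k$ and absorbing the final term $n!\,J_0\le n!=\tfrac{n!}{0!}c^{\,0}$ into the sum gives $J_n\le\sum_{j=0}^n\tfrac{n!}{j!}c^{\,j}$, i.e.
\ben
I_n\;\le\;\frac{1}{s\,a^s}\sum_{j=0}^n\frac{n!}{j!}\,\log_+^{\,j}\!\Big[\sup\Big(b,\tfrac{a}{M}\Big)\Big].
\een
Since $s\ge1$ forces $\tfrac1s\le1\le3$, this is already stronger than the asserted bound \eqref{theta1ineq3}, which therefore follows.

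The only step requiring genuine care, as opposed to routine calculation, is the non-smoothness of $\log_+\sup$ at $\la=M\sup(1,b)$; I expect this to be the main (and rather mild) obstacle, and it is dispatched as above by invoking absolute continuity and the bound $1/\la$ on the a.e. derivative. The rest is a standard integration-by-parts recursion, and the generous constant $3$ in \eqref{theta1ineq3} leaves ample slack, so no care about the sharpness of constants is needed.
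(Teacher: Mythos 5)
Your proof is correct, and it takes a genuinely different route from the paper's. Your integration-by-parts recursion $I_n\le \frac{c^n}{s\,a^s}+\frac{n}{s}I_{n-1}$ is valid: the factor $\log_+[\sup(b,\lambda/M)]$ is indeed absolutely continuous with a.e.\ derivative bounded by $1/\lambda$, both boundary terms are handled correctly, and the closed-form solution $I_n\le\frac{1}{s\,a^s}\sum_{j=0}^n\frac{n!}{j!}c^j$ is even sharper than the stated bound (constant $1/s$ instead of $3$). The paper, by contrast, never differentiates the non-smooth factor: it first discards the supremum via $\log_+^n[\sup(b,\lambda/M)]\le\log_+^n b+\log_+^n(\lambda/M)$, bounds the $\log_+^n b$ piece trivially by $a^{-s}\log_+^n(b)/s$, and for the pure-log piece splits the domain at $\sup(a,M)$ --- below which $\log_+(\lambda/M)$ contributes at most $1/(s\,a^s)$, above which it is a genuine logarithm --- and then evaluates $\int\log^n(\lambda/M)\,\lambda^{-1-s}\,\d\lambda$ explicitly (equivalently, by repeated integration by parts on the smooth region), yielding the sum $\sum_{j=0}^n\frac{n!}{j!}\,\log_+^j(a/M)\,s^{-(n-j+1)}$; dominating each of the three resulting contributions by a term of the final sum is what costs the factor $3$. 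In short, your argument is more unified (a single induction on $n$, no case splitting) and gives a better constant, at the price of the measure-theoretic justification for differentiating $\log_+\sup$; the paper's argument avoids any such care --- it only integrates smooth functions on intervals where they are smooth --- but pays with the wasteful initial decomposition and the slack constant. Both ultimately rest on the same elementary antiderivative of $\log^n(\lambda)\,\lambda^{-1-s}$.
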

\begin{proof}
We decompose the integral:
\ben
\int_{a}^{\La_{0}} \d\la\ \log^{n}_{+}\left[\sup\left(b, \frac{\lambda}{\LIR}\right)\right]\, \la^{-1-s} \leq  \int_{a}^{\La_{0}} \d\la\ \left(  \log^{n}_{+}b  + \log^{n}_{+}\frac{\lambda}{\LIR}     \right)\la^{-1-s}
\een
Combining the trivial inequality
\ben
 \int_{a}^{\La_{0}} \d\la\ \log^{n}_{+}(b)\ \la^{-s-1}\leq a^{-s} \frac{\log^{n}_{+}b}{s}
\een
with the bound
\ben
\begin{split}
 \int_{a}^{\La_{0}} \d\la\ 
\log^{n}_{+}\left(\frac{\lambda}{\LIR}\right) \la^{-1-s} &=\int_{a}^{\sup(a,M)} \la^{-1-s} +  \int_{\sup(a,M)}^{\La_{0}} \d\la\ 
\log^{n}\left(\frac{\lambda}{\LIR}\right)\la^{-1-s}\\
&\leq \frac{1}{a^{s}\cdot s} + \frac{1}{a^{s}}\sum_{j=0}^{n}\frac{  n!\, \log^{j}_{+}\left( \frac{a}{M} \right)}{j!\cdot s^{n-j+1}}\, ,
\end{split}
\een
we verify the lemma (recall that $s\geq 1$).
\end{proof}
\begin{lemma}\label{theta1lem}
Using the notation of section \ref{sec:bounds}, in particular \eqref{fdef}, have the bound
\ben\label{theta1b}
\begin{split}
&\int_{\La}^{\La_{0}} \d\la_{a}\ f_{\lambda_{a}}(k_a;N,L,\theta_a=1) 
\mathcal{P}_{L-1}\left( \log_{+}\sup\left(\frac{|\vec{p}|_\LIR}{\min_{j\in\mathcal{I}}\la_{j}},
 \frac{\max_{j\in\mathcal{I}}\lambda_{j}}{\LIR}\right)\right)\cdot \min_{j\in\mathcal{I}}\la_{j}^{2}\\
 &\leq \mathcal{P}'_{L-1}\left( \log_{+}\sup\left(\frac{|\vec{p}|_\LIR}{\min_{j\in\mathcal{I}\setminus \{a\}}\la_{j}},
 \frac{\max_{j\in\mathcal{I}\setminus \{a\}}\la_{j}}{\LIR}\right)\right)\cdot \min_{j\in\mathcal{I}\setminus \{a\}}\lambda_{j}\, ,
\end{split}
\een
where $\mathcal{P}'_{L-1}$ is a polynomial of the same degree as $\mathcal{P}_{L-1}$
with coefficients depending on $N,\ L\,$.
\end{lemma}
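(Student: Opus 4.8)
The plan is to begin by inserting the explicit form \eqref{fdef} of the weight function, so that the integrand carries the factor $e^{-k_a^2/(\alpha(N,L)\lambda_a^2)}\,\lambda_a^{-2}$, and then to split the $\lambda_a$-integral according to whether $\lambda_a\le\Lambda_{\min}$ or $\lambda_a>\Lambda_{\min}$, where $\Lambda_{\min}:=\min_{j\in\mathcal{I}\setminus\{a\}}\lambda_j$. The point of this splitting is that it resolves the nested minimum $\min_{j\in\mathcal{I}}\lambda_j=\min(\lambda_a,\Lambda_{\min})$ that enters through the factor $\min_{j\in\mathcal{I}}\lambda_j^2$, reducing the claim in each region to an elementary one-variable estimate. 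Throughout I would bound the Gaussian factor trivially by $e^{-k_a^2/(\alpha\lambda_a^2)}\le 1$.

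In the region $\lambda_a\le\Lambda_{\min}$ one has $\min_{j\in\mathcal{I}}\lambda_j=\lambda_a$ and $\max_{j\in\mathcal{I}}\lambda_j=\max_{j\in\mathcal{I}\setminus\{a\}}\lambda_j$, so the factor $\min_{j\in\mathcal{I}}\lambda_j^2$ cancels the $\lambda_a^{-2}$ coming from $f$ exactly, leaving only the logarithmic polynomial $\mathcal{P}_{L-1}(\log_{+}\sup(|\vec{p}|_\LIR/\lambda_a,\ \max_{j\in\mathcal{I}\setminus\{a\}}\lambda_j/\LIR))$ to be integrated over $\lambda_a\in(\La,\Lambda_{\min})$. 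I would estimate this using the elementary identity $\int_0^b \log_{+}^n(c/\lambda)\,d\lambda\le b\sum_{k\le n}\binom{n}{k}k!\,\log_{+}^{n-k}(c/b)$, obtained by the substitution $\lambda=b\,e^{-t}$, after first writing $\log_{+}\sup\le\log_{+}(|\vec{p}|_\LIR/\lambda_a)+\log_{+}(\max_{j\in\mathcal{I}\setminus\{a\}}\lambda_j/\LIR)$ and expanding binomially in the constant-in-$\lambda_a$ part. This already produces a bound of the desired shape: $\Lambda_{\min}$ times a polynomial of degree $\le L-1$ in $\log_{+}\sup(|\vec{p}|_\LIR/\Lambda_{\min},\ \max_{j\in\mathcal{I}\setminus\{a\}}\lambda_j/\LIR)$.

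In the complementary region $\lambda_a>\Lambda_{\min}$ one has $\min_{j\in\mathcal{I}}\lambda_j=\Lambda_{\min}$, so the factor $\min_{j\in\mathcal{I}}\lambda_j^2=\Lambda_{\min}^2$ pulls out of the integral, while $\max_{j\in\mathcal{I}}\lambda_j=\max(\lambda_a,\max_{j\in\mathcal{I}\setminus\{a\}}\lambda_j)$. The key observation is that the logarithm argument can then be rewritten as $\log_{+}\sup(b',\lambda_a/\LIR)$ with the $\lambda_a$-independent constant $b':=\sup(|\vec{p}|_\LIR/\Lambda_{\min},\ \max_{j\in\mathcal{I}\setminus\{a\}}\lambda_j/\LIR)$, which is exactly the form handled by lemma \ref{lemmalambdalarge}. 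Applying that lemma with $a=\Lambda_{\min}$, $s=1$ and $b=b'$ bounds the remaining integral $\int_{\Lambda_{\min}}^{\La_{0}}\lambda_a^{-2}\,\mathcal{P}_{L-1}(\log_{+}\sup(b',\lambda_a/\LIR))\,d\lambda_a$ by $\Lambda_{\min}^{-1}$ times a polynomial of the same degree in $\log_{+}b'$; multiplying back by $\Lambda_{\min}^2$ again yields $\Lambda_{\min}$ times such a polynomial. Since $\Lambda_{\min}=\min_{j\in\mathcal{I}\setminus\{a\}}\lambda_j$, the two regions combine to the asserted bound, with the polynomial degree never exceeding $L-1$.

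The two one-dimensional integral estimates are routine — the first elementary, the second a direct application of lemma \ref{lemmalambdalarge} — so the only genuinely delicate bookkeeping, and the step I expect to require the most care, is tracking how $\min_{j\in\mathcal{I}}\lambda_j$ and $\max_{j\in\mathcal{I}}\lambda_j$ degenerate into their counterparts over $\mathcal{I}\setminus\{a\}$ in each region, and verifying that after integration the surviving logarithm is precisely $\log_{+}\sup(|\vec{p}|_\LIR/\min_{j\in\mathcal{I}\setminus\{a\}}\lambda_j,\ \max_{j\in\mathcal{I}\setminus\{a\}}\lambda_j/\LIR)$, with coefficients depending only on $N$ and $L$.
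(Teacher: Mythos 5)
Your proposal is correct and follows essentially the same route as the paper's own proof: the paper likewise splits the $\lambda_a$-integral at $\min_{j\in\mathcal{I}\setminus\{a\}}\lambda_j$, cancels $\lambda_a^{-2}$ against $\min_{j\in\mathcal{I}}\lambda_j^2$ and integrates the logarithms in the small-$\lambda_a$ region [its inequality \eqref{smalllabound}, which is your elementary identity], and pulls out $\min_{j\in\mathcal{I}\setminus\{a\}}\lambda_j^2$ and applies lemma \ref{lemmalambdalarge} with $s=1$ in the large-$\lambda_a$ region. The only deviation is cosmetic (your substitution $\lambda=b\,e^{-t}$ versus the paper's direct evaluation of the log-integral), so no further comparison is needed.
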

\begin{proof}
The bound \eqref{theta1b}  follows from the decomposition 
\ben
\begin{split}
&\int_{\La}^{\La_{0}} \d\la_{a}\ f_{\lambda_{a}}(k_a;N,L,\theta_a=1) 
\mathcal{P}_{L-1}\left( \log_{+}\sup\left(\frac{|\vec{p}|_\LIR}{\min_{j\in\mathcal{I}}\lambda_{j}},
 \frac{\max_{j\in\mathcal{I}}\lambda_{j}}{\LIR}\right)\right)\cdot \min_{j\in\mathcal{I}}\la_{j}^{2}\\
 &=\int_{\La}^{\min_{j\in\mathcal{I}\setminus \{a\}}\la_{j}} \d\la_{a}\ f_{\lambda_{a}}(k_a;N,L,\theta_a=1) 
\mathcal{P}'_{L-1}\left( \log_{+}\sup\left(\frac{|\vec{p}|_\LIR}{\la_{a}},
 \frac{\max_{j\in\mathcal{I}}\lambda_{j}}{\LIR}\right)\right)\cdot \la_{a}^{2}\\
 &+\int\limits_{\min\limits_{j\in\mathcal{I}\setminus \{a\}}\la_{j}}^{\La_{0}} \d\la_{a}\ f_{\lambda_{a}}(k_a;N,L,\theta_a=1) 
\mathcal{P}''_{L-1}\left( \log_{+}\sup\left(\frac{|\vec{p}|_\LIR}{\min\limits_{j\in\mathcal{I}\setminus \{a\}}\lambda_{j}},
 \frac{\max\limits_{j\in\mathcal{I}}\lambda_{j}}{\LIR}\right)\right) \min_{j\in\mathcal{I}\setminus \{a\}}\la_{j}^{2}
\end{split}
\een
combined with the inequalities
\ben\label{theta1ineq1}
\begin{split}
&\int_{\La}^{\min_{j\in\mathcal{I}\setminus \{a\}}\la_{j}} \d\la_{a}\e^{-\frac{k_{a}^{2}}{\alpha\la_{a}^{2}}} \mathcal{P}_{L-1}\left( \log_{+}\sup\left(\frac{|\vec{p}|_\LIR}{\la_{a}},
 \frac{\max_{j\in\mathcal{I}}\lambda_{j}}{\LIR}\right)\right) \\
 &\leq \int_{0}^{\min_{j\in\mathcal{I}\setminus \{a\}}\la_{j}} \d\la_{a}\mathcal{P}_{L-1}\left( \log_{+}\sup\left(\frac{|\vec{p}|_\LIR}{\la_{a}},
 \frac{\max_{j\in\mathcal{I}\setminus \{a\}}\lambda_{j}}{\LIR}\right)\right)  \\
&\leq  \min_{j\in\mathcal{I}\setminus \{a\}}\lambda_{j} \cdot\mathcal{P}'_{L-1}\left(\log_{+}\sup\left(\frac{|\vec{p}|_{M}}{\min_{j\in\mathcal{I}\setminus \{a\}}\la_{j}}, \frac{\max_{j\in\mathcal{I}\setminus \{a\}}\lambda_{j}}{\LIR}\right)\right)
\end{split}
\een
and
\ben\label{theta1ineq2}
\begin{split}
&\int_{\min_{j\in\mathcal{I}\setminus \{a\}}\la_{j}}^{\La_{0}} \d\la_{a}\ \mathcal{P}_{L-1}\left( \log_{+}\sup\left(\frac{|\vec{p}|_\LIR}{\min\limits_{j\in\mathcal{I}\setminus \{a\}}\lambda_{j}},
 \frac{\max\limits_{j\in\mathcal{I}}\lambda_{j}}{\LIR}\right)\right)\la_{a}^{-2}\\
&= \int_{\min_{j\in\mathcal{I}\setminus \{a\}}\la_{j}}^{\La_{0}} \d\la_{a}\ \mathcal{P}_{L-1}\left( \log_{+}\sup\left(\frac{|\vec{p}|_\LIR}{\min\limits_{j\in\mathcal{I}\setminus \{a\}}\lambda_{j}},
 \frac{\max\limits_{j\in\mathcal{I}\setminus \{a\}}\lambda_{j}}{\LIR}, \frac{\lambda_{a}}{\LIR}\right)\right)\la_{a}^{-2}\\
&\leq \min_{j\in\mathcal{I}\setminus \{a\}}\lambda_{j}^{-1}   \mathcal{P}'_{L-1}\left( \log_{+}\sup\left(\frac{|\vec{p}|_\LIR}{\min\limits_{j\in\mathcal{I}\setminus \{a\}}\lambda_{j}},
 \frac{\max\limits_{j\in\mathcal{I}\setminus \{a\}}\lambda_{j}}{\LIR}\right)\right) \, .
\end{split}
\een
The last inequality in \eqref{theta1ineq1} follows from the relation (note that $N$ and $L$ dependent factors were again absorbed into the new polynomials $\mathcal{P}'_{L-1}$)
\ben\label{smalllabound}
\begin{split}
&\int_{0}^{a}\d\la \log^{n}\sup\left(\frac{|\vec{p}|_{M}}{\la}, b \right) \leq \int_{0}^{a}\d\la \left[\log^{n}\sup\left(\frac{|\vec{p}|_{M}}{\la}\right) +\log^{n}(b) \right]  \\
 &=  a \sum_{m=0}^{n} \left( n \atop m \right) \log^{n-m}\left(\frac{|\vec{p}|_{M}}{a}\right)\, m! + a\cdot \log^{n}(b) \leq  2a \sum_{m=0}^{n} \left( n \atop m \right) \log^{n-m}\sup\left(\frac{|\vec{p}|_{M}}{a}, b\right)\, m! \ \, ,
\end{split}
\een
and \eqref{theta1ineq2} follows from lemma \ref{lemmalambdalarge}. 
\end{proof}
\begin{lemma}\label{lemmaLambdaint}
For any $n,\alpha \in\mathbb{N}$, $1\leq \theta\in\mathbb{N}, 0<M<\La_{0}\in\mathbb{R}$ and for $|k|\leq |\vec{p}|$, the following bound holds:
\ben
\int_{0}^{\La_{0}}\d\la\, \frac{\e^{-\frac{k^{2}}{\alpha \la^{2}}}}{\la^{\theta+1}}\,  \log^{n}\sup\left( \frac{|\vec{p}|_{M}}{\la}, \frac{\la}{M} \right) 
\leq \sqrt{\theta!}\,\left(\frac{\sqrt{\alpha}}{|k|}\right)^{\theta}\, 
\sum_{m=0}^{n} \frac{3\, n!}{(n-m)!} \log^{n-m}\left(\frac{|\vec{p}|_{M}}{\inf(|k|,M)}\right)\, (\frac{\alpha}{2})^{m}\ .
\een
\end{lemma}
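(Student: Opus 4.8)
The plan is to exploit the exponential factor $\e^{-k^{2}/(\alpha\la^{2})}$, which suppresses precisely the small-$\la$ region where $\la^{-\theta-1}$ would diverge, and to extract the prefactor $(\sqrt{\alpha}/|k|)^{\theta}$ together with a Gamma function $\Ga(\theta/2)\le\sqrt{\theta!}$ by a single substitution. First I would record that $|\vec{p}|_{M}\ge M$, so the argument of the supremum inside the logarithm is always $\ge 1$; hence $\log=\log_{+}\ge 0$, and the two candidates $|\vec{p}|_{M}/\la$ and $\la/M$ coincide at the crossover $\la_{\ast}=\sqrt{M|\vec{p}|_{M}}$. I split the $\la$-integration at $\la_{\ast}$: on $(0,\la_{\ast}]$ the supremum equals $|\vec{p}|_{M}/\la$, while on $[\la_{\ast},\La_{0})$ it equals $\la/M$.

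For the first piece I substitute $u=k^{2}/(\alpha\la^{2})$, under which $\d\la\,\la^{-\theta-1}=\tfrac{1}{2}(\sqrt{\alpha}/|k|)^{\theta}u^{\theta/2-1}\,\d u$ and the range becomes a subinterval of $(0,\infty)$. Writing $s:=\inf(|k|,M)$ and decomposing
\[
\log\frac{|\vec{p}|_{M}}{\la}=\log\frac{|\vec{p}|_{M}}{s}+\log\frac{s}{\la},
\]
with $\log(|\vec{p}|_{M}/s)\ge 0$ the ``large'' logarithm and $\log(s/\la)=\log(s\sqrt{\alpha}/|k|)+\tfrac12\log u\le\tfrac12\log\alpha+\tfrac12\log_{+}u$ (using $s\le|k|$), I binomially expand the $n$-th power. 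The binomial coefficient times $m!$ produces exactly $\binom{n}{m}m!=n!/(n-m)!$, while the residual moments $\int_{0}^{\infty}\e^{-u}u^{\theta/2-1}(\tfrac12\log\alpha+\tfrac12\log_{+}u)^{m}\,\d u$ are bounded by $\Ga(\theta/2)$ times a factor of the form $(\alpha/2)^{m}$, using $\tfrac12\log\alpha\le\alpha/2$ and an elementary estimate for $\int\e^{-u}u^{\theta/2-1}(\log_{+}u)^{m}\,\d u$. Bounding $\Ga(\theta/2)\le\sqrt{\theta!}$ then reproduces the claimed right-hand side for this piece, with $\ell:=\log(|\vec{p}|_{M}/\inf(|k|,M))$ entering through $\log(|\vec{p}|_{M}/s)$.

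For the second piece ($\la\ge\la_{\ast}$) I bound $\e^{-k^{2}/(\alpha\la^{2})}\le 1$ when $|k|\le M$ and apply Lemma \ref{lemmalambdalarge} with $a=\la_{\ast}$, $s=\theta$ to control $\int_{\la_{\ast}}^{\La_{0}}\la^{-\theta-1}\log_{+}^{n}(\la/M)\,\d\la$; here the prefactor $\la_{\ast}^{-\theta}$ is dominated by $(\sqrt{\alpha}/|k|)^{\theta}$ because $|k|^{2}\le M|\vec{p}|_{M}$ in this regime (as $|k|\le M\le|\vec{p}|_{M}$) and $\alpha\ge 1$. When $|k|>M$ the exponential must be retained, and the same substitution, now restricted to $u\le u_{\ast}=k^{2}/(\alpha\la_{\ast}^{2})$, again yields $(\sqrt{\alpha}/|k|)^{\theta}\Ga(\theta/2)$ with logarithms controlled by $\ell$. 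Finally I collect the two regions; the crude numerical constants (the $\tfrac12$ from the substitution, the bound $\Ga(\theta/2)\le\sqrt{\theta!}$, and the merging of the regions) combine into the overall factor $3$.

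The main obstacle I anticipate is the second paragraph: controlling the logarithmic moments $\int_{0}^{\infty}\e^{-u}u^{\theta/2-1}(\log_{+}u)^{m}\,\d u$ uniformly in $\theta$ and reorganising them into the precise shape $\tfrac{n!}{(n-m)!}(\alpha/2)^{m}$. A naive estimate such as $\log_{+}u\le u$ spoils the $\theta$-uniformity (it turns the moment into $\Ga(\theta/2+m)$, which grows polynomially in $\theta$), so a more careful bound—splitting the $u$-integral at $u=1$ and exploiting the rapid decay of $\e^{-u}$ for $u\ge 1$, in the same spirit as the nested-logarithm estimate already used in Lemma \ref{lemmalambdalarge}—is required to confine the $\theta$-dependence to the single factor $\sqrt{\theta!}$.
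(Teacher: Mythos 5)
Your decomposition---splitting at the logarithm's crossover $\la_*=\sqrt{M|\vec p|_M}$ and substituting $u=k^{2}/(\alpha\la^{2})$ to produce Gamma-function integrals---is genuinely different from the paper's, and its skeleton is viable; but as written it does not prove the lemma, because the step you yourself flag as the ``main obstacle'' is precisely the mathematical content of the statement and is left unresolved. Your whole region-1 (and, for $|k|>M$, region-2) analysis reduces to showing
\[
\int_0^\infty \e^{-u}\,u^{\theta/2-1}\Big(\tfrac12\log\alpha+\tfrac12\log_+u\Big)^{m}\d u \;\le\; C\,m!\,\big(\tfrac{\alpha}{2}\big)^{m}\,\sqrt{\theta!}
\]
uniformly in $\theta$, and this is asserted, not proved. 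The estimate is in fact true, so the route is not doomed: from $(\log v)^{m}\le m!\,v$ with $v=\sqrt u$ one gets $(\tfrac12\log_+u)^{m}\le m!\,\sqrt u$, hence the moment is bounded by (a binomial sum times) $\Gamma(\tfrac{\theta+1}{2})$, and the Legendre duplication formula $\Gamma(\tfrac{\theta+1}{2})\,\Gamma(\tfrac{\theta}{2}+1)=2^{-\theta}\sqrt{\pi}\,\theta!$ gives $\Gamma(\tfrac{\theta+1}{2})\lesssim 2^{-\theta/2}\sqrt{\theta!}$. But this Gamma-versus-$\sqrt{\theta!}$ comparison is the crux of the lemma, so omitting it is a genuine gap rather than a routine detail.

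It is worth seeing how the paper's proof sidesteps the issue entirely. The paper splits at $\la=|k|$ (not at $\la_*$); on $[|k|,\La_0)$ it drops the exponential and uses the elementary inequality \eqref{Lambdaint}, in the same spirit as your region-2 appeal to lemma \ref{lemmalambdalarge}. On $(0,|k|]$ it never integrates logarithms against the Gamma density: it writes $\e^{-k^{2}/(\alpha\la^{2})}$ as the square of $\e^{-k^{2}/(2\alpha\la^{2})}$, expands $\log^{n}$ binomially around $|k|$, and converts logs to powers \emph{pointwise}, $\log^{m}(|k|/\la)\le\big(k^{2}/(2\la^{2})\big)^{m}$, which one copy of the Gaussian kills via $\sup_{x}x^{m}\e^{-x/\alpha}\le\alpha^{m}m!$ --- a $\theta$-independent number. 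The surviving integral is then computed exactly, $\int_0^\infty\la^{-\theta-1}\e^{-k^{2}/(2\alpha\la^{2})}\d\la=\tfrac{\Gamma(\theta/2)}{2}\big(\sqrt{2\alpha}/|k|\big)^{\theta}$, and the single comparison $\Gamma(\theta/2)\,2^{\theta/2}\lesssim\sqrt{\theta!}$ concludes; that is the only place where $\sqrt{\theta!}$ is needed. So the idea you are missing is to trade logarithms for powers \emph{before} integrating, against half of the Gaussian, so that no logarithmic moments ever arise. A final bookkeeping remark: if you insist on the constant $3$ in the statement, your plan cannot deliver it, since lemma \ref{lemmalambdalarge} already carries the constant $3$ by itself, leaving no room for the region-1 contribution; the paper's large-$\la$ bound \eqref{Lambdaint} is proved with constant $2$ precisely so that the small-$\la$ piece fits. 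This is cosmetic for the applications, but it shows the stated constants are tied to the paper's particular split.
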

\begin{proof}
We decompose the integral into two parts:
\ben\label{lemmalambdadecomp}
\int_{0}^{\La_{0}}\d\la\, \frac{\e^{-\frac{k^{2}}{\alpha \la^{2}}}}{\la^{\theta+1}}\, \log^{n}\sup\left(\frac{|\vec{p}|_\LIR}{\la},
 \frac{\la}{\LIR}\right)=\left( \int_{0}^{|k|}\d\la+ \int_{|k|}^{\La_{0}}\d\la\right) \frac{\e^{-\frac{k^{2}}{\alpha \la^{2}}}}{\la^{\theta+1}}\,\log^{n}\sup\left(\frac{|\vec{p}|_\LIR}{\la},
 \frac{\la}{\LIR}\right)\ .
\een
To bound the integral over large $\la$, we make use of the inequality 
%
\ben\label{Lambdaint}
\int_{a}^{\Lambda_{0}}\d\lambda\, \lambda^{-\theta-1} \log^{n}\sup\left(\frac{|\vec{p}|_\LIR}{\la},
 \frac{\la}{\LIR}\right) \leq\, a^{-\theta} \,\sum_{m=0}^{n}\frac{2\,  n!}{(n-m)!}\,\log^{n-m}\sup\left(\frac{|\vec{p}|_{M}}{a}, \frac{a}{M} \right)\, ,
\een
which follows from
\ben
\begin{split}
&\int_{a}^{\Lambda_{0}}\d\lambda\, \lambda^{-\theta-1} \log^{n}\sup\left(\frac{|\vec{p}|_\LIR}{\la},
 \frac{\la}{\LIR}\right) \\
& \leq\int_{a}^{\La_{0}}\d\lambda   \lambda^{-\theta-1} \log^{n}\left(\frac{|\vec{p}|_\LIR}{\la}\right)  +  \int_{\sup(a,\sqrt{M|\vec{p}|_{M}})}^{\La_{0}}\d\lambda \lambda^{-\theta-1} \log^{n}\left(
 \frac{\la}{\LIR}\right)\ ,
\end{split}
\een
and from 
\ben
\int_{a}^{\La_{0}}\d\lambda   \lambda^{-\theta-1} \log^{n}\left(\frac{|\vec{p}|_\LIR}{\la}\right)\leq   a^{-\theta} \log^{n}\left(\frac{|\vec{p}|_\LIR}{a}\right)
\een
\ben
\begin{split}
 \int_{\sup(a,\sqrt{M|\vec{p}|_{M}})}^{\La_{0}}\d\lambda \lambda^{-\theta-1} \log^{n}\left(
 \frac{\la}{\LIR}\right)&\leq  \int_{\sup(a,M)}^{\La_{0}}\d\lambda \lambda^{-\theta-1} \log^{n}\left(
 \frac{\la}{\LIR}\right)\\
&  \leq  \frac{1}{\sup(a,M)^{\theta}}\sum_{m=0}^{n}\frac{  n!\, \log^{n-m}_{+}\left( \frac{a}{M} \right)}{(n-m)!}\ \ .
\end{split}
\een
Setting $a=|k|$ in \eqref{Lambdaint} and recalling that $|k|\leq |\vec{p}|$, we see that this contribution is consistent with the bound claimed in lemma \ref{lemmaLambdaint}. It remains to bound the integral over $\la\leq |k|$ in \eqref{lemmalambdadecomp}. Combining the bound
\ben
\begin{split}
&\log^{n}\left(\frac{|\vec{p}|_{M}}{\la}\right)\, \e^{-\frac{k^{2}}{2\alpha\la^{2}}}
=\sum_{m=0}^{n}\left(n\atop m\right) \log^{n-m}\left(\frac{|\vec{p}|_{M}}{|k|}\right)\,\log^{m}\left(\frac{|k|}{\la}\right)\e^{-\frac{k^{2}}{2\alpha\la^{2}}}\\
&\leq \sum_{m=0}^{n}\left(n\atop m\right) \log^{n-m}\left(\frac{|\vec{p}|_{M}}{|k|}\right)\,\left(\frac{k^{2}}{2\la^{2}}\right)^{m}\e^{-\frac{k^{2}}{2\alpha\la^{2}}}\leq \sum_{m=0}^{n}\left(n\atop m\right) \log^{n-m}\left(\frac{|\vec{p}|_{M}}{|k|}\right)\, (\frac{\alpha}{2})^{m}\, m!
\end{split}
\een
with the relation
\ben
\int_{0}^{\infty} \d\la \, \la^{-\theta-1} \cdot  \e^{-\frac{k^{2}}{2\alpha \la^{2}}} =\frac{\Gamma(\theta/2)}{2} \left(\frac{\sqrt{2\alpha}}{|k|}\right)^{\theta}\leq \sqrt{\theta !} \left(\frac{\sqrt{\alpha}}{|k|}\right)^{\theta}
\een
we obtain the bound
\ben
\begin{split}
&\int_{0}^{|k|}\d\la\, \frac{\e^{-\frac{k^{2}}{\alpha \la^{2}}}}{\la^{\theta+1}}\,\log^{n}\sup\left(\frac{|\vec{p}|_\LIR}{\la},
 \frac{\la}{\LIR}\right)\\
 & \leq \sqrt{\theta!}\,\left(\frac{\sqrt{\alpha}}{|k|}\right)^{\theta}\, 
\sum_{m=0}^{n}\frac{n!}{(n-m)!} \log^{n-m}\sup\left(\frac{|\vec{p}|_{M}}{|k|}, \frac{|k|}{M}\right)\, (\frac{\alpha}{2})^{m}
\end{split}
\een
which, combined with \eqref{Lambdaint}, establishes the lemma in view of our assumption $|k|\leq |\vec{p}|$.
\end{proof}
\section{Proof of corollary \ref{cor3b}}\label{appdistr}
Here we prove the bound \eqref{CAGloopintfinal} stated in corollary \ref{cor3b}. For $N=0$ there is nothing to show, as the claimed bound follows directly from corollary \ref{cor3}.
For $N>0$ we start from the bounds given in theorem~\ref{thm3},  which immediately give for some constant $K=K(N,L)$
\ben\label{CAGloopint1}
\begin{split}
&\left|\int_{p_{1},\ldots,p_{N}}  \L_{D,N,L}^{0,\infty}\left( \O_{A}(x)\otimes\O_{B}(0);\vec{p}\right)\, \prod_{i=1}^{N}  \frac{\hat{\test}_{i}(p_{i})}{p_i^2} \right|  \leq  \sqrt{[A]![B]!}\ K^{[A]+[B] } \\
&\times \int_{\vec{p}}\prod_{i=1}^{N} \frac{|\hat{\test}_{i}(p_{i})|}{p_i^2} \sum_{ T \in \tkn_{N,2L+2,0}(\vec{p})}
  \sup\left(M,\frac{1}{|x|}\right)^{[A]+[B]-D}\,  \LIR^{D-N_{\mathcal{V}}}  \sum_{\mu=0}^{([A]+[B])(N+2L+3)}\frac{1}{\sqrt{\mu!}}\
\left(\frac{|\vec p_{\mathcal{V}}|}{\LIR}\right)^{\mu}   \\
&\times \prod_{i \in \mathcal{I}(T)} \int_{0}^{\infty} \d\la_{i}\ {f}_{\lambda_{i}}(k_i;N-2,L+1,\theta_i)  \,
 \mathcal{P}_{2L+\frac{N}{2}}\left( \log_{+}\sup\left(\frac{|\vec{p}|}{\LIR},\frac{|\vec{p}|_\LIR}{\min_{j\in\mathcal{E}}\lambda_{j}},
 \frac{\max_{j\in\mathcal{E}}\lambda_{j}}{\LIR}\right)\right) \, ,
\end{split}
\een
where ${\test}_i\in\mathcal{S}(\mathbb{R}^4)$. Since the test functions $\hat{\test}_i(p)$ decay rapidly for large $p$, they will allow us to control the UV-behaviour of the momentum integrals. 
 It is convenient to exploit this decay with the inequality
\ben
\begin{split}
|p|^{\mu-2}\,  |\hat{\test}(p)| \leq \frac{\|\hat{\test}\|_{\frac{s+\mu}{2}}}{p^2\cdot (M^2+p^{2})^{s/2}} &=\frac{s}{2}\cdot \int_0^1 \d\tau \frac{\tau^{s/2-1}\, \|\hat{\test}\|_{\frac{s+\mu}{2}}}{[\tau (M^2+p^2)+(
1-\tau) p^2]^{\frac{s+2}{2}}}\\
& =\|\hat{\test}\|_{\frac{s+\mu}{2}}\cdot \frac{2}{\Gamma({s/2})} \int_0^\infty \d\la \, \la^{-s-3} e^{-\frac{p^2}{\la^2}} \int_0^1\d\tau \, \tau^{s/2-1} e^{-\frac{\tau M^2}{\la^2}} \, ,
\end{split}
\een
which holds for $s>0$ and $\mu\geq 0$, and where $\|\cdot\|_n$ is defined in \eqref{Schwnorms}. Upon substitution into \eqref{CAGloopint1} and using the inequalities $|\vec{p}_{\mathcal{V}}|\leq |\vec{p}|\leq |p_1|+\ldots+|p_N|$, we get
\ben\label{CAGloopint2}
\begin{split}
&\left|\int_{p_{1},\ldots,p_{N}}  \L_{D,N,L}^{0,\infty}\left( \O_{A}(x)\otimes\O_{B}(0);\vec{p}\right)\, \prod_{i=1}^{N} \frac{\hat{\test}_{i}(p_{i})}{p_i^2} \right|\leq  \sqrt{[A]![B]!}\ K^{[A]+[B] }  \\
&\times \sum_{ T \in \tkn_{N,2L+2,0}(\vec{p})}
  \sup\left(M,\frac{1}{|x|}\right)^{[A]+[B]-D}\,  \LIR^{D-N_{\mathcal{V}}}  \sum_{\mu=0}^{([A]+[B])(N+2L+3)}\sum_{\mu_1+\ldots+\mu_N=\mu} \frac{\prod_{i=1}^N  \|\hat{\test}_{i}\|_{ \frac{\mu_{i}+s_{i}} {2}}} { M^{\mu}}    \\
&\times \int_{\vec{p}}\prod_{i \in \mathcal{E}(T)} \int_{0}^{\infty} \d\la_{i}\ \tilde{f}_{\lambda_{i}}(k_i;\alpha,\theta_i)  \,
 \mathcal{P}_{2L+\frac{N}{2}}\left( \log_{+}\sup\left(\frac{|\vec{p}|}{\LIR},\frac{|\vec{p}|_\LIR}{\min_{j\in\mathcal{E}}\lambda_{j}},
 \frac{\max_{j\in\mathcal{E}}\lambda_{j}}{\LIR}\right)\right) \, ,
\end{split}
\een
where $\mathcal{E}(T)$ denotes the set of all lines (internal and external) of the tree $T$, and where we define modified weight factors by [compare eq.\eqref{fdef}]
\ben\label{ftildedef}
 \tilde{f}_{\lambda_{i}}(k_{i};\alpha,\theta_{i}):= \begin{cases}
 \la_{i}^{-\theta_{i}-1} e^{-k_{i}^{2}/\alpha\la_{i}^{2}} & \text{for }i\in \mathcal{I} \text{ internal line}  \\
  \la_{i}^{-s_{i}-3} \,e^{-k_{i}^{2}/\alpha\la_{i}^{2}} \int_0^1\d\tau \, \tau^{s_i/2-1} e^{-\frac{\tau M^2}{\la_i^2}}  & \text{for }i \text{ an external line of }T \text{ if } s_i>0 \\
    \la_{i}^{-3} \,e^{-k_{i}^{2}/\alpha\la_{i}^{2}}  & \text{for }i \text{ an external line of }T \text{ if } s_i=0 \, ,
 \end{cases}
\een
with $\alpha=\alpha(N-2,L+1)$ as before. The parameters $s_{i}\in\mathbb{N}$ can be chosen freely at this stage. 
\noindent Next we decompose the $\lambda_{i}$ integrations into 'sectors', i.e. we write the last line of \eqref{CAGloopint2} as 
\ben\label{CAGloopint3}
\begin{split}
& \sum_{ \pi\in\mathfrak{S}(\mathcal{E}) } \int_{\vec{p}}\int_{0}^{\infty} \d\la_{\pi_{E}}\,  \tilde{f}_{\la_{\pi_E}}(k_{\pi_E};\alpha,\theta_{\pi_E}) \int_{0}^{\la_{\pi_{E}}} \d\la_{\pi_{E-1}}\ldots\int_{0}^{\la_{2}} \d\la_{\pi_{1}}   \tilde{f}_{\la_{\pi_1}}(k_{\pi_1};\alpha,\theta_{\pi_1})   \\
&\times \mathcal{P}_{2L+\frac{N}{2}}\left( \log_{+}\sup\left(\frac{|\vec{p}|}{\LIR},\frac{|\vec{p}|_\LIR}{{\lambda_{\pi_1}}},	
 \frac{\lambda_{\pi_E}}{\LIR}\right)\right) \, ,
\end{split}
\een
where $\mathfrak{S}(\mathcal{E})$ is the set of permutations of the internal and external lines of $T$ [i.e. the \emph{symmetric group} on $\mathcal{E}(T)$] and where we denote by $E=|\mathcal{E}(T)|$ the total number of lines of $T$. Writing the $\la_i$-integrals in this form will allow us to bound the momentum integrals one-by-one successively. Since the $p_i$ are related to the momenta $k_{\pi_j}$ by momentum conservation in the tree $T$, we may assume  that $p_1$ is the momentum flowing through\footnote{We note that there are no other lines in the tree with the same associated momentum $k_{\pi_{1}}$, since vertices of coordination number two are not allowed in trees $\tkn_{N,L,0}$. This follows from the conflicting requirements $\rho=0$ and $\theta>0$ for lines associated to such a vertex.} the line $\pi_1$. Then our independent integration variables are $(k_{\pi_{1}},p_{2},\ldots,p_{N})$. To get a bound, we use lemma \ref{lablemma1}.
 We distinguish two cases:
\begin{enumerate}
\item Let us first assume that $\pi_{1}$ is an internal line of $T$. To bound the integral over $k_{\pi_1}$ we proceed as in the proof of lemma \ref{lablemma1}, i.e.  we use \eqref{lemexpmin} to bound the exponential factors and \eqref{hoelderlog} to bound the remaining integral over the logarithms. Adapting these inequalities to the case at hand, we have to replace $(N'+1)$ by $E \alpha$, because there are $E$ exponential weight factors in \eqref{CAGloopint3} (as opposed to $(N'+1)$ in lemma \ref{lablemma1}), and because we have a factor $\exp(-k_{\pi_1}^2/\alpha\la_{\pi_1}^2)$ in \eqref{CAGloopint3} (as opposed to a factor $\exp(-\ell^2/\la^2)$ in lemma \ref{lablemma1}). Thus, one finds for each term inside the sum in \eqref{CAGloopint3} (i.e. for each sector) the bound 
\ben\label{CAGloopint4}
\begin{split}
& \int_{(p_{2},\ldots,p_{N})}  \int_{0}^{\infty} \d\la_{\pi_{E}}\int_{0}^{\la_{\pi_{E}}} \d\la_{\pi_{E-1}}\ldots\int_{0}^{\la_{\pi_{3}}} \d\la_{\pi_{2}}     \prod_{\substack{i\in\mathcal{E}(T)\\ i\neq \pi_{1}  } } \tilde{f}_{\la_{i}}(k_{i}; \tilde{\alpha}, \theta_{i}) \\
&\qquad\times\int_{0}^{\la_{\pi_{2}}} \d\la_{\pi_{1}}    \la_{\pi_{1}}^{3-\theta_{\pi_{1}}} \ \mathcal{P}_{2L+\frac{N}{2}}\left( \log_{+}\sup\left(\frac{|\vec{p}|}{\LIR},\frac{|\vec{p}|_\LIR}{{\lambda_{\pi_1}}},	
 \frac{\lambda_{\pi_E}}{\LIR}\right)\right) \Bigg|_{k_{\pi_{1}}=0} \, ,
\end{split}
\een
where $\tilde{\alpha}=\tilde{\alpha}(N,L)$ has to satisfy $\frac{1}{\tilde{\alpha}}\leq \frac{1}{2E\alpha+ \alpha} $, and where we have absorbed $N$ and $L$ dependent constants in the polynomials $\mathcal{P}$. The condition on $\tilde{\alpha}$ follows from \eqref{lemexpmin}. 

 Next we observe that the exponent of $\la_{\pi_1}$ is positive, because one has $\theta_{i}\leq 2$ for the trees under consideration. We can therefore bound the integral over $\la_{\pi_{1}}$, using $\la_{\pi_{1}}^{3-\theta_{\pi_{1}}} \leq \la_{\pi_{2}}^{3-\theta_{\pi_{1}}} $ and using \eqref{smalllabound}. We obtain a factor $\la_{\pi_{2}}^{4-\theta_{\pi_{1}}}$ in the process. To put the resulting expression into a more convenient form in terms of trees, we next   perform a cutting procedure on our tree $T$ (see fig.\ref{figcut}):
\begin{enumerate}
\item Split the tree $T$ into two pieces by cutting the internal line $\pi_{1}$. 
\item If the line $\pi_1$ had a weight $\theta_{\pi_{1}}=2$, then this automatically yields a pair of trees $T_{1}\in \mathcal{T}_{n_{1}+1,l_{1},0}$ and $T_{2}\in \tkn_{n_{2}+1,l_{2},0}$ with $n_{1}+n_{2}=N, l_{1}+l_{2}=L$, and one can proceed to step \ref{stepc}. If however $\theta_{\pi_{1}}=1$, then one of the trees will have an illegal weight: Recall from condition \ref{it9} of def. \ref{deftrees1} that we associate to vertices of coordination number $<4$ adjacent internal lines with weight $\rho_i<2$. Cutting a line with $\rho_i=1$ will leave a 3 vertex without an associated ``partner-line''. We use a factor $\la_{\pi_{2}}$ in order to reduce the weight of another line adjacent to that vertex (if that line reaches weight $0$, we delete it and fuse the adjacent vertices), which yields two legally weighted trees $T_{1}\in \mathcal{T}_{n_{1}+1,l_{1},0}$ and $T_{2}\in \tkn_{n_{2}+1,l_{2},0}$. See fig.~\ref{figcut} for an example.
\begin{figure}[htbp]
\begin{center}
\includegraphics[width=\textwidth]{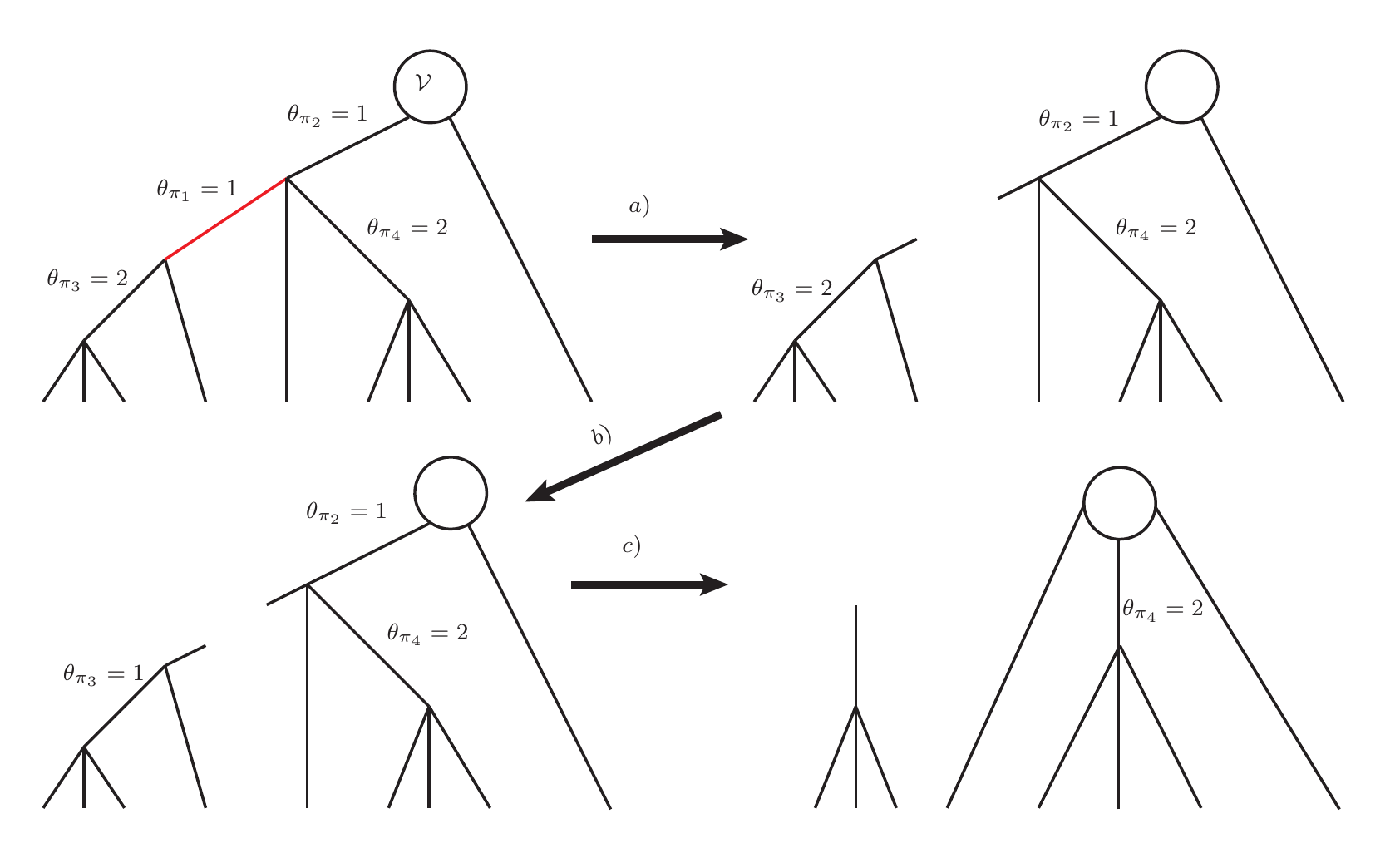}
\end{center}
\caption{We cut the tree $T$ along the line $\pi_{1}$ (marked red), obtaining two trees $T_{1}$ and $T_{2}$ as a result [step a)]. The line $\pi_{3}$ of the tree $T_{1}$ has an illegal weight $\theta_{\pi_{3}}=2$. We reduce this weight by one, using up a factor $\la_{\pi_{2}}$, and we arrive at two legally weighted trees [step b)]. We then perform a reduction procedure on each of those trees, removing the external line $\pi_{1}$ [step c)]. }
\label{figcut}
\end{figure}
\item\label{stepc} On each of those trees $T_{1}$ and $T_{2}$ perform a reduction procedure of the kind introduced previously (see pages \pageref{reduct} and \pageref{fig:LoopCase3}), removing the external line $\pi_{1}$ from each tree. This procedure eats a factor $\la_{\pi_{2}}^{2}$ and it produces two trees in  $\mathcal{T}_{n_{1},l_{1}+1,0}$ and $\tkn_{n_{2},l_{2}+1,0}$, respectively. If $\pi_1$ is directly attached to the vertex $\mathcal{V}$ in the tree $T_2$, then we can remove that line without any reduction procedure, and we only use up one power of $\la_{\pi_2}$.
\item\label{itc} So far we have in fact only defined trees $T\in \mathcal{T}_{n,l,0}$ with $n\geq 4$ , see def. \ref{deftrees1}. The extension to $n<4$ is simple: The sets $\mathcal{T}_{3,l,0},\mathcal{T}_{2,l,0},\mathcal{T}_{1,l,0}$ contain only the 3-vertex, the 2-vertex and the single (external) line, respectively. None of these trees have any internal lines. To extend the reduction procedure, which has so far only been defined for trees $T\in \mathcal{T}_{n,l,0}$ with $n>4$, to trees with $n\leq 4$, we proceed as follows  (see fig.\ref{figreductionspecial}):
\begin{figure}[htbp]
\begin{center}
\includegraphics[width=.6\textwidth]{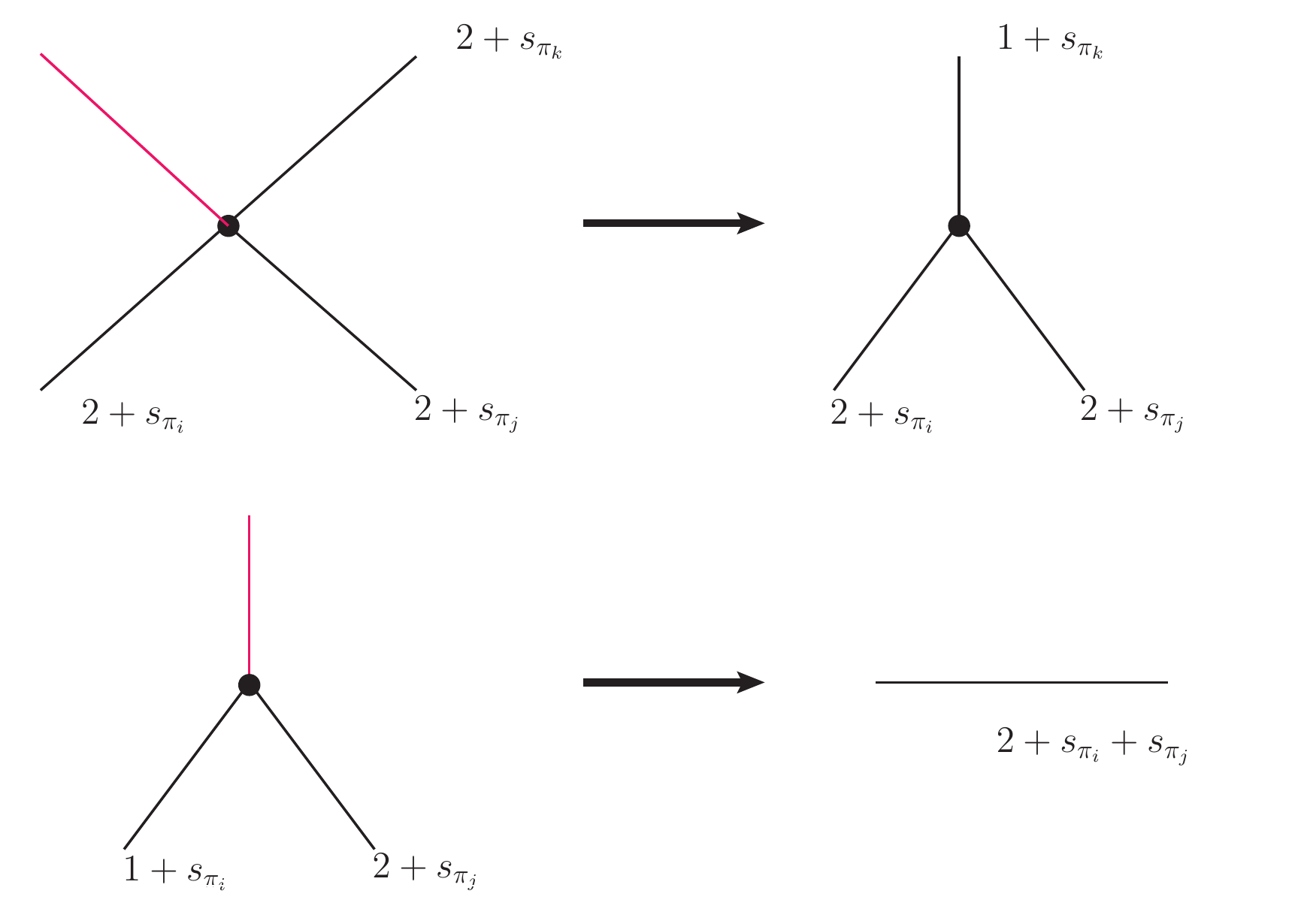}
\end{center}
\caption{Reduction of trees in $\mathcal{T}_{n,l,0}$ with $n\leq 4$: We remove the red line (corresponding to $\pi_1$ in the discussion above) and reduce the weight of another external line by one. A line labelled $a+s_{\pi_i}$ comes with a factor $ \la_{\pi_i}^{-s_{\pi_i}-a-1} \,e^{-k_{\pi_i}^{2}/\alpha\la_{\pi_i}^{2}} \int_0^1\d\tau \, \tau^{s_{\pi_i}/2-1} e^{-{\tau M^2}/{\la_{\pi_i}^2}}$ if $s_{\pi_i}>0$ and with a factor  $ \la_{\pi_i}^{
-a-1} \,e^{-k_{\pi_i}^{2}/\alpha\la_{\pi_i}^{2}} $ if $s_{\pi_i}=0$. }
\label{figreductionspecial}
\end{figure}
%
%
We delete the external line $\pi_1$ of the tree and reduce by one the weight (i.e. the power of $1/\la_{\pi_i}$ in the factor \eqref{ftildedef}) of another external line of the tree. For concreteness, we use the convention that the weight of the line with the largest $\la_{\pi_i}$ is reduced. If a vertex of coordination number 2 is created in the process, we delete that vertex and fuse the adjacent external lines, adding up their weights.  Thus, no vertices with coordination number two appear in our trees.
\end{enumerate}
To sum up, we can bound the expression inside the sum in \eqref{CAGloopint3} by
\ben
\begin{split}
& \int_{(p_{2},\ldots,p_{N})}  \int_{0}^{\infty} \d\la_{\pi_{E_{1}+E_{2}+1}}\ldots\int_{0}^{\la_{\pi_{3}}} \d\la_{\pi_{2}}     \prod_{{i\in\mathcal{E}(T_{1})\cup\mathcal{E}(T_{2})  } } \tilde{f}_{\la_{i}}(k_{i};\tilde{\alpha},\theta_{i}) \\
&   \times  \mathcal{P}_{2L+\frac{N}{2}}\left( \log_{+}\sup\left(\frac{|\vec{p}|}{\LIR},\frac{|\vec{p}|_\LIR}{{\lambda_{\pi_2}}},	
 \frac{\lambda_{\pi_{E_1+E_2+1}}}{\LIR}\right)\right) \times 
 \begin{cases}
 \la_{\pi_2} & \text{ if } \pi_1 \text{ is directly attached to } \mathcal{V} \\
 1 & \text{ otherwise}
 \end{cases}
   \, ,
\end{split}
\een
where $E_{i}$ is the number of lines in the tree $T_{i}$. 
\item\label{itemext} %
Consider now the case where $\pi_{1}$ is an external line of the tree $T$. Recall that we are free to choose the parameter $s_{\pi_1}$ affecting the weight of this line [cf. \eqref{ftildedef}]. The most convenient choice at this stage is simply $s_{\pi_1}=0$. Using lemma \ref{lablemma1}, we obtain again the bound \eqref{CAGloopint4}, but with a factor $\la_{\pi_{1}}$ instead of $ \la_{\pi_{1}}^{3-\theta_{\pi_{1}}}$ in the last line. Integrating over $\la_{\pi_{1}}$, we therefore obtain a factor $\la_{\pi_{2}}^{2}$. If $\pi_1$ is not directly attached to $\mathcal{V}$, we can use one power of $\la_{\pi_{2}}$ to remove the external line $\pi_{1}$ from the tree, using the now familiar reduction procedure. If $\pi_1$ is directly attached to $\mathcal{V}$, then we can simply delete that line without any reduction procedure. As a result, we obtain a tree $T_3\in\tkn_{N-1,2L+4,0}$, i.e. we have the following bound on the terms inside the sum in \eqref{CAGloopint3}
\ben
\begin{split}
& \int_{(p_{2},\ldots,p_{N})}  \int_{0}^{\infty} \d\la_{\pi_{E(T_3)+1}}\ldots\int_{0}^{\la_{\pi_{3}}} \d\la_{\pi_{2}}     \prod_{{i\in\mathcal{E}(T_3)  } } \tilde{f}_{\la_{i}}(k_{i};\tilde{\alpha},\theta_{i}) \\
& \times  \mathcal{P}_{2L+\frac{N}{2}}\left( \log_{+}\sup\left(\frac{|\vec{p}|}{\LIR},\frac{|\vec{p}|_\LIR}{{\lambda_{\pi_2}}},	
 \frac{\lambda_{\pi_{E(T_3)+1}}}{\LIR}\right)\right) \times 
 \begin{cases}
  \la_{\pi_{2}}^2 & \text{ if } \pi_1 \text{ is directly attached to }\mathcal{V}\\
  \la_{\pi_{2}} &  \text{ otherwise }
  \end{cases}
   \, .
\end{split}
\een
\end{enumerate}
Since the trees $T_{1},T_{2}$ and $T_3$ are again in either of
the sets $\mathcal{T}$ and $\tkn$, we can repeat the procedure with the line $\pi_{2}$. After $N-1$ iterations of this scheme, only one momentum integral remains. As regards the trees in the resulting bound we are left with only one external line for all of them. Therefore the bound is expressed in terms of $k\leq N$ trees $T_{1}\in \tkn_{n_{1},l_{1},0}$,  $T_{2}\in \mathcal{T}_{n_{2},l_{2},0}, \ldots , T_{k}\in \mathcal{T}_{n_{k},l_{k},0}$ with $n_i=1$ for some $i\leq k$ and $n_j=0$ for all $j\neq i$. Since no vertices of coordination number $2$ are allowed, these trees contain no internal lines. The corresponding bound hence contains only one momentum integral and one $\la$-integral. 

While iterating the cutting procedure, we have picked up positive powers of the $\la_{\pi_i}$ at various stages. The final exponent of $\la_{\pi_N}$ after $(N-1)$-iterations can be checked via power counting: We performed $(N-1)$ momentum integrals, which yield a power $\la_{\pi_N}^{4(N-1)}$. The removed $(N-1)$ external lines 
contribute a power of $\la_{\pi_N}^{-2(N-1)}$, and the internal lines a power of $\la_{\pi_N}^{N_{\mathcal{V}}-N}$. Thus, each term in the sum in \eqref{CAGloopint3} is bounded by
\ben\label{momentumfinal}
\begin{split}
   \int_{0}^{\infty} \d\la_{\pi_N} \int_p  \la_{\pi_N}^{N+N_{\mathcal{V}}-2}  \tilde{f}_{\la_{\pi_N}} (p;\tilde{\alpha},\theta_{\pi_N})  \mathcal{P}_{2L+\frac{N}{2}}\left( \log_{+}\sup\left(\frac{|{p}|}{\LIR},\frac{|{p}|_\LIR}{\lambda_{\pi_N}},	
 \frac{\lambda_{\pi_N}}{\LIR}\right)\right) \, .
\end{split}
\een
The final momentum integral can be bounded as before. Choosing the weight $s_{\pi_N}$ of the final external line larger than zero, we arrive at
\ben\label{sectorbound1}
\begin{split}
   \int_{0}^{\infty} \d\la_{\pi_N}  \  \la_{\pi_N}^{N+N_{\mathcal{V}}-1-s_{\pi_N}} \int_0^1\d\tau\, \tau^{s_{\pi_N}/2-1} e^{-\tau M^2/\la_{\pi_N}^{2}}  \mathcal{P}_{2L+\frac{N}{2}}\left( \log_{+}\sup\left(\frac{\LIR}{\la_{\pi_N}},	
 \frac{\lambda_{\pi_N}}{\LIR}\right)\right)\ .
\end{split}
\een  
For the integral over $\la_{\pi_N}$ to be convergent, we need to choose $s_{\pi_N}> N+N_{\mathcal{V}}$. In particular we are free to pick for example $s_{\pi_N}=N+N_{\mathcal{V}}+1$, which allows us to bound the final $\la$-integral as follows:
\ben
\begin{split}
 &\int_{0}^{\infty} \d\la_{\pi_N}    \la_{\pi_N}^{N+N_{\mathcal{V}}-1-s_{\pi_N}}e^{-\tau M^2/\la_{\pi_N}^{2}} \mathcal{P}_{2L+\frac{N}{2}}\left( \log_{+}\sup\left(\frac{\LIR}{\la_{\pi_N}},	
 \frac{\lambda_{\pi_N}}{\LIR}\right)\right) \\
 &\leq \tilde{K}
\int_{0}^{\infty} \d\la_{\pi_N}  \  \la_{\pi_N}^{-2} e^{-\tau M^2/2\la_{\pi_N}^{2}}\ \left( \frac{M}{\la_{\pi_N}}+\frac{\la_{\pi_N}}{M} \right)^{1/2}  
\leq    \tilde{K}\ 
\, \frac{\tau^{-1/2}}{ M^{3/2}}\ ,
\end{split}
 \een
for some positive constant $\tilde K=\tilde K(N,L)$. Using this inequality, it is easy to bound the $\tau$-integral in \eqref{sectorbound1} (recall that we consider the case $N>0$):
\ben
\eqref{sectorbound1} \leq \tilde K\, M^{-3/2}\, \int_0^1\d\tau \, \tau^{\frac{N+N_{\mathcal{V}} -2 }{2}}   \leq \tilde K\,  M^{-3/2}\, .
\een
To sum up, we have found that, choosing the parameters $s_{\pi_i}$ as discussed above, every term in the sum in \eqref{CAGloopint3} is bounded simply by $\tilde K\,  M^{-3/2}$.
Since the bound holds  
 for any tree $T$ in the sum over $\tkn_{N,2L+2,0}$ in \eqref{CAGloopint2}, we obtain in total
 \ben\label{CAGloopintfinal3}
\begin{split}
&\left|\int_{p_{1},\ldots,p_{N}}  \L_{D,N,L}^{0,\infty}\left( \O_{A}(x)\otimes\O_{B}(0);\vec{p}\right)\, \prod_{i=1}^{N}  \frac{\hat{\test}_{i}(p_{i})}{p_i^2} \right| \\
&\leq  \sqrt{[A]![B]!}\ K^{[A]+[B] } \, \sup(M,\frac{1}{|x|})^{[A]+[B]-D} \sum_{n=0}^N
  \LIR^{D-n-3/2}  \sum_{\mu=0}^{([A]+[B])(N+2L+3)}\hspace{-.5cm} \sum_{\substack{ s_1+\ldots+s_N=N+n+1 \\ \mu_1+\ldots+\mu_N=\mu }  }    \frac{\prod_{i=1}^{N}\|\hat{\test}_{i}\|_{\frac{\mu_{i}+s_{i}}{2}}} { M^{\mu}}   \, .
\end{split}
\een
The constant $K=K(N,L)$ absorbs $\tilde K$ as well as additional $N,L$ dependent factors from the summations over $\tkn_{N,2L+2,0}$ and $\mathfrak{S}(\mathcal{E})$.  The desired inequality \eqref{CAGloopintfinal}, and thereby corollary \ref{cor3b}, finally follows after combining the sums over $n$ and $\mu$ into one, where we use the inequality $([A]+[B])(N+2L+3)+N+n+1\leq ([A]+[B]+2)(N+2L+3)$.

The proofs of corollaries \ref{cor1b} and \ref{cor2b} for the smeared connected Schwinger functions with one and no insertion are analogous, but simpler. \hfill \proofSymbol
\addcontentsline{toc}{chapter}{Bibliography}

\begin{thebibliography}{10}
\bibitem{Hollands:2011gf}
S.~Hollands and Ch.~Kopper, ``{The operator product expansion converges in
  perturbative field theory},'' {\em Commun.Math.Phys.} {\bf 313} (2012)
  257--290.
\bibitem{Wilson:1969ub}
K.~Wilson, ``{Non-Lagrangian models of current algebra},'' {\em Physical
  Review} {\bf 179} (1969)  1499--1512.
\bibitem{Zimmermann:1973wp}
W.~Zimmermann, ``{Normal products and the short distance expansion in the
  perturbation theory of renormalizable interactions },'' {\em Ann.Phys.} {\bf
  77} (1973)  570--601.
%
\bibitem{Polchinski:1983gv}
J.~Polchinski, ``{Renormalization and Effective Lagrangians},'' {\em
  Nucl.Phys.} {\bf B231} (1984)  269--295.
\bibitem{Keller:1990ej}
G.~Keller, Ch.~Kopper, and M.~Salmhofer, ``{Perturbative renormalization and
  effective Lagrangians in $\Phi^4$ in four-dimensions},'' {\em Helv.Phys.Acta}
  {\bf 65} (1992)  32--52.
%
\bibitem{Keller:1991bz}
G.~Keller and Ch.~Kopper, ``{Perturbative renormalization of composite operators
  via flow equations. 1.},'' {\em Commun.Math.Phys.} {\bf 148} (1992)
  445--468.
 \bibitem{Keller:1992by}
G.~Keller and Ch.~Kopper, ``{Perturbative renormalization of composite operators
  via flow equations. 2. Short distance expansion},'' {\em Commun.Math.Phys.}
  {\bf 153} (1993)  245--276.
\bibitem{GK}
R.~Guida and Ch.~Kopper, ``{All order uniform momentum bounds for the massless $\phi_{4}^{4}$ field theory},'' {\em in preparation}. 
\bibitem{Kopper:2001to}
Ch.~Kopper and F.~Meunier, ``{Large Momentum bounds from Flow Equations},'' {\em Annales Henri
  Poincare} {\bf 3} (2002)  435--449.
\bibitem{Lovasz:2003ua}
Lov{\'a}sz, L., Pelik{\'a}n, J. and Vesztergombi, K., ``{Discrete Mathematics: Elementary and Beyond},'' {\em Undergraduate Texts in Mathematics, } {Springer} (2003).
%
%
%
\bibitem{Wilson:1971bg}
K.~G. Wilson, ``{Renormalization group and critical phenomena. 1.
  Renormalization group and the Kadanoff scaling picture},'' {\em Phys.Rev.}
  {\bf B4} (1971)  3174--3183.
\bibitem{Wilson:1971dh}
K.~G. Wilson, ``{Renormalization group and critical phenomena. 2. Phase space
  cell analysis of critical behavior},'' {\em Phys.Rev.} {\bf B4} (1971)
  3184--3205.
\bibitem{Wegner:1972ih}
F.~J. Wegner and A.~Houghton, ``{Renormalization group equation for critical
  phenomena},'' {\em Phys.Rev.} {\bf A8} (1973)  401--412.
%
%
%
\bibitem{Muller:2002he}
V.~F. M{\"u}ller, ``{Perturbative renormalization by flow equations},'' {\em
  Rev.Math.Phys.} {\bf 15} (2003)  491.
%
\bibitem{Kopper:1997vg}
Ch.~Kopper, ``{Renormierungstheorie mit Flussgleichungen}.'' Shaker, 1998.
%
\bibitem{Holland:2012vw}
J.~Holland and S.~Hollands, ``{Operator product expansion algebra},'' {\em
  J.Math.Phys.} {\bf 54} (2013)  072302.
\bibitem{glimm}
Glimm J. and Jaffe, A., ``{Quantum Physics: A Functional Integral Point of View},''  {Springer} (1987).
%
  %
\bibitem{low}
J.H.~Lowenstein, ``{Normal Product Quantization of Currents in Lagrangian Field Theory},'' {\em Phys.Rev. }{\bf D4} (1971)  2281-2290.
%
%
%
%
\end{thebibliography}
\providecommand{\href}[2]{#2}\begingroup\raggedright\endgroup
\end{document}